\newcommand{\mr}{\mathrm}
\DeclareMathOperator*{\argmin}{argmin} 
\DeclareMathOperator*{\btr}{\bigtriangleup}
\theoremstyle{plain}
\newtheorem{theorem}{Theorem}
\newtheorem{lemma}{Lemma}
\newtheorem{tech lemma}{Technical Lemma}
\newtheorem{proposition}{Proposition}
\newtheorem{definition}{Definition}
\newtheorem{condition}{Condition}
\theoremstyle{remark}
\newcommand{\blind}{1}
\begin{document}

\def\spacingset#1{\renewcommand{\baselinestretch}%
{#1}\small\normalsize} \spacingset{1}

%%%%%%%%%%%%%%%%%%%%%%%%%%%%%%%%%%%%%%%%%%%%%%%%%%%%%%%%%%%%%%%%%%%%%%%%%%%%%%

\if1\blind
{
  \title{\bf Classification Trees for Imbalanced Data: Surface-to-Volume Regularization}
  \author{Yichen Zhu\thanks{Yichen Zhu is Ph.D. student, Department of Statistical Science, Duke University, Durham, NC 27708.}\hspace{.2cm}, 
  Cheng Li\thanks{Cheng Li is Assistant Professor, Department of Statistics and Applied Probability, National University of Singapore, Singapore, 117546. The research was supported by the Singapore Ministry of Education Academic Research Funds Tier 1 Grant R-155-000-223-114.}\hspace{.2cm}
  and
    David B. Dunson\thanks{David B. Dunson is the Arts and Sciences Distinguished Professor, Departments of Statistical Science \& Mathematics, Duke University, Durham, NC 27708. The research was partially supported by grant N000141712844 of the United States Office of Naval Research.}\hspace{.2cm}}
  \maketitle
} \fi

\if0\blind
{
  \bigskip
  \bigskip
  \bigskip
  \begin{center}
    {\bf Classification Trees for Imbalanced Data: Surface-to-Volume Regularization}
\end{center}
  \medskip
} \fi

\bigskip
\begin{abstract}
Classification algorithms face difficulties when one or more classes have limited training data.  We are particularly interested in classification trees, due to their interpretability and flexibility.  
When data are limited in one or more of the classes, the estimated decision boundaries are often irregularly shaped due to the limited sample size, leading to poor generalization error. We propose a novel approach that penalizes the Surface-to-Volume Ratio (SVR) of the decision set, obtaining a new class of SVR-Tree algorithms.  We develop a simple and computationally efficient implementation while proving estimation consistency for SVR-Tree and rate of convergence for an idealized empirical risk minimizer of SVR-Tree.  SVR-Tree is compared with multiple algorithms that are designed to deal with imbalance through real data applications.
\end{abstract}

\noindent%
{\it Keywords:}  CART, Categorical data, Decision boundary, Shape penalization 
\vfill

%\newpage
%\spacingset{1.5} % DON'T change the spacing!

\section{Introduction}
We are interested in the common setting in which one has a set of training data $\mathscr{D}_n = \{(X_i, Y_i)\}_{i=1}^n$, with $X_i \in \Omega \subset \mathbb{R}^d$ a vector of features and $Y_i \in \{0,\ldots,K-1\}$ a class label.  The goal is to estimate a classifier $f: \Omega \to \{0,\ldots,K-1\}$, which outputs a class label given an input feature vector.  This involves splitting the feature space $\Omega$ into subsets having different class labels, as illustrated in Figure 1. Classification trees are particularly popular for their combined flexibility and interpretability \citep{hu2019optimal,lin2020generalized}.  
% Competitors like deep neural networks (DNN; \citealt{schmidhuber2015deep}) often have higher accuracy in prediction, but are essentially uninterpretable black boxes. 

%The performance of tree-based methods can be heavily dependent on not only the training data size $n$, but also the relative numbers in each class, $n_j = \sum_{k=1}^n \bm{1}_{\{Y_i = j\}}$, for $j=0,\ldots,J-1$.  

\par{}Our focus is on the case in which 
$n_j = \sum_{k=1}^n \bm{1}_{\{Y_i = j\}}$ is small, for one or more $j \in \{0,\ldots,J-1\}$.
%some of the $n_j$s are small, so that there is limited data available in one or more of the classes.  
For simplicity in exposition, we assume $J=2$, so that there are only two classes. Without loss of generality, suppose that $j=0$ is the majority class, $j=1$ is the minority class, and call set $\{x\in\Omega: f(x)=1\}$ the decision set (of the minority class). Hence, $n_1$ is relatively small compared to $n_0$ by this convention.  

\par{} To illustrate the problems that can arise from this imbalance, we consider a toy example. Let the sample space of $X$ be $\Omega = [0,1]\times[0,1]\subset \mathbb{R}^2$. Let the conditional distribution of $X$ given $Y$ be $X|Y=1 \sim U([0,0.75]\times[0.25,0.75])$, $X|Y=0 \sim U(\Omega)$. We generate a training data set with $5$ minority samples and $200$ majority samples. The estimated decision sets of unpruned CART and optimally pruned CART \citep{breiman1984classification} are shown in Figure 1 (a) and (b), respectively. Minority samples are given weight 32 while majority samples are given weight 1 to naively address the imbalance. Both decision sets are inaccurate, with unpruned CART overfitting and pruned CART also poor. 

\par{}Imbalanced data problems have drawn substantial interest; see \cite{haixiang2017learning}, \cite{he2008learning},  \cite{krawczyk2016learning} and \cite{fernandez2017insight} for reviews. Some early work relied on random under- or over-sampling, which is essentially equivalent to modifying the weights and cannot address the key problem.  \cite{chawla2002smote} proposed SMOTE, which instead creates synthetic samples. For each minority class sample, they create synthetic samples along the line segments that join each minority class sample with its k nearest neighbors in the minority class. Building on this idea, many other synthetic sampling methods have been proposed, including ADASYN \citep{he2008adasyn}, Borderline-SMOTE \citep{han2005borderline}, SPIDER \citep{stefanowski2008selective}, safe-level-SMOTE \citep{bunkhumpornpat2009safe} and WGAN-Based sampling \citep{wang2019wgan}. 
%These methods employee various procedures to create synthetic samples or modify the original samples, with the objective of producing a classifier that does not overfit the training data. 
These synthetic sampling methods have been demonstrated to be relatively effective.

\par{}However, current understanding of synthetic sampling is inadequate. \cite{chawla2002smote} motivates SMOTE as designed to ``create large and less specific decision regions'', ``rather than smaller and more specific regions''. Later papers fail to improve upon this heuristic justification. Practically, the advantage of synthetic sampling versus random over-sampling diminishes as the dimension of the feature space increases. 
%Suppose the feature space $\Omega\subset \mathbb{R}^d$ does not have any intrinsic lower-dimensional structure. Then, 
In general, for each minority sample, we require at least $d$ synthetic samples to fully describe its neighborhood. This 
is often infeasible due to the sample size of the majority class and to computational complexity.  Hence, it is typical to fix the number of synthetic samples regardless of the dimension of the feature space \citep{chawla2002smote}, which may fail to ``create large and less specific decision regions'' when the dimension is high.

Motivated by these issues, we propose to directly penalize the Surface-to-Volume Ratio (SVR) of the decision set for the construction of decision tree classifiers, based on the following considerations. First, a primary issue with imbalanced data is that the decision set without any regularization typically consists of many small neighborhoods around each minority class sample. As shown in Section 2 and 3, by penalizing the SVR of the decision set, we favor regularly shaped decision sets much less subject to such over-fitting. Second, existing classification methods are usually  justified with theoretical 
properties under strong assumptions on the local properties of the true decision sets, for example, requiring the true decision set to be convex or have sufficiently smooth boundaries. Such assumptions may be overly restrictive for many real datasets.
In contrast, the proposed SVR regularized trees are only subject to a simple global constraint on surface-to-volume ratio, which effectively controls the regularity of decision sets leading to strong theoretical guarantees (see Section 3). Third, we illustrate that SVR regularization can be efficiently implemented for tree classifiers by proposing an algorithm with similar computational complexity to standard tree algorithms such as CART.

%\par{}Motivated by these issues, we propose to directly penalize the Surface-to-Volume Ratio (SVR) of the decision set. A primary issue with imbalanced data is estimating a decision set consisting of small neighborhoods around each minority class sample.  By penalizing SVR we favor regularly shaped decision sets much less subject to such over-fitting. 
%With this motivation, we propose a new class of SVR-Tree algorithms.  

\par{} The rest of the paper is organized as follows. Section 2 describes our methodology and algorithmic implementation. Section 3 analyzes theoretical properties of SVR Trees, including consistency and rate of convergence for excess risk. Section 4 presents numerical studies for real datasets.  Section 5 contains a discussion, and proofs are included in an Appendix.

\begin{figure}[h]
     \centering
     \begin{subfigure}[b]{0.32\textwidth}
         \centering
         \includegraphics[width=\textwidth]{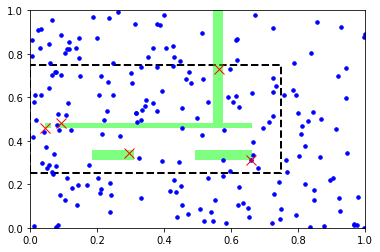}
         \caption{Unpruned CART.}
         \label{}
     \end{subfigure}
     \begin{subfigure}[b]{0.32\textwidth}
         \centering
         \includegraphics[width=\textwidth]{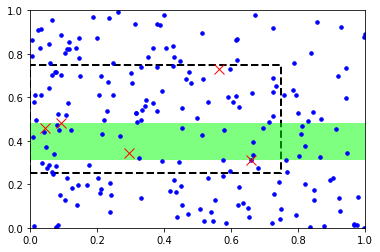}
         \caption{Optimally pruned CART.}
         \label{}
	\end{subfigure}
	\begin{subfigure}[b]{0.32\textwidth}
         \centering
         \includegraphics[width=\textwidth]{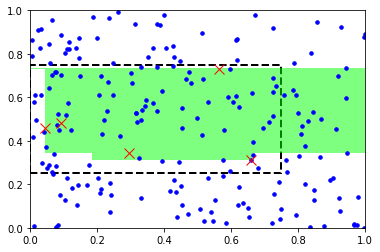}
         \caption{SVR-Tree.}
         \label{}
     \end{subfigure}
        \caption{Decision sets for different methods. Red crosses denote minority class samples, while blue points denote majority class samples. Rectangles with dashed frames denote the support of minority class samples, while rectangles filled with green color denote the minority class decision set.}
    \label{toy figure}
\end{figure}

\section{Methodology}
We first introduce the definition of surface-to-volume ratio (SVR) and tree impurity, and then define SVR-Tree as the minimizer of a weighted average of tree impurity and SVR. We then state the algorithm to estimate this tree from training data $\mathscr{D}_n = \{(X_i, Y_i)\}_{i=1}^n$. We assume readers have basic knowledge of tree-based classifiers like CART \citep{breiman1984classification} and C4.5 \citep{quinlan2014c4}. In the rest of the paper, the word ``tree'' refers specifically to classification trees that specify a class label associated with each leaf node. 

\subsection{Notation}\label{notations}
Training data are denoted as $\mathscr{D}_n = \{(X_i, Y_i)\}_{i=1}^n$, with $X_i \in \Omega \subset \mathbb{R}^d$ a vector of features and $Y_i \in \{0,1\}$ a class label. Uppercase letters $(X, Y)$ denote random variables, while lowercase $x, y$ denote specific values. Denote the $j$th feature of $X$ as $X[j]$. 
Let $\mathbb{P}^*$ denote the true distribution of i.i.d. samples $(X_i, Y_i)$ and let $\mathbb{P}$ denote the weighted distribution that up weights the minority class samples by a constant $\alpha\ge 1$. That is, for any measurable subset $A\subset \Omega$,
$$\mathbb{P}(A\times\{1\}) = \frac{\alpha\mathbb{P}^*(A\times\{1\})}{\mathbb{P}^*(\Omega\times\{0\})+\alpha\mathbb{P}^*(\Omega\times\{1\})}, \;\; \;\mathbb{P}(A\times\{0\}) = \frac{\mathbb{P}^*(A\times\{0\})}{\mathbb{P}^*(\Omega\times\{0\})+\alpha\mathbb{P}^*(\Omega\times\{1\})}.$$
Up weighting minority class samples is a conventional technique to counter imbalance and we will focus on the weighted measure $\mathbb{P}$ in the rest of the paper. Denote $\mathbb{P}_n$ as the weighted empirical distribution which assigns mass $w_0\alpha/n$ to minority class training samples and $w_0/n$ to majority class training samples. The constant $w_0$ is set to ensure the measures of all training samples add up to $1$.  \par{}
We use $f$ with various subscripts and diacritics to denote a general classifier from $\Omega$ to $\{0,1\}$, while $T$ with various subscripts and diacritics denotes a tree classifier from $\Omega$ to $\{0,1\}$. A tree classifier is formed by finite splitting steps and its decision sets are a finite union of hyperrectangles.
When discussing the probability of certain events that include $n$ random variables $\{(X_i, Y_i)\}_{i=1}^n$, we simply use $\mathbb{P}$ to represent the probability measure in the $n$-product space. For example, 
for any $a>0$, $\mathbb{P}\left(n^{-1}\sum_{i=1}^n X_i >a\right)$
denotes the probability of the event $\{\{(X_i, Y_i)\}_{i=1}^n: n^{-1}\sum_{i=1}^n X_i > a\}$. 
%\par{}
We use $\mathbb{E}$ for expectations over  $\mathbb{P}$ and $\mathbb{E}_n$ for expectations over $\mathbb{P}_n$. We use $\eta$ and $\eta_n$ to denote the conditional expectations of $Y$ given $X$ under the true and empirical measure, respectively. That is, $\eta(X) = \mathbb{E}(Y=1|X)$ and $\eta_n(X) = \mathbb{E}_n(Y=1|X)$. 

% We use $\eta$ and $\eta_j, \;1\le j\le d$ for conditional expectation of $Y$ given $X$ and $X[j]$, respectively. That is, $\eta(X) = \mathbb{P}(Y=1|X)$ and $\eta_j(X) = \mathbb{P}(Y=1|X[j])$. Their empirical counterparts are defined as $\eta_n(X)=\mathbb{P}_n(Y=1|X)$ and $\eta_{j,n}(X)=\mathbb{P}_n(Y=1|X[j])$.
%Uppercase $A$ is used to denote a leaf node of a tree, which is a hyperrectangle in $\Omega$. lowercase $p$, with various subscripts, are used to denote the conditional expectation $\mathbb{E}(Y|X)$ in various settings.

\subsection{Surface-to-Volume Ratio}
For all $d \in \mathbb{N}$, consider the space $\mathbb{R}^d$ equipped with Lebesgue measure $\mu$.
% define a $d$-dimensional measure space as $(\mathbb{R}^d, \mathscr{B}, \mu)$, where $\mathscr{B}$ is the collection of Borel sets, and $\mu$ is Lebesgue measure. 
For any Lebesgue measurable closed set $A\subset\Omega$, we define its volume as the Lebesgue measure of set $A$: $V(A) = \mu(A)$. If $A$ can be represented as a finite union of hyperrectangles, we define the  surface of $A$ as the $d-1$ dimensional Lebesgue measure of the boundary of $A$: $S(A) = \mu_{d-1}(\partial A)$; otherwise, if there exists $A_i, i\ge 1$ such that each $A_i$ is a finite union of hyperrectangles and $A_i$ converges to $A$ in Hausdorff distance, we define the surface of $A$ as $S(A) = \lim_{n\to\infty} S(A_i)$, provided the limit exists. In general, if all $A_i, i\ge 1$ are regularly shaped, it will be proven in Proposition \ref{SVR family surface} that the limit exists. Our definition of surface is different from the convention of the word ``surface''. For example, in our definition a $d$ dimensional ball with radius $1/2$ has the same surface as a $d$ dimensional cube with side length $1$, as approximating such a ball with hyperrectangles results in the same surface area as the cube. We adopt this definition of surface since we focus on tree classifiers whose decision sets are always  finite unions of hyperrectangles.
% Let $C_r$ denotes a hypercube centering at origin of $\mathbb{R}^d$ with side length $r$. The surface is defined by the following equation:
% $$S(A) = \lim_{r\to 0} \frac{V(A\backslash (\partial A \oplus C_r))}{r},$$
% where $\oplus$ denotes Minkowski addition. The above limit may not always exist. However, if $A$ is finite unions of hyperrectangles, the limit exists, is finite and equals to $d-1$ dimensional Lebesgue measure of $\partial A$.

For any set $A$ with $0<\mu(A)<\infty$, the surface-to-volume ratio (SVR) can be obtained as 
$r(A) = \frac{S(A)}{V(A)}.$
For sets with the same volume, the $d$-dimensional cube has the smallest SVR, while sets having multiple disconnected subsets and/or irregular boundaries have relatively high SVR. We visualize sets with different SVR in two-dimensional space in Figure \ref{SVR figure}, with the subfigures (a)-(c) corresponding to the decision sets in Figure \ref{toy figure}. We can see that decision sets with larger SVR tend to have more branches, disconnected components, or irregularly shaped borders than those with smaller SVR.

\begin{figure}[h]
     \centering
     \begin{subfigure}[b]{0.24\textwidth}
         \centering
         \includegraphics[width=\textwidth]{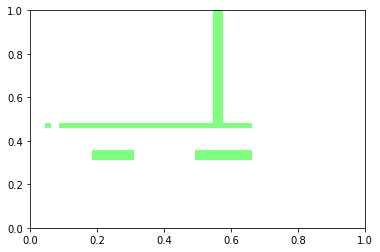}
         \caption{SVR=70.03}
         \label{}
     \end{subfigure}
     \begin{subfigure}[b]{0.24\textwidth}
         \centering
         \includegraphics[width=\textwidth]{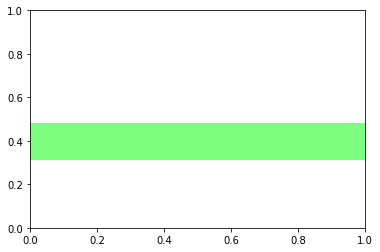}
         \caption{SVR=13.59}
         \label{}
	\end{subfigure}
	\begin{subfigure}[b]{0.24\textwidth}
         \centering
         \includegraphics[width=\textwidth]{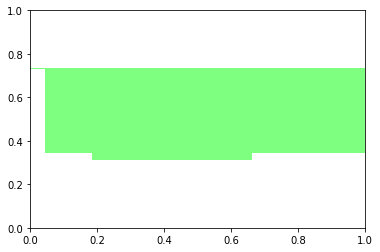}
         \caption{SVR=7.3407}
         \label{}
     \end{subfigure}
     \begin{subfigure}[b]{0.24\textwidth}
         \centering
         \includegraphics[width=\textwidth]{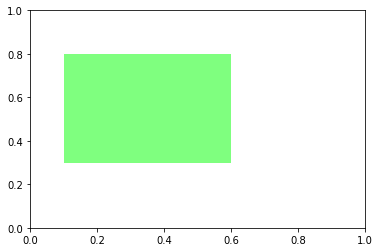}
         \caption{SVR=4}
         \label{}
     \end{subfigure}
     \begin{subfigure}[b]{0.24\textwidth}
         \centering
         \includegraphics[width=\textwidth]{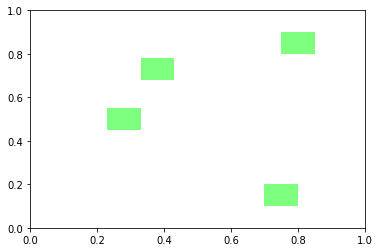}
         \caption{SVR=40}
         \label{}
     \end{subfigure}   
     \begin{subfigure}[b]{0.24\textwidth}
         \centering
         \includegraphics[width=\textwidth]{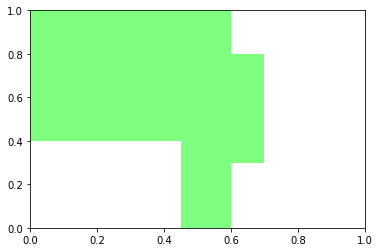}
         \caption{SVR=6.81}
         \label{}
     \end{subfigure}  
     \begin{subfigure}[b]{0.24\textwidth}
         \centering
         \includegraphics[width=\textwidth]{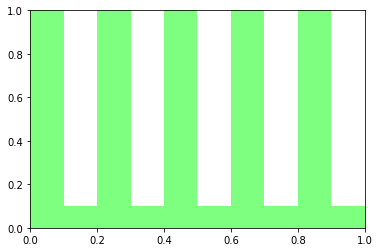}
         \caption{SVR=21.82}
         \label{}
     \end{subfigure}  
     \begin{subfigure}[b]{0.24\textwidth}
         \centering
         \includegraphics[width=\textwidth]{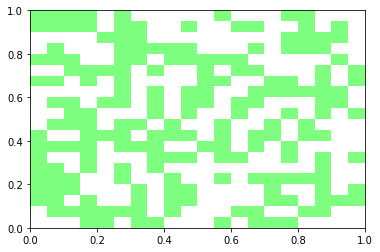}
         \caption{SVR=41.65}
         \label{}
     \end{subfigure}  
    \caption{Examples of sets with different SVR values in $[0,1]^2$. Rectangles filled with green color denote the minority class decision set.}
    \label{SVR figure}
\end{figure}

%In imbalanced data problems, naive methods that lack some form of regularizations will tend to produce an irregularly shaped decision set for the minority class, having small volume but large surface. For example, the decision set may be composed of multiple small sets around the minority class samples. By regularizing on surface-to-volume ratio, we can protect against such overfitting.

\paragraph{Surface-to-Volume Ratio of a Classification Tree}
For training data $\mathscr{D}_n$, we define a closed bounded sample space $\Omega\subset \mathbb{R}^d$ such that the support of $X$ is a subset of $\Omega$. A classification tree $T$ divides the sample space $\Omega \subset \mathbb{R}^d$ into two disjoint subsets $\Omega_1 = \{x\in\Omega: T(x)=1\}$ and $\Omega_0=\Omega \backslash \Omega_1$, where $\Omega_0 \cup \Omega_1= \Omega$. The tree $T$ predicts a new sample $X^*$ belongs to class 0 if $X^* \in \Omega_0$, and class 1 if $X^* \in \Omega_1$. The surface-to-volume ratio of a classification tree is defined as the surface-to-volume-ratio of set $\Omega_1$: 
$r(T) = r(\Omega_1).$

\subsection{Impurity Function and Tree Impurity}
A classification tree partitions the sample space into multiple leaf nodes, assigning one class label to each leaf node. The tree should be built to maximize homogeneity of the training sample class labels within nodes.  This motivates the following definition. %of impurity functions and tree impurity.
\begin{definition}[Impurity, Definition 2.5 of \citealt{breiman1984classification}]
An impurity function $\phi(\cdot, \cdot)$ is defined on the set of pairs $(p_0, p_1)$ satisfying $p_0\ge0, p_1\ge0$, $p_0+p_1=1$ with the properties (i) $\phi(\cdot, \cdot)$ achieves its maximum only at $(1/2, 1/2)$; (ii) $\phi(\cdot, \cdot)$ achieves its minimum only at $(1, 0), (0,1)$; (iii) 
$\phi(\cdot, \cdot)$ is symmetrical in $(p_0, p_1)$, i.e., $\phi(p_0, p_1) = \phi(p_1, p_0)$.
\end{definition}
Let $p_0$ and $p_1$ represent the probabilities of belonging to the majority and minority class, respectively, within some branch of the tree.  Ideally, splits of the tree are chosen so that, after splitting, $p_0$ and $p_1$ move closer to 0 or 1 and further from 1/2.  Many different tree building algorithms use impurity to measure the quality of a split; for example, CART uses Gini impurity and C4.5 uses entropy, which is effectively a type of impurity measure.  
% When data are imbalanced, it is important to modify the definition of impurity to account for the fact that $p_1$ is much smaller than $p_0$.  With this in mind, we propose the following weighted impurity function.
% \begin{definition}[Weighted Impurity]\label{weighted impurity}
% Letting $\phi(\cdot, \cdot)$ be an impurity function, a weighted impurity function with weight $\alpha$ for the minority class is defined as
% $$\varphi_\alpha(p_0, p_1) = \phi\Big(\frac{p_0}{p_0+\alpha p_1}, \frac{\alpha p_1}{p_0 + \alpha p_1}\Big).$$
% \end{definition}
In the remainder of the paper, the term `impurity function' refers to the Gini impurity. That is, $\phi(p_0,p_1) = 1 - p_0^2 - p_1^2$. 
% $$\varphi_\alpha(p_0, p_1) = 1 - \Big(\frac{p_0}{p_0+\alpha p_1}\Big)^2 - \Big(\frac{\alpha p_1}{p_0 + \alpha p_1}\Big)^2.$$

Let $A_1, A_2, \ldots, A_m$ be the leaf nodes of a classification tree $T$ and let $z_j\in\{0,1\}$ be the predictive class label for node $A_j$, for $j = 1,\ldots,m$. Let $\mathbb{P}$ be the weighted probability measure defined in section \ref{notations}. Then the impurity of leaf node $A_j$ is $I(A_j, \mathbb{P}) = \phi (\mathbb{P}(Y=0|X\in A_j), \mathbb{P}(Y=1|X\in A_j))$. The impurity of node $A_j$ measures the class homogeneity in $A_j$, but does not depend on the predictive class label $z_j$.  Let $\tilde{z}_j =\mathbbm{1}_{\{\mathbb{P}(Y=1|X\in A_j)\ge 1/2\}}$ denote the dominant class label in $A_j$ under weighted measure $\mathbb{P}$.  We define a signed impurity taking into account $z_j$ as
$$\tilde{I}(A_j, \mathbb{P}) = \mathbbm{1}_{\{z_j=\tilde{z}_j\}} I(A_j, \mathbb{P}) + \mathbbm{1}_{\{z_j\ne\tilde{z}_j\}} (1-I(A_j, \mathbb{P})).$$
Signed impurity serves as a proxy to classification accuracy. If the predictive class label $z_j$ matches the dominant class label $\tilde{z}_j$ in node $A_j$, the signed impurity of node $A_j$ is equal to the impurity of node $A_j$ and is no greater than $1/2$. Otherwise, an extra penalty of $1-2I(A_j,\mathbb{P})$ is applied. It is easy to see $1-2I(A_j,\mathbb{P}) = [1-2\mathbb{P}(Y=0|X\in A_j)]^2$. In other words, the signed impurity uses quadratic loss  instead of the absolute value loss function of classification accuracy.
Taking a weighted average of the signed impurities across the leaf nodes, one obtains the tree impurity and signed tree impurity.
\begin{definition}[Tree Impurity]\label{tree impurity}
Let $T$ be a tree and $A_1, A_2, \ldots A_m$ be all the leaf nodes of this tree. Denoting the weighted probability measure as $\mathbb{P}$, the tree impurity of $T$ is $I(T, \mathbb{P}) = \sum_{j=1}^m \mathbb{P}(X\in A_j) I(A_j, \mathbb{P}).$
\end{definition}

\begin{definition}[Signed Tree Impurity]\label{signed tree impurity}
Under the notation of Definition \ref{tree impurity}, the signed tree impurity is 
$\tilde{I}(T, \mathbb{P}) = \sum_{j=1}^m \mathbb{P}(X\in A_j) \tilde{I}(A_j, \mathbb{P}).$
%where the $\mathbb{P}_\alpha(X\in A_j)$ is the still weighted version of $\mathbb{P}$ defined in section 2.1.
\end{definition}
%The tree impurity is independent of the predictive class labels on leaf nodes, and is mainly used in feature selection steps in section 2.5. The signed tree impurity takes into account both the training error and class homogeneity in all nodes of the tree, and serves as the loss function in the next subsection. 

\subsection{SVR-Tree Classifiers}
The SVR-Tree classifier is the minimizer of the weighted average of signed tree impurity and surface-to-volume ratio. Letting  
 $\mathscr{T}$ be the collection of possible trees, the SVR-Tree classifier is defined as
\begin{equation}\label{Tree Impu}
\hat{T} = \argmin_{T\in\mathscr{T}}[\tilde{I}(T,\mathbb{P}_n) + \lambda_n r(T)],
\end{equation}
where $\lambda_n$ is a penalty. The unknown probability measure $\mathbb{P}$ is replaced with the empirical measure $\mathbb{P}_n$ that assigns mass $1/n$ to each training sample $(X_i, Y_i), 1\le i\le n$. Unfortunately, without restricting the space of trees $\mathscr{T}$,  optimization problem (\ref{Tree Impu}) is intractable.  In the following subsection, we introduce an iterative greedy search algorithm that limits the size of $\mathscr{T}$ in each step to efficiently obtain a tree having provably good performance.

\subsection{The SVR-Tree Algorithm}
The SVR-Tree Algorithm is designed to find a nearly optimal SVR-Tree classifier. SVR-Tree proceeds in a greedy manner. We begin with the root node. At each step, we  operate on one leaf node of the current tree, splitting it into two new leaf nodes by finding the solution of (\ref{Tree Impu}).  The node to split at each step is uniquely specified by a breadth-first searching order. After splitting, the tree is updated and the node to split in the next step will be specified. The process stops 
when further splitting of leaf nodes either does not improve the loss or a prespecified maximum number of leaf nodes is achieved.
%leads to too few training samples within a leaf.

\par{} We first describe how to split a current leaf node to improve the solution to 
(\ref{Tree Impu}). Suppose the current tree is $T$ and the node to split is $A$, 
with $n'$ training samples. For each feature $j$, sort all samples in $A$ by increasing order of the $j$th feature as $X[j]_{j_1}, X[j]_{j_2}, \ldots, X[j]_{j_{n'}}$. We only allow splits to occur at $(X[j]_{j_i}+ X[j]_{j_{i+1}})/2, \;1\le i\le n'-1,\; 1\le j\le d$, the midpoint of two adjacent values of each feature. The total number of different splits is no more than $(n'-1)d$. After each such split of $A$, we keep all other leaf nodes unchanged while allowing all $4$ different class label assignments at the two new daughter nodes of $A$.  The current set of trees $\mathscr{T}$ to choose from in optimizing (\ref{Tree Impu}) includes the initial $T$ and all the split trees described above.  The cardinality of $\mathscr{T}$ is no more than $1 + 4(n'-1)d$, a linear order of $n'$.  We compute the risk for all $T\in\mathscr{T}$ to find the minimizer.  If the initial $T$ is the risk minimizer, we do not make any split in this step and mark the node $A$ as ``complete''. Any node marked as complete will no longer be split in subsequent steps.

\par{} It remains to specify the `breadth-first' searching order determining which node to split in each step.  Let depth of a node be the number of edges to the root node, which has depth 0. Breadth-first algorithms explore all nodes at the current depth before moving on  (\citealt{cormen2009introduction}, chapter 22.2).  To keep track of changes in the tree, we use a queue\footnote{A queue is a dynamic set in which the elements are kept in order and the principal operations on the set are the insertion of elements to the \textbf{tail}, known as enqueue, and deletion of elements from the \textbf{head}, known as dequeue. See chapter 10.1 of \cite{cormen2009introduction} for details.}. We begin our algorithm with a queue where the only entity is the root node. At each step, we remove the node at the front terminal of the queue, and split at this node as described in the previous paragraph. If a split tree is accepted as the risk minimizer over the current set $\mathscr{F}$, we enqueue two new leaf nodes; otherwise, the unsplit tree is the risk minimizer over the current set $\mathscr{F}$, so we don't enqueue any new node. The nodes in the front of the queue have the lowest depth. Therefore, our algorithm naturally follows a breadth-first searching order. We preset the maximal number of leaf nodes as $\bar{a}_n$. The process is stopped when either the queue is empty, in which case all the leaf nodes are marked as complete, or the number of leaf nodes is $\bar{a}_n$. On many datasets, Algorithm 1 will stop before reaching the maximal number of leaf nodes $\bar{a}_n$.  The upper bound is necessary in certain circumstances, since our algorithm, like the majority of practical tree algorithms, is greedy and not guaranteed to find the global minimum. Provided we can solve the global minimum, the upperbound $\bar{a}_n$ is no longer required, as shown in Theorem 2 and discussed thereafter.

% \begin{figure}[h]
%      \centering
%      \includegraphics[width=10cm]{breadth-first-diagram.png}
%      \caption{Diagram for breadth-first tree building procedure.}
%      \label{}
% \end{figure}

\par{} Our SVR-Tree algorithm has a coarse to fine tree building style, tending to first split the sample space into larger pieces belonging to two different classes, followed by modifications to the surface of the decision set to decrease tree impurity and SVR. The steps are sketched in Algorithm 1, where feature selection steps are marked as optional. A more detailed and rigorous version is in Section 4.1 of the supplementary material.

The average time complexity of Algorithm 1 is $O(dn\log n)$, which is the same as many tree algorithms like CART and C4.5. A detailed analysis of computational complexity is available at Section 4.2 of the supplementary material.
% A diagram illustrating the breadth first searching order is shown in Figure 2.

\paragraph{Optional Step for Feature Selection}
It is likely that some features will not help to predict class labels. These features should be excluded from our estimated tree. Under some mild conditions, a split on a redundant feature has minimal impact on the tree impurity compared to a split in a non-redundant feature. Thus feature selection can be achieved by thresholding. Suppose we are splitting node $A$ into two new leaf nodes $A_1, A_2$. Then the (unsigned) \textbf{tree impurity decrease} after this split is defined as:
$$
\Delta I(T, \mathbb{P}_n) = \mathbb{P}_{n}(A) [I(A, \mathbb{P}_n) - I(A_1,\mathbb{P})\mathbb{P}_{n}(A_1|A) -  I(A_2,\mathbb{P})\mathbb{P}_{n}(A_2|A)]. $$
Let $J_0\subset \{1, 2, \ldots d\}$ be the indices of features that have been split in previous tree building steps. Given that we are splitting on node $A$, let $\Delta I_0$ be the maximal tree impurity decrease over all splits in feature $X[j], j\in J_0$. Then a split in a new feature $X[j'], j'\not\in J_0$, with tree impurity decrease $\Delta I(T, \mathbb{P}_n)$, is accepted if
\begin{equation}\label{impu decr}
\Delta I(T, \mathbb{P}_n) \ge \Delta I_0 + c_0\lambda_n,
\end{equation}
where $c_0$ is a constant independent of the training data. By equation (\ref{impu decr}), a split on a new feature is accepted if its tree impurity decrease is greater than $\Delta I_0$, the maximal tree impurity decrease over all previously split features, plus a threshold term $c_0\lambda_n$. 
%The threshold level $c_0\lambda_n$ is chosen to properly control the noise in the training data. 
%This choice of threshold limits the risk of false splitting of redundant features in two ways: 1. If there are enough samples in node $A$, with high probability, a split on a redundant feature will not decrease the impurity of this node too much and will not decrease tree impurity over the threshold; 2. If there are very few samples in node $A$, the probability $\mathbb{P}_n(A)$ will be small enough to get the split rejected. In addition,  
%As long as $\lambda_n \to 0$,  thresholding will not prevent splitting of informative features, because such splits can often generate a considerable impurity decrease, at least at the first few steps. As long as a split on a feature is already accepted, further splits on this feature do not need to satisfy (\ref{impu decr}), so smaller changes in the decision boundary can continue to be accepted. Rigorous theoretical analysis will be provided in section 5.
% DD - the above is babbling
Theoretical support for this thresholding approach is provided in the supplementary material.

\begin{algorithm}[H]
\SetAlgoLined
\KwResult{Output the fitted tree}
Input training data $\{(X_i, Y_i)\}_{i=1}^n$, impurity function $f(\cdot)$,  weight for minority class $\alpha$, SVR penalty parameter $\lambda_n$, and maximal number of leaf nodes $\bar{a}_n\in\mathbb{N}$. Let $\mr{node\_queue}$ be a queue of only root node

\For{$1\le j \le d$}{Sort all the samples in values of $j$th feature;}

\While{$\mr{node\_queue}$ is not empty and number of leaf nodes $\le \bar{a}_n$}{		
Dequeue the first entity in $\mr{node\_queue}$, denoting it as $\mr{node}$;

\For{all possible splits in $\mr{node}$}{
(optional) Check if the current split satisfies feature selection conditions; 
if not satisfied, reject the current split and continue;
Compute the signed tree impurity of the current split;
}

Find the split that minimizes the weighted average of signed impurity and SVR as in equation \eqref{Tree Impu};

\eIf{the weighted average of signed impurity and SVR is decreased}{
Accept the current split. Enqueue two child nodes of $\mr{node}$ into $\mr{node\_queue}$;}{
Reject the current split;
}
}
\caption{Outline of steps of SVR-Tree}
\end{algorithm}

% This shows the efficiency of Algorithm 1.
% \textcolor{blue}{(Cheng: Please add some comments on the comparison of this complexity with the complexity of other tree algorithms such as CART, C4.5., and SMOTE, i.e. those you are going to compare in the simulation studies. You can assume they are compared under the same parameters for the depth, leaf nodes, etc.)}

\section{Theoretical Properties}
In this section, we will discuss the theoretical properties of SVR-Tree from two aspects. First, we show the classifier obtained from Algorithm 1 is consistent in the sense that a generalized distance between SVR-Tree and the oracle classifier goes to zero as sample size goes to infinity. Second, we show for an idealized empirical risk minimizer over a family of SVR constrained classifiers, the excess risk converges to zero at the rate $O((\log n/n)^{\frac{1}{d+1-1/\kappa}})$ as sample size goes to infinity, where $\kappa \ge 1$ is defined in Condition \ref{margin cond}. The consistency result provides some basic theoretical guarantees for SVR-Tree while the rate of convergence result provides some further insights on how the surface-to-volume ratio works in nonparametric classification. We also developed results regarding feature selection consistency of Algorithm 1 with optional steps enabled. Due to limited space, these results are provided in the  supplementary material.
% In this section, we will discuss the consistency of our classification tree obtained from Algorithm 1 in two aspects: the first is estimation consistency -- as sample size goes to infinity, a generalized distance between SVR-Tree and the oracle classifier converges to zero; the other is feature selection consistency -- for redundant features that are conditionally independent of $Y$ given the other features, the probability of SVR-Tree excluding these features converges to one. 

\subsection{Estimation Consistency}
% We define a generalized metric on the space of all nonrandom classifiers, introduce classifier risk, and define our notions of 
% classifier consistency.
% \begin{definition}
% For any weight $\alpha>0$, any two classifiers $f_1, f_2: \Omega \to \{0,1\}$, the generalized metric between $f_1, f_2$ is defined as
% $$d(f_1, f_2) = \int |f_1 - f_2| d\mathbb{P}_X,$$
% \end{definition}
% We then define the risk of classifiers, the oracle classifiers and consistency based on this generalized metric.

% \begin{definition}
% For any weight $\alpha$ and classifier $f:\Omega \to \{0,1\}$, we define the risk as
% $$R(f) = \mathbb{E} \left[\frac{\alpha}{1+\alpha} \max(Y-f(X),0) + \frac{1}{1+\alpha}\max(f(X)-Y,0)\right],$$
% where the expectation is taken over probability measure $\mathbb{P}$ for 
% random variables $(X,Y)$.
% \end{definition}
Let the classification risk under measure $\mathbb{P}$ be $R(f) = \mathbb{E}|Y-f(X)|$ and its empirical version under measure $\mathbb{P}_n$ be $R_n(f) = \mathbb{E}_n|Y-f(X)|$.
Let $\mathscr{F}$ be the collection of all measurable functions from $\Omega$ to $\{0,1\}$. Then the oracle classifier and oracle risk can be defined as
$f^* = \arg\min_{f\in\mathscr{F}} R(f)$ and $R^* = R(f^*)$, respectively.
% Without loss of generality, we assume . 
% Then the oracle classifier is unique almost surely. ,\;a.s.
We define the consistency of a sequence of classifiers based on the excess risk comparing to oracle risk.
\begin{definition}
A sequence of classifiers $f_n$ is consistent if
$ \lim_{n\to\infty}  \mathbb{E} R(f_n) = R^*.$
\end{definition}
In the case when $\mathbb{P}(\{X\in\Omega:\eta(X)=1/2\})=0$, consistency of $f_n$ is equivalent to requiring $\lim_{n\to\infty} \mathbb{E}|f_n-f^*|=0$ almost surely. For any hyperrectangle $A\subset \Omega$, denote the component-wise conditional expectation of $A$ as $\eta_{A,j}(X[j]) = \mathbb{E}(Y|X[j],X\in A)$. To ensure the consistency of Algorithm 1, we need the following identifiability condition. 
\begin{condition}[Identifiability]\label{identifiable}
For any hyperrectangle $A\subset \Omega$, if the component-wise conditional expectation $\eta_{A,j}(X)$ is a constant on $A$, $\forall 1\le j\le d$, then the conditional expectation $\eta(X)$ is a constant on $A$. 
\end{condition}

Condition \ref{identifiable} is indispensable for proving the consistency of our SVR regularized tree classifiers without requiring the diameter of each leaf node to go to zero. This is because CART-type algorithms can only make splits based on the component-wise conditional expectation. If all the component-wise conditional expectations $\eta_{A,j}(X[j])$ are constants but the overall conditional expectation $\eta(x)$ varies in $x$, then the problem is not identifiable to CART-type split rules. Condition \ref{identifiable} can also be viewed as a relaxed version of Condition (H1) in \cite{scornet2015consistency} who studied the consistency of random forest regression.
Their Condition (H1) requires that the mean of $Y$ given $X$ follows an additive model in the regression tree setup. An analogous condition to their (H1) for our classification tree setup would require $\eta(x)$ to be an additive function, which would imply our Condition \ref{identifiable}. 

%The identifiable condition is a generalization of Condition (H1) of \cite{scornet2015consistency}. 

%expect specially designed $\eta(X)$ with certain symmetric or periodic properties, . \textcolor{blue}{(Cheng: What does the last sentence mean?)}

Denote the tree obtained from Algorithm 1 as $\hat{T}_n$. Theorem \ref{consistency} shows $\hat{T}_n$ is consistent under mild conditions.
\begin{theorem}[Estimation consistency]\label{consistency}
Let $\Omega=[0,1]^d$ and the weighted measure $\mathbb{P}$ have continuous density $\rho(x)$ with lower bound $\rho(x)\ge\rho_{\min} >0$ for all $x\in \Omega$. Assume that Condition \ref{identifiable} is satisfied, $\eta(x)$ is continuous everywhere in $\Omega$, and $\bar a_n$, $\lambda_n$ satisfy
$\lim_{n\to\infty}\bar{a}_n = \infty$, $\lim_{n\to\infty}\lambda_n = 0$, 
$\lim_{n\to\infty}\frac{\bar{a}_n d\log n}{n} = 0.$
Then the classifier $\hat{f}_n$ obtained from Algorithm 1 is consistent.
\end{theorem}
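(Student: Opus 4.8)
The plan is to bound the excess risk $\mathbb{E}R(\hat f_n)-R^{*}$ by reducing it to the \emph{fineness} of the partition produced by Algorithm~1, and then to check that this partition becomes asymptotically fine. The reduction is elementary: if $T$ is any tree each of whose leaves $A_j$ carries the $\mathbb{P}$-dominant label $\tilde z_j$ and on which the oscillation of $\eta$ is at most $\epsilon$, then for any $x_0\in A_j$ both $\min\!\bigl(\mathbb{P}(Y=0\mid X\in A_j),\mathbb{P}(Y=1\mid X\in A_j)\bigr)=\min\!\bigl(\mathbb{E}[1-\eta\mid A_j],\mathbb{E}[\eta\mid A_j]\bigr)$ and $\mathbb{E}[\min(\eta,1-\eta)\mid X\in A_j]$ lie within $\epsilon$ of $\min(\eta(x_0),1-\eta(x_0))$, so the leaf risk of $T$ on $A_j$ exceeds the leaf Bayes risk by at most $2\epsilon$; summing against $\mathbb{P}(A_j)$ and using $R^{*}=\mathbb{E}[\min(\eta(X),1-\eta(X))]$ gives $R(T)-R^{*}\le 2\epsilon$. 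Hence it suffices to show that, with probability tending to one, every leaf of $\hat T_n$ of non-negligible $\mathbb{P}$-mass is labeled by its dominant class and has $\eta$-oscillation $o(1)$, while the remaining leaves have total $\mathbb{P}$-mass $o(1)$; one then lets the target oscillation $\epsilon=\epsilon_n$ decay slowly.

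\emph{Estimation error.} Throughout Algorithm~1 the only trees ever compared are those with at most $\bar a_n$ leaves whose splitting coordinates are data midpoints, a collection $\mathscr{T}_n$ with $\log|\mathscr{T}_n|=O(\bar a_n d\log(nd))$. After discarding leaves of tiny empirical mass — harmless because the density lower bound $\rho_{\min}$ forces their $\mathbb{P}$-mass to be uniformly small as well — the signed tree impurity of any tree in $\mathscr{T}_n$ is a fixed Lipschitz function of the at most $2\bar a_n$ bounded empirical functionals $\mathbb{P}_n(X\in A_j)$ and $\mathbb{P}_n(Y=1,X\in A_j)$, so Hoeffding's inequality with a union bound gives
\[
\sup_{T\in\mathscr{T}_n}\bigl|\tilde I(T,\mathbb{P}_n)-\tilde I(T,\mathbb{P})\bigr| \;=\; O_{\mathbb{P}}\!\Bigl(\sqrt{\bar a_n d\log(nd)/n}\Bigr)\;=\;o_{\mathbb{P}}(1),
\]
and the analogous bound holds simultaneously for every candidate one-step tree-impurity decrease. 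The SVR term needs no probabilistic control since $r(T)$ is deterministic; the relevant fact is the purely geometric one that a single axis-aligned cut of a sub-hyperrectangle of $[0,1]^d$ followed by relabeling one child changes $S(\Omega_1)$ by at most a constant depending only on $d$, hence changes $\lambda_n r(T)$ by $O(d\lambda_n)=o(1)$.

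\emph{Greedy dynamics.} The core step is to upgrade Condition~\ref{identifiable} to a quantitative form: for every $\epsilon>0$ there is $\delta(\epsilon)>0$ such that any hyperrectangle $A\subset\Omega$ on which $\eta$ oscillates by at least $\epsilon$ admits an axis-aligned split with population tree-impurity decrease at least $\delta(\epsilon)$. This follows from compactness of the family of closed sub-rectangles of $[0,1]^d$ in the Hausdorff metric together with continuity of $\eta$ and $\rho$: otherwise a limiting rectangle would have all component-wise conditional expectations constant yet $\eta$ nonconstant, contradicting Condition~\ref{identifiable}. Since a leaf of non-negligible mass receives order-$n$ training points whose coordinates are dense in the leaf (again using $\rho_{\min}$), data midpoints approximate all split positions, so by the uniform estimation bound and the $o(1)$ per-split SVR bound, for $n$ large every such ``bad'' leaf admits a data-midpoint split plus a relabeling that strictly decreases the penalized empirical objective $\tilde I(\cdot,\mathbb{P}_n)+\lambda_n r(\cdot)$; thus Algorithm~1 cannot mark it complete. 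Because that objective is non-increasing along the algorithm, starts below $\tfrac12+2d\lambda_n$, and drops by at least $\delta(\epsilon)-O(d\lambda_n)\ge\delta(\epsilon)/2$ at each split of a bad leaf, at most $O(1/\delta(\epsilon))$ such splits ever occur; since $\bar a_n\to\infty$, for $n$ large the cap does not bind before the queue empties, so at termination every non-negligible leaf is $\epsilon$-homogeneous, and for each the best-of-four labeling step (using $\lambda_n=o(1)$ once more, leaves with $\eta\approx 1/2$ being harmless) selects the empirical — hence population — dominant label. Feeding this into the reduction above and sending $\epsilon_n\downarrow 0$ slowly enough that all $o(1)$ estimates survive yields $\mathbb{E}R(\hat f_n)\to R^{*}$.

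\emph{Main obstacle.} The delicate point is exactly this entanglement of greediness, the SVR penalty and sampling noise: one must make rigorous that a leaf marked ``complete'' is asymptotically homogeneous, while correctly discounting leaves that receive too few samples to be diagnosed reliably (so a spurious ``complete'' verdict there is tolerable only because their aggregate $\mathbb{P}$-mass is $o(1)$), and one must ensure the SVR penalty, though $o(1)$ per split, cannot defeat the fixed impurity gain $\delta(\epsilon)$ across the $O(1/\delta(\epsilon))$ decisive splits — which is why the per-split surface bound must be a dimensional constant rather than something growing with depth. The quantitative identifiability lemma and the simultaneous bookkeeping of leaf masses, the impurity budget, and the growing cap $\bar a_n$ are the technical crux.
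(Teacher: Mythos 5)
Your high-level structure --- estimation error plus an approximation-type term controlled by the greedy dynamics --- matches the paper's plan, and some ingredients (the $O(\sqrt{\bar a_n d\log n / n})$ uniform estimation bound, the reduction from leaf-wise $\eta$-oscillation to excess risk, the role of Condition~\ref{identifiable}) track the paper's Lemmas~\ref{lem: signed impu consistency}--\ref{approximation error}. The paper, however, centers the whole argument on the signed tree impurity $\tilde I$ via the clean decomposition $\tilde{I}(\hat{T}_n,\mathbb{P}) - I^* \le 2 \sup_{T\in\mathscr{T}_n}|\tilde{I}(T, \mathbb{P}) - \tilde{I}(T,\mathbb{P}_n)| + \inf_{T\in\mathscr{T}_n}|\tilde{I}(T, \mathbb{P})+\lambda_n r(T) - I^*|$, where $\mathscr{T}_n$ is the set of all relabelings of $\hat T_n$'s partition; this isolates the approximation step from the dynamics in a way you do not.

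There is a genuine gap in your greedy-dynamics step. The budget argument correctly shows the population objective drops by at least $\delta(\epsilon)/2$ at each split of a ``bad'' leaf, hence at most $O(1/\delta(\epsilon))$ such splits. But it does not bound the total number of splits, and the claim ``since $\bar a_n\to\infty$, for $n$ large the cap does not bind before the queue empties'' does not follow. Leaves with oscillation below $\epsilon$ can still admit splits that strictly decrease the empirical objective --- by up to $O(\epsilon)$ from genuine impurity gain, or by $O(\sqrt{\bar a_n d\log n / n})$ from sampling noise --- and nothing in your accounting caps the number of such accepted splits below $\bar a_n$. Since $\bar a_n$ is required to satisfy $\bar a_n d\log n / n\to 0$, it grows slowly, so the cap very plausibly binds; when it does, the leaves still sitting in the queue have never been tested, and your argument gives no control on their oscillation or aggregate mass. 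A correct proof must handle the case where the algorithm terminates by hitting $\bar a_n$, which is precisely the content of the paper's Lemma~\ref{approximation error} (whose hypotheses are only $\lambda_n\to 0$ and $\bar a_n\to\infty$, i.e., it is stated to work whether or not the cap binds). Separately, your compactness argument for the quantitative form of Condition~\ref{identifiable} needs care at the boundary of the rectangle family: sub-rectangles of $[0,1]^d$ degenerate in the Hausdorff limit, where the conditional expectations $\eta_{A,j}$ are ill-conditioned and the impurity-decrease functional need not be lower semicontinuous, so the ``otherwise a limiting rectangle would violate Condition~\ref{identifiable}'' step is not automatic.
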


\subsection{Rate of Convergence for Empirical Risk Minimizer}

We now consider the empirical risk minimizer over a class of trees under the SVR constraint and derive the rate of convergence for this empirical risk minimizer. This empirical risk minimizer can be considered as an idealized classifier.  
%In general for tree algorithms, efficient algorithms to find the global risk minimizer among all classification trees are available, and greedy search is typically used. 
%Hence, the empirical risk minimizer provides a useful alternative in consider properties of greedy search algorithms.

We first introduce the family of SVR constrained classifiers. For ease of notation, for any classifier $f$, we denote its decision set by $\Omega_1(f) = \{x\in\Omega: f(x)=1\}$. 
Define $\mathscr{F}_0$ as the family of all possible classifiers generated from trees with finite depth:
$\mathscr{F}_0 = \left\{f:\Omega\to [0,1]: \Omega_1(f) \;\;\text{is a finite union of hyperrectangles}\right\}.$
For $\gamma>0$, the family of SVR constrained classifiers $\mathscr{F}_{\gamma}$ is defined as
\begin{equation}\label{SVR family}
\mathscr{F}_\gamma = \left\{f\in\mathscr{F}_0: \forall\; f'\in\mathscr{F}_0, \;
S(f') - S(f) \ge -\gamma V(\Omega_1(f)\btr\Omega_1(f'))
\right\},
\end{equation}
where $A\btr B = (A\cap B^c) \cup (A^c \cap B)$ denotes the set difference between two generic sets $A$ and $B$.
The elements of $\mathscr{F}_\gamma$ are constrained by SVR in the sense that the surface area of $\Omega_1(f)$ can decrease at most proportionally to the change in its volume. The parameter $\gamma$ is set to be no smaller than $2d$ to prevent $\mathscr{F}_\gamma$ from being empty. The definition of $\mathscr{F}_\gamma$ matches the intuition in Algorithm 1 where SVR in every leaf node is checked. Moreover, the following proposition reveals that the regularized risk minimizer of all trees over the true measure, provided it exists, always lies in the family $\mathscr{F}_\gamma$. Therefore, in the subsequent analysis, we will focus on the effect of $\gamma$ on the convergence of excess risk since it plays an equivalent role to $\lambda$ in the minimization problem \eqref{Tree Impu}.

% \begin{equation}\label{SVR family}
% \mathscr{F}_\gamma = \left\{f\in\mathscr{F}_0: \forall\; \text{hyperrectangle} \;A\subset \Omega, \;\frac{S(\Omega_1(f)\cup A)-S(f)}{V(A\backslash \Omega_1(f))}\ge-\gamma, \; \frac{S(\Omega_1(f)\backslash A)-S(f)}{V(A\cap \Omega_1(f))}\ge-\gamma \right\}.
% \end{equation}
% The family $\mathscr{F}_\gamma$ is defined such that including (or excluding) any hyperrectangle to (or from) the decision sets can not result a decrease in surface that is greater than $\gamma$ times the volume change. The parameter $\gamma$ is set to be no smaller than $2d$ to avoid $\mathscr{F}_\gamma$ being empty. The definition of $\mathscr{F}_\gamma$ matches the intuition in Algorithm 1 where SVR in every leaf node is checked. Moreover, the following proposition shows the regularized risk minimizer of all trees over the true measure, provided exists, always lies in $\mathscr{F}_\gamma$.

\begin{proposition}\label{SVR family well}
Suppose that $\Omega=[0,1]^d$, the density $p$ of the weighted marginal measure $\mathbb{P}$ exists and satisfies $\rho(x)\le \rho_{\max}<\infty$ for all $x\in \Omega$. If $\gamma \ge 2d\lambda + 1 + \rho_{\max}/\lambda$, then
$$\inf_{f\in\mathscr{F}_\gamma}[\tilde{I}(f,\mathbb{P}) + \lambda r(f)] = \inf_{f\in\mathscr{F}_0}[\tilde{I}(f,\mathbb{P}) + \lambda r(f)],$$
$$\inf_{f\in\mathscr{F}_\gamma}[R(f) + \lambda r(f)] = \inf_{f\in\mathscr{F}_0}[R(f) + \lambda r(f)].$$
\end{proposition}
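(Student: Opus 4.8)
The plan is to show both equalities by proving that for any classifier $f \in \mathscr{F}_0$ there exists a classifier $g \in \mathscr{F}_\gamma$ whose regularized objective value is no larger; since $\mathscr{F}_\gamma \subset \mathscr{F}_0$, this yields the reverse inequality to the trivial one and hence equality. The natural candidate for $g$ is the minimizer of the regularized objective restricted to the (nonempty, since $\gamma \ge 2d$) subclass of classifiers obtained from $f$ by surface-decreasing moves; more concretely, I would argue that if $f \notin \mathscr{F}_\gamma$ then by definition there is some $f' \in \mathscr{F}_0$ with $S(f') - S(f) < -\gamma\, V(\Omega_1(f) \btr \Omega_1(f'))$, and I want to show replacing $f$ by $f'$ strictly decreases (or does not increase) the objective, so that a genuine minimizer must lie in $\mathscr{F}_\gamma$. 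The key quantitative input is a Lipschitz-type bound on how much the loss term can change under a modification of the decision set: both $\tilde I(\cdot,\mathbb{P})$ and $R(\cdot)$ change by at most a multiple of the measure $\mathbb{P}(\Omega_1(f) \btr \Omega_1(f'))$ of the symmetric difference, which by the density bound is at most $\rho_{\max}\, V(\Omega_1(f)\btr\Omega_1(f'))$.

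The steps, in order, are as follows. First I would record the elementary bound $|R(f) - R(f')| \le \mathbb{P}(\Omega_1(f)\btr\Omega_1(f')) \le \rho_{\max} V(\Omega_1(f)\btr\Omega_1(f'))$, and the analogous bound for signed tree impurity $\tilde I$, using that $\tilde I$ is (up to the quadratic-versus-absolute-value discrepancy noted after Definition~\ref{signed tree impurity}) controlled by the same symmetric-difference measure — this is where I would need to be slightly careful, checking that the signed impurity of a union of leaf nodes changes by at most $\mathbb{P}$ of the reassigned region, possibly with an extra constant factor that gets absorbed into the ``$1$'' in the hypothesis $\gamma \ge 2d\lambda + 1 + \rho_{\max}/\lambda$. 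Second, I would control the volume term: $|\lambda r(f) - \lambda r(f')| = \lambda |S(\Omega_1(f))/V(\Omega_1(f)) - S(\Omega_1(f'))/V(\Omega_1(f'))|$, and here the decomposition into a surface change and a volume change is needed — the volume part contributes something like $\lambda \cdot S(f') \cdot |V(f)^{-1} - V(f')^{-1}|$, bounded via $|V(f) - V(f')| \le V(\Omega_1(f)\btr\Omega_1(f'))$, while the surface part contributes $\lambda (S(f') - S(f))/V(f')$. Third, I would assemble these: if $f \notin \mathscr{F}_\gamma$, the gain $\lambda(S(f)-S(f'))/V(f') > \lambda \gamma V(\Omega_1(f)\btr\Omega_1(f'))/V(f')$ from the surface term must be shown to dominate the combined loss from the impurity/risk change, the volume-rescaling change, and the $2d\lambda$ term coming from the worst-case ratio $S/V$ on $[0,1]^d$; matching coefficients is exactly what forces the stated lower bound on $\gamma$. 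Finally, iterating this improvement (or passing to an infimum and using a limiting/compactness argument over finite unions of hyperrectangles, invoking Proposition~\ref{SVR family surface} for well-definedness of the surface in the limit) shows the two infima over $\mathscr{F}_0$ are attained, or approached, within $\mathscr{F}_\gamma$.

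The main obstacle I anticipate is the bookkeeping in Step~2–3: the surface-to-volume ratio is a \emph{ratio}, not a linear functional, so a single surface-decreasing move does not obviously decrease $r$, and one must control the denominator $V(\Omega_1(f'))$ — in particular guard against it becoming small. I would handle this by noting that on $\Omega = [0,1]^d$ any classifier has $S/V$ bounded below by the isoperimetric constant $2d$ (the cube minimizes SVR among sets of a given volume, as stated in Section~2.2), which lets me bound the ``rescaling'' error uniformly, and by observing that if $V(\Omega_1(f'))$ is very small then $r(f')$ is large, so such $f'$ cannot be near-optimal for the regularized objective; restricting attention to classifiers with objective value no worse than $f$'s gives a uniform lower bound on the volume. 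The other delicate point is the exact form of the impurity Lipschitz bound — I expect the constant there is where the additive ``$+1$'' in the hypothesis on $\gamma$ originates, and verifying it may require expanding $\tilde I$ leaf-by-leaf and bounding each term's variation by the mass moved across that leaf's boundary.
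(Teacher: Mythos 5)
Your high-level strategy---use the defining inequality of $\mathscr{F}_\gamma$ to produce a witness $f'$ with $S(f')-S(f)<-\gamma V(\Omega_1(f)\btr\Omega_1(f'))$, then show the surface-penalty gain beats the Lipschitz change in $\tilde I$ or $R$ (controlled by $\rho_{\max}V(\Omega_1(f)\btr\Omega_1(f'))$) and the volume-rescaling change---is the right idea, and the bookkeeping issues you flag (the ratio is not linear, $V(\Omega_1(f'))$ must be controlled, near-optimality of $f$ bounds $r(f)$ via the $2d\lambda$-objective of the trivial full-space classifier) are the right ones to worry about. A cleaner identity than your decomposition is
\[
r(f)-r(f') \;=\; \frac{\bigl(S(f)-S(f')\bigr) - r(f)\bigl(V(\Omega_1(f))-V(\Omega_1(f'))\bigr)}{V(\Omega_1(f'))}
\;\ge\; \frac{\bigl(\gamma-r(f)\bigr)\,V(\Omega_1(f)\btr\Omega_1(f'))}{V(\Omega_1(f'))},
\]
which avoids the mixed $V(f)/V(f')$ terms in your Step~2 and makes explicit that what you need is $\gamma\ge r(f)+\rho_{\max}/\lambda$ on the set of $f$ that are competitive with the trivial classifier; your speculation that the additive $+1$ comes from the impurity Lipschitz constant is likely not right---it comes from the bound $\lambda r(f)\le \tilde I(\Omega,\mathbb P)+2d\lambda\le 1+2d\lambda$ for any near-minimizer.

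The genuine gap is your final step. The replacement argument only produces an $f'\in\mathscr{F}_0$ with $\text{obj}(f')\le\text{obj}(f)$; nothing forces $f'\in\mathscr{F}_\gamma$, so you must iterate, and the iteration need not terminate in $\mathscr{F}_\gamma$ after finitely many steps: the surface decrements $\gamma V(\Omega_1(f_k)\btr\Omega_1(f_{k+1}))$ are summable but can shrink geometrically, so no $f_k$ ever lies in $\mathscr{F}_\gamma$. Passing to the limit does not obviously help either, because the Hausdorff (or symmetric-difference) limit of a sequence of finite unions of hyperrectangles is generally \emph{not} a finite union of hyperrectangles, hence lies outside $\mathscr{F}_0$ and a fortiori outside $\mathscr{F}_\gamma$; Proposition~\ref{SVR family surface} only concerns $\overline{\mathscr{F}_\gamma}$ and gives you no element of $\mathscr{F}_\gamma$ to compare against. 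To close this you need an actual argument, e.g.\ consider $s^*=\inf\{S(g):g\in\mathscr{F}_0,\ \text{obj}(g)\le\text{obj}(f)\}$, take $g$ within $\gamma\delta$ of $s^*$, show (via the replacement inequality) that any violation of the $\mathscr{F}_\gamma$ condition at $g$ forces $V(\Omega_1(g)\btr\Omega_1(g'))\le\delta$, and then control $\text{obj}$ on this small remaining discrepancy as $\delta\to 0$ --- or work with an $\eta$-relaxed version of the defining inequality so that each improvement step drops $S$ by at least $\eta$ and terminates, then send $\eta\to 0$. As written, ``iterating this improvement or passing to an infimum'' is an acknowledgement of the gap rather than a proof of it.
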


% For empirical measure induced by dataset $\mathscr{D}_n = \{(X_i,Y_i)\}_{i=i}^n$, for all coordinate $j$, $1\le j \le d$, sort $X_i, 1\le i\le n$ by their $j$th feature as $X[j]_{j1}\le X[j]_{j2} \le \cdots \le X[j]_{jn}$. Then the family $\mathscr{T}_{\gamma,n}$ is defined as
% \begin{align*}
% \mathscr{F}_{\gamma,n} = \big\{ & T\in\mathscr{F}_r: T \;\text{is a tree formed by finite splits}; \\
% & \text{splits at} \; j\mr{th} \; \text{feature can only occur at} \; (X[j]_{jl_1}+X[j]_{jl_2})/2, l_1\ne\l_2\big\}.
% \end{align*}
% Family $\mathscr{F}_{\gamma,n}$ is essentially the intersection between $\mathscr{F}_{\gamma}$ and the family of possible trees determined by the dataset. Any possible CART tree satisfying SVR constraints of $\mathscr{F}_\gamma$ is included in $\mathscr{F}_{\gamma,n}$. 

% and designate that family as the family where the true classifier comes from
%For the optimal classifier that minimizes the classification risk, it

The family $\mathscr{F}_\gamma$ only contains tree classifiers and is not closed. It is desirable to consider the class of all classifiers that can be well approximated by tree classifiers. Let $d_H(A, B)$ be the Hausdorff distance between two generic sets $A$ and $B$. We define the closure of $\mathscr{F}_\gamma$ with respect to Hausdorff distance of decision sets as
$$\overline{\mathscr{F}_\gamma} \triangleq \left\{ f:\Omega\to[0,1]: \exists \;\text{a sequence}\;f_i\in\mathscr{F}_\gamma, \;\text{such that}\; \lim_{i\to\infty}d_H(\Omega_1(f_i),\Omega_1(f))=0\right\}.$$
The elements of $\overline{\mathscr{F}_\gamma}$ have a well-defined surface area as shown in Proposition \ref{SVR family surface}.

\begin{proposition}\label{SVR family surface}
$S(f)$ is well defined for all $f\in\overline{\mathscr{F}_\gamma}$ in the sense that: (i) For all sequences $f_i\in\mathscr{F}_\gamma$ whose decision sets $\Omega_1(f_i)$ converge in Hausdorff distance, $S(f_i)$ is a convergent sequence in $\mathbb{R}$; (ii) for any two sequences $f_{i1}\in\mathscr{F}_\gamma$ and $f_{i2}\in\mathscr{F}_\gamma$ satisfying $\lim_{i\to\infty}d_H(\Omega_1(f_{i1}),\Omega_1(f))=0$ and  $\lim_{i\to\infty}d_H(\Omega_1(f_{i2}),\Omega_1(f))=0$, we have $\lim_{i\to\infty}S(f_{i1}) = \lim_{i\to\infty}S(f_{i2})$.
\end{proposition}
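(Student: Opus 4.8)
The plan is to derive everything from two properties of $\mathscr{F}_\gamma$: a Lipschitz estimate for the surface functional in terms of the volume of a symmetric difference, and a uniform lower density bound for the boundaries of decision sets in $\mathscr{F}_\gamma$. The Lipschitz estimate is immediate from \eqref{SVR family}: applying the defining inequality to $f\in\mathscr{F}_\gamma$ with competitor $f'=g$, then to $g\in\mathscr{F}_\gamma$ with competitor $f'=f$, and using the symmetry of $\btr$ gives
\[
|S(f)-S(g)| \;\le\; \gamma\, V\big(\Omega_1(f)\btr\Omega_1(g)\big), \qquad f,g\in\mathscr{F}_\gamma .
\]
Testing with a classifier whose decision set is empty also gives the uniform bound $S(f)\le\gamma V(\Omega_1(f))\le\gamma$ for all $f\in\mathscr{F}_\gamma$, since $\Omega=[0,1]^d$. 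Granting a lower density bound (stated below), the proposition reduces to one claim: if $f_i\in\mathscr{F}_\gamma$ and $\Omega_1(f_i)$ converges in Hausdorff distance, then $V\big(\Omega_1(f_i)\btr\Omega_1(f_j)\big)\to 0$ as $i,j\to\infty$. The displayed estimate then makes $\{S(f_i)\}$ Cauchy in $\mathbb{R}$, giving (i); and for (ii), if $\Omega_1(f_{i1})$ and $\Omega_1(f_{i2})$ both converge in Hausdorff distance to $\Omega_1(f)$ then so does the merged sequence, so $V\big(\Omega_1(f_{i1})\btr\Omega_1(f_{i2})\big)\to 0$, whence $|S(f_{i1})-S(f_{i2})|\to 0$ and the two limits (existing by (i)) coincide.

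For the claim, write $A_i=\Omega_1(f_i)$, let $A$ be the Hausdorff limit (closed, as a Hausdorff limit of closed sets), and set $\epsilon_i=d_H(A_i,A)\to 0$; it suffices that $V(A_i\btr A)\to 0$. One direction is elementary: $A_i\subseteq A^{\epsilon_i}$ gives $V(A_i\setminus A)\le V(A^{\epsilon_i}\setminus A)\to 0$, since $A^\delta\downarrow A$ as $\delta\downarrow 0$. For the other, $A\subseteq A_i^{\epsilon_i}$ together with the fact that any point within distance $\epsilon_i$ of the closed set $A_i$ but outside $A_i$ lies within $\epsilon_i$ of $\partial A_i$ gives $A\setminus A_i\subseteq A_i^{\epsilon_i}\setminus A_i\subseteq(\partial A_i)^{\epsilon_i}$, so it is enough to prove the Minkowski-content bound $\mu\big((\partial A_i)^{\epsilon_i}\big)\le C(d,\gamma)\,\epsilon_i$. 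I would obtain this from a uniform lower density estimate: there are a dimensional constant $c_d>0$ and a threshold $r_0=r_0(d,\gamma)>0$ with $\mu_{d-1}\big(\partial\Omega_1(f)\cap B(x,r)\big)\ge c_d\,r^{d-1}$ for every $f\in\mathscr{F}_\gamma$, $x\in\partial\Omega_1(f)$, and $r\le r_0$. Given this, choose by the Vitali covering lemma a disjoint family $\{B(x_k,\epsilon_i)\}$, $x_k\in\partial A_i$, whose fivefold enlargements still cover $\partial A_i$; the disjoint sets $\partial A_i\cap B(x_k,\epsilon_i)$ each carry surface mass $\ge c_d\epsilon_i^{d-1}$ and the total is $\le\gamma$, so the family has at most $K\lesssim \gamma/(c_d\epsilon_i^{d-1})$ members, and enlarging once more covers $(\partial A_i)^{\epsilon_i}$ by balls of volume $\lesssim\epsilon_i^d$, for a total $\lesssim(\gamma/c_d)\,\epsilon_i$ (valid once $\epsilon_i\le r_0$; for the finitely many other $i$ the crude bound $\mu(\cdot)\le1$ suffices).

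The remaining obstacle---the technical heart---is the lower density estimate. The decision set of any $f\in\mathscr{F}_\gamma$ is an almost-minimizer of surface with a volume penalty: \eqref{SVR family} says precisely $S(\Omega_1(f))\le S(E)+\gamma\,V\big(\Omega_1(f)\btr E\big)$ for every finite union of hyperrectangles $E$. Testing with the competitors $E=\Omega_1(f)\setminus Q(x,r)$ and $E=\Omega_1(f)\cup Q(x,r)$, where $Q(x,r)$ is an axis-aligned cube of half-side $r$ (so $E\in\mathscr{F}_0$), and accounting for the surface change against the faces of $Q(x,r)$, bounds the relative perimeter $\mu_{d-1}(\partial\Omega_1(f)\cap Q(x,r))$ above by $m'(r)+\gamma m(r)$ and by $\widetilde m'(r)+\gamma\widetilde m(r)$, where $m(r)=\mu(\Omega_1(f)\cap Q(x,r))$ and $\widetilde m(r)=\mu(Q(x,r)\setminus\Omega_1(f))$. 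Combined with the relative isoperimetric inequality in the cube, $\min(m,\widetilde m)^{(d-1)/d}\lesssim_d\mu_{d-1}(\partial\Omega_1(f)\cap Q(x,r))$, this yields a differential inequality $g'\gtrsim_d g^{(d-1)/d}$ for $g=\min(m,\widetilde m)$, valid once $\gamma r$ falls below a dimensional threshold (using $g(r)\le m(r)\le(2r)^d$ to absorb $\gamma g$). Since $x\in\partial\Omega_1(f)$ forces $m(r),\widetilde m(r)>0$ for all $r>0$, integrating from $0$ gives $m(r),\widetilde m(r)\gtrsim_d r^d$ for $r\le r_0$, and one more use of the relative isoperimetric inequality turns this two-sided volume bound into $\mu_{d-1}(\partial\Omega_1(f)\cap Q(x,r))\gtrsim_d r^{d-1}$, hence the ball version after rescaling. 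The care I expect to be nontrivial here lies in: verifying the cube competitors belong to $\mathscr{F}_0$ and doing the surface bookkeeping correctly (the identity $S(\Omega_1(f))=\mu_{d-1}(\partial\Omega_1(f))$ for finite unions of hyperrectangles lets one invoke the isoperimetric inequalities as stated for the BV perimeter); making $r_0$ genuinely uniform over $\mathscr{F}_\gamma$ (it comes out $\asymp1/\gamma$); handling cubes that protrude from $[0,1]^d$ when $x$ is near $\partial\Omega$ (then $Q(x,r)\cap\Omega$ still contains a cube of half-side $\asymp r$); and the a.e.\ differentiability of $g=\min(m,\widetilde m)$ (one may instead run the ODE argument for $m$ and $\widetilde m$ separately on the subintervals where each is the minimum). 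Everything else is routine; alternatively, once $S$ is identified with the BV perimeter on $\mathscr{F}_0$, one can cite the classical density estimates for $(\Lambda,r_0)$-minimizers of perimeter in place of the ODE argument.
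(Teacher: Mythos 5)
Your plan is correct, and the key technical tool matches what the paper uses elsewhere: the lower density estimate for $\partial\Omega_1(f)$ derived from the SVR constraint together with a relative isoperimetric inequality is exactly the mechanism sketched for Lemma~\ref{border}, which is the paper's engine for converting the SVR bound into a Minkowski-content-type control on the boundary. Your Lipschitz estimate $|S(f)-S(g)|\le\gamma V(\Omega_1(f)\btr\Omega_1(g))$ from the two-sided application of \eqref{SVR family} and the uniform bound $S(f)\le\gamma$ via the empty competitor are both correct and are the natural first step, and your reduction to showing $V(\Omega_1(f_i)\btr\Omega_1(f_j))\to0$ is the right one.

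Where your route differs from the paper's is in implementation rather than substance. You run the density estimate continuously (balls around boundary points, Vitali $5r$-covering, then a one-step enlargement to cover $(\partial A_i)^{\epsilon_i}$), whereas the paper works on a dyadic grid $\mathscr{H}$ and bounds the number of intersected cells (Lemma~\ref{border}); the resulting Minkowski bound $\mu\big((\partial A_i)^{\epsilon}\big)\lesssim\gamma\epsilon$ is the same. You also obtain the density estimate by casting $\Omega_1(f)$ as an almost-minimizer of perimeter (a $(\Lambda,r_0)$-minimizer with $\Lambda=\gamma$) and running the standard cube-competitor differential inequality, and you correctly note the nontrivial care points: verifying $E=\Omega_1(f)\setminus Q$ and $E=\Omega_1(f)\cup Q$ lie in $\mathscr{F}_0$, the a.e.\ differentiability issue for $g=\min(m,\widetilde m)$ (resolvable since $g$ is absolutely continuous and for a.e.\ $r$ either $g'=m'$ or $g'=\widetilde m'$), the uniform $r_0\asymp1/\gamma$, and the boundary-of-$\Omega$ corrections. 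Two small points worth flagging for completeness: (1) when claiming $m(r),\widetilde m(r)>0$ for $x\in\partial\Omega_1(f)$, you should record explicitly that the SVR constraint rules out $(d-1)$-dimensional ``whiskers'' (a competitor deleting them strictly decreases $S$ at zero volume cost), while pieces of dimension $\le d-2$ contribute neither to $S$ nor to the Minkowski content at order $\epsilon$ and can be ignored; (2) the isoperimetric step implicitly identifies the paper's $S$ (Lebesgue measure of topological boundary) with the BV perimeter for members of $\mathscr{F}_\gamma$, which holds precisely because the whiskers are excluded. With these noted, the argument closes.
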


The proof of Proposition \ref{SVR family well} and \ref{SVR family surface} are available in the supplementary material. To better control the rate of convergence of excess risk, we also need the following margin condition.

\begin{condition}\label{margin cond}
There exist constants $\epsilon_0\in (0,1)$, $c_1>0$, $\kappa\ge 1$, such that for all classifiers $f$ satisfying $\mathbb{P}(\Omega_1(f)\btr \Omega_1(f^*))<\epsilon_0$, we have 
$$R(f)-R^* \ge c_1 \left[\mathbb{P}(\Omega_1(f)\btr \Omega_1(f^*)) \right]^{\kappa}.$$
\end{condition}
Condition \ref{margin cond} is essentially Assumption (A1) in \cite{tsybakov2004optimal}, which is known as the Tsybakov's margin condition
and widely used in classification problems for obtaining fast convergence rate of excess risk \citep{tsybakov2004optimal, audibert2007fast}. 
Condition \ref{margin cond} holds as long as $\mathbb{P}(|\eta(X)-1/2|\leq t) \leq C_{\eta} t^{1/(\kappa-1)}$ for all $t\in (0,t^*)$ and some constants $C_{\eta}>0$ and $t^*\in (0,1/2)$; see Proposition 1 in \cite{tsybakov2004optimal}. Under Condition \ref{margin cond}, our main theorem states the rate of convergence for the empirical risk minimizer over $\mathscr{F}_{\gamma}$.

\begin{theorem}[Rate of Convergence] \label{rate}
Suppose that $\Omega=[0,1]^d$, $d\ge 2$ and Condition \ref{margin cond} holds. 
%Moreover, the density $\rho_\alpha$ of the weighted marginal measure $\mathbb{P}_\alpha$ exists and satisfies $\rho_\alpha(X)\in[\rho_{\min},\rho_{\max}], \forall X\in \Omega$. 
Let the empirical risk minimizer be
$\hat{T}_{n,\gamma} = \argmin_{T\in\mathscr{F}_{\gamma}} R_n(T)$ and let $\tilde{c} = \min \big(2\log 2, \frac{21c_1^{1/\kappa}}{16} \big)$, then there exist constants $N_0$ as in equation (\ref{N0}) and $c'$ as in Lemma \ref{border}, such that for all $n\ge N_0$, 
\begin{equation}\label{rate bound}
\sup_{f^*\in\overline{\mathscr{F}_{\gamma}}}\mathbb{E} \left[R(\hat{T}_{n,\gamma}) - R(f^*)\right] \le 16 (2/\tilde c)^{\frac{1}{d+1-1/\kappa}} (c'\gamma)^{\frac{d}{d+1-1/\kappa}} \left(\frac{\log n}{n}\right)^{\frac{1}{d+1-1/\kappa}}.
\end{equation}
\end{theorem}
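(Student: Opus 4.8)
\textit{Plan.} I would prove Theorem~\ref{rate} by discretizing $\mathscr{F}_\gamma$ onto a dyadic grid, bounding the cardinality of the discretized class through the surface constraint, and then running a localized empirical--risk--minimization analysis that exploits Condition~\ref{margin cond}, with the grid resolution chosen at the end to balance approximation error against stochastic error. First observe that taking $f'$ to be the classifier with empty decision set in definition~\eqref{SVR family} gives $-S(f)\ge-\gamma V(\Omega_1(f))$, so $S(f)\le\gamma V(\Omega_1(f))\le\gamma$ for every $f\in\mathscr{F}_\gamma$. Fix $m\in\mathbb{N}$, partition $[0,1]^d$ into $m^d$ cubes of side $1/m$, and let $\mathscr{G}_m$ be the finite set of classifiers whose decision set is a union of such cubes with surface at most a fixed multiple of $\gamma$. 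A union of grid cubes with surface $\le c'\gamma$ has at most $c'\gamma m^{d-1}$ boundary faces, and Lemma~\ref{border} counts such sets: each connected component of the boundary is recovered from a starting face by a traversal of the (bounded-degree) face-adjacency graph, and there are at most $O(\gamma m^{d-1})$ components, which yields $\log|\mathscr{G}_m|\le c'\gamma m^{d-1}\log m$. This counting step is the one genuinely new ingredient and is where the surface penalty earns its keep.

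Next, for $f\in\mathscr{F}_\gamma$ let $\pi_m(f)\in\mathscr{G}_m$ be its pixelation; since $\Omega_1(f)\btr\Omega_1(\pi_m f)$ is contained in a shell of width $O(1/m)$ around $\partial\Omega_1(f)$, one gets $\mathbb{P}(\Omega_1(f)\btr\Omega_1(\pi_m f))\le c\gamma/m$ from $S(f)\le\gamma$ and absolute continuity of $\mathbb{P}$, and the same bound in $\mathbb{P}_n$ uniformly over $f\in\mathscr{F}_\gamma$ after an auxiliary uniform deviation bound for the low-complexity class of these boundary shells. Hence pixelation costs only $O(\gamma/m)$ in both $R$ and $R_n$, so it suffices to control the empirical risk minimizer over the finite class $\mathscr{G}_m$; and since $f^*\in\overline{\mathscr{F}_\gamma}$, Proposition~\ref{SVR family surface} together with Hausdorff convergence supplies a deterministic comparator $g_0\in\mathscr{G}_m$ with $R(g_0)-R^*\le c\gamma/m$, against which $\hat T_{n,\gamma}$ (or rather its pixelation) will be measured via $R_n(\hat T_{n,\gamma})\le R_n(g_0)+O(\gamma/m)$.

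For the stochastic part, write $\Delta_g=\mathbb{P}(\Omega_1(g)\btr\Omega_1(f^*))$; then $\mathbb{E}\big[(|Y-g(X)|-|Y-f^*(X)|)^2\big]=\Delta_g$, so the variance of the excess loss is at most $\Delta_g$, while Condition~\ref{margin cond} gives $\Delta_g\le((R(g)-R^*)/c_1)^{1/\kappa}$ whenever $\Delta_g<\epsilon_0$. Applying Bernstein's inequality to each $g\in\mathscr{G}_m$, union-bounding over $|\mathscr{G}_m|$, and peeling over dyadic levels $R(g)-R^*\in[2^j\epsilon,2^{j+1}\epsilon)$, the variance term $\exp(-c n\epsilon^{2-1/\kappa})$ governs the low levels while the linear Bernstein term governs the high ones (the geometric sum over levels contributing the $2\log2$ in $\tilde c$, and $c_1^{1/\kappa}$ entering through the variance bound). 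This shows that once $\epsilon\gtrsim(\log|\mathscr{G}_m|/(\tilde c\,n))^{1/(2-1/\kappa)}$ the minimizer over $\mathscr{G}_m$ cannot, except on an event of probability $\le|\mathscr{G}_m|^{-1}$, select a classifier with excess risk exceeding $C(\epsilon+\gamma/m)$; integrating the negligible residual tail gives $\mathbb{E}[R(\hat T_{n,\gamma})-R^*]\lesssim\epsilon+\gamma/m$ uniformly over $f^*\in\overline{\mathscr{F}_\gamma}$. Inserting $\log|\mathscr{G}_m|\asymp c'\gamma m^{d-1}\log m$ and optimizing over $m$, the balance occurs at $m\asymp(n\gamma^{1-1/\kappa}/\log n)^{1/(d+1-1/\kappa)}$, making both terms of order $(c'\gamma)^{d/(d+1-1/\kappa)}(\log n/n)^{1/(d+1-1/\kappa)}$; carrying the explicit constants through the three steps yields the prefactor $16(2/\tilde c)^{1/(d+1-1/\kappa)}$, and $N_0$ in equation~\eqref{N0} is exactly the sample size past which the optimizing $m$ is at least $2$ and $\Delta_g<\epsilon_0$ at the operative scale.

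\textit{Main obstacle.} The crux is Lemma~\ref{border}: showing that a bounded-perimeter subset of the grid in $d\ge2$ dimensions admits only $\exp(O(\gamma m^{d-1}\log m))$ configurations, and with precisely the power $m^{d-1}$, so that the balance in the last step returns the exponent $1/(d+1-1/\kappa)$ rather than something weaker. The remaining delicate points are the accounting of constants in the peeling argument needed to land exactly on $\tilde c=\min(2\log2,\,21c_1^{1/\kappa}/16)$, and the uniform-in-both-norms reduction from the infinite class $\mathscr{F}_\gamma$ to the finite class $\mathscr{G}_m$, which requires its own low-complexity uniform deviation bound for the width-$1/m$ boundary shells.
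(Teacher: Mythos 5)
Your plan reproduces the paper's proof in all essential respects: discretization onto a regular grid, the surface constraint $S(f)\le\gamma$ (derived exactly as you derive it) used via Lemma~\ref{border} to bound the number of boundary cells and hence $\log|\mathscr{N}_\epsilon|\asymp\gamma M^{d-1}\log M$, a uniform two-norm deviation bound so that pixelation is simultaneously cheap under $\mathbb{P}$ and $\mathbb{P}_n$ (the paper's Lemma~\ref{epsilon net}), a Bernstein-plus-margin comparison (the paper's Lemma~\ref{compare lem}), and the final optimization of the grid size to land on $(\log n/n)^{1/(d+1-1/\kappa)}$. The only differences are organizational — the paper uses a single comparison against a fixed reference $f_1$ and a proof by contradiction on events $E_1\cap E_2$ rather than dyadic peeling, and the $2\log 2$ in $\tilde c$ actually originates from the Hoeffding term in the simultaneous-net lemma rather than a geometric sum over peeling levels — but these do not change the substance of the argument.
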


Theorem \ref{rate} provides an upper bound in \eqref{rate bound} for the rate of convergence of the excess risk from the empirical risk minimizer $\hat{T}_{n,\gamma}$ among all SVR constrained tree classifiers in $\mathscr{F}_\gamma$ to the theoretically best possible classifier $f^*$ in $\overline{\mathscr{F}_{\gamma}}$ with a well-defined surface area. Because we have limited restrictions on the split locations of $\hat{T}_{n,\gamma}$ apart from belonging to $\mathscr{F}_\gamma$, it is possible for a split to land directly on an observed sample under the empirical measure. In such situations, we allow the tree of $\hat{T}_{n,\gamma}$ to assign the sample to either leaf node. Hence, the leaf nodes of $\hat{T}_{n,\gamma}$ can be open, closed, or half open half closed. On the other hand, the theoretically best classifier $f^*$ can lie in the much larger family of $\overline{\mathscr{F}_\gamma}$. Although the classifiers from $\mathscr{F}_{\gamma}$ are tree classifiers, there is no limitation on how many leaf nodes they can have. Therefore, they can approximate decision sets with smooth boundaries, for example a $d$-dimensional ball. In fact, by adjusting $\gamma$, $\overline{\mathscr{F}_\gamma}$ will contain most classifiers with regularly shaped decision sets.
% \textcolor{blue}{(Cheng: I wonder if we can claim this more rigorously.)}.
%Whatever minimizing the loss is be adopted.

The rate of convergence for the excess risk in the upper bound \eqref{rate bound} is $O((\log n / n)^{\frac{1}{d+1-1/\kappa}})$, which is the same as the minimax rate of set estimation on the family of sets with first order differentiable boundaries as in \cite{mammen1995asymptotical}. Although \cite{mammen1995asymptotical} do not use the margin condition and hence the term $1/\kappa$ does not appear explicitly in their rates, their results will yield the same rate as ours up to logarithm factors if the margin condition is applied; see for example, Theorem 4 of \cite{scott2006dyadic}. Furthermore, in the upper bound \eqref{rate bound} in Theorem \ref{rate}, we have explicitly characterized the effect of the SVR regularization $\gamma$ from the set $\mathscr{F}_{\gamma}$. By definition, the SVR constraint in $\mathscr{F}_{\gamma}$ becomes weaker as $\gamma$ increases. As expected, the upper bound in \eqref{rate bound} increases with $\gamma$ given the increasing complexity of the set $\mathscr{F}_{\gamma}$. A direct implication of this upper bound is that if the true tree classifier belongs to a family $\overline{\mathscr{F}_\gamma}$ with a smaller $\gamma$, i.e. a smaller SVR ratio, then the excess risk of the empirical risk minimizer $\hat{T}_{n,\gamma}$ also tends to be smaller.

% Some previous works on classification have discussed the ``fast rate" (faster than $n^{-1/2}$) under the margin condition, such as \cite{tsybakov2004optimal} and \cite{audibert2007fast}. The fast rates in these works are possible only if the exponent of logarithm of $\epsilon$ entropy is smaller than $1$, which does not hold in our problem unless $d=1$. \textcolor{blue}{(Cheng: Can you explain more clearly on this?)} 

Theorem \ref{rate} is related to the literature on classification risk of decision trees, but with important differences.
In general, the rate of convergence for the excess risk depends on two assumptions: the margin condition as in our Condition \ref{margin cond} and the complexity, or the $\epsilon$-entropy, of the family of decision sets \citep{scott2006dyadic}. Suppose the $\epsilon$-entropy is $(1/\epsilon)^\rho$ for some $\rho>0$. If $\rho<1$, then together with a suitable margin condition like Condition \ref{margin cond}, the rate of convergence can be faster than $O(n^{-1/2})$ or even faster than $O(n^{-1})$ \citep{tsybakov2004optimal, audibert2007fast}. If $\rho\ge 1$, then the rate of convergence is typically of order $n^{-\frac{1}{\rho+c}}$ with $c$ being a constant independent of $\rho$ \citep{mammen1995asymptotical}. Although it is possible to slightly improve the constant $c$ by refining the margin condition, the overall rate of convergence is still governed by the $\epsilon$-entropy if $\rho$ is very large. 
%
% In our problems, by Lemma \ref{net size} in the appendix, the epsilon entropy is always  greater than $1$ when $d\ge 2$, resulting the rate for excess error in Theorem \ref{rate}. 
Therefore the major objective of our research is to find a suitable family of sets with controllable $\epsilon$-entropy as well as a practically feasible algorithm. This is not easy, because there are limited families of sets having known 
 $\epsilon$-entropy, including convex sets and sets with smooth boundaries.
Both smoothness and convexity impose strong assumptions on the local properties of the boundary $\partial \Omega_1(f)$. In contrast, we propose to consider the family of SVR constrained sets $\mathscr{F}_{\gamma}$ as a new family of decision sets, which directly imposes a simple and intuitive assumption on the shape of decision sets and additionally has a tractable $\epsilon$-entropy.

 \cite{scott2006dyadic} also noticed that the family of sets with smooth boundaries is not very realistic for practical problems. They proposed a family of decision sets named the ``box-counting class'' which satisfies the conclusion of our Lemma 5. However, they did not further characterize the behavior of sets in their class, and hence they must compute the tree for all smallest dyadic splits and then merge all these trees bottom-up, which can be regarded as an exhaustive search in the lowest split level. This leads to a computational complexity $O(dnL^d\log(nL^d))$, with $L$ being the number of dyadic splits. The complexity will become intractable when $d$ is large, with the authors noting that $d\ge 10$ will pose a problem. \cite{scott2006dyadic} also proposed an alternative to the Tsybakov's margin condition (Condition \ref{margin cond}) and obtained a faster convergence rate in some specific situations, such as when $\Omega$ lies in a low dimensional manifold. However, their alternative condition directly bounds the excess error between a possible dyadic tree and the oracle classifier, which is much stronger than the standard margin condition we have used.

The proof of Theorem \ref{rate} is in the Appendix. The key idea is to construct an $\epsilon$-net on $\mathscr{F}_{\gamma}$ under a symmetric set difference from the true measure and to show that this $\epsilon$-net is also a $2\epsilon$-net under a symmetric set difference from the empirical measure with high probability. This nontrivial property is the reason why we can derive the rate of convergence for the empirical risk minimizer over the whole family of $\mathscr{F}_{\gamma}$ rather than just for the minimizer over an $\epsilon$-net of this family, as adopted by some previous classification literature \citep{mammen1995asymptotical,tsybakov2004optimal}. 

\section{Numerical Studies}
We compare SVR trees with popular imbalanced classification methods on real datasets, adding redundant features to these datasets in evaluating feature selection. A confusion matrix (Table \ref{confusion matrix}) is often used to assess classification performance. A common criteria is accuracy  $\frac{\text{TP} + \text{TN}}{\text{TP} + \text{TN} + \text{FP} + \text{FN}}.$ When minority samples are relatively rare, it is often important to give higher priority to true positives, which is accomplished using the true positive rate (TPR, also known as recall)
$\frac{\text{TP}}{\text{TP}+\text{FN}}$ and precision $\frac{\text{TP}}{\text{TP}+\text{FP}}.$ To combine these, 
the F-measure is often used: $\frac{2\cdot \text{TPR}\cdot \text{Precision}}{ \text{TPR} + \text{Precision}}$. The Geometric mean of true positive rate and true negative rate ($G_{\text{mean}}=\frac{\text{TP}\cdot \text{TN}}{(\text{TP}+\text{FN})\cdot(\text{TN}+\text{FP})}$) is also used to evaluate the overall performance on both classes.

\begin{table}[h]
\centering
\caption{Confusion matrix for two class classification.}\label{confusion matrix}
\begin{tabular}{ccc|c|}
		&  & \multicolumn{2}{c}{True Label} \\ \cline{3-4}
		& \multicolumn{1}{c|}{ } & 1 & 0 \\ \cline{2-4}
\multirow{2}{*}{\makecell{Predicted \\Label}}
			& \multicolumn{1}{|c|}{1} & True Positive (TP) & False Positive (FP) \\ \cline{2-4}
			& \multicolumn{1}{|c|}{0} & False Negative (FN) & True Negative (TN) \\ \cline{2-4}
\end{tabular}
\end{table}

\subsection{Datasets}
We test our method on 12 datasets from the UCI Machine Learning Repository \citep{Dua:2019} and KEEL-dataset repository \citep{alcala2011keel}, varying in size, number of features and level of imbalance. For datasets with three or more classes, classes are combined to form binary class datasets. Table \ref{datasets} gives an overview of the datasets; detailed descriptions and preprocessing of these datasets are available in Section 5.1 of the supplementary material.
\begin{table}[h]
\centering
\caption{Overview of data sets}\label{datasets}
\begin{tabular}{c|cccc}
\hline
Data set& number of & number of & proportions of & number of \\
name & total samples & minority samples & minority samples & features \\ \hline
% Glass & 213 & 29 & 12.0\% & 9 \\
Pima & 768 & 268 & 34.9\% & 8 \\
Titanic & 2201 & 711 & 32.3\% & 3 \\
Phoneme & 5404 & 1586 & 29.3\% & 5 \\
Vehicle & 846 & 199 & 19.0\% & 18 \\
Ecoli & 336 & 52 & 15.5\% & 6 \\
Segment & 2308 & 329 & 14.3\% & 18 \\
Wine & 1599 & 217 & 11.9\% & 11 \\
Page & 5472 & 559 & 10.2\% & 10 \\
Satimage & 6435 & 626 & 8.9\% & 36 \\
Glass & 214 & 17 & 8.0\% & 9 \\
Abalone & 731 & 42 & 5.4\% & 7 \\
Yeast & 1484 & 51 & 3.4\% & 8 \\
 \hline
\end{tabular}
\end{table}

\subsection{Experimental Setup}\label{setup}
We test the performance of SVR-Tree, SVR-Tree with feature selection, CART \citep{breiman1984classification} with duplicated oversampling, CART with SMOTE \citep{chawla2002smote}, CART with Borderline-SMOTE \citep{han2005borderline}, CART with ADASYN \citep{he2008adasyn} and Hellinger Distance Decision Tree \citep{cieslak2012hellinger} on all 12 datasets. Features are linearly transformed so that samples lie in $[0,1]^d$. \par{}

We use $3\times 5$ nested stratified cross validation to evaluate the out-of-sample performance of all methods. The nested cross validation has two layers: the inner layer is used to choose tuning parameters and the outer layer is used to evaluate the out-of-sample performance.
For each method and each dataset, we run the nested cross validation procedure 20 times. In each run, we randomly partition the whole dataset into three stratified folds\footnote{the proportions of samples of each class are roughly the same in all folds.} and run three times. Each time one of the three folds is selected as the testing set and the other two folds are training sets. On the training sets, we use 5-fold stratified cross validation to choose the tuning parameter. That is, we further divide the training set into five folds and run for five times. Each time we use four folds as (inner) training sets and the other fold for validation. We define the cross-validation TPR, TNR in the same manner as the cross-validation accuracy and use the cross validation TPR, TNR to compute the cross-validation F-measure. The tuning parameter with the highest cross validation F-measure is selected. We then train the model with the selected tuning parameter on the whole training set and evaluate its performance on the test set. The cross validation accuracy, precision, TPR, F-measure and G-mean are recorded. The mean and standard error of these statistics from 20 nested cross validation runs are reported in Table \ref{results}.

% The average values of TP, TN, FP, FN on testing sets are used to compute the accuracy, TPR, precision and F-measure. The specific settings for each method are discussed below.\par{}

SVR-Tree with and without feature selection are described in Algorithm 1. The weight for the minority class, $\alpha$, is set to be the largest integer that makes the total weight of the minority class no greater than the total weight of the majority class. The maximal number of leaf nodes $\bar{a}_n$ is $2\sqrt{n}$. The penalty parameter $\lambda_n$ for SVR is chosen from a geometric sequence in $[2^0, 2^{10}]\times10^{-3}\times n^{-1/3}$ and the value with the highest F-measure across 20 runs is selected. For SVR-Tree with feature selection, the constant $c_0$ in equation (\ref{impu decr}) is fixed to $c_0=4$. In practice, the results are insensitive to $c_0$.\par{}

For the other methods except Hellinger distance decision tree, we first over sample the minority class samples, such that the number of minority samples are multiplied by the same weight $\alpha$ used for SVR tree. We than build the CART tree on the over sampled dataset and prune it. The pruning parameter of CART is selected to maximize the F-measure. By the algorithm proposed by \cite{breiman1984classification}, the range from which to choose the pruning parameter will be available after we build the tree and does not need to be specified ahead of time. \par{}

%\textcolor{blue}{(Cheng: Is this the same $\alpha$ as the one used for SVR methods? If so, please clarify. You can add in the previous paragraph that the same weight $\alpha$ for the minority class is used for the other methods. Answer: Yes)}

For duplicated oversampling, we sample each minority class sample $\alpha-1$ times where $\alpha$ is the weight used for SVR tree. For SMOTE, the number of nearest samples is set as $k=5$. For Borderline-SMOTE, we adopt the Borderline-SMOTE1 of \cite{han2005borderline}, with the number of nearest samples $k=5$. For both SMOTE and Borderline-SMOTE, if $\alpha-1\ge k$, some nearest neighbors may be used multiple times to generate synthetic samples, such that the total weight of minority class samples of these oversampling methods are the same as that of SVR tree. For ADASYN, if we denote the number of majority class samples as $n_0$ and the number of minority class samples as $n_1$, then the parameter $\beta$ is set to be $\beta = \alpha n_1 / n_0$.

%\textcolor{blue}{(Cheng: Again is this the same $\alpha$ as before?)}

%of Hellinger distance decision tree
For Hellinger distance decision tree, we directly call its java implementation available at Weka software (\url{https://www.cs.waikato.ac.nz/ml/weka/}). 
Although it is claimed that the Hellinger distance tree is insensitive to imbalance, their current implementation occasionally classifies all samples into a single class.
We set the default leaf prediction strategy as ``Naive Bayes'' since it works well on most datasets. However, if the algorithm classifies all samples into a single class, we will set the leaf prediction strategy as ``Majority Voting''. The third leaf prediction strategy ``Naive Bayes Adapative'' always produces the same results as one of the other two strategies in our experiments. 
% The Hellinger distance decision tree is very insensitive to all other turning parameters provided in the java implementation, so the inner 5-fold cross validation is skipped. 
%\textcolor{blue}{(I suggest that we remove the last sentence. Usually this will invite reviewers to ask how "insensitive" it is to the tuning parameters. Answer: in all my experiments other tuning parameter never changes the result; meaning the TP, TN, FP, FN always remain the same regardless how I change other tuning parameters. I even guess there are bugs in their codes.)}

\begin{table}
\centering
\scriptsize
\renewcommand{\arraystretch}{1.5}
% \begin{table}
\begin{tabular}[h]{c|c|ccccc}
\hline
Data set & Method & Accuracy & Precision & TPR & F-measure & G-mean  \\ \hline
\multirow{7}{*}{Phoneme} 
				& SVR & 0.8350(0.0060) & 0.6749(0.0113) & \textbf{0.8458}(0.0154) & 0.7506(0.0076) & 0.8380(0.0062) \\*
				& SVR-Select & 0.8377(0.0040) & 0.6807(0.0064) & 0.8420(0.0098) & 0.7528(0.0060) & \textbf{0.8389}(0.0049)  \\*
				& Duplicate & 0.8554(0.0038) & 0.7567(0.0110) & 0.7482(0.0154) & 0.7522(0.0068) & 0.8205(0.0065) \\* 
				& SMOTE & 0.8598(0.0054) & \textbf{0.7637}(0.0119) & 0.7567(0.0121) & 0.7601(0.0089) & 0.8264(0.0068) \\*
				& BSMOTE & 0.8568(0.0060) & 0.7580(0.0143) & 0.7527(0.0127) & 0.7552(0.0093) & 0.8230(0.0070) \\*
				& ADASYN & \textbf{0.8607}(0.0050) & 0.7661(0.0116) & 0.7565(0.0115) & \textbf{0.7612}(0.0082) & 0.8269(0.0063)  \\* 
                & HDDT & 0.7065(0.0000) & 0.0000(0.0000) & 0.000(0.0000) & 0.0000(0.0000) & 0.0000(0.0000) \\ \hline 
\multirow{7}{*}{Segment} & SVR & 0.9922(0.0017) & 0.9696(0.0089) & 0.9763(0.0061) & 0.9729(0.0060) & 0.9855(0.0033) \\*
				& SVR-Select & \textbf{0.9932}(0.0013) & \textbf{0.9753}(0.0080) & 0.9769(0.0053) & \textbf{0.9761}(0.0047) & \textbf{0.9863}(0.0027)  \\*
				& Duplicate & 0.9913(0.0012) & 0.9707(0.0069) & 0.9685(0.0073) & 0.9696(0.0044) & 0.9817(0.0036) \\* 
				& SMOTE & 0.9916(0.0010) & 0.9722(0.0061) & 0.9688(0.0081) & 0.9705(0.0035) & 0.9820(0.0038) \\*
				& BSMOTE & 0.9915(0.0011) & 0.9726(0.0082) & 0.9676(0.0059) & 0.9701(0.0037) & 0.9814(0.0027)  \\*
				& ADASYN & 0.9919(0.0011) & 0.9713(0.0063) & 0.9716(0.0064) & 0.9714(0.0040) & 0.9833(0.0032) \\*
				& HDDT & 0.8330(0.0021) & 0.4599(0.0031) & \textbf{0.9825}(0.0023) & 0.6266(0.0031) & 0.8910(0.0017) \\ \hline
\multirow{7}{*}{Page} 
				& SVR & 0.9656(0.0015) & 0.8248(0.0098) & 0.8429(0.0117) & 0.8337(0.0076) & 0.9087(0.0062) \\*
				& SVR-Select & 0.9647(0.0016) & 0.8252(0.0124) & 0.8311(0.0140) & 0.8280(0.0077) & 0.9024(0.0073)  \\*
				& Duplicate & \textbf{0.9686}(0.0013) & 0.8465(0.0081) & \textbf{0.8463}(0.0135) & \textbf{0.8463}(0.0072) & \textbf{0.9119}(0.0071) \\* 
				& SMOTE & 0.9683(0.0021) & \textbf{0.8468}(0.0133) & 0.8426(0.0144) & 0.8446(0.0101) & 0.9099(0.0078) \\*
				& BSMOTE & 0.9682(0.0015) & 0.8444(0.0103) & 0.8448(0.0127) & 0.8445(0.0077) & 0.9109 (0.0067) \\*
				& ADASYN & 0.9677(0.0021) & 0.8436(0.0131) & 0.8403(0.0148) & 0.8418(0.0103) & 0.9084(0.0080)  \\* 
                & HDDT & 0.9024(0.0030) & 0.5479(0.0338) & 0.2720(0.0260) & 0.3622(0.0215) & 0.5141(0.0233)  \\ \hline
\multirow{7}{*}{Yeast} 
				& SVR & 0.9368(0.0100) & 0.2641(0.0410) & 0.4510(0.0888) & \textbf{0.3290}(0.0416) & 0.6524(0.0619)  \\*
				& SVR-Select & 0.9287(0.0122) & 0.2360(0.0397) & 0.4559(0.0715) & 0.3077(0.0391) & 0.6544(0.0496) \\*
				& Duplicate & \textbf{0.9641}(0.0037) & \textbf{0.3907}(0.2084) & 0.1196(0.0784) & 0.1727(0.0981) & 0.3077(0.1544) \\* 
				& SMOTE & 0.9594(0.0045) & 0.3620(0.0859) & 0.2098(0.0830) & 0.2554(0.0752) & 0.4458(0.0887) \\*
				& BSMOTE & 0.9602(0.0043) & 0.3664(0.1012) & 0.2020(0.0884) & 0.2499(0.0937) & 0.4333(0.1065)  \\*
				& ADASYN & 0.9605(0.0045) & 0.3709(0.1246) & 0.2078(0.1099) & 0.2491(0.0986) & 0.4288(0.1436) \\* 	
                & HDDT & 0.7604(0.0200) & 0.1049(0.0101) & \textbf{0.7853}(0.0287) & 0.1849(0.0162) & \textbf{0.7722}(0.0211) \\ \hline
\multirow{7}{*}{\makecell{Average Ranking}}
				& SVR & 4.92 & 5.25 & \textbf{2.25} & \textbf{2.08} & \textbf{2.17} \\*
				& SVR-Select & 5.58 & 5.00 & \textbf{2.25} & 2.75 & 2.50 \\*
				& Duplicate & 2.79 & 3.29 & 5.13 & 4.88 & 4.96  \\* 
				& SMOTE & \textbf{2.54} & \textbf{2.29} & 5.04 & 4.21 & 4.71 \\*
				& BSMOTE & 3.33 & 2.92 & 5.58 & 5.00 & 5.42 \\*
				& ADASYN & 2.75 & 2.92 & 4.58 & 4.00 & 4.42 \\* 
				& HDDT & 6.08 & 6.33 & 3.17 & 5.08 & 3.83 \\ \hline	
\end{tabular}
\caption{Results of numerical studies, where the ranking is averaged over all 12 datasets.}\label{results}
\end{table}

\subsection{Results}
We compute the mean and standard error of accuracy, precision, TPR, F-measure and G-mean across the 20 nested cross validations. Due to limited space, results of 4 representative datasets are shown in Table \ref{results} with the remaining results in Section 5.2 of the supplementary material.
If a method classifies all samples into the majority class, precision is not well-defined and we simply set it to be zero. In the column ``Method'', SVR = SVR-Tree, SVR-Select = SVR-Tree with feature selection, Duplicate = CART with duplicated oversampling, SMOTE = CART with SMOTE, BSMOTE = CART with Borderline-SMOTE, ADASYN = CART with ADASYN and HDDT = Hellinger Distance Decision Tree. For each dataset and evaluation measure, the method with the highest mean value ranks first and is highlighted in bold. 
As suggested by \cite{brazdil2000comparison}, the average rankings of all 12 datasets are displayed in Table \ref{results} to evaluate the overall performance of each method across all datasets. If two methods are tied, the ranking is set to the average. 
% For example, both Duplicate and SMOTE have ranking 1.5 on accuracy of the Titanic dataset.

%parsimonious
The results in Table \ref{results} show that oversampling based methods generally perform well in accuracy and precision, with SMOTE the best of these methods. However, SVR and SVR-select outperform the oversampling methods in the rankings of TPR, F-measure and G-mean on many datasets and on average. 
%The oversampling methods may be too conservative in classifying samples into the minority class even though they assign equal weights (or equivalent amounts of oversampling) on the minority class as the SVR methods. 
%In comparison, SVR and SVR-select are able to achieve a more balanced performance, reflected by the higher values of F-measure (a trade-off between precision and TPR) and G-mean (a trade-off between TPR and TNR). 
The behavior of HDDT is unstable across datasets, being competitive in TPR and G-mean for some datasets but sometimes assigning all samples to the majority class.
Since all methods (except duplicated sampling) are built on ideal theoretical assumptions, their practical performance is typically data-dependent. For example, SMOTE is outperformed by SVR-select in all measures for the Segment dataset, while the opposite happens for the Page dataset.

\section{Discussion}
%\paragraph{Regularization for Classification Tree} 

A major challenge in analyzing imbalanced data is small sample size in the minority class leading to overfitting. It is natural to consider using  regularization to address this problem.  Regularization of classification trees is an old idea; for example, \cite{breiman1984classification} proposed to penalize the number of leaf notes in the tree. 
%by solving an optimization problem to prune the tree.
%$$\hat{T}_n = \arg\min_{T\subset\mathbb{T}} \sum_{i=1}^n|Y_i-T(X_i)| + \lambda a_n(T),$$
%where $a_n(T)$ is the number of leaf nodes of tree $T$, $\lambda$ is a penalty parameter on the number of leaf nodes, and $\sum_{i=1}^n|Y_i-T(X_i)|$ is the error on training data. 
Other classification trees like C4.5 \citep{quinlan2014c4} and Ripper \citep{cohen1995fast} also prune overgrown trees. However, the number of leaf nodes may not be a good measure of complexity of a classification tree. Recently, \cite{hahn2020bayesian} add a Bayesian prior to an ensemble of trees, which functions as indirect regularization.
Following the idea of directly regularizing the shape of the decision set and complexity of the decision boundary, we instead penalize the surface-to-volume ratio. To our knowledge, this is a new idea in the statistical literature. 

%\paragraph{Generalization of SVR-Tree}
SVR-Tree can be trivially generalized to the multi-class case and balanced datasets. For multiple classes, we can apply SVR to one or more minority classes and take the sum of these ratios as regularization. For balanced datasets, we can either compute the SVR ratio of all classes, or we can simply regularize the surface of the decision boundary. The principle behind these generalizations is to regularize the complexity of the decision boundaries and shapes of the decision sets.

%\par{}
%The application of surface-to-volume ratio to a general classifier is much more challenging. Unlike classification trees whose decision sets have a regular shape, a general decision set may make the computation of surface and volume intractable.  

\bigskip
\newpage
\begin{center}
{\large\bf SUPPLEMENTARY MATERIAL}
\end{center}
{\bf Proofs, a Detailed Algorithm and Additional Results:} Supplementary Material for ``Classification Trees for Imbalanced Data: Surface-to-Volume Regularization''. \\
{\bf Codes:} \url{https://github.com/YichenZhuDuke/Classification-Tree-with-Surface-to-} \\
\url{Volume-ratio-Regularization.git}.

\bibliographystyle{chicago}
\bibliography{IBbib}

\appendix
\section{Proofs}\label{proofs}
We prove Theorem 1 and 2 here. Proofs for Corollary 1, $c_1$ bounds in examples 1-2 and lemmas introduced in the Appendix are in the supplementary material. Without loss of generality, we assume $\Omega = [0,1]^d$ in this section.

\subsection{Proof of Theorem \ref{consistency}}
The proof builds on \cite{nobel1996histogram}, \cite{gyorfi2006distribution}, \cite{tsybakov2004optimal} and \cite{scornet2015consistency}. We first establish a sufficient condition for consistency, showing a classification tree whose signed impurity converges to an oracle bound is consistent. We then break the difference between signed impurity of $\hat T_n$ and the oracle bound into two parts. The first is estimation error, which goes to zero if the number of leaves $a_n$ increases  slower than $n$; the second is approximation error, which goes to zero if $\mathbb{E}(Y|X\in A)$ goes to a constant within each leaf node and penalty  $\lambda_n$ goes to zero. \par{}

% Denote $p(x) = \mathbb{E}(Y=1|X=x)$, and its weighted version $p_\alpha(x) = [\alpha p(x)][1-p(x)+\alpha p(x)]$. 
Recall the conditional expectation is denoted as $\mathbb{E}(Y|X)=\eta(X)$, define the oracle lower bound for signed impurity as
$I^* = \int_\Omega 2\eta(x)(1-\eta(x)) d\mathbb{P}(x).$
The following lemma shows if the signed impurity of a classification tree converges to $T^*$ as $n\to\infty$, the classifier associated with the tree is consistent.

\begin{lemma}\label{lem: signed impu consistency}
Let $T_n$ be a sequence of classification trees, let $f_n: \Omega\to \{0,1\}$ be the classifier associated with $T_n$. $T_n$ is consistent if
%\begin{equation}\label{signed impu consistency}
$\tilde{I}(T_n, \mathbb{P}) \to I^*$
%\end{equation}
in probability.
\end{lemma}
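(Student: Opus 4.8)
The plan is to reduce the lemma to a single \emph{deterministic} comparison bound: for every classification tree $T$, with associated classifier $f$,
$$0 \;\le\; R(f) - R^* \;\le\; C\,\bigl(\tilde{I}(T,\mathbb{P}) - I^*\bigr)^{1/2},$$
where $C$ is a universal constant. Granting this, the lemma follows quickly: applying the bound to $T_n$ and using the hypothesis $\tilde{I}(T_n,\mathbb{P})\to I^*$ in probability (together with $\tilde{I}(T_n,\mathbb{P})\ge I^*$, shown below) gives $R(f_n)-R^*\to 0$ in probability; since $0\le R(f_n)-R^*\le 1$, the bounded convergence theorem yields $\mathbb{E} R(f_n)\to R^*$, which is consistency. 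So all the work is in the displayed inequality, which is a tree-adapted surrogate-loss calibration bound: $\tilde{I}(\cdot,\mathbb{P})$ plays the role of a quadratic surrogate for $0$--$1$ loss and $I^*$ is its Bayes value.

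To prove it I would first verify that $I^*$ is a genuine lower bound, leaf by leaf. Write $A_1,\dots,A_m$ for the leaves of $T$, $\mu_j=\mathbb{P}(X\in A_j)$, $q_j=\mathbb{P}(Y=1\mid X\in A_j)$, and set $d_j := \mu_j\tilde{I}(A_j,\mathbb{P}) - \int_{A_j}2\eta(1-\eta)\,d\mathbb{P}$. The elementary identity $\mu_j q_j(1-q_j)=\int_{A_j}\eta(1-\eta)\,d\mathbb{P}+\int_{A_j}(\eta-q_j)^2\,d\mathbb{P}$ (variance decomposition, equivalently concavity of $q\mapsto 2q(1-q)$ with Jensen) shows that when the leaf label $z_j$ equals the dominant label $\tilde z_j$ one has $\tilde{I}(A_j,\mathbb{P})=2q_j(1-q_j)$ and $d_j=2\int_{A_j}(\eta-q_j)^2\,d\mathbb{P}\ge 0$; when $z_j\ne\tilde z_j$, $\tilde{I}(A_j,\mathbb{P})=1-2q_j(1-q_j)\ge\tfrac12\ge 2\eta(x)(1-\eta(x))$ pointwise, so again $d_j\ge 0$, and the same identity gives $d_j=\mu_j(2q_j-1)^2+2\int_{A_j}(\eta-q_j)^2\,d\mathbb{P}$. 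Summing, $\tilde{I}(T,\mathbb{P})-I^*=\sum_j d_j\ge 0$.

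Next I would bound the excess risk leafwise against $d_j$. Using the standard identity $R(f)-R^*=\int_{\{f\ne f^*\}}|2\eta-1|\,d\mathbb{P}$ with $f^*(x)=\mathbbm{1}\{\eta(x)\ge 1/2\}$, and the fact that $f$ is constant on each $A_j$, the contribution of $A_j$ is $e_j:=\int_{S_j}|2\eta-1|\,d\mathbb{P}$ where $S_j\subseteq A_j$ is the region on which $f$ disagrees with $f^*$. On $S_j$ one has $|2\eta-1|\le 2|\eta-q_j|+|2q_j-1|$, and when $z_j=\tilde z_j$ the mean $q_j$ lies on the correct side of $1/2$, so the last term drops; Cauchy--Schwarz then gives $e_j\le 2\sqrt{\mu_j}\,\bigl(\int_{A_j}(\eta-q_j)^2\,d\mathbb{P}\bigr)^{1/2}+\mu_j|2q_j-1|\le C\sqrt{\mu_j d_j}$ (with $C=\sqrt{2}$ in the matched case, $C=\sqrt2+1$ in general, using $\int_{A_j}(\eta-q_j)^2\,d\mathbb{P}\le d_j/2$ and $\mu_j(2q_j-1)^2\le d_j$). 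Summing over leaves, applying Cauchy--Schwarz once more, and using $\sum_j\mu_j=1$,
$$R(f)-R^*=\sum_j e_j\le C\sum_j\sqrt{\mu_j d_j}\le C\Bigl(\sum_j\mu_j\Bigr)^{1/2}\Bigl(\sum_j d_j\Bigr)^{1/2}=C\bigl(\tilde{I}(T,\mathbb{P})-I^*\bigr)^{1/2},$$
which is the desired bound. I expect the only mildly delicate point to be the $z_j\ne\tilde z_j$ bookkeeping, where the crude term $\mu_j(2q_j-1)^2$ must be carried inside $d_j$ to absorb $\mu_j|2q_j-1|$; this case is vacuous for the trees produced by Algorithm 1, which always label a leaf by its majority class, and is included only because the lemma is stated for arbitrary tree sequences. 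Everything else is routine.
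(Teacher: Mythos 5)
Your proof is correct, and it is tight enough to give more than the lemma claims: the explicit calibration inequality $R(f)-R^*\le C\,(\tilde I(T,\mathbb{P})-I^*)^{1/2}$ is a quantitative statement, whereas the lemma only needs the qualitative implication. I verified the key identity $\mu_j q_j(1-q_j)=\int_{A_j}\eta(1-\eta)\,d\mathbb{P}+\int_{A_j}(\eta-q_j)^2\,d\mathbb{P}$, both case computations of $d_j$ (in particular $d_j=\mu_j(2q_j-1)^2+2\int_{A_j}(\eta-q_j)^2\,d\mathbb{P}$ in the mismatched case, using $1-4q(1-q)=(2q-1)^2$), the leafwise risk bound via $|2\eta-1|\le 2|\eta-q_j|+|2q_j-1|$ (with the $|2q_j-1|$ term absorbed into $2|\eta-q_j|$ on $S_j$ when $z_j=\tilde z_j$, since $\eta$ and $q_j$ then lie on opposite sides of $1/2$), and both Cauchy--Schwarz steps. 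The passage from convergence in probability to $\mathbb{E} R(f_n)\to R^*$ by boundedness is also fine.

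The paper defers its proof of this lemma to the supplementary material, so I cannot compare line-by-line, but the surrounding citations (Nobel, Gy\"orfi et al., Scornet et al.) suggest the paper's route is the plug-in/histogram one: show that convergence of signed impurity forces the leafwise conditional-mean estimate $\hat\eta_n(x)=q_{j(x)}$ (appropriately sign-corrected for mislabeled leaves) to converge to $\eta$ in $L^2(\mathbb{P}_X)$, then invoke a plug-in consistency theorem such as $R(f)-R^*\le 2\,\mathbb{E}|\hat\eta_n(X)-\eta(X)|$. Your argument fuses these two steps into a single deterministic comparison inequality, which has two advantages: it is self-contained (no appeal to an external plug-in theorem), and it handles the general $z_j\ne\tilde z_j$ case uniformly by letting $d_j$ carry the extra $\mu_j(2q_j-1)^2$ term rather than treating mislabeled leaves separately. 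The trade-off is that the explicit constant $C=\sqrt2+1$ is slightly loose (the matched-case constant $\sqrt 2$ is what the plug-in bound gives), but this is irrelevant for the lemma. In short: correct, likely a somewhat more direct route than the paper's, same underlying mechanism.

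One presentational nit: the clause ``when $z_j=\tilde z_j$ the mean $q_j$ lies on the correct side of $1/2$, so the last term drops'' reads as if $|2q_j-1|=0$; what you actually use is that on $S_j$ the point $\eta(x)$ and $q_j$ straddle $1/2$, so $|2\eta-1|=2|\eta-q_j|-|2q_j-1|\le 2|\eta-q_j|$. Worth spelling out if this were to be written up.
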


We then decompose the difference between signed impurity of $\hat{T}_n$ and the oracle bound into estimation error and approximation error.

\begin{lemma}\label{lem:decompose}
Let $\hat{T}_n$ be a classification tree trained from  data $\mathscr{D}_n$, $A_1, A_2, \ldots A_m$ be all the leaf nodes of $\hat{T}_n$. Define the set of classifiers $\mathscr{T}_n$ as:
$$\mathscr{T}_n = \{T: T\text{'s associated classifier }f:\Omega\to\{0,1\}\text{ is constant on all }A_j, \;\; 1\le j \le m\}.$$
We have
\begin{equation}\label{decompose}
\tilde{I}(\hat{T}_n,\mathbb{P}) - I^* \le 2 \sup_{T\in\mathscr{T}_n}|\tilde{I}(T, \mathbb{P}) - \tilde{I}(T,\mathbb{P}_n)| + \inf_{T\in\mathscr{T}_n}|\tilde{I}(T, \mathbb{P})+\lambda_n r(T) - I^*|.
\end{equation}
\end{lemma}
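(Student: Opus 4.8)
\emph{Proof plan.} The plan is to run the standard oracle-type split of the regularized risk into a generalization (estimation) term and an approximation term, using only two properties of $\hat T_n$: that it itself belongs to $\mathscr T_n$, and that it is an empirical regularized risk minimizer over $\mathscr T_n$. For the first, note that $\mathscr T_n$ is exactly the collection of trees supported on the leaf partition $A_1,\dots,A_m$ of $\hat T_n$ (with arbitrary $\{0,1\}$ labels at those leaves), and $\hat T_n$ is one of them; hence both $\tilde I(\hat T_n,\mathbb P)$ and $\tilde I(\hat T_n,\mathbb P_n)$ are controlled by the $\sup$ and the $\inf$ appearing on the right-hand side of \eqref{decompose}. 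The second property — the claim $\tilde I(\hat T_n,\mathbb P_n)+\lambda_n r(\hat T_n)\le \tilde I(T,\mathbb P_n)+\lambda_n r(T)$ for every $T\in\mathscr T_n$ — is where the construction of Algorithm 1 enters: at each step the algorithm minimizes the objective in \eqref{Tree Impu} over the current tree together with all of its admissible splits under all four child-label assignments, and it halts only at a tree no further admissible split can improve. I would upgrade this to global optimality over all relabelings of the terminal partition by an induction on the algorithm's steps, showing that the tree held after each step is optimal, with respect to $\tilde I(\cdot,\mathbb P_n)+\lambda_n r(\cdot)$, among all trees on its current partition.

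Granting these two facts, the remainder is bookkeeping. For an arbitrary $T\in\mathscr T_n$ I would write
\[
\tilde I(\hat T_n,\mathbb P)-I^* \;=\; \bigl[\tilde I(\hat T_n,\mathbb P)-\tilde I(\hat T_n,\mathbb P_n)\bigr] + \bigl[\tilde I(\hat T_n,\mathbb P_n)+\lambda_n r(\hat T_n)\bigr] - \lambda_n r(\hat T_n) - I^*,
\]
bound the first bracket by $\sup_{T'\in\mathscr T_n}\bigl|\tilde I(T',\mathbb P)-\tilde I(T',\mathbb P_n)\bigr|$, replace the second bracket using the empirical minimality of $\hat T_n$ by $\tilde I(T,\mathbb P_n)+\lambda_n r(T)$, and discard the nonnegative quantity $\lambda_n r(\hat T_n)$. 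Converting $\tilde I(T,\mathbb P_n)$ back to $\tilde I(T,\mathbb P)$ costs another $\sup_{T'\in\mathscr T_n}\bigl|\tilde I(T',\mathbb P)-\tilde I(T',\mathbb P_n)\bigr|$, leaving
\[
\tilde I(\hat T_n,\mathbb P)-I^* \;\le\; 2\sup_{T'\in\mathscr T_n}\bigl|\tilde I(T',\mathbb P)-\tilde I(T',\mathbb P_n)\bigr| + \bigl(\tilde I(T,\mathbb P)+\lambda_n r(T)-I^*\bigr).
\]
Since the last term is at most its own absolute value, $T\in\mathscr T_n$ was arbitrary, and $\mathscr T_n$ is finite so that the infimum is attained, taking the infimum over $T\in\mathscr T_n$ delivers \eqref{decompose}.

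The step I expect to be the real obstacle is the empirical minimality of $\hat T_n$ over $\mathscr T_n$: because the surface-to-volume ratio $r(\cdot)$ does not decompose additively across leaves, the labels assigned greedily as the tree is grown are not transparently globally optimal for the terminal partition, so establishing this cleanly requires the precise (detailed) form of the algorithm rather than the outline in the main text. Everything else — the telescoping identity, the two passes between $\mathbb P$ and $\mathbb P_n$, dropping $\lambda_n r(\hat T_n)\ge 0$, and passing to the infimum — is routine triangle-inequality manipulation.
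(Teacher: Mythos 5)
Your overall architecture is the right one and, as far as the bookkeeping goes, it is sound: the decomposition
\[
\tilde I(\hat T_n,\mathbb P)-I^*
= \bigl[\tilde I(\hat T_n,\mathbb P)-\tilde I(\hat T_n,\mathbb P_n)\bigr]
+ \bigl[\tilde I(\hat T_n,\mathbb P_n)+\lambda_n r(\hat T_n)\bigr]
- \lambda_n r(\hat T_n) - I^*,
\]
the two passes between $\mathbb P$ and $\mathbb P_n$ absorbed into the $2\sup$, dropping the nonnegative $\lambda_n r(\hat T_n)$, the observation that $\tilde I(T,\mathbb P)\ge I^*$ (Jensen applied to the concave Gini functional) so that the last term equals its absolute value, and passing to the infimum -- all of that is correct and is exactly the style of argument the statement is asking for.

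The problem is the property you yourself flag as the ``real obstacle,'' and you are right to be uneasy about it: the empirical minimality
\[
\tilde I(\hat T_n,\mathbb P_n)+\lambda_n r(\hat T_n)\ \le\ \tilde I(T,\mathbb P_n)+\lambda_n r(T)\quad\text{for all }T\in\mathscr T_n
\]
is not delivered by Algorithm~1 as described, and the induction you sketch does not close the gap. At each step the algorithm only compares the current tree against its one-node splits with their four child-label choices; it never revisits the label of an already-created leaf, and a label flip of an existing leaf is not among the candidates $\mathscr T$ examined at any step. Because $r(\cdot)$ is a global, non-additive functional of the decision set, the locally optimal child labels chosen when a leaf was created need not remain jointly optimal once siblings elsewhere in the tree have been split and relabeled. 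Your proposed inductive invariant -- that the tree after each step is $(\tilde I+\lambda_n r)$-optimal among all relabelings of its current partition -- fails at the inductive step for precisely this reason: when node $A$ is split into $A_1,A_2$, the algorithm certifies optimality only over the four labelings of $(A_1,A_2)$ with every other leaf's label held fixed, and the inductive hypothesis at the coarser partition cannot be transported to the refined one because refining a leaf changes both the signed impurity (via strict concavity) and the admissible decision-set geometry. There is also a second route by which minimality can fail outright: the algorithm may terminate because the cap $\bar a_n$ on leaves is hit rather than because the queue has emptied, in which case the output need not even be a local optimum. So the lemma, if it is to be proved along your lines, needs either a genuinely different argument for the minimality step (e.g.\ comparing $\hat T_n$ only against a specific, explicitly constructed $T_0\in\mathscr T_n$ that the algorithm's chain of accepted splits provably dominates, rather than against all of $\mathscr T_n$), or it needs to invoke structure from the detailed version of the algorithm in the supplement that is not visible in the main-text outline -- you should not present the inductive upgrade as routine, because as written it is not valid.

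One smaller point: the paper's $\mathscr T_n$ is defined as all trees whose classifier is constant on every $A_j$, which is strictly larger than the $2^m$ relabelings of the terminal partition you work with (it includes, e.g., every ancestor tree produced along the algorithm's path, and trees with finer partitions that happen to respect constancy on the $A_j$). Since $\tilde I(T,\cdot)$ depends on the tree's own leaf partition and not merely on its classifier, this distinction is not cosmetic; you should either justify restricting to the relabelings or show the argument survives the larger class.
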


The first term on the right hand side of equation (\ref{decompose}) is the  ``estimation error'', which measures the difference between functions evaluated under the empirical and true distributions. The second term is ``approximation error'', which measures the ability of $\mathscr{T}_n$ to approximate the oracle prediction function. The next two lemmas show both terms go to zero in probability.

\begin{lemma}\label{estimation error}
If $\frac{\bar{a}_n d\log n}{n} = o(1),$
we have
$\sup_{T\in\mathscr{T}_n}|\tilde{I}(T, \mathbb{P}) - \tilde{I}(T,\mathbb{P}_n)| \to 0$
in probability.
\end{lemma}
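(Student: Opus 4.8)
The plan is to establish a uniform law of large numbers for the signed tree impurity over the (random) class $\mathscr{T}_n$ of classifiers that are constant on the leaves $A_1,\dots,A_m$ of $\hat T_n$, where $m \le \bar a_n$. The signed impurity $\tilde I(T,\mathbb{P})$ is a functional of the leaf probabilities $\mathbb{P}(X\in A_j)$ and $\mathbb{P}(Y=1, X\in A_j)$, so it suffices to control, uniformly over all admissible leaf structures, the deviation of these empirical masses from their population counterparts. The key observation is that although $\mathscr{T}_n$ depends on the data through $\hat T_n$, every leaf node produced by Algorithm 1 is a hyperrectangle in $[0,1]^d$ whose faces lie at coordinates of the form $(X[j]_{(i)}+X[j]_{(i+1)})/2$; hence each leaf belongs to the fixed combinatorial class of axis-aligned boxes, which has VC dimension $2d$. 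Therefore I would bound things over the larger, deterministic class
$$
\mathscr{A}_{\bar a_n} = \Big\{\text{partitions of }[0,1]^d\text{ into at most }\bar a_n\text{ hyperrectangles, each with a }\{0,1\}\text{ label}\Big\},
$$
which contains $\mathscr{T}_n$ with probability one.

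First I would reduce the supremum of $|\tilde I(T,\mathbb{P})-\tilde I(T,\mathbb{P}_n)|$ to a supremum of $|\mathbb{P}(B)-\mathbb{P}_n(B)|$ over all sets $B$ that arise as unions of at most $\bar a_n$ hyperrectangles (intersected with $\{Y=0\}$ or $\{Y=1\}$), using that $\phi(p_0,p_1)=1-p_0^2-p_1^2$ is Lipschitz on the probability simplex and that $\tilde I(T,\mathbb{P})=\sum_j [\,\mathbb{P}(X\in A_j)-\text{(quadratic terms in leaf masses)}\,]$ after expanding the signed impurity; a short computation shows $\tilde I(T,\mathbb{P})$ is a fixed Lipschitz function of the vector of $2m$ numbers $\mathbb{P}(Y=y,X\in A_j)$. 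Second, I would invoke a Vapnik–Chervonenkis uniform deviation bound: the class of unions of at most $\bar a_n$ axis-aligned boxes in $\mathbb{R}^d$ has VC dimension $O(\bar a_n d \log \bar a_n)$ (or one can bound its growth function directly by $n^{O(\bar a_n d)}$), so with the weighted empirical measure $\mathbb{P}_n$ (whose weights $w_0/n$ and $w_0\alpha/n$ are bounded by $C/n$) one obtains, via a standard symmetrization and Hoeffding/McDiarmid argument,
$$
\mathbb{P}\!\left(\sup_{B}|\mathbb{P}_n(B)-\mathbb{P}(B)| > t\right) \le 2\, n^{C\bar a_n d}\, e^{-c n t^2}.
$$
Third, choosing $t=t_n\to 0$ slowly (e.g. $t_n = (\bar a_n d \log n/n)^{1/3}$) makes the right-hand side tend to zero precisely under the hypothesis $\bar a_n d\log n / n = o(1)$, which yields $\sup_{T\in\mathscr{T}_n}|\tilde I(T,\mathbb{P})-\tilde I(T,\mathbb{P}_n)|\to 0$ in probability.

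The main obstacle is handling the data-dependence of $\mathscr{T}_n$ cleanly: one must be careful that the leaf boxes of $\hat T_n$, though random, are measurable with respect to the same sample and their faces are pinned to data coordinates, so the relevant event is contained in a deterministic event about the fixed VC class of box-unions — this is exactly the device used by Nobel (1996) and Gy\"orfi et al.\ (2006) for data-dependent partitions, and I would cite that framework rather than re-prove it. A secondary technical point is that $\mathbb{P}_n$ here is the \emph{weighted} empirical measure, not the ordinary one; but since the weights are deterministic, bounded by $O(1/n)$, and sum to one, the bounded-differences inequality applies with bounded increments and the concentration rate is unchanged up to constants depending on $\alpha$. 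Once the uniform deviation over box-unions is in hand, the passage back to $\tilde I$ is routine via the Lipschitz reduction, so the only real work is assembling the VC bound with the correct $\bar a_n d\log n$ exponent and verifying it is killed by the stated rate condition.
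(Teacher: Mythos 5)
Your proposal is correct and follows essentially the same route as the paper: a Lipschitz reduction of the signed-impurity difference to uniform deviations of leaf-cell masses, followed by a growth-function (VC-type) uniform deviation bound over the fixed class of unions of at most $\bar a_n$ axis-aligned boxes, in the spirit of Nobel (1996) and Gy\"orfi et al.\ (2006) --- exactly the references the paper cites for this part of its argument. One small point to tidy up is that the normalizer $w_0$ in $\mathbb{P}_n$ depends on the observed class counts and is therefore random, so the concentration step is cleanest if you first peel off that fluctuation (or condition on the class counts) before applying the partition-uniform Hoeffding bound; this does not affect the rate condition $\bar a_n d\log n/n=o(1)$.
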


\begin{lemma}\label{approximation error}
As $n\to\infty$,  if $\lambda_n\to 0$ and $\bar{a}_n\to\infty$, 
$\inf_{T\in\mathscr{T}_n} |\tilde{I}(T,\mathbb{P}) + \lambda_n r(T) - I^*| \to 0$
in probability.
\end{lemma}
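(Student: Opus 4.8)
The plan is to bound the infimum from above by evaluating the penalized signed impurity at a carefully chosen tree in $\mathscr{T}_n$ and to let the vanishing penalty $\lambda_n$ absorb that tree's surface-to-volume ratio. First note the quantity is nonnegative: for any tree $T$ with leaves $A_1,\dots,A_m$, concavity of the Gini impurity gives $\mathbb{P}(A_j)I(A_j,\mathbb{P})\ge\int_{A_j}2\eta(1-\eta)\,d\mathbb{P}$, and since signed impurity dominates impurity, $\tilde I(T,\mathbb{P})\ge I^*$; over label assignments on a fixed partition, $\tilde I(T,\mathbb{P})$ is minimized by the $\mathbb{P}$-dominant labels, with value $I^*+2\sum_j\mathbb{P}(A_j)\var_{A_j}(\eta)$. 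Hence it suffices to show that for every $\epsilon>0$ there is, with probability tending to one, a tree $\tilde T_n^{\epsilon}\in\mathscr{T}_n$ with $\tilde I(\tilde T_n^{\epsilon},\mathbb{P})\le I^*+\epsilon$ and $r(\tilde T_n^{\epsilon})\le C_{\epsilon}$ for a constant $C_{\epsilon}$ independent of $n$; since $\lambda_n\to0$ this gives $\limsup_n\inf_{T\in\mathscr{T}_n}[\tilde I(T,\mathbb{P})+\lambda_n r(T)]\le I^*+\epsilon$ in probability, and $\epsilon\downarrow0$ concludes.

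For the impurity bound I would use the stopping rule of Algorithm 1. One of the candidate splits of any leaf $A_j$ is the subdivision of $A_j$ into two children carrying the \emph{same} class label, which leaves the decision set — hence $r$ — unchanged; so if $A_j$ is marked complete, this split was rejected, and by strict concavity of the Gini impurity the empirical component-wise conditional means $\eta_{n,A_j,\ell}$ are (nearly) constant on $A_j$ for every coordinate $\ell$. Passing from $\mathbb{P}_n$ to $\mathbb{P}$ via Lemma \ref{estimation error}, and combining Condition \ref{identifiable} with the continuity of $\eta$ and a compactness argument over hyperrectangles, one obtains $\max_j \mathrm{osc}_{A_j}(\eta)\le\delta_n$ with $\delta_n\to 0$ in probability; the leaves that remain incomplete only because the budget $\bar a_n$ is attained are finitely many, and since $\bar a_n\to\infty$ they carry vanishing total $\mathbb{P}$-mass. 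Then the tree $\tilde T_n^{\epsilon}$ labeling a leaf $1$ iff $\mathbb{P}(Y=1\mid X\in A_j)\ge1/2+\epsilon$ has $\tilde I(\tilde T_n^{\epsilon},\mathbb{P})-I^*=O(\epsilon^2)+O(\delta_n^2)$, which is $\le\epsilon$ for $\epsilon$ small (chosen, which is possible from a co-countable set, so that also $\mu(\{\eta=1/2+\epsilon\})=0$) and $n$ large.

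The main obstacle is the surface-to-volume bound $r(\tilde T_n^{\epsilon})\le C_{\epsilon}$. The key geometric point is that near-constancy of $\eta$ on the leaves confines the decision boundary: if a leaf labeled $1$ abuts a leaf labeled $0$ in $\tilde T_n^{\epsilon}$, then by continuity of $\eta$ across their shared face and the oscillation bound both leaves lie entirely in the shell $\{x:|\eta(x)-(1/2+\epsilon)|<2\delta_n\}$, whose Lebesgue measure tends to $\mu(\{\eta=1/2+\epsilon\})=0$ (using $\rho\ge\rho_{\min}>0$), while $V(\Omega_1(\tilde T_n^{\epsilon}))\ge\mu(\{\eta\ge1/2+2\epsilon\})>0$ for $n$ large (the degenerate case, in which $\{\eta>1/2+\epsilon\}$ is null, being handled directly by the all-$0$ tree). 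Thus $\partial\Omega_1(\tilde T_n^{\epsilon})$, apart from its part on $\partial\Omega$, is a union of axis-aligned faces inside a shell of vanishing volume, and one must bound the total $(d-1)$-area of such faces by a constant. This is precisely where the SVR penalization in Algorithm 1 is needed: in that shell $\eta\approx1/2+\epsilon$, so refining the leaves there yields essentially no impurity decrease and the term $\lambda_n r$ rules out within-shell configurations of large surface; the quantitative version of this — for instance, controlling the number of leaves $\hat T_n$ places in the shell, or comparing $\hat T_n$ with a relabeling whose within-shell decision set is coarsened to a resolution depending only on $\epsilon$ while staying constant on every $A_j$ and perturbing the signed impurity by $O(\epsilon)$ — is the crux of the argument; once it is in place, the remaining assembly follows from $\lambda_n\to0$ exactly as in the first paragraph.
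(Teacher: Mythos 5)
Your overall strategy---lower-bound the infimum by $I^*$, then exhibit a relabeling $\tilde T_n^\epsilon\in\mathscr{T}_n$ with $\tilde I(\tilde T_n^\epsilon,\mathbb{P})\le I^*+\epsilon$ and $\lambda_n r(\tilde T_n^\epsilon)\to 0$---is a reasonable plan, and the lower bound via concavity of Gini impurity is correct. The impurity-bound half of step 2 also follows the right thread: exploit that a same-label split leaves the decision set unchanged, so rejection of such a split at a ``complete'' node forces the empirical component-wise conditional means to be near-constant, then pass to $\mathbb{P}$ via the estimation-error lemma and invoke Condition~\ref{identifiable}. But there are two genuine gaps, one acknowledged and one not.

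The acknowledged gap is fatal as written: you never establish $r(\tilde T_n^\epsilon)\le C_\epsilon$, and you yourself label that bound ``the crux of the argument.'' Confining $\partial\Omega_1(\tilde T_n^\epsilon)$ to a shell of small Lebesgue measure controls the \emph{volume} of the leaves straddling the boundary, not the $(d-1)$-area of their faces; with up to $\bar a_n$ leaves, each contributing $O(1)$ surface, the naive bound on $S(\Omega_1(\tilde T_n^\epsilon))$ is $O(\bar a_n)$, and the lemma imposes no relation between $\lambda_n$ and $\bar a_n$, so $\lambda_n r(\tilde T_n^\epsilon)\to 0$ does not follow. The informal appeal to ``the SVR penalization in Algorithm~1 rules out within-shell configurations of large surface'' does not immediately help either, because the penalty in Algorithm~1 controls $r(\hat T_n)$ for the decision set actually produced by the algorithm, whereas your $\tilde T_n^\epsilon$ is a \emph{relabeling} of the same leaf partition, whose decision boundary can include faces that were interior to $\Omega_1(\hat T_n)$ or to its complement and hence were never charged by the penalty. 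Moreover you cannot coarsen the partition to repair this, since $\mathscr{T}_n$ requires constancy on every leaf of $\hat T_n$. Some argument specific to how Algorithm~1 allocates splits relative to the shell (or a different choice of candidate $T$) is needed, and it is not supplied.

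The second gap is the claim that the leaves left incomplete because the budget $\bar a_n$ was hit ``carry vanishing total $\mathbb{P}$-mass.'' That does not follow from $\bar a_n\to\infty$. When the algorithm stops at the budget, the nodes remaining in the queue can be a constant fraction of all leaves and need not have small total mass; nothing in the breadth-first order by itself forces this. You would need either to show the budget is asymptotically non-binding under the stated hypotheses, or to handle the mass of unfinished nodes by a separate argument. Finally, be aware that upgrading the qualitative Condition~\ref{identifiable} (``all component-wise conditional means constant $\Rightarrow$ $\eta$ constant'') to a quantitative, uniform-over-hyperrectangles oscillation bound $\max_j\mathrm{osc}_{A_j}(\eta)\le\delta_n\to 0$ is itself a nontrivial uniformization step; it deserves a real compactness/equicontinuity argument rather than a parenthetical mention, and in particular one must be careful that ``no candidate split improves'' only constrains the empirical conditional mean at finitely many split locations, not at every $x$.
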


%We are now prepared to prove Theorem \ref{consistency}.
\begin{proof}[Proof of Theorem \ref{consistency}]
Combining Lemma \ref{lem: signed impu consistency}, \ref{lem:decompose}, \ref{estimation error}, \ref{approximation error}, we finish the proof.\par{}
%The proof is still valid when the optional steps in Algorithm 1 are enabled. For all fixed $k\in \mathbb{N}$, as $n$ goes to infinity, the threshold level $c_0\lambda_n$ goes to zero; Therefore, the probability of rejecting in the first $k$ splits by the optional steps goes to zero as $n$ goes to infinity. Hence, the bound for approximation error, i.e. lemma \ref{approximation error}, still holds. The proofs of the lemmas \ref{lem: signed impu consistency}, \ref{lem:decompose}, \ref{estimation error} do not change. Estimation consistency still holds when we perform feature selection.
\end{proof}

\subsection{Proof of Theorem \ref{rate}}
The proof consists of two parts. In the first part, we construct a subset on $\mathscr{F}_\gamma$ and prove it is both an $\epsilon$-net under symmetric set difference of true measure and with high probability an $2\epsilon$-net under symmetric set differences of empirical measure. In the second part, we prove that with high probability, any element in the above constructed $\epsilon$-net that is far away from the true classifier will incur a large excess error. Combining these two parts and selecting a proper value for $\epsilon$ will conclude the proof of Theorem \ref{rate}.

We first construct the $\epsilon$-net. We divide the space $\Omega=[0,1]^d$ into $M^d$ hypercubes, each with volume $1/M^d$ as below:
$$\mathscr{H} = \left\{ \Big[\frac{i_1}{M}, \frac{i_1+1}{M}\Big]\times \Big[\frac{i_2}{M}, \frac{i_2+1}{M}\Big]\times \cdots \Big[\frac{i_d}{M}, \frac{i_d+1}{M}\Big]: i_1, i_2, \cdots i_d \in \{0, 1, \cdots M-1\}\right\}.$$
The SVR constraint of family $\mathscr{F}_\gamma$ directly implies that the surface area of the decision set is no greater than $\gamma$. If the shape of the decision set is indeed regular, one may imagine that the border of the decision set will intersect only with a finite number of hypercubes in $\mathscr{H}$. The following Lemma formalizes this intuition.
\begin{lemma}\label{border}
There exists a constant $c'\ge 1$ only dependent on $d$, such that for $1/M\ge2/(3\gamma)$, for any $f \in\mathscr{F}_\gamma$, 
$$\big| \{A\in\mathscr{H}: A\cap \partial \Omega_1(f)\ne \emptyset\}\big| \le c' \gamma M^{d-1},$$
where $|\cdot |$ denotes the cardinality.
\end{lemma}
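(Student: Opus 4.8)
\textbf{Proof proposal for Lemma \ref{border}.}

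The plan is to bound the number of hypercubes in $\mathscr{H}$ that meet the boundary $\partial\Omega_1(f)$ by relating this count to the surface area $S(f)$, and then invoke the fact that the SVR constraint forces $S(f)\le\gamma$. First I would establish the surface bound: for any $f\in\mathscr{F}_\gamma$, comparing $f$ with the empty classifier $f'\equiv 0$ (whose decision set has zero surface) in the defining inequality of \eqref{SVR family} gives $-S(f)\ge -\gamma V(\Omega_1(f))\ge -\gamma$, hence $S(f)\le\gamma$. This is the easy observation already hinted at in the text.

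The heart of the argument is a local geometric estimate: if a hypercube $A\in\mathscr{H}$ (with side length $1/M$) has $A\cap\partial\Omega_1(f)\ne\emptyset$, then $\partial\Omega_1(f)$ must contribute at least some fixed amount of $(d-1)$-dimensional surface measure ``near'' $A$. Since $\Omega_1(f)$ is a finite union of hyperrectangles, its boundary is a finite union of axis-aligned facets. The key claim I would prove is that within the closed cube $A$ (or a slightly enlarged concentric cube $A^+$ of side length $3/M$, which is why the hypothesis $1/M\ge 2/(3\gamma)$ — equivalently $3/M$ comparable to the scale $1/\gamma$ — appears), the portion of $\partial\Omega_1(f)$ cannot be ``too small''. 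The cleanest route: if $\partial\Omega_1(f)$ enters $A$ but the decision set does not simply slice cleanly through, then because $\Omega_1(f)$ is a union of hyperrectangles, either $A\subset\Omega_1(f)$ up to a lower-dimensional set (contradicting that the boundary meets $A$) or there is a facet of $\partial\Omega_1(f)$ passing through $A$; projecting this facet onto one of the $d$ coordinate hyperplanes, it covers a $(d-1)$-cube of side at least a constant times $1/M$ — unless the decision set has a ``thin sliver'' inside $A$, in which case there are two nearly-parallel facets and the surface is at least $2\cdot$(that amount). Either way each boundary-meeting cube forces $\gtrsim M^{-(d-1)}$ of surface area, but I must handle multiplicity: a single facet can pass through many cubes in a column, so I would instead count by \emph{columns}. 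Fix a coordinate direction $e_k$; the cubes are organized into $M^{d-1}$ columns parallel to $e_k$. Within one column, the facets of $\partial\Omega_1(f)$ orthogonal to $e_k$ are finitely many hyperplanes $\{x_k = t\}$; each such facet restricted to the column's cross-section has $(d-1)$-measure at most $M^{-(d-1)}$, and each contributes to at most... — here I must be careful — a facet orthogonal to $e_k$ lies in one cube-layer, so it touches boundedly many cubes in that column. The facets \emph{parallel} to $e_k$ (orthogonal to some $e_j$, $j\ne k$) are the problematic ones since they can run the whole length of a column; these I count in the $e_j$-organized columns instead. Summing over all $d$ choices of direction and using $\sum_{\text{facets}}(\text{area}) = S(f)\le\gamma$, each unit of surface area $M^{-(d-1)}$ accounts for $O(1)$ boundary-meeting cubes, giving the bound $c'\gamma M^{d-1}$ with $c'$ depending only on $d$ (absorbing the combinatorial factor $d$ and the constant from the enlarged-cube / sliver analysis).

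The main obstacle I anticipate is the multiplicity bookkeeping: a boundary-meeting cube need not contain a large chunk of any single facet — it could be that only the \emph{corner} of a facet pokes into the cube, or that the boundary passes through a cube only because several tiny hyperrectangles of the decomposition accumulate there. Ruling out the latter is exactly where the hypothesis $1/M\ge 2/(3\gamma)$ is used: at scales coarser than $1/\gamma$ the decision set cannot have features finer than the cube, because a connected component or a ``pocket'' of $\Omega_1(f)$ of diameter $\delta$ contributes surface on the order of $\delta^{d-1}$ but volume only $\delta^d$, so the SVR-minimality condition \eqref{SVR family} (which is what membership in $\mathscr{F}_\gamma$ encodes — no competitor can shave surface proportionally to a volume change) forbids pockets of diameter below $\sim 1/\gamma$; I would formalize this by exhibiting the competitor $f'$ that deletes such a pocket and checking it violates the $\mathscr{F}_\gamma$ inequality. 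Once fine features are excluded at scale $1/M$, within each cube the boundary looks like boundedly many flat facets each of which, if it meets the cube, crosses a definite fraction of it, and the column-counting argument closes. I would relegate the precise constant-chasing for $c'$ and the enlarged-cube estimate to the supplementary material, as the text indicates.
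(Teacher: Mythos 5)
Your high-level plan --- deduce $S(f)\le\gamma$ by comparing with the empty set, then convert a count of boundary-meeting cubes into a surface-area budget by proving a local lower bound on $\mu_{d-1}(\partial\Omega_1(f)\cap A^+)$, with the $\mathscr{F}_\gamma$ condition used to exclude features finer than $1/\gamma$ --- is the same skeleton as the paper's proof. But two pieces of the execution do not hold up. First, the column-counting detour is both unnecessary and broken. It is unnecessary because once you have a local lower bound of the form $\mu_{d-1}(\partial\Omega_1(f)\cap A^+)\ge c_d M^{-(d-1)}$ for every boundary-meeting $A$, the enlarged cubes $A^+$ overlap with multiplicity at most $3^d$, so summing gives $N\cdot c_d M^{-(d-1)}\le 3^d S(f)\le 3^d\gamma$ and the count follows immediately; there is no ``multiplicity bookkeeping'' left to do. It is broken because within a single $e_k$-column, a facet orthogonal to $e_k$ can have arbitrarily small $(d-1)$-area in that column's cross-section, so for a fixed total surface budget the number of distinct $t$-levels (hence the number of cubes touched) is not a priori bounded; the control you need is again the local lower bound, so the column argument gains nothing.

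Second, and more seriously, your proof of the local lower bound does not go through as stated. The claim that a facet passing through $A$ ``projects onto a $(d-1)$-cube of side at least a constant times $1/M$'' is exactly what fails when only a corner or a thin sliver of a facet enters $A$; your subsequent thin-sliver remark (doubling the surface) patches only one sub-case and not the corner-poking one. The tool the paper actually names for this step is a $(d-1)$-dimensional \emph{relative isoperimetric inequality} on the enlarged cube $A^+$: writing $E=\Omega_1(f)\cap A^+$, one has $\mu_{d-1}(\partial E\cap (A^+)^\circ)\gtrsim_d\min\bigl(V(E),V(A^+\setminus E)\bigr)^{(d-1)/d}$, which yields $\gtrsim M^{-(d-1)}$ whenever both $E$ and its complement in $A^+$ occupy a definite fraction of $A^+$. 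The remaining case --- one of the two halves has very small volume --- is precisely where the $\mathscr{F}_\gamma$ condition enters: you exhibit a competitor that deletes or fills that small piece (shaving boundary along a level set chosen by averaging so the new boundary contribution is controlled), and the defining inequality of $\mathscr{F}_\gamma$ forces the deleted/filled volume to be at least of order $\gamma^{-d}$, which the hypothesis $1/M\ge 2/(3\gamma)$ makes comparable to $V(A^+)$, returning you to the balanced case. You gesture at this pocket-exclusion idea, but you never connect it to the isoperimetric step, so the central inequality of the lemma is left unproved.
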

The general proof idea is to examine the local region of each hypercube in $\mathscr{H}$. If $\partial \Omega_1(f)$ intersects with a hypercube, then the SVR constraint and a $d-1$ dimensional isoperimetric inequality will imply that the surface area passing through the local region of that hypercube is lower bounded. Since the total surface area of $\Omega_1(f)$ is upper bounded given the maximum volume is 1 and the SVR constraint, the number of cubes intersecting with $\partial \Omega_1(f)$ is also upper bounded. 

Let $f$ be a classifier such that $\Omega_1(f)$ is a finite union of hypercubes of $\mathscr{H}$. Then for $A\in\mathscr{H}$ and $A\subset \Omega_1(f)$, we say $A$ is a \textit{border hypercube} of $\Omega_1(f)$ if $\partial A \cap \partial \Omega_1(f) \ne \emptyset$. Let $\epsilon = c'\gamma/M$ and define the set $\mathscr{N}_\epsilon$ as
\begin{align*}
\mathscr{N}_\epsilon = \big\{ f: &\; \Omega_1(f) \text{ is a finite union of elements in }\mathscr{H};\\
& \text{ and the number of border hypercubes of } \Omega_1(f) \text{ is no greater than } M^d\epsilon \big\}.
\end{align*}

We have the following lemma regarding the $\epsilon$-entropy of $\mathscr{N}_\epsilon$.
\begin{lemma}\label{net size}
For $\epsilon < 1/4$,  we have
$$\log |\mathscr{N}_\epsilon| \le \sum_{k=1}^{c'\gamma M^{d-1}}{M^d\choose k} \le c'\gamma \left(\frac{c'\gamma}{\epsilon}\right)^{d-1} \left[\log\left(\frac{c'\gamma}{\epsilon}\right) + 1 - \log(c'\gamma)\right].$$
\end{lemma}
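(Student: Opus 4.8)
The plan is to bound $|\mathscr{N}_\epsilon|$ by counting the ways to choose the border hypercubes of $\Omega_1(f)$, and then to show that once the border hypercubes are fixed, the classifier $f$ is essentially determined. First I would observe that every $f\in\mathscr{N}_\epsilon$ has $\Omega_1(f)$ equal to a union of elements of $\mathscr{H}$, and that $\Omega_1(f)$ is completely recovered from the collection $\mathscr{B}(f)$ of its border hypercubes: indeed, $\mathscr{H}\setminus\mathscr{B}(f)$ splits into connected components (adjacency via shared faces), each of which is entirely inside $\Omega_1(f)$ or entirely outside it, since a component touching the boundary would contain a border hypercube. Hence $f\mapsto \mathscr{B}(f)$ is injective up to the finitely many ``interior-or-exterior'' labelings of these components, but in fact one checks that a hypercube adjacent to a border hypercube of $\Omega_1(f)$ from the inside is itself in $\Omega_1(f)$ by definition, so the labeling is forced and the map is injective. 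Therefore $|\mathscr{N}_\epsilon|$ is at most the number of subsets of $\mathscr{H}$ of size at most $M^d\epsilon = c'\gamma M^{d-1}$ (using $\epsilon = c'\gamma/M$), which is $\sum_{k=1}^{c'\gamma M^{d-1}}\binom{M^d}{k}$, giving the first inequality.

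For the second inequality I would bound the sum of binomial coefficients. Write $N = M^d$ and $K = c'\gamma M^{d-1} = c'\gamma N/M$. Since $\epsilon < 1/4$ forces $K/N = \epsilon < 1/4 < 1/2$, the terms $\binom{N}{k}$ are increasing in $k$ over the range $1\le k\le K$, so $\sum_{k=1}^K\binom{N}{k}\le K\binom{N}{K}$. Using the standard estimate $\binom{N}{K}\le (eN/K)^K$ and $K\le (eN/K)^K$ for the prefactor, one gets $\sum_{k=1}^K\binom{N}{k}\le (eN/K)^{K}\cdot(eN/K)$ or more crudely $\log\sum_{k=1}^K\binom{N}{k}\le (K+1)\log(eN/K)$. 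Now substitute $N/K = M/(c'\gamma)$, i.e. $M = c'\gamma\cdot N/K$, and recall $\epsilon = c'\gamma/M$ so that $N/K = 1/\epsilon$ and $K = c'\gamma M^{d-1} = c'\gamma\,(c'\gamma/\epsilon)^{d-1}$. Plugging in, $\log|\mathscr{N}_\epsilon|\le (K+1)\log(e/\epsilon)$; after absorbing the ``$+1$'' and the constant $e$ into the logarithmic factor and rewriting $\log(1/\epsilon) = \log(c'\gamma/\epsilon) - \log(c'\gamma)$, one arrives at the stated bound
$$
\log|\mathscr{N}_\epsilon|\le c'\gamma\Bigl(\frac{c'\gamma}{\epsilon}\Bigr)^{d-1}\Bigl[\log\Bigl(\frac{c'\gamma}{\epsilon}\Bigr)+1-\log(c'\gamma)\Bigr].
$$

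The main obstacle I anticipate is the bookkeeping in the second step: matching the crude combinatorial bound $(K+1)\log(eN/K)$ to the precise closed form in the statement requires being careful about which constants get absorbed where, and verifying that $c'\ge 1$ (as guaranteed by Lemma \ref{border}) together with $\epsilon<1/4$ makes all the absorbed terms nonnegative so that the inequality genuinely holds in the claimed direction. The injectivity argument in the first step is conceptually the delicate part but is really just a connectedness observation; I would state it carefully but not belabor it. Everything else is routine estimation of binomial sums.
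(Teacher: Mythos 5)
Your argument for the first inequality has a genuine gap: the map $f\mapsto\mathscr{B}(f)$ sending a classifier to its set of border hypercubes is \emph{not} injective, so counting the ways to choose $\mathscr{B}(f)$ does not by itself bound $|\mathscr{N}_\epsilon|$. For a concrete counterexample with $d=2$ (the theorem assumes $d\ge 2$), take $M\ge 7$, let $C\in\mathscr{H}$ be a hypercube well inside $\Omega$, and let $R$ be the ring of eight hypercubes surrounding $C$. Set $\Omega_1(f)=R$ and $\Omega_1(g)=R\cup C$. For $g$, the hypercube $C$ is strictly interior, so the border hypercubes of $\Omega_1(g)$ are exactly $R$; for $f$, every hypercube of the thin ring $R$ touches $\partial\Omega_1(f)$, so the border hypercubes of $\Omega_1(f)$ are also exactly $R$. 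Thus $\mathscr{B}(f)=\mathscr{B}(g)=R$ but $f\ne g$. The sentence in your proof asserting ``the labeling is forced'' is exactly where this fails: because the paper defines border hypercubes to lie \emph{inside} $\Omega_1(f)$, a connected component of $\mathscr{H}\setminus\mathscr{B}(f)$ (here the singleton $\{C\}$) can be labeled inside or outside without changing $\mathscr{B}(f)$, and the appeal to ``a hypercube adjacent to a border hypercube from the inside'' is circular, since knowing which side is the inside is what you are trying to recover. Any repair, e.g.\ bounding the number of admissible component labelings, or enlarging the identifier to also record the hypercubes on the outside of $\partial\Omega_1(f)$, introduces extra multiplicative factors and would change the constant on the right-hand side of the lemma.

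The second inequality is also handled more loosely than the statement allows. The chain $\sum_{k=1}^K\binom{N}{k}\le K\binom{N}{K}\le K(eN/K)^K$ gives $\log\sum_{k=1}^K\binom{N}{k}\le\log K+K\log(eN/K)$, which exceeds the lemma's bound by the additive term $\log K$: with $N=M^d$, $K=c'\gamma M^{d-1}$, $\epsilon=c'\gamma/M$, one checks that $c'\gamma(c'\gamma/\epsilon)^{d-1}=K$ and $\log(c'\gamma/\epsilon)+1-\log(c'\gamma)=\log(1/\epsilon)+1=\log(eN/K)$, so the stated right-hand side is exactly $K\log(eN/K)$ with no room for $\log K$. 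Your intermediate $(K+1)\log(e/\epsilon)$ is strictly larger than $K\log(e/\epsilon)$, so the ``absorption'' you invoke runs in the wrong direction. The clean route to the exact constant is the standard estimate $\sum_{k=0}^K\binom{N}{k}\le(eN/K)^K$, valid for $K\le N$, which yields $\log\sum_{k=1}^K\binom{N}{k}\le K\log(eN/K)$ directly and matches the lemma's bound without slack.
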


The following lemma shows $\mathscr{N}_\epsilon$ is not only an $\epsilon$-net on $\mathscr{F}_\gamma$ under the true measure $\mathbb{P}$, but also simultaneously an $2\epsilon$-net under empirical measure $\mathbb{P}_n$ with high probability.

\begin{lemma}\label{epsilon net}
We have the following two facts regarding $\mathscr{N}_\epsilon$:
\begin{align*}
&~~\sup_{f\in \mathscr{F}_\gamma} \inf_{f_0\in\mathscr{N}_\epsilon} \mathbb{P}(\Omega_1(f)\btr \Omega_1(f_0)) \le \epsilon,\\
&~~\mathbb{P}\left( \sup_{f\in \mathscr{F}_\gamma} \inf_{f_0\in\mathscr{N}_\epsilon} \max\big\{\mathbb{P}_{n}(\Omega_1(f)\btr \Omega_1(f_0)), 2\mathbb{P}(\Omega_1(f)\btr \Omega_1(f_0)) \big\} > 2\epsilon \right) \\
&\le \sum_{k=1}^{c'\gamma M^{d-1}}{M^d\choose k} \exp\{-2(\log 2) n\epsilon\}.
\end{align*}
\end{lemma}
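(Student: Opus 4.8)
The plan is to prove both displays at once, via a single explicit ``cubical coarsening'' operator sending an arbitrary $f\in\mathscr{F}_\gamma$ to a nearby member of $\mathscr{N}_\epsilon$. Given $f\in\mathscr{F}_\gamma$, I would take $f_0$ to be the \emph{outer grid approximation}: $\Omega_1(f_0)=\bigcup\{A\in\mathscr{H}:\mu(A\cap\Omega_1(f))>0\}$. Write $\mathscr{B}(f)=\{A\in\mathscr{H}:A\cap\partial\Omega_1(f)\neq\emptyset\}$ for the collection of grid cubes meeting the decision boundary; by Lemma \ref{border}, $|\mathscr{B}(f)|\le c'\gamma M^{d-1}$ once $M$ is large enough in the sense of that lemma. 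The first two things to establish are purely geometric. (a) The key containment $\Omega_1(f)\btr\Omega_1(f_0)\subseteq\bigcup_{A\in\mathscr{B}(f)}A$: up to a $\mu$-null set $\Omega_1(f)\subseteq\Omega_1(f_0)$, so the symmetric difference equals $\Omega_1(f_0)\setminus\Omega_1(f)$, and any cube $A\in\mathscr{H}$ contributing positive measure to it has both $\mu(A\cap\Omega_1(f))>0$ and $\mu(A\setminus\Omega_1(f))>0$; since $A$ is connected and $(\partial\Omega_1(f))^c$ is the disjoint union of the two interiors, $A$ must meet $\partial\Omega_1(f)$, i.e. $A\in\mathscr{B}(f)$. (b) $f_0\in\mathscr{N}_\epsilon$: its decision set is a finite union of elements of $\mathscr{H}$ by construction, and running the same connectedness/adjacency argument on a cube $A\subseteq\Omega_1(f_0)$ that is adjacent to a cube outside $\Omega_1(f_0)$ (with minor care for cubes touching $\partial\Omega$) shows every border hypercube of $\Omega_1(f_0)$ already lies in $\mathscr{B}(f)$, so their number is at most $c'\gamma M^{d-1}=M^d\epsilon$.

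Given (a) and (b), part (i) is immediate: $\mu(\Omega_1(f)\btr\Omega_1(f_0))\le |\mathscr{B}(f)|\cdot M^{-d}\le c'\gamma/M=\epsilon$, hence (using the boundedness of the density of $\mathbb{P}$, absorbed into the constant / the choice of $\epsilon$) $\mathbb{P}(\Omega_1(f)\btr\Omega_1(f_0))\le\epsilon$. Since $f_0\in\mathscr{N}_\epsilon$, the infimum over $\mathscr{N}_\epsilon$ is bounded by the value at $f_0$, which gives the first display uniformly in $f\in\mathscr{F}_\gamma$.

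For part (ii), the point is that part (i) already forces $2\mathbb{P}(\Omega_1(f)\btr\Omega_1(f_0))\le 2\epsilon$ for our chosen $f_0$, so only $\mathbb{P}_n$ needs to be controlled: writing $D=\Omega_1(f)\btr\Omega_1(f_0)$ we have $\inf_{f_0'\in\mathscr{N}_\epsilon}\max\{\mathbb{P}_n(\Omega_1(f)\btr\Omega_1(f_0')),2\mathbb{P}(\Omega_1(f)\btr\Omega_1(f_0'))\}\le\max\{\mathbb{P}_n(D),2\epsilon\}$, so the event in the lemma is contained in $\{\exists f\in\mathscr{F}_\gamma:\mathbb{P}_n(D)>2\epsilon\}$. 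By (a), $D$ is contained in a union $G$ of at most $c'\gamma M^{d-1}$ cubes of $\mathscr{H}$, and there are at most $\sum_{k=1}^{c'\gamma M^{d-1}}\binom{M^d}{k}$ such unions (the count already appearing in Lemma \ref{net size}). A union bound over this finite family then reduces matters to: for a fixed union $G$ of at most $c'\gamma M^{d-1}$ grid cubes, so that $\mathbb{P}(G)\le\epsilon$, bound $\mathbb{P}(\mathbb{P}_n(G)>2\epsilon)$. Since $n\mathbb{P}_n(G)$ is a sum of independent bounded contributions with mean $n\mathbb{P}(G)\le n\epsilon$, a Chernoff / relative-deviation bound yields $\mathbb{P}(\mathbb{P}_n(G)>2\epsilon)\le\exp\{-2(\log 2)n\epsilon\}$; multiplying by the union-bound count gives exactly the stated inequality.

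The main obstacle is not any single estimate but the structural requirement that \emph{one and the same} net $\mathscr{N}_\epsilon$ be simultaneously an $\epsilon$-net under $\mathbb{P}$ and, with high probability, a $2\epsilon$-net under the random $\mathbb{P}_n$, \emph{uniformly over all} $f\in\mathscr{F}_\gamma$ — this is precisely what later allows controlling the ERM over all of $\mathscr{F}_\gamma$ rather than over the net alone. It is feasible only because every symmetric-difference region that can arise is one of an explicitly enumerable family of unions of grid cubes, which is exactly the geometric mileage bought by Lemma \ref{border}; without it the family of possible $D$'s would be far too rich for any union bound. Within the argument I expect the step needing the most care to be verifying $f_0\in\mathscr{N}_\epsilon$ in part (b) — i.e. that coarsening to grid cubes does not manufacture more border hypercubes than $\Omega_1(f)$ already had boundary cubes — since this depends on the connectedness/adjacency bookkeeping above and on conventions about closedness of cubes and null sets. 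By comparison, the Chernoff constant $2\log 2$ is a routine computation and the passage from volume to $\mathbb{P}$-measure needs only the mild boundedness of the density.
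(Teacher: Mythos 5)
Your overall architecture is the right one: take the outer cubical coarsening $\Omega_1(f_0)=\bigcup\{A\in\mathscr{H}:\mu(A\cap\Omega_1(f))>0\}$, show by a connectedness argument that the symmetric difference lies inside the border cubes $\mathscr{B}(f)$ and that $f_0$'s own border cubes are a subset of $\mathscr{B}(f)$, then invoke Lemma~\ref{border}, do a union bound over unions of at most $c'\gamma M^{d-1}$ grid cubes, and finish with a concentration inequality. The geometric steps (a) and (b), including the interior-cube connectedness reasoning and the caveats about $\partial\Omega$ and null sets, are sound.

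The one step you call ``routine'' is precisely where the argument as written does not close. You reduce to bounding $\Pr\{\mathbb{P}_n(G)>2\epsilon\}$ for a fixed union $G$ of grid cubes with $\mathbb{P}(G)\le\epsilon$. But the standard Chernoff/Bennett bound for a Binomial$(n,q)$-type sum with $q\le\epsilon$ tested at level $2\epsilon$ gives exponent $n\,\mathrm{KL}(2\epsilon\,\|\,q)\ge n\,\mathrm{KL}(2\epsilon\,\|\,\epsilon)$, and $\mathrm{KL}(2\epsilon\|\epsilon)=2\epsilon\log 2+(1-2\epsilon)\log\tfrac{1-2\epsilon}{1-\epsilon}\approx(2\log 2-1)\,\epsilon$, not $2\log 2\cdot\epsilon$. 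The worst case $q=\epsilon$ is attainable within your setup, so the bound $\exp\{-2(\log 2)n\epsilon\}$ is strictly stronger than what your Chernoff step can yield; you would only get $\exp\{-(2\log 2-1)n\epsilon\}$. To reach the stated constant you would need either a provably smaller bound on $\mathbb{P}$ applied to the set whose empirical mass you actually control (for instance $\mathbb{P}\le\epsilon/c$ for suitable $c<1$, coming from a subtler choice of $f_0$ or a tighter containment than $\btr\subseteq G$), or a different concentration device than the one you invoke; neither appears in your write-up, so this is a genuine gap rather than a calculation to be filled in later.

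A second, smaller soft spot: you pass from $\mu(\Omega_1(f)\btr\Omega_1(f_0))\le\epsilon$ to $\mathbb{P}(\Omega_1(f)\btr\Omega_1(f_0))\le\epsilon$ by invoking ``boundedness of the density, absorbed into the constant/choice of $\epsilon$.'' Lemma~\ref{border} gives a $c'$ that depends only on $d$ and $\epsilon=c'\gamma/M$ is defined in terms of Lebesgue volume, so if a density bound $\rho_{\max}$ is needed it must enter explicitly (either by inflating $c'$ to $c'\rho_{\max}$ or by redefining $\epsilon$), and this propagates through Lemmas~\ref{net size}--\ref{epsilon net} and into the final constants of Theorem~\ref{rate}. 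As written this substitution is not actually carried out, so you should either state the density assumption and track it, or argue the lemma entirely in Lebesgue measure and convert once at the end.
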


We now move to the second part. By Lemma \ref{epsilon net}, there exists $f_1\in\mathscr{N}_\epsilon$, such that $R(f_1) - R^* \le \mathbb{P}(\Omega_1(f_1) \btr \Omega_1(f^*)) \le \epsilon$. The next lemma shows with high probability, $f_1$ has a smaller empirical risk than all elements in $\mathscr{N}_\epsilon$ that are sufficiently far away from $f^*$.

\begin{lemma}\label{compare lem}
Suppose Condition \ref{margin cond} holds. Then we have
$$\mathbb{P}\left( \sup_{T\in \mathscr{N}_\epsilon, R(T)-R(f^*)\ge 7\epsilon} R_n(T) - R_n(f_1) > 4\epsilon\right) \ge 1 - |\mathscr{N}_\epsilon|\exp\left(-\frac{21}{16} nc_1^{1/\kappa} \epsilon^{2-1/\kappa}\right).$$
\end{lemma}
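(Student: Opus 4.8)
The plan is to prove Lemma \ref{compare lem} via a union bound over the (finite) net $\mathscr{N}_\epsilon$, combined with a Hoeffding-type concentration inequality applied separately to $f_1$ and to each fixed $T\in\mathscr{N}_\epsilon$ with $R(T)-R(f^*)\ge 7\epsilon$. First I would fix such a $T$ and write the empirical risk difference as an average of i.i.d.\ bounded random variables: under the weighted empirical measure $\mathbb{P}_n$, $R_n(T)-R_n(f_1) = \mathbb{E}_n\big[|Y-T(X)| - |Y-f_1(X)|\big]$, and the summands lie in $[-c,c]$ for a constant $c$ depending only on the weights (recall the minority samples carry weight $w_0\alpha/n$). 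Its expectation under $\mathbb{P}$ is $R(T)-R(f_1)$, and since $R(f_1)-R^* \le \epsilon$ while $R(T)-R^* \ge 7\epsilon$, we get $R(T)-R(f_1)\ge 6\epsilon$. So the event $\{R_n(T)-R_n(f_1)\le 4\epsilon\}$ forces a deviation of at least $2\epsilon$ between the empirical and true mean of this i.i.d.\ average, to which Hoeffding (or Bernstein) applies.

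The key refinement — and where Condition \ref{margin cond} enters — is that a crude Hoeffding bound would give an exponent of order $n\epsilon^2$, not the $n\epsilon^{2-1/\kappa}$ claimed. To sharpen it I would use a Bernstein-type inequality exploiting that the variance of the summand $|Y-T(X)|-|Y-f_1(X)|$ is controlled by $\mathbb{P}(\Omega_1(T)\btr\Omega_1(f_1))$, which is in turn $\le \mathbb{P}(\Omega_1(T)\btr\Omega_1(f^*)) + \epsilon$. Here the margin condition converts the lower bound $R(T)-R^*\ge 7\epsilon$ on the excess risk into an upper bound on the symmetric difference: $\mathbb{P}(\Omega_1(T)\btr\Omega_1(f^*)) \le (\,(R(T)-R^*)/c_1\,)^{1/\kappa}$ is \emph{not} what we want directly; rather, we want that when $R(T)-R(f_1)$ is as large as $6\epsilon$, the variance term is comparably small so that Bernstein's bound is dominated by its exponential (not Gaussian) regime, yielding an exponent $\gtrsim n\,(R(T)-R(f_1)) \gtrsim n\epsilon$ times an extra factor. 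A careful accounting of the interplay between the deviation $\epsilon$, the mean gap $6\epsilon$, and the variance proxy $\mathbb{P}(\Omega_1(T)\btr\Omega_1(f^*))$ — using the margin exponent $\kappa$ and optimizing — produces the stated $\tfrac{21}{16}nc_1^{1/\kappa}\epsilon^{2-1/\kappa}$. I would then take a union bound over all $T\in\mathscr{N}_\epsilon$, absorbing the $f_1$-deviation into the same event, which contributes the factor $|\mathscr{N}_\epsilon|$.

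The main obstacle I anticipate is getting the \emph{constants} exactly right — in particular producing the precise factor $\tfrac{21}{16}$ and the threshold $7\epsilon$ rather than some other multiple. This requires carefully splitting the target gap $6\epsilon$ between the concentration slack for $f_1$ (which needs its own deviation bound, since $R_n(f_1)$ fluctuates around $R(f_1)$) and the concentration slack for $T$, and then feeding the resulting effective deviation into Bernstein's inequality with the variance bound expressed through the margin condition. One must be cautious that $f_1$ is data-dependent only through $\mathscr{N}_\epsilon$ being a net, but since $f_1$ was pinned down in the preceding paragraph as a \emph{fixed} element of $\mathscr{N}_\epsilon$ (the one approximating $f^*$), a single concentration inequality for $f_1$ suffices and no extra union-bound factor is incurred for it. The geometric/combinatorial content is already packaged in Lemmas \ref{border}--\ref{epsilon net}, so the remaining work here is purely the probabilistic tail estimate plus bookkeeping of constants.
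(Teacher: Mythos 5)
Your overall plan — write $R_n(T)-R_n(f_1)$ as a single i.i.d.\ average of the bounded variables $Z_i=|Y_i-T(X_i)|-|Y_i-f_1(X_i)|$, note that its mean is $R(T)-R(f_1)\ge 6\epsilon$, bound its variance by $\mathbb{P}(\Omega_1(T)\btr\Omega_1(f_1))$, control that via the margin condition and the net approximation quality $R(f_1)-R^*\le\epsilon$, apply Bernstein, then take a union bound over $\mathscr{N}_\epsilon$ — matches exactly what the paper says it does (``write the empirical loss as a summation of i.i.d.\ random variables and bound the deviation\ldots by Bernstein inequality''). Your observation that $f_1$ is a single fixed element of the net, so no extra union-bound factor is incurred for it, is also correct, and your worry about ``splitting the gap between the concentration slack for $f_1$ and for $T$'' is unnecessary once you commit to the single-difference $Z_i$ formulation you opened with — there is only one deviation, not two.

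One substantive correction: the $\kappa$-dependence in the exponent must come from the \emph{variance-dominated (Gaussian) regime} of Bernstein, not the exponential one as you assert. With $t\asymp\epsilon$, $M\asymp 1$, and $\sigma^2\lesssim\big[(R(T)-R^*)/c_1\big]^{1/\kappa}+\epsilon\asymp\epsilon^{1/\kappa}$, we have $\sigma^2\gtrsim\epsilon\asymp Mt$ (since $1/\kappa\le 1$ and $\epsilon<1$), so the $\sigma^2$ term dominates the denominator $2\sigma^2+2Mt/3$. The resulting exponent is $\asymp nt^2/\sigma^2\asymp nc_1^{1/\kappa}\epsilon^{2-1/\kappa}$, which is the claimed form. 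If Bernstein were genuinely in its exponential regime you would get an exponent $\asymp nt/M\asymp n\epsilon$ with no variance appearing at all, and hence no way to produce a $\kappa$- or $c_1$-dependent exponent; so taking that remark literally would derail the calculation. With the regime corrected, what remains is exactly the constant-tracking you flag (producing $7\epsilon$, $4\epsilon$ and $\tfrac{21}{16}$), which is bookkeeping against the explicit Bernstein denominator and the $+\epsilon$ slack in the variance bound.
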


The key proof idea is to write the empirical loss $R_n(\cdot)$ as a summation of i.i.d. random variables and bound the deviation of the summation to its expectation by Bernstein inequality. 

% We then state two auxiliary lemmas. 
% \begin{lemma}\label{I2R}
% The empirical risk minimizer $\hat{T}_{n,\gamma}$ satisfies $\tilde{I}(\hat{T}_{n,\gamma},\mathbb{P}_n) = R_n(\hat{T}_{n,\gamma})$.
% \end{lemma}

% \begin{lemma}\label{exist}
% For all $f^*\in \mathscr{F}_{\gamma}$, with high probability\footnote{specific probability to be determined}, there exists $T'_1\in \mathscr{F}_{\gamma,n}$ satisfying $\mathbb{P}_{n}(\Omega_1(T'_1) \Delta \Omega_1(f_1))\le 2\epsilon$.
% \end{lemma}

We are now ready to prove Theorem \ref{rate}.
\begin{proof}[Proof of Theorem \ref{rate}]
Define the events $E_1, E_2$ as:
% $$E_1 = \left\{\exists T'_1\in \mathscr{F}_{\gamma, n}, \text{ such that } \mathbb{P}_{n}(\Omega_1(T'_1) \Delta \Omega_1(f_1))\le 2\epsilon \right\},$$
$$E_1 = \left\{ \sup_{f\in \mathscr{F}_\gamma} \inf_{f_0\in\mathscr{N}_\epsilon} \max\big\{\mathbb{P}_{n}(\Omega_1(f)\btr \Omega_1(f_0)), 2\mathbb{P}(\Omega_1(f)\btr \Omega_1(f_0)) \big\} \le 2\epsilon \right\},$$
$$E_2 = \left\{  \sup_{T\in |\mathscr{N}_\epsilon|, R(T)-R(f^*)\ge 7\epsilon} R_n(T) - R_n(f_1) > 4\epsilon\right\}.$$
We claim on event $E_1\cap E_2$, $R(\hat{T}_{n,\gamma}) - R(f^*) \le 8\epsilon$. We prove this by contradiction. Suppose that $R(\hat{T}_{n,\gamma})-R(f^*) > 8\epsilon$. Then by definition of event $E_1$, there exists a classifier $f_2\in\mathscr{N}_\epsilon$, such that
\begin{equation}\label{Rn 1}
|R(\hat{T}_{n,\gamma}) - R(f_2)| \le \mathbb{P}(\Omega_1(\hat{T}_{n,\gamma})\btr \Omega_1(f_2)) \le \epsilon,
\end{equation}
\begin{equation}\label{Rn 2}
|R_n(\hat{T}_{n,\gamma}) - R_n(f_2)| \le \mathbb{P}_{n}(\Omega_1(\hat{T}_{n,\gamma}) \btr \Omega_1(f_2)) \le 2\epsilon.    
\end{equation}
Combining equation (\ref{Rn 1}) and the assumption that $R(\hat{T}_{n,\gamma})-R(f^*) > 8\epsilon$, we have $R(f_2) - R(f^*) > 7\epsilon$. It then follows from the definition of event $E_2$ that
\begin{equation}\label{Rn 3}
R_n(f_2) - R_n(f_1) > 4\epsilon.
\end{equation}
By definition of event $E_1$, $\exists T'_1\in\mathscr{F}_\gamma$, such that 
\begin{equation}\label{Rn 4}
|R_n(T'_1) - R_n(f_1)| \le \mathbb{P}_{n}(\Omega_1(T'_1) \btr \Omega_1(f_1)) \le 2\epsilon.
\end{equation}
% By definition of event $E_2$, we have
% $$|R_n(\hat{T}_{n,\gamma}) - R_n(f_2)| \le \mathbb{P}_{n,}(\Omega_1(\hat{T}_{n,\gamma}) \Delta \Omega_1(f_2)) \le 2\epsilon.$$
Combining the equations (\ref{Rn 2}), (\ref{Rn 3}) and (\ref{Rn 4}), we have
$$R_n(\hat{T}_{n,\gamma}) - R_n(T_1')>0,$$
which contradicts the definition that $\hat{T}_{n,\gamma}$ is the empirical risk minimizer. Therefore we have $R(\hat{T}_{n,\gamma}) - R(f^*) \le 8\epsilon$ on the event $E_1\cap E_2$.
For all $f^*\in\mathscr{F}_\gamma$, the expected loss can be bounded by
\begin{align*}
& ~\quad \mathbb{E} \left[ R(\hat{T}_{n,\gamma})-R(f^*) \right] \\
& \le 
	\mathbb{P}(E_1^c \cup E_2^c) + 8\epsilon \\
 & 	\le  \exp\left[ c'\gamma \left( \frac{c'\gamma}{\epsilon} \right)^{d-1} \log\left(\frac{c'\gamma}{\epsilon}\right) \right]
	\left[\exp(-2(\log 2)n\epsilon) + \exp\left(-\frac{21}{16} nc_1^{1/\kappa}  \epsilon^{2-1/\kappa}\right) \right] + 8\epsilon \\
 & 	\le 2\exp\left[ c'\gamma \left( \frac{c'\gamma}{\epsilon} \right)^{d-1} \log\left(\frac{c'\gamma}{\epsilon}\right) \right] \exp\left(-\tilde{c} n\epsilon^{2-1/\kappa}\right) + 8\epsilon,
\end{align*}
where the second inequality uses Lemmas \ref{border}, \ref{net size}, \ref{epsilon net} and $1-\log(c'\gamma)\le 0$ for $c'\ge 1,\; \gamma\ge 2d\ge 4$; the last inequality uses the definition that $\tilde{c} = \min\left\{2\log 2, \frac{21c_1^{1/\kappa}}{16}\right\}$.
We now defines a constant $N_0$ as
\begin{equation}\label{N0}
\begin{aligned}
N_0 = \inf\Bigg\{n\in\mathbb{N}: ~ & \frac{(d-1/\kappa)\log n + \log\log n}{d+1-1/\kappa} \ge \log(c'\gamma) + \log c_4, \\
    & \frac{\tilde c}{2} n^{\frac{d-1}{d+1-1/\kappa}} (\log n)^{\frac{2-1/\kappa}{d+1-1/\kappa}} c_5^{2-1/\kappa} \ge -\log(4c_4) + \frac{\log n - \log\log n}{d+1-1/\kappa} \Bigg\}.
\end{aligned}
\end{equation}
In the two inequalities of (\ref{N0}), the left-hand side obviously dominates the right-hand side, so $N_0$ is well defined. Let $\epsilon = c_4 (\log n/n )^{\frac{1}{d+1-1/\kappa}}$, where $c_4 = [2(c'\gamma)^d/\tilde c]^{\frac{1}{d+1-1/\kappa}}$. Then for $n\ge N_0$,
\begin{align*}
\frac{c'\gamma \big(\frac{c'\gamma}{\epsilon}\big)^{d-1}\log\big(\frac{c'\gamma}{\epsilon}\big)}{n\tilde{c} \epsilon^{2-1/\kappa} }
    & = \frac{(c'\gamma)^d [\log(c'\gamma) + \log(1/\epsilon)]}{\tilde c n \epsilon^{d+1-1/\kappa}} \\
    & = \frac{(c'\gamma)^d \big[\log(c'\gamma) + \log(1/c_4) + \frac{1}{d+1-1/\kappa}\log(\frac{n}{\log n})\big]}{\tilde c c_5^{d+1-1/\kappa} \log n} \\
    & \le \frac{\log(c'\gamma) + \log(1/c_4) + \frac{1}{d+1-1/\kappa}\log(\frac{n}{\log n})}{2\log n}  \le \frac{1}{2} ,
\end{align*}
where the third inequality utilizes the definition of $c_4$ and the last inequality utilizes the definition of $N_0$.
Therefore we have
$$\mathbb{E}\left[R(\hat{T}_{n,\gamma})-R(f^*)\right] 
	\le  2\exp\left(-\frac{\tilde{c}}{2} n\epsilon^{2-1/\kappa} \right)
 + 8 c_4 \left(\frac{\log n}{n}\right)^{\frac{1}{d+1-1/\kappa}}.$$
By definition of $\epsilon$, $n\epsilon^{2-1/\kappa}$ increases with $n$ in a polynomial order, so the second term in the display above dominates. Specifically, by definition of $N_0$, when $n\ge N_0$, we have
$2\exp\left(-\frac{\tilde{c}}{2} n\epsilon^{2-1/\kappa} \right) \le  8c_4 \left(\frac{\log n}{n}\right)^{\frac{1}{d+1-1/\kappa}}$, and hence
$$\mathbb{E}\left[R(\hat{T}_{n,\gamma})-R(f^*)\right] 
	\le  16 (2/\tilde c)^{\frac{1}{d+1-1/\kappa}} (c'\gamma)^{\frac{d}{d+1-1/\kappa}} \left(\frac{\log n}{n}\right)^{\frac{1}{d+1-1/\kappa}}.$$

% let $N_0$ be:
% \begin{equation}\label{N0}
% N_0 = \inf \left\{n\in\mathbb{N}:  \frac{\tilde c}{2} (c_4/16)^{2-1/\kappa} n^{\frac{d-1}{1+d-1/\kappa}} \ge (\log n - \log\log n)^{\frac{d-1}{1+d-1/\kappa}} - \log (c_4/4) (\log n)^{-\frac{d-1}{1+d-1/\kappa}}  \right\}.
% \end{equation}

\end{proof}

\end{document}

% --- supplement: supplement.tex ---

\if1\blind
{
  \title{\bf Supplementary Material for ``Classification Trees for Imbalanced Data: Surface-to-Volume Regularization''}
  \author{Yichen Zhu\thanks{Yichen Zhu is Ph.D. student, Department of Statistical Science, Duke University, Durham, NC 27708.}\hspace{.2cm}, 
  Cheng Li\thanks{Cheng Li is Assistant Professor, Department of Statistics and Applied Probability, National University of Singapore, Singapore, 117546. The research was supported by the Singapore Ministry of Education Academic Research Funds Tier 1 Grant R-155-000-223-114.}\hspace{.2cm}   and
    David B. Dunson\thanks{David B. Dunson is the Art and Science Professor, Department of Statistical Science, Duke University, Durham, NC 27708. The research was partially supported by grant N000141712844 of the United States Office of Naval Research.}\hspace{.2cm}}
  
  %\author{Yichen Zhu\\
    % Department of Statistical Science, Duke University\\
    % and \\
    % David B. Dunson \\
    % Department of Statistical Science, Duke University}
  \maketitle
} \fi

\if0\blind
{
  \bigskip
  \bigskip
  \bigskip
  \begin{center}
    {\LARGE\bf Supplementary Material for ``Classification Trees for Imbalanced Data: Surface-to-Volume Regularization''}
\end{center}
  \medskip
} \fi

The supplementary material contains proofs, algorithms, and results of numerical experiments that are not included in the main paper. Section \ref{sec:lemma1-4} includes the proof of Lemma 1-4 in Appendix A.1 of the main text. Section \ref{sec:lemma5-8} includes the proof of Proposition 1 and 2 in Section 3.2 and Lemma 5-8 in Appendix A.2 of the main text. Section \ref{sec:selection} discusses the feature selection consistency in SVR Trees. Section \ref{sec:algorithm1} provides the details of Algorithm 1 in Section 2.5 of the main text and the detailed analysis of computational complexity. Section \ref{sec:datasets} includes the detailed description and experimental results of 12 datasets in Section 4 of the main text.

% Without loss of generality, we assume $\Omega = [0,1]^d$ in all proofs. We use $\mathbb{E}$, $\mathbb{E}_{n,\alpha}$ to denote the expectation over probability measures $\mathbb{P}$, $\mathbb{P}_{n,\alpha}$, respectively. For all $A\subset \Omega$, we let $p(A) = \mathbb{E}(Y|X\in A)$ and $p(A) = \mathbb{E}_{n,\alpha}(Y|X\in A)$.

\section{Proof of Lemma 1-4}
\label{sec:lemma1-4}
\subsection{Proof of Lemma 1}
\begin{proof}
Decompose the difference between $\tilde{I}(T_n, \mathbb{P})$ and $I^*$ as:
$$\tilde{I}(T_n, \mathbb{P}) - I^* = \tilde{I}(T_n, \mathbb{P})-I(T_n, \mathbb{P}) + I(T_n, \mathbb{P}) - I^*.$$
The first term is nonnegative by definition. Now we show the second term is also nonnegative. Let $A_1, A_2, \ldots A_m$ be all the leaf nodes of tree $T_n$. Then the impurity can be computed as
\begin{align*}
I(T_n, \mathbb{P}) = & \sum_{l=1}^m 2\frac{\int_{A_l} \eta(x)d\mathbb{P}(x)\cdot \int_{A_l}(1-\eta(x))d\mathbb{P}(x) }{\mathbb{P}(A_l)}.
\end{align*}
Thus we have
\begin{align*}
I(T_n, \mathbb{P}) - I^* = & \sum_{l=1}^m \left[ 2\frac{\int_{A_l}\eta(x)d\mathbb{P}(x)\cdot \int_{A_l}(1-\eta(x))d\mathbb{P}(x) }{\mathbb{P}(A_l)} - \int_{A_l} 2\eta(x)(1-\eta(x))d\mathbb{P}(x)  \right] \\
        = & \sum_{l=1}^m \frac{2}{\mathbb{P}(A_l)}\left[ \int_{A_l}\eta(x)d\mathbb{P}(x)\cdot \int_{A_l}(1-\eta(x))d\mathbb{P}(x) - \int_{A_l} \eta(x)(1-\eta(x))d\mathbb{P}(x) \cdot \mathbb{P}(A_l)\right] \\
        = & \sum_{l=1}^m \frac{2}{\mathbb{P}(A_l)}\left[ -\Big(\int_{A_l}\eta(x)d\mathbb{P}(x)\Big)^2 + \int_{A_l} \eta^2(x)d\mathbb{P}(x) \cdot \mathbb{P}(A_l)\right] 
        \ge 0,
\end{align*}
where the last inequality is obtained by Jensen's inequality. Therefore, $\tilde{I}(T_n, \mathbb{P}) \to I^*$ in probability implies
\begin{equation}\label{sign impu 2 impu}
\tilde{I}(T_n, \mathbb{P}) - I(T_n, \mathbb{P}) \to 0 \text{ in probability and} \end{equation}
\begin{equation}\label{impu 2 oracle}
I(T_n, \mathbb{P}) - I^* \to 0 \text{ in probability}.
\end{equation}
We first consider (\ref{impu 2 oracle}). Denote $\psi_n(x)$ as the leaf node of $T_n$ that contains $x$. Define $\bar \eta_{n}(x)$ as
$$\bar \eta_n(x) = \frac{\int_{\psi_n(x)}p(t) d\mathbb{P}(t)}{\mathbb{P}(\psi_n(x))},$$
the average value of $\eta(x)$ at the leaf node of $T_n$ that contains $x$. Then we can rewrite $I(T_n, \mathbb{P}) - I^*$ as
$$I(T_n, \mathbb{P}) - I^* = \sum_{l=1}^m 2 \int_{A_l} (\eta^2(x) - \bar \eta_n^2(x)) \mathbb{P}(x) = 2 \int_{\Omega} (\eta^2(x) - \bar \eta_n^2(x)) \mathbb{P}(x).$$
Therefore (\ref{impu 2 oracle}) implies $\forall \epsilon>0$, 
$$
\lim_{n\to\infty}\mathbb{P}\left(X\in\Omega:|\eta(X) - \bar{\eta}_{n}(X)|>\epsilon\right) = 0.$$
For any $\epsilon>0$, denote $A_\epsilon\subset \Omega$ as: $A_\epsilon = \{X\in \Omega: |\eta(X) - 1/2| >\epsilon\}$. Then we have
\begin{equation}\label{p barp 2}
\lim_{n\to\infty}\mathbb{P}\left(\{|\eta(X) - \bar{\eta}_{n}(X)|\le\epsilon\}\cap \ A_{2\epsilon}\right) = \mathbb{P}(A_{2\epsilon}).
\end{equation}
Define a classifier: $\tilde{f}_n: \Omega\to\{0,1\}$ such that:
$$\tilde{f}_n(x) = \Bigg\{ \begin{aligned} 1, & \quad \mathbb{P}(Y=1|X\in\psi_n(x))\ge 1/2, \\
0, & \quad \mr{otherwise}. \end{aligned}$$
That is, $\tilde{f}_n$ achieves the minimal error $R(\tilde{f}_n)$ among all the classifiers that are piecewise constant on all leaf nodes of $T_n$. Because $|\eta(X)-1/2|>2\epsilon$, $|\eta(X) - \bar{\eta}_{n}(X)|\le \epsilon$ implies: 1. $\eta(X)-1/2$ has the same sign as $\bar{\eta}_{n}(X)-1/2$; 2. $|\bar{\eta}_{n}(X)-1/2|\ge\epsilon$. So by (\ref{p barp 2}), we have for any oracle classifier $f^*\in F^*$,
\begin{equation}\label{p barp 3}
\lim_{n\to\infty}\mathbb{P}\left(\{\tilde{f}_n(X) = f^*(X)\} \cap\{|\bar{\eta}_{n}(X)-1/2|\ge\epsilon\} \cap A_{2\epsilon}\right) = \mathbb{P}(A_{2\epsilon}).
\end{equation}
Denote the $B_\epsilon = \{\tilde{f}(X) = f^*(X)\} \cap\{|\bar{\eta}_{n}(X)-1/2|\ge\epsilon\} \cap A_{2\epsilon}$.
We then consider (\ref{sign impu 2 impu}). We have
\begin{align}\label{sign impu 2 impu 1}
\tilde{I}(T_n, \mathbb{P}) - I(T_n, \mathbb{P}) = & \int_\Omega |1-4\bar{\eta}_{n}(x)(1-\bar \eta_{n}(x))| \mathbbm{1}_{\{f_n(x) \ne \tilde{f}_n(x)\}} d\mathbb{P}(x) \nonumber\\
    \ge & \int_{B_\epsilon} |1-4\bar{\eta}_{n}(x)(1-\bar \eta_{n}(x))| \mathbbm{1}_{\{f_n(x) \ne \tilde{f}_n(x)\}} d\mathbb{P}(x) \nonumber\\
    \ge & \int_{B_\epsilon} 4\epsilon^2 \mathbbm{1}_{\{f_n(x) \ne f^*(x)\}} d\mathbb{P}(x) 
\end{align}
Combining (\ref{sign impu 2 impu}) and (\ref{sign impu 2 impu 1}), we have 
\begin{equation}\label{sign impu 2 impu 2}
\lim_{n\to\infty} \mathbb{P}(B_\epsilon\cap \{f_n(X) \ne f^*(X)\}) = 0.
\end{equation}
Combining equation (\ref{p barp 3}) and (\ref{sign impu 2 impu 2}), we have
$$\lim_{n\to\infty}\mathbb{P}\left( \{f_n(X) \ne f^*(X)\} \cap A_{2\epsilon} \right) = 0.$$
Since $\epsilon$ is arbitrary, and $\lim_{\epsilon\to 0}\mathbb{P}(A_{2\epsilon}) = \mathbb{P}(\{\mathbb{E}(Y|X)\ne 1/2\}) = 1$, we have
$$\lim_{n\to\infty}\mathbb{P}\left( \{f_n(X) \ne f^*(X)\} \right) =0.$$
Therefore 
$$\lim_{n\to\infty}\mathbb{E}|f_n-f^*| = 0.$$
\end{proof}

\subsection{Proof of Lemma 2}
\begin{proof}
Since $\mathscr{T}_n$ is a finite set, we can find $\tilde{T}_n\in\mathscr{F}_n$ satisfying 
$$|\tilde{I}(\tilde{T}_n, \mathbb{P})+\lambda_n r(\tilde{T}_n) - I^*| = \inf_{T\in\mathscr{T}_n} |\tilde{I}(T, \mathbb{P})+\lambda_n r(T) - I^*|.$$
Therefore, we have
\begin{align*}
\tilde{I}(\hat{T}_n,\mathbb{P})+\lambda_n r(\hat{T}_n) - I^*  
		& = [\tilde{I}(\hat{T}_n,\mathbb{P})+\lambda_n r(\hat{T}_n)] - 
		    [\tilde{I}(\tilde{T}_n,\mathbb{P})+\lambda_n r(\tilde{T}_n)] + 
		    [\tilde{I}(\tilde{T}_n,\mathbb{P})+\lambda_n r(\tilde{T}_n)] - I^*.
\end{align*}
For the first two terms, 
\begin{align*}
[\tilde{I}(\hat{T}_n,\mathbb{P})+\lambda_n r(\hat{T}_n)] - 
		    [\tilde{I}(\tilde{T}_n,\mathbb{P})+\lambda_n r(\tilde{T}_n)]
		= &  [\tilde{I}(\hat{T}_n,\mathbb{P})+\lambda_n r(\hat{T}_n)] - 
		    [\tilde{I}(\hat{T}_n,\mathbb{P}_n)+\lambda_n r(\hat{T}_n)]  \\
		& + [\tilde{I}(\hat{T}_n,\mathbb{P}_n)+\lambda_n r(\hat{T}_n)] - 
		    [\tilde{I}(\tilde{T}_n,\mathbb{P}_n)+\lambda_n r(\tilde{T}_n)]  \\
		& + [\tilde{I}(\tilde{T}_n,\mathbb{P}_n)+\lambda_n r(\tilde{T}_n)] - 
		    [\tilde{I}(\tilde{T}_n,\mathbb{P})+\lambda_n r(\tilde{T}_n)] \\
		\le & 2 \sup_{T\in\mathscr{T}_n}|\tilde{I}(T, \mathbb{P}) - \tilde{I}(T,\mathbb{P}_n)|,
\end{align*}
where we use the fact that $\hat{T}_n$ is the minimizer of $\tilde{I}(T,\mathbb{P}_n)+\lambda_n r(T)$ for all $T\in\mathscr{T}_n$. Recalling the definition of $\tilde{T}_n$, by the triangle inequality, we have
$$\tilde{I}(\hat{T}_n,\mathbb{P})+\lambda_n r(\hat{T}_n) - I^* \le 2 \sup_{T\in\mathscr{T}_n}|\tilde{I}(T, \mathbb{P}) - \tilde{I}(T,\mathbb{P}_n)| + \inf_{T\in\mathscr{T}_n}|\tilde{I}(T, \mathbb{P})+\lambda_n r(T) - I^*|.$$
Since $\lambda_n r(\hat{T}_n)\ge 0$,  we have 
$$\tilde{I}(\hat{T}_n,\mathbb{P}) - I^* \le 2 \sup_{T\in\mathscr{T}_n}|\tilde{I}(T, \mathbb{P}) - \tilde{I}(T,\mathbb{P}_n)| + \inf_{T\in\mathscr{T}_n}|\tilde{I}(T, \mathbb{P})+\lambda_n r(T) - I^*|.$$
\end{proof}

\subsection{Proof of Lemma 3}

Let $A_1, A_2, \ldots A_m (m\le \bar{a}_n)$ be all the leaf nodes of $T$. Define $\eta(A_j)=\mathbb{E}(Y|X\in A_j)$, $\eta_{n}(A_j) = \mathbb{E}(Y|X\in A_j)$. We first introduce a technical lemma  which says the maximal difference of $|\eta(A_j)-\eta_{n}(A_j)|$ over all leaf nodes goes to zero in probability.
\begin{tech lemma}\label{sup leaf difference}
If $\frac{\bar{a}_n d\log n}{n} = o(1)$,
then $\forall \epsilon>0$, regardless whether optional steps of Algorithm 1 are enabled, we have
$$\lim_{n\to\infty}\mathbb{P}\left(\sup_{1\le j\le m}|\mathbb{P}(A_j) - \mathbb{P}_{n}(A_j)|>\epsilon\right) = 0,$$
$$\lim_{n\to\infty}\mathbb{P}\left(\sup_{1\le j\le m}|\eta(A_j)-\eta_{n}(A_j)|>\epsilon\right) = 0.$$
\end{tech lemma}
\begin{proof}[Proof of Technical Lemma \ref{sup leaf difference}]
This technical lemma is a special case of Lemma 3 of \cite{nobel1996histogram}. We let the feature space be $\Omega_0 = \Omega\times\{0,1\}\subset \mathbb{R}^{d+1}$. Let $\Pi_n$ be the collection of trees that agree with  tree $T_n$ in the first $d$ dimensions and do not split on the last dimension. Define three real valued functions $g_0, g_1, g_2: \Omega\to\mathbb{R}$ such that $g_0(x) = 0, g_1(x) = 1, g_2(x) =x_{d+1}, \forall x\in\Omega_0\subset\mathbb{R}^{d+1}$. Let $\mathscr{G}= \{g_0, g_1, g_2\}$. It suffices to verify two conditions in Lemma 3 of \cite{nobel1996histogram}.

Since $\Pi_n$ is a subset of the collection of all binary trees with at most $\bar{a}_n$ leaf nodes, the verification process is the same as the analysis in example 3 of \cite{nobel1996histogram}. Noting each element of $\Pi_n$ has at most $\bar{a}_n$ leaf nodes, the restricted cell count is bounded by $m(\Pi_n:V )  \le \bar{a}_n = o(n)$. Each leaf node of a tree in $\Pi_n$ is determined by at most $\bar a_n$ splits and hence at most the intersection of $\bar a_n$ half spaces. Since a single half space can partition $n$ points in at most $n^d$ different ways, the elements of of $\Pi_n$ can induce at most $n^{d\bar{a}_n}$ different splits on $n$ points. Therefore
$\log \Delta^*_n(\Pi_n) = \log(n^{(m-1)(d+1)}) \le \log(n^{\bar{a}_n (d+1)}) = o(n)$.
\end{proof}

\begin{proof}[Proof of Lemma 3]
For all $\epsilon\in (0,1)$, by Technical Lemma \ref{sup leaf difference}, the event 
$$H_0 = \left\{\sup_{1\le j\le m}|\mathbb{P}(A_j) - \mathbb{P}_{n}(A_j)|\le\epsilon, \sup_{1\le j\le m}|\eta(A_j)-\eta_{n}(A_j)|\le\epsilon\right\}$$
holds with probability tending to one. Therefore it suffices to prove $\sup_{T\in\mathscr{T}_n}|\tilde{I}(T, \mathbb{P}) - \tilde{I}(T,\mathbb{P}_n)| \to 0$ under event $H_0$. For all $T\in\mathscr{T}_n$, define two collection of leaf nodes $\mathscr{A}$, $\mathscr{B}$ as:
$$\mathscr{A} = \{A_j, 1\le j\le m: \tilde{I}(A_j,\mathbb{P}) - \tilde{I}(A_j,\mathbb{P}_n) = I(A_j,\mathbb{P}) - I(A_j,\mathbb{P}_n)\},$$
$$\mathscr{B} = \{A_j, 1\le j\le m: |\tilde{I}(A_j,\mathbb{P}) - \tilde{I}(A_j,\mathbb{P}_n)| = |1-I(A_j,\mathbb{P}) - I(A_j,\mathbb{P}_n)|\}.$$
That is, $\mathscr{A}$ contains all leaf nodes where both $\eta(A_j)$ and $\eta_{n}(A_j)$ are no greater than $1/2$ (or both $\eta(A_j)$ and $\eta_{n}(A_j)$ are no less than $1/2$) while $\mathscr{B}$ contains all leaf nodes where one and only one of $\eta(A_j)$ and $\eta_{n}(A_j)$ is less than $1/2$. For all $A_j\in\mathscr{A}$,  we have
\begin{align*}
|\tilde{I}(A_j,\mathbb{P}) - \tilde{I}(A_j,\mathbb{P}_n)| 
        & = |I(A_j,\mathbb{P}) - I(A_j,\mathbb{P}_n)\}|    \\
        & = |2\eta(A_j)(1-\eta(A_j)) - 2\eta_{n}(A_j)(1-\eta_{n}(A_j))| \\
        & = 2|(\eta(A_j)-\eta_{n}(A_j))(1-\eta(A_j)-\eta_{n}(A_j))| \le 2 (\epsilon + \epsilon^2)
\end{align*}
For all $A_j\in\mathscr{B}$, since $\eta_{n}(A_j) < 1/2 < \eta(A_j)$ or $\eta(A_j) < 1/2 < \eta_{n}(A_j)$, recalling $|\eta_{n}(A_j) - \eta(A_j)|<\epsilon$, we have
$|\eta_{n}(A_j) - 1/2|<\epsilon$ and $|\eta(A_j) - 1/2|<\epsilon$. Therefore
\begin{align*}
|\tilde{I}(A_j,\mathbb{P}) - \tilde{I}(A_j,\mathbb{P}_n)| 
    & = |1-I(A_j,\mathbb{P}) - I(A_j,\mathbb{P}_n)| \\
    & \le |1/2-I(A_j,\mathbb{P})| + |1/2 - I(A_j,\mathbb{P}_n)| \\
    & \le |1/2-2\eta(A_j)(1-\eta(A_j)| + |1/2 - 2\eta_{n}(A_j)(1-\eta_{n}(A_j))| \\
    & = \Big|1/2-2[1/2-(1/2-\eta(A_j))][1/2+(1/2-\eta(A_j)]\Big| \\
		& + \Big|1/2-2[1/2-(1/2-\eta_{n}(A_j))][1/2+(1/2-\eta_{n}(A_j)]\Big| \\
    & \le 4\epsilon^2.
\end{align*}
We can finally compute the difference of two signed tree impurities as
\begin{align*}
|\tilde{I}(T,\mathbb{P}) - \tilde{I}(T,\mathbb{P}_n)| 
	\le & \Big|\sum_j \mathbb{P}(A_j)\tilde{I}(A_j,\mathbb{P}) - \sum_j \mathbb{P}_{n}(A_j)\tilde{I}(A_j,\mathbb{P}_n) \Big| \\
	\le & \sum_j [\mathbb{P}(A_j)|\tilde{I}(A_j,\mathbb{P}) - \tilde{I}(A_j,\mathbb{P}_n)| + |\mathbb{P}(A_j) - \mathbb{P}_{n}(A_j)|] \\
	\le & m [\max(2\epsilon+2\epsilon^2, 4\epsilon^2) + \epsilon] \\
	\le & (3+2\epsilon)m\epsilon.
\end{align*}
Since the above equation holds for all $T\in\mathscr{T}_n$ and $\epsilon\in (0,1)$, $\sup_{T\in\mathscr{T}_n}|\tilde{I}(T, \mathbb{P}) - \tilde{I}(T,\mathbb{P}_n)|$ goes to zero in probability.
\end{proof}

\subsection{Proof of Lemma 4}

Since it is not easy to directly compute the approximation error, we introduce a ``theoretical tree'' as a bridge connecting our estimator $\hat{T}_n$ and the oracle lower bound $T^*$. Algorithm \ref{theoretical tree algorithm} describes how we construct theoretical trees. 

\setcounter{algocf}{1}
\begin{algorithm}[H]
\SetAlgoLined
\KwResult{Output the fitted tree}
Input distribution $\mathbb{P}$, impurity function $f(\cdot)$,  weight for minority class $\alpha$, and maximal number of leaf nodes $k\in\mathbb{N}$. Set SVR penalty parameter $\lambda = 0$. Let the root node be $\Omega$, and $\mr{node.X} = \Omega$\;
\While{$\mr{node\_queue}$ is not empty and number of leaf nodes $\le k$}{	
		Dequeue the first entity in $\mr{node\_queue}$, denoting it as $\mr{node}$\; 
		\For{$j$ in $1:d$}{
			Find the best split $\hat{x}_j$ and its corresponding class label assignments inside the current node, such that if we divide $\mr{node}$ into two nodes $X[j] \le \hat{x}$ and $X[j]>\hat{x}$ and assign class labels to two left, right child node as $\mr{lab}^l_j$, $\mr{lab}^r_j$, tree impurity is minimized\;
        }
		Find the best $\hat{x}_j$ among $\hat{x}_1, \hat{x}_2, \cdots \hat{x}_d$ that minimizes tree impurity. Denote it as $\hat{x}$, its class label assignments for left and right child node as $\mr{lab}^l$, $\mr{lab}^r$, respectively\;	
		\eIf{tree impurity is decreased}{
		Let $\mr{node.left.X} =\{X\in \mr{node}: X[j]\le \hat{x}\}$, and $\mr{node.right.X} = \{X\in \mr{node}: X[j] > \hat{x}\}$. Assign $\mr{node.Y}$ to $\mr{node.left}$,  and $\mr{node.right}$ according to the assignment of $\mr{node.X}$. Let the class label of $\mr{node.right}$ and $\mr{node.left}$ be $\mr{lab}^l$, $\mr{lab}^r$, respectively. Enqueue $\mr{node.left}$, $\mr{node.right}$ to the end of $\mr{node\_queue}$\;}{
		Reject the split\;}
}
\caption{Steps for building the theoretical tree}\label{theoretical tree algorithm}
\end{algorithm}

Theoretical trees are computed with SVR penalty parameter $\lambda=0$, because no regularization is needed if we know the true distribution. Consequently, the theoretical tree always ``correctly'' assigns class labels, resulting in $I(T^*_k, \mathbb{P}) = \tilde{I}(T^*_k, \mathbb{P})$. For simplicity, in all the following proofs, we assume for each $k\in\mathbb{N}$, the theoretical tree $T^*_k$ is unique. Our proofs can be easily generalized to the case that there exist multiple theoretical trees, where the distance between SVR-Tree and theoretical tree $T^*_k$ is replaced with the infimum between SVR-Tree and all theoretical trees with $k$ leaf nodes.\par{}
The proof of Lemma 4 is mainly built on three technical lemmas. The first one shows the variation of $\eta(X)$ in each leaf nodes goes to zero and the sequence of theoretical trees is consistent, the second one shows two trees with similar structures are also close in impurity, and the last one shows as $n$ goes to infinity, theoretical trees and our estimated tree are close in their splits structures. We begin with the consistency of theoretical trees. $\forall x \in \Omega$, let $\psi_k(x)$ be the leaf node of $T^*_k$ that contains $x$, then we have $\forall k_1<k_2$, $\psi_{k_2}(x)\subset\psi_{k_1}(x)$. Define $\Delta\eta_k(x)$ as the maximal variation of $\mathbb{E}(Y|X)$ in $\psi_k(x)$: 
$$\Delta\eta_k(x) = \sup_{X\in \psi_k(x)}\mathbb{E}(Y|X) - \inf_{X\in\psi_k(x)}\mathbb{E}(Y|X).$$

\begin{tech lemma}\label{theoretical consistent}
The sequence of theoretical trees $T^*_1, T^*_2, \ldots$ satisfies
\begin{equation}\label{eq variation leaf}
    \lim_{k\to\infty}\Delta\eta_k(x) = 0, \;\; \forall x\in \Omega,
\end{equation}
\begin{equation}\label{eq theoretical consistent}
    \lim_{k\to\infty} \tilde{I}(T_k^*, \mathbb{P}) = I^*.
\end{equation}
\end{tech lemma}

\begin{proof}[Proof of Technical Lemma \ref{theoretical consistent}]
\hfill
\paragraph{Step 1}
We first prove equation (\ref{eq variation leaf}). 
% The proof is decomposed into 4 steps.
% \paragraph{Step 1} 
For a leaf node $A$, we denote the impurity decrease of node $A$ after a split at $X[j]=z$ as $\Delta I(A, (z,j),\mathbb{P})$. That is, 
$$\Delta I(A, (z,j),\mathbb{P}) = I(A,\mathbb{P}) - \mathbb{P}(X[j]<z|X\in A)I(A\cap \{X[j]<z\},\mathbb{P}) -  \mathbb{P}(X[j]\ge z|X\in A)I(A\cap \{X[j]\ge z\},\mathbb{P}).$$
We claim for all set $A$ of the shape $A=[a_1,b_1]\times[a_2,b_2]\times\cdots\times[a_k,b_k]$ with $a_j\le b_j, \forall 1\le j \le d$, if $\Delta I(A,(z,j),\mathbb{P})=0, \;\forall 1\le j\le d\;\text{and}\;a_j\le z\le b_j$, then $\eta(X)$ is constant on $A$. 

By definition of impurity, if $\Delta I(A,(z,j),\mathbb{P})=0$, then $I(A,\mathbb{P}) = I(A\cap \{X[j]<z\},\mathbb{P}) = \mathbb{P}(X[j]\ge z|X\in A)I(A\cap \{X[j]\ge z\},\mathbb{P})$, thus $\mathbb{E}(Y|X) = \mathbb{E}(Y|A\cap \{X[j]< z\}) = \mathbb{E}(Y|A\cap \{X[j]\ge z\})$. Since this holds for all $(z,j)$, recall $\eta_{A,j}=\mathbb{E}(Y|X[j],X\in A)$, we have $\eta_{A,j}$ is constant for all $1\le j\le d$. By Condition 1, $\eta(X)$ is constant on $A$.

For all $x\in\Omega$, denote $\psi_k(x)$ as $[a_{k,1}(x), b_{k,1}(x)]\times[a_{k,2}(x), b_{k,2}(x)]\times\cdots\times[a_{k,d}(x), b_{k,d}(x)]$. Since $\psi_k(x)$ is a sequence of decreasing sets, denote its limit as $\psi_\infty(x) = [a_{\infty,1}(x), b_{\infty,1}(x)]\times[a_{\infty,2}(x), b_{\infty,2}(x)]\times\cdots\times[a_{\infty,d}(x), b_{\infty,d}(x)]$.
If $\max_{1\le j\le d}|b_{\infty,j}(x)-a_{\infty,j}(x)|=0$, then by continuity of $\eta(X)$, $\lim_{k\to\infty}\Delta\eta_k(x)=0$. So it suffices to consider the case where $\max_{1\le j\le d}|b_{\infty,j}(x)-a_{\infty,j}(x)|>0$. If $\Delta I(\psi_\infty(x),(z,j),\mathbb{P})=0$ for all $1\le j\le d$ and $a_{\infty,j}\le z \le b_{\infty,j}$, by the claim above $\eta(X)$ is constant in $\psi_\infty(x)$ and hence $\lim_{k\to\infty}\Delta\eta_k(x)=0$. Otherwise, there exists $\delta_0>0$, $1\le j_0\le d$ and $a_{\infty,j_0}<z_0<b_{\infty,j_0}$, such that $I(\psi_\infty(x),(z_0,j_0),\mathbb{P})\ge \delta_0$. 
Since $\eta(x), \rho(x)$ are continuous on a compact set $\Omega=[0,1]^d$ and thus uniformly continuous, for all $\epsilon>0$, $\exists \epsilon'>0$, such that
$\forall x_1, x_2\in\Omega$ with $|x_1-x_2|<\epsilon'$, 
\begin{equation} \label{eq continuity}
|\eta(x_2)-\eta(x_1)|<\epsilon \text{ and } |\rho(x_2)-\rho(x_1)|<\epsilon\rho_{\min}.
\end{equation}
Because $\psi_\infty(x)$ is the limit set of $\psi_k(x)$, $\exists K\in\mathbb{N}$, such that $\forall k \ge K$, $\max_{1\le j\le d}|a_{k,j}-a_{\infty,j}|<\epsilon'$ and $\max_{1\le j\le d}|b_{k,j}-b_{\infty,j}|<\epsilon'$. We now show for $k\ge K$, $\Delta I(\psi_k(x),(z_0,j_0),\mathbb{P})$ is sufficiently close to $\Delta I(\psi_\infty(x),(z_0,j_0),\mathbb{P})$.
Define a sequence of intermediate sets between $\psi_k(x)$ and $\psi_\infty(x)$ as 
$$A'_j=[a_{k,1}(x), b_{k,1}(x)]\times[a_{k,2}(x), b_{k,2}(x)]\times\cdots\times[a_{k,j-1}(x),b_{k,j-1}(x)]\times[a_{\infty,j}(x),b_{\infty,j}(x)]\times\cdots\times[a_{\infty,d}(x), b_{\infty,d}(x)].$$
That is, $A'_j$ is a hyperrectangle shaped set that is the same as $\psi_k(x)$ at component $1, 2, \ldots, j-1$ and the same as $\psi_\infty(x)$ at components $j, j+1, \ldots, d$. $A'_d$ is just $\psi_k(x)$ and we define $A'_0=\psi_\infty(x)$. We further denote the subset of $A'_j$ to the left and right side of $X[j_0]=z_0$ as $A'_{j,l} \triangleq \{X\in A'_j: X[j_0]<z_0\}$ and $A'_{j,r} \triangleq \{X\in A'_j: X[j_0]\ge z_0\}$, respectively.
We have the following decomposition
$$|\Delta I(\psi_k(x),(z_0,j_0),\mathbb{P})-\Delta I(\psi_\infty(x),(z_0,j_0),\mathbb{P})|=\sum_{j=0}^{d-1} \left|\Delta I(A'_j,(z_0,j_0),\mathbb{P})-\Delta I(A'_{j+1},(z_0,j_0),\mathbb{P}) \right|.$$
We first bound $|\Delta I(A'_{j_0-1},(z_0,j_0),\mathbb{P})-\Delta I(A'_{j_0},(z_0,j_0),\mathbb{P})|$.
Noting $\rho(x)$ is uniformly continuous on compact set $\Omega$, there exists $\rho_{\max}>0$ such that $\max_{x\in\Omega}\rho(x)\le \rho_{\max}$. Therefore we have 
\begin{equation}\label{eq psi A'}
\frac{\mathbb{P}(A'_{j_0}\backslash A'_{j_0-1})}{\mathbb{P}(A'_{j_0-1})} \triangleq \frac{\mathbb{P}(A'_{j_0}\backslash A'_{j_0-1}|A'_{j_0})}{\mathbb{P}(A'_{j_0-1}|A'_{j_0})} \le \frac{2\rho_{\max}\epsilon'}{\rho_{\min}(b_{\infty,j_0}-a_{\infty,j_0})}.
\end{equation}
Here the formula $\frac{\mathbb{P}(A'_{j_0}\backslash A'_{j_0-1})}{\mathbb{P}(A'_{j_0-1})}$ may not be well-defined since both denominator and numerator can be zero, but it is well defined by the ratio of two conditional probabilities as shown above. In the following derivations, we also assume that the ratio of probabilities of two sets is defined by the ratio of conditional probabilities if it is not well-defined itself. The difference between the conditional probability of left leaf node can be bounded as
\begin{align*}
    & \left|\mathbb{P}(A'_{j_0-1,l}|A'_{j_0-1})-\mathbb{P}(A'_{j_0,l}|A'_{j_0}) \right| \\
= & \left|\frac{\mathbb{P}(A'_{j_0-1,l})\mathbb{P}(A'_{j_0}) - \mathbb{P}(A'_{j_0,l})\mathbb{P}(A'_{j_0-1})}{\mathbb{P}(A'_{j_0-1})\mathbb{P}(A'_{j_0})}\right| \\
= & \left|\frac{\mathbb{P}(A'_{j_0-1,l})\mathbb{P}(A'_{j_0}\backslash A'_{j_0-1}) - \mathbb{P}(A'_{j_0,l}\backslash A'_{j_0-1,l})\mathbb{P}(A'_{j_0-1})}{\mathbb{P}(A'_{j_0-1})\mathbb{P}(A'_{j_0})}\right| \\
\le & \frac{\mathbb{P}(A'_{j_0}\backslash A'_{j_0-1})\mathbb{P}(A'_{j_0-1})}{\mathbb{P}(A'_{j_0})\mathbb{P}(A'_{j_0-1})}  \\
\le & \frac{\mathbb{P}(A'_{j_0}\backslash A'_{j_0-1})}{\mathbb{P}(A'_{j_0-1})}
\end{align*}
Noting $Y\in\{0,1\}$, we have
\begin{align*}
 \left|\mathbb{E}(Y|A'_{j_0-1,l}) - \mathbb{E}(Y|A'_{j_0,l})\right|
\le & \frac{\mathbb{P}(A'_{j_0,l}\backslash A'_{j_0-1,l})}{\mathbb{P}(A'_{j_0-1,l})}
\le \frac{\mathbb{P}(A'_{j_0}\backslash A'_{j_0-1})}{\mathbb{P}(A'_{j_0-1,l})}.
\end{align*}
Thus 
\begin{align*}
    & \left|I(A'_{j_0-1,l},\mathbb{P}) - I(A'_{j_0,l},\mathbb{P})\right| 
\le 2\left|\mathbb{E}(Y|A'_{j_0-1,l}) - \mathbb{E}(Y|A'_{j_0,l})\right|
\le 2\frac{\mathbb{P}(A'_{j_0}\backslash A'_{j_0-1})}{\mathbb{P}(A'_{j_0-1,l})}.
\end{align*}
Therefore we have
\begin{align}\label{eq PI left}
    & \left|\mathbb{P}(A'_{j_0-1,l}|A'_{j_0-1})I(A'_{j_0-1,l},\mathbb{P}) - \mathbb{P}(A'_{j_0,l}|A'_{j_0})I(A'_{j_0,l},\mathbb{P})\right| \nonumber \\
={} & \left|\mathbb{P}(A'_{j_0-1,l}|A'_{j_0-1})[I(A'_{j_0-1,l},\mathbb{P})-I(A'_{j_0,l},\mathbb{P})] +  
    [\mathbb{P}(A'_{j_0-1,l}|A'_{j_0-1})-\mathbb{P}(A'_{j_0,l}|A'_{j_0})]I(A'_{j_0,l},\mathbb{P})\right| \nonumber\\
\le{} & \mathbb{P}(A'_{j_0-1,l}|A'_{j_0-1}) \frac{2\mathbb{P}(A'_{j_0}\backslash A'_{j_0-1})}{\mathbb{P}(A'_{j_0-1,l})}
    + 2\frac{\mathbb{P}(A'_{j_0}\backslash A'_{j_0-1})}{\mathbb{P}(A'_{j_0-1})} \nonumber\\
\le{} & 3\frac{\mathbb{P}(A'_{j_0}\backslash A'_{j_0-1})}{\mathbb{P}(A'_{j_0-1})}.
\end{align}
Similarly, we can prove
\begin{equation}\label{eq PI right}
    \left|\mathbb{P}(A'_{j_0-1,r}|A'_{j_0-1})I(A'_{j_0-1,r},\mathbb{P}) - \mathbb{P}(A'_{j_0,r}|A'_{j_0})I(A'_{j_0,r},\mathbb{P})\right| \le 3\frac{\mathbb{P}(A'_{j_0}\backslash A'_{j_0-1})}{\mathbb{P}(A'_{j_0-1})}.
\end{equation}
Since 
$$\left|\mathbb{E}(Y|A'_{j_0}) - \mathbb{E}(Y|A'_{j_0-1})\right| \le \frac{\mathbb{P}(A'_{j_0}\backslash A'_{j_0-1})}{\mathbb{P}(A'_{j_0-1})},$$
we have 
\begin{equation}\label{eq I original}
\left|I(A'_{j_0},\mathbb{P}) - I(A'_{j_0-1},\mathbb{P})\right| \le 2\left|\mathbb{E}(Y|A'_{j_0}) - \mathbb{E}(Y|A'_{j_0-1})\right| \le 2\frac{\mathbb{P}(A'_{j_0}\backslash A'_{j_0-1})}{\mathbb{P}(A'_{j_0-1})}.    
\end{equation}
Combining equations (\ref{eq PI left}), (\ref{eq PI right}), (\ref{eq I original}), (\ref{eq psi A'}) and definitions of impurity decrease, we have
\begin{equation}\label{eq Delta I j0}
\left|\Delta I(A'_{j_0-1},(z_0,j_0),\mathbb{P}) - \Delta I(A'_{j_0},(z_0,j_0),\mathbb{P})\right| \le 8\frac{\mathbb{P}(A'_{j_0}\backslash A'_{j_0-1})}{\mathbb{P}(A'_{j_0-1})} \le \frac{16\rho_{\max}\epsilon'}{\rho_{\min}(b_{\infty,j_0}-a_{\infty,j_0})}.
\end{equation}

We then bound $|\Delta I(A'_{j},(z_0,j_0),\mathbb{P})-\Delta I(A'_{j+1},(z_0,j_0),\mathbb{P})|$ for arbitrary $0\le j\le d-1$ with $j\ne j_0-1$. Since $j\ne j_0-1$, either $\{X[j_0]:X\in A'_j\}=\{X[j_0]:X\in A'_{j+1}\}=[a_{\infty,j_0}(x),b_{\infty,j_0}(x)]$ or $\{X[j_0]:X\in A'_j\}=\{X[j_0]:X\in A'_{j+1}\}=[a_{k,j_0}(x),b_{k,j_0}(x)]$. In both cases, the domain of the $j_0$ component of $A'_j$ and $A'_{j+1}$ are the same interval and we denote it as $[\tilde{a}_j,\tilde{b}_j]$.
% and we denote this interval as $B_j$ with length (one-dimensional Lebesgue measure) $l_j$. 

Denote $X_{else}$ as the collection of features excluding $X[j+1]$ and $X[j_0]$. Noting $\{X[j+1]:X\in A'_j\}=[a_{\infty,j+1},b_{\infty,j+1}]$, $\{X[j+1]:X\in A'_{j+1}\}=[a_{k,j+1},b_{k,j+1}]$ and $|a_{\infty,j+1}-a_{k,j+1}|<\epsilon'$, $|b_{\infty,j+1}-b_{k,j+1}|<\epsilon'$. Recall equation (\ref{eq continuity}), if
$a_{\infty,j+1} = b_{\infty,j+1}$, then $\forall X_{else}, X[j_0]$,
$$\left|\mathbb{E}_{X[j+1]}(Y|X_{else},X[j_0],X\in A'_j)-\mathbb{E}_{X[j+1]}(Y|X_{else},X[j_0],X\in A'_{j+1})\right|\le \epsilon;$$
if $a_{\infty,j+1} < b_{\infty,j+1}$, then $\forall X_{else}, X[j_0]$,
$$\left|\mathbb{E}_{X[j+1]}(Y|X_{else},X[j_0],X\in A'_j)-\mathbb{E}_{X[j+1]}(Y|X_{else},X[j_0],X\in A'_{j+1})\right|\le \frac{2\rho_{\max}\epsilon'}{\rho_{\min}(b_{\infty,j+1}-a_{\infty,j+1})}.$$
Here and in the following equations, $\mathbb{E}_{X[j+1]}$ and $\mathbb{E}_{X_{else}}$ denote expectations with respect to $X[j+1]$ and $X_{else}$, respectively. 
Considering the component-wise conditional expectation on the $j_0$th feature at set $A'_j$ and $A'_{j+1}$, by tower rule of expectations, we have
\begin{align*}
    & \left|\mathbb{E}(Y|X[j_0],X\in A'_j)-\mathbb{E}(Y|X[j_0],X\in A'_{j+1})\right| \\
\le & \mathbb{E}_{X_{else}} \left|\mathbb{E}_{X[j+1]}(Y|X_{else},X[j_0],X\in A'_j)-\mathbb{E}_{X[j+1]}(Y|X_{else},X[j_0],X\in A'_{j+1})\right| \\
\le & \mathbb{E}_{X_{else}}\left[\epsilon\mathbbm{1}_{\{a_{\infty,j+1}=b_{\infty,j+1}\}}+ \frac{2\rho_{\max}\epsilon'\mathbbm{1}_{\{a_{\infty,j+1}<b_{\infty,j+1}\}}}{\rho_{\min}(b_{\infty,j+1}-a_{\infty,j+1})}\right] \\
\le & \left[\epsilon\mathbbm{1}_{\{a_{\infty,j+1}=b_{\infty,j+1}\}}+ \frac{2\rho_{\max}\epsilon'\mathbbm{1}_{\{a_{\infty,j+1}<b_{\infty,j+1}\}}}{\rho_{\min}(b_{\infty,j+1}-a_{\infty,j+1})}\right]
\end{align*}
By exactly the same argument, we have
$$\left|\mathbb{E}(\rho(X)|X[j_0],X\in A'_j)-\mathbb{E}(\rho(X)|X[j_0],X\in A'_{j+1})\right|\le \left[\epsilon\rho_{\min}\mathbbm{1}_{\{a_{\infty,j+1}=b_{\infty,j+1}\}}+ \frac{2\rho_{\max}\epsilon'\mathbbm{1}_{\{a_{\infty,j+1}<b_{\infty,j+1}\}}}{\rho_{\min}(b_{\infty,j+1}-a_{\infty,j+1})}\right].$$
Therefore
$$\left|I(A'_{j,l},\mathbb{P}) - I(A'_{j+1,l})\right| \le 2|\mathbb{E}(Y|A'_{j,l})-\mathbb{E}(Y|A'_{j,l})| \le 2 \left[\epsilon\mathbbm{1}_{\{a_{\infty,j+1}=b_{\infty,j+1}\}}+ \frac{2\rho_{\max}\epsilon'\mathbbm{1}_{\{a_{\infty,j+1}<b_{\infty,j+1}\}}}{\rho_{\min}(b_{\infty,j+1}-a_{\infty,j+1})}\right],$$
\begin{align*}
    & \left|\mathbb{P}(A'_{j,l}|A'_j)-\mathbb{P}(A'_{j+1,l}|A'_{j+1})\right| \\
= & \left|\frac{\int_{\tilde{a}_j}^{z_0} \mathbb{E}(\rho(X)|X[j_0]=z,X\in A'_j)dz }{\int_{\tilde{a}_j}^{\tilde b_j} \mathbb{E}(\rho(X)|X[j_0]=z,X\in A'_j)dz} - \frac{\int_{\tilde{a}_j}^{z_0} \mathbb{E}(\rho(X)|X[j_0]=z,X\in A'_{j+1})dz }{\int_{\tilde{a}_j}^{\tilde b_j} \mathbb{E}(\rho(X)|X[j_0]=z,X\in A'_{j+1})dz}  \right| \\
= & \Bigg|\frac{1}{\int_{\tilde{a}_j}^{\tilde b_j} \mathbb{E}(\rho(X)|X[j_0]=z,X\in A'_j)dz \int_{\tilde{a}_j}^{\tilde b_j} \mathbb{E}(\rho(X)|X[j_0]=z,X\in A'_{j+1})dz} \\
 & \Bigg[\int_{\tilde{a}_j}^{z_0} \mathbb{E}(\rho(X)|X[j_0]=z,X\in A'_j)dz\Big\{\int_{\tilde{a}_j}^{\tilde{b}_j} \mathbb{E}(\rho(X)|X[j_0]=z,X\in A'_{j+1})dz-\int_{\tilde{a}_j}^{\tilde{b}_j} \mathbb{E}(\rho(X)|X[j_0]=z,X\in A'_j)dz\Big\} \\
 & - \Big\{\int_{\tilde{a}_j}^{z_0} \mathbb{E}(\rho(X)|X[j_0]=z,X\in A'_{j+1})dz-\int_{\tilde{a}_j}^{z_0} \mathbb{E}(\rho(X)|X[j_0]=z,X\in A'_j)dz\Big\} \int_{\tilde{a}_j}^{\tilde{b}_j} \mathbb{E}(\rho(X)|X[j_0]=z,X\in A'_j)dz 
\Bigg]
\Bigg| \\
\le & \frac{\epsilon\rho_{\min}\left[(\tilde b_j-\tilde a_j)\int_{\tilde{a}_j}^{z_0} \mathbb{E}(\rho(X)|X[j_0]=z,X\in A'_j)dz + (z_0-\tilde a_j)\int_{\tilde{a}_j}^{\tilde{b}_j} \mathbb{E}(\rho(X)|X[j_0]=z,X\in A'_j)dz\right]   }{\int_{\tilde{a}_j}^{\tilde b_j} \mathbb{E}(\rho(X)|X[j_0]=z,X\in A'_j)dz \int_{\tilde{a}_j}^{\tilde b_j} \mathbb{E}(\rho(X)|X[j_0]=z,X\in A'_{j+1})dz}  \\
\le & 2\epsilon.
\end{align*}
Therefore we have
\begin{align}\label{eq PI left 2}
    & |\mathbb{P}(A'_{j,l}|A'_j)I(A'_{j,l},\mathbb{P}) - \mathbb{P}(A'_{j+1,l}|A'_j)I(A'_{j+1,l},\mathbb{P})| \nonumber\\
= & |\mathbb{P}(A'_{j,l}|A'_j)[I(A'_{j,l},\mathbb{P})-I(A'_{j+1,l},\mathbb{P})] +[\mathbb{P}(A'_{j,l}|A'_j)-\mathbb{P}(A'_{j+1,l}|A'_j)]I(A'_{j+1,l},\mathbb{P})| \nonumber\\
\le & 2\left[\epsilon\mathbbm{1}_{\{a_{\infty,j+1}=b_{\infty,j+1}\}}+ \frac{2\rho_{\max}\epsilon'\mathbbm{1}_{\{a_{\infty,j+1}<b_{\infty,j+1}\}}}{\rho_{\min}(b_{\infty,j+1}-a_{\infty,j+1})}\right] + 2\epsilon.
\end{align}
By similar derivations, we have 
\begin{equation}\label{eq PI right 2}
|\mathbb{P}(A'_{j,r}|A'_j)I(A'_{j,r},\mathbb{P}) - \mathbb{P}(A'_{j+1,r}|A'_j)I(A'_{j+1,r},\mathbb{P})| \le 2\left[\epsilon\mathbbm{1}_{\{a_{\infty,j+1}=b_{\infty,j+1}\}}+ \frac{2\rho_{\max}\epsilon'\mathbbm{1}_{\{a_{\infty,j+1}<b_{\infty,j+1}\}}}{\rho_{\min}(b_{\infty,j+1}-a_{\infty,j+1})}\right] + 2\epsilon,
\end{equation}
\begin{equation}\label{eq I original 2}
|I(A'_{j},\mathbb{P}) - I(A'_{j+1})| \le 2\left[\epsilon\mathbbm{1}_{\{a_{\infty,j+1}=b_{\infty,j+1}\}}+ \frac{2\rho_{\max}\epsilon'\mathbbm{1}_{\{a_{\infty,j+1}<b_{\infty,j+1}\}}}{\rho_{\min}(b_{\infty,j+1}-a_{\infty,j+1})}\right].
\end{equation}
Combining equations (\ref{eq PI left 2}), (\ref{eq PI right 2}), (\ref{eq I original 2}) and the definition of $\Delta I(A'_j,(z_0,j_0),\mathbb{P})$, $\Delta I(A'_{j+1},(z_0,j_0),\mathbb{P})$, we have
\begin{equation}\label{eq Delta I else}
|\Delta I(A'_j,(z_0,j_0),\mathbb{P})-\Delta I(A'_{j+1},(z_0,j_0),\mathbb{P})|\le 6\left[\epsilon\mathbbm{1}_{\{a_{\infty,j+1}=b_{\infty,j+1}\}}+ \frac{2\rho_{\max}\epsilon'\mathbbm{1}_{\{a_{\infty,j+1}<b_{\infty,j+1}\}}}{\rho_{\min}(b_{\infty,j+1}-a_{\infty,j+1})}\right] + 4\epsilon.
\end{equation}
Since equation (\ref{eq Delta I else}) holds for all $j\ne j_0-1$, combining these equations with equation (\ref{eq Delta I j0}), we have for constant $C_1, C_2$ independent of $\epsilon, \epsilon'$, $|\Delta I(\psi_k(x),(z_0,j_0),\mathbb{P})-\Delta I(\psi_\infty(x),(z_0,j_0),\mathbb{P})| \le C_1 \epsilon + C_2 \epsilon'$, implying
\begin{equation}\label{eq Delta final}
\Delta I(\psi_k(x),(z_0,j_0),\mathbb{P}) \ge \delta_0 - C_1\epsilon - C_2 \epsilon'.
\end{equation}
That being said, after the first $K$th splits, one can always find a split at node $\psi_k(x), k\ge K$ with impurity decrease close to a constant, which seems impossible. We will now derive such a contradiction.

Suppose the split with highest impurity decrease on $\psi_k(x)$ is achieved at $(j_1, z_1)$ and splits $\psi_k(x)$ into $\tilde{A}_l$ and $\tilde{A}_r$. Without loss of generality, we assume $x\in \tilde{A}_l$. 
Since $\max_j\|a_{\infty,j}-a_{k,j}\|<\epsilon'$ and $\max_j\|b_{\infty,j}-b_{k,j}\|<\epsilon'$, we have $z_1-a_{k,j_1}<\epsilon'$ or $b_{k,j_1}-z_1<\epsilon'$. 
Without loss of generality, we assume $x\in\mathbb{P}(\tilde{A}_l)$. By similar proof as that for equation (\ref{eq Delta I else}), with $A'_{j+1}$ corresponding to $\psi_k(x)$, $A'_j$ corresponding to $\tilde A_l$ and $(z_0,j_0)$ being some split on the border of node $\psi_k(x)$ with $j_0\ne j_1$ (The equation (\ref{eq Delta I j0}) holds for all $(j_0, z_0)$ as long as $j_0\ne j+1$), we obtain
\begin{equation}\label{eq Delta small}
|\Delta I(\psi_k(x),(z_1,j_1),\mathbb{P}) - \Delta I(\tilde A_l,(z_1,j_1),\mathbb{P})|<
\left[\epsilon\mathbbm{1}_{\{a_{\infty,j_1}=b_{\infty,j_1}\}}+ \frac{2\rho_{\max}\epsilon'\mathbbm{1}_{\{a_{\infty,j_1}<b_{\infty,j_1}\}}}{\rho_{\min}(b_{\infty,j_1}-a_{\infty,j_1})}\right].
\end{equation}

Combining equations (\ref{eq Delta final}) and (\ref{eq Delta small}) and noting $\epsilon, \epsilon'$ can be arbitrarily small while $\delta_0$ is a fixed constant, we achieve a contradiction. Therefore $\Delta(\psi_\infty(x), (z,j),\mathbb{P})$ must be zero for all possible splits $(z,j)$. Hence by the claim proven at the beginning of this proof, we have $\eta(x)$ being constant on $\psi_\infty(x)$. By the uniform continuity of $\eta(x)$ and arbitrarity of $x$, we have proved equation (\ref{eq variation leaf}).

\paragraph{Step 2}
We now prove equation (\ref{eq theoretical consistent}).
By equation (\ref{eq variation leaf}), for all $\epsilon\in(0,1)$, $\exists K\in\mathbb{N}$, $\forall k > K$, $\Delta\eta_k(x) < \epsilon.$
Let $\eta(\psi_k(x)) = \mathbb{E}(Y|X\in\psi(x))$. By the definition of the theoretical tree, we have
\begin{align*}
|I(T_k^*,\mathbb{P}) - I^*| 
    = & |2\eta(\psi_k(x))(1-\eta(\psi_k(x))) -2\eta(x)(1-\eta(x))| d\mathbb{P}(x) \\
  \le & 2|[\eta(\psi_k(x))-\eta(x)][1-\eta(\psi_k(x))-\eta(x)]| d\mathbb{P}(x) \\
    \le & 2\epsilon
\end{align*}
where the second inequality utilizes $|\eta(\psi_k(x)) - \eta(x)| \le \Delta\eta_k(x) \le \epsilon$. 
% \begin{align*}
% |I(T_k^*,\mathbb{P}) - I^*| \le & 2\int_{X_0}|[\eta(\psi_k(x))-\eta(x)][1-\eta(\psi_k(x))-\eta(x)]| d\mathbb{P}(x) +\epsilon
% 	\le 3\epsilon.
% \end{align*}
Because $\tilde{I}(T_k^*,\mathbb{P}) = I(T_k^*,\mathbb{P})$ for theoretical trees, we have $|\tilde{I}(T_k^*,\mathbb{P}) - I^*|<2\epsilon $ for all $k > K$. Since $\epsilon$ is arbitrary, we finish the proof.
\end{proof}

We now consider the relation between tree structures and tree impurity.
\begin{tech lemma}\label{nodes 2 impu}
Let $T, T'$ be two trees both having $k$ leaf nodes and $\mathbb{P}$ be a probability measure. Denote all the leaf nodes of $T$ as $A_1, A_2, \ldots A_k$, all the leaf nodes of $T'$ as $A_1', A_2', \ldots A_k'$. Then if 
$$
\sup_{1\le j\le k} \mathbb{P}(A_j \btr A_j')\le \epsilon  $$
we have
$|I(T, \mathbb{P}) - I(T', \mathbb{P})|\le 5 k\epsilon.$
\end{tech lemma}
\begin{proof}[Proof of Technical Lemma \ref{nodes 2 impu}]
We first consider nodes $A_1$ and $A_1'$. 
\begin{align*}
|I(A_1, \mathbb{P}) - I(A_1\cup A_1', \mathbb{P})| 
    = & |2p(A_1)(1-p(A_1)) - 2p(A_1\cup A_1')(1-p(A_1\cup A'))| \\
    = & 2|(p(A_1)-p(A_1\cup A_1'))(1 - p(A_1) - p(A_1\cup A_1'))| \\
    \le & 2|p(A_1)-p(A_1\cup A_1')| \\
    \le & 2\frac{\mathbb{P}(A_1 \btr (A_1\cup A_1'))}{\mathbb{P}(A_1)}
    \le \frac{2\epsilon}{\mathbb{P}(A_1)}.
\end{align*}
Similarly,  we have $|I(A_1', \mathbb{P}) - I(A_1\cup A_1', \mathbb{P})| \le 2\epsilon/\mathbb{P}(A_1')$. Therefore
\begin{align*}
|I(A_1, \mathbb{P})\mathbb{P}(A_1) - I(A_1', \mathbb{P})\mathbb{P}(A_1')| 
        = & \Big|[I(A_1, \mathbb{P}) - I(A_1\cup A_1', \mathbb{P})]\mathbb{P}(A_1) - \\
            & [I(A_1', \mathbb{P}) - I(A_1\cup A_1', \mathbb{P})]\mathbb{P}(A_1') + [\mathbb{P}(A_1) - \mathbb{P}(A_1')]I(A_1\cup A_1', \mathbb{P})\Big| \\
        \le & 2\epsilon + 2\epsilon + \epsilon 
        \le 5\epsilon.
\end{align*}
Therefore, the difference in tree impurity can be computed as
\begin{align*}
|I(T,\mathbb{P}) - I(T',\mathbb{P})|
    = & |\sum_{l=1}^k [I(A_l, \mathbb{P})\mathbb{P}(A_l) - I(A_l', \mathbb{P})\mathbb{P}(A_l')]| \\
    \le & \sum_{l=1}^k |[I(A_l, \mathbb{P})\mathbb{P}(A_l) - I(A_l', \mathbb{P})\mathbb{P}(A_l')]| 
    \le 5k \epsilon
\end{align*}
\end{proof}

We then define a generalized distance metric\footnote{In fact, $D(\cdot, \cdot)$ is a metric on space $\{s(T):T\in\mathscr{F}_k\}$. As a result, $D(s(\cdot), s(\cdot))$ becomes a generalized metric on $\mathscr{F}_k$ where all trees share the same split structures are treated equally in terms of metric.} for tree splits and discuss its properties. For any classification tree $T$ built by Algorithms 1 or 2, let $s(T) = ((x_1, j_1), (x_2, j_2), \ldots, (x_k, j_k))$ denote all the splits of $T$. That is, $T$ is obtained by first spliting $\Omega$ at $X[j_1]=x_1$, then spliting the left child of the root at $X[j_2]=x_2$, the right child of the root at $X[j_3]=x_3$ and so on. If no split is accepted at the $l$th step, we let $x_{l}=0, j_{l}=0$. For two possible splits $(x_l, j_l)$ and $(x_l', j_l')$ at the same step $l$, we define their distance as
$$D((x_l, j_l), (x_l', j_l')) = \max(|x_l-x_l'|, |j_l-j_l'|).$$
Recall that we let the sample space $\Omega = [0,1]^d$, so $D((x_l, j_l), (x_l', j_l'))\ge 1$ if and only if $j_l\ne j_l'$, which means either they split at different features, or one of the $j_l, j_l'$ indicates no split is accepted at that step. If two trees $T, T'$ are both obtained with $k-1, (k\in\mathbb{N})$ steps of splits, we define the distance of $s(T)=((x_1, j_1), (x_2, j_2), \ldots, (x_k, j_k)), s(T)=((x_1', j_1'), (x_2', j_2'), \ldots, (x_k', j_k'))$ as
\begin{equation}\label{splits distance}
D(s(T_1), s(T_2)) = \max_{1\le i\le k-1} D((x_i, j_i), (x_i', j_i')).
\end{equation}

Let $\hat{T}_{n,k}$ be the tree obtained by only the first $k-1$ splits of SVR-Tree $\hat{T}_n$. The following technical lemma shows the distance between $\hat{T}_{n,k}$ and the theoretical tree $T^*_k$ goes to zero as $n$ goes to infinity. 

\begin{tech lemma}\label{hat Tn 2 theoretical}
For all $k\in\mathbb{N}$ and $\delta>0$, 
$$\lim_{n\to\infty}\mathbb{P}\left(D(s(\hat{T}_{n,k}), s(T^*_k))>\delta\right) = 0$$ 
\end{tech lemma}

\begin{proof}[Proof of Technical Lemma \ref{hat Tn 2 theoretical}]
We first prove this technical lemma with optional steps in Algorithm 1 disabled. The proof can be broken into 3 steps. In the first step, we establish the uniform continuity of $I(T,\mathbb{P})$ and $I(T,\mathbb{P}_n)$ with respect to the generalized distance metric $D(s(\cdot), s(\cdot))$; In the second step, we show $I(T,\mathbb{P}_n)$ and $I(T,\mathbb{P})$ are very close as $n$ goes to infinity; The final step utilizes the optimality of $\hat{T}_{n,k}$ in each split step. In the end of the proof, we show the case with optional steps enabled are essentially the same.\par{}

\paragraph{Step 1}
In this step,  we will show for all $\epsilon>0$, $\sigma>0$, $k\in\mathbb{N}$, there exists $\delta\in (0,1)$, such that for any two trees $T, T'$ obtained with $k-1$ splits, if $D(s(T), s(T')) < \sigma$, then 
$$|I(T, \mathbb{P}) - I(T', \mathbb{P})|<\epsilon$$ 
and 
$$\mathbb{P}\left(\sup_{\substack{T,T'\in\mathscr{F}_k \\ D(s(T),s(T'))<\delta}}  
|I(T, \mathbb{P}_n)-I(T',\mathbb{P}_n)|<\epsilon\right) \ge 1-\sigma.$$
\par{}

We assume both $T, T'$ have $k$ leaf nodes, i.e., the split of each step is accepted. If one of $T, T'$ does not have $k$ leaf nodes, then some splits are not accepted. Because $D(s(T), s(T')) < \delta<1$, a split of $T$ is accepted if and only if the corresponding split of $T'$ is accepted. So $T, T'$ still have the same number of leaf nodes, but the number of leaf nodes is smaller than $k$, which still follows the same proof. \par{}

We first consider $|I(T, \mathbb{P}) - I(T', \mathbb{P})|$. Because the marginal probability measure of $\mathbb{P}$ on $\Omega$ is absolutely continuous with respect to the Lebesgue measure $\mu$, for all $\epsilon_0>0$, there exists $\delta_0\in (0,1)$, for any Borel set $A \subset \Omega$ with $\mu(A)<\delta_0$, we have $\mathbb{P}(A)<\epsilon_0$. Let $\delta = \delta_0/d$. Because $D(s(T), s(T')) < \delta$, denote all the leaf nodes of $T$ as $A_1, A_2, \ldots A_k$, there exists an order of all the leaf nodes of $T$: $A_1', A_2', \ldots, A_k'$, such that $\forall 1\le l\le k$, $\mathbb{\mu}(A_l \btr A_l') < d\delta = \delta_0$. Therefore 
$\mathbb{P}(A_l \btr A_l') < \epsilon_0, \;\;\forall 1\le l \le k.$
By Technical Lemma \ref{nodes 2 impu}, we have 
$$|I(T, \mathbb{P}) - I(T',\mathbb{P})|< 5 k \epsilon_0.$$
Since $\epsilon_0$ is arbitrary, let $\epsilon_0 = \epsilon / (5 k)$, we have $|I(T, \mathbb{P}) - I(T',\mathbb{P})|< \epsilon$.

We then consider $|I(T, \mathbb{P}_n) - I(T', \mathbb{P}_n)|$, where we still  show the condition of Technical Lemma \ref{nodes 2 impu} is satisfied, but with probability greater than $1-\sigma$. By absolute continuity of the marginal of $\mathbb{P}$, for all $\epsilon_0>0$, there exists $\delta_0$, for any Borel set $A$ with Lebesgue measure $\mu(A)<\delta_0$, such that we have $\mathbb{P}(A)<\epsilon_0$. Let $m = \lceil \frac{1}{\delta_0}\rceil$. Then we define a collection of subsets of $\Omega$ as
$$\mathscr{A} = \left\{[0,1]\times[0,1]\times \ldots \times\underset{j}{[l/m, (l+1)/m]}\times\ldots \times [0,1] \bigg|j, l\in\mathbb{N}, 1\le j\le d , 1\le l \le m-1\right\}.$$

That is, each $A\in\mathscr{A}$ is a hyperrectangle with $j$th dimension equaling to interval $[[l/m, (l+1)/m]$ and the other $d-1$ dimensions equaling to $[0,1]$. The set $\mathscr{A}$ consists of all such hyperrectangles where $j$ varies from $1$ to $d$ and $l$ varies from $0$ to $m-1$. Since $m$ is usually very large, $\mathscr{A}$ can be viewed as the collection of all $d-1$-dimensional regularly spaced ``slabs'' with thickness $1/m$. These ``slabs'' will help us control the difference of two corresponding leaf nodes of $T$ and $T'$.

Because there are finitely many elements in $\mathscr{A}$, and for each $A\in\mathscr{A}$, $\mathbb{P}_n(A)$ converges to $\mathbb{P}(A)$ almost surely, we have for the prementioned $\epsilon_0$ and all $\sigma>0$, there exists $N_1\in\mathbb{N}$, such that $\forall n> N_1$, 
\begin{equation}\label{Pn P slabs}
\mathbb{P}\left( \sup_{A\in\mathscr{A}} |\mathbb{P}_{n}(A) - \mathbb{P}(A)|\ge \epsilon_0\right) < \sigma,    
\end{equation}
where the outer probability is taken over the distribution of $\mathscr{D}_n$. Combining equation (\ref{Pn P slabs}) and the definition of $\mathscr{A}$, we have 
$$\mathbb{P}(\sup_{A\in\mathscr{A}} \mathbb{P}_n(A)\ge 2\epsilon_0) < \sigma.$$
Define an event $\mathcal{E} = \{\sup_{A\in\mathscr{A}} \mathbb{P}_n(A) < 2\epsilon_0\}$. Then $\mathbb{P}(\mathcal{E})\ge 1-\sigma$. We then consider a larger collection of ``slabs''. Denote the set $\mathscr{B}$ as:
$$\mathscr{B} = \left\{[0,1]\times[0,1]\times \ldots \times\underset{j}{[b_1, b_2]}\times\ldots \times [0,1] \bigg|j\in\mathbb{N}, 1\le j\le d , b_2-b_1\le 1/m\right\}.$$
The collection $\mathscr{B}$ consists of all the ``slabs'' whose thickness is no more than $1/m$. By definition, we have $\forall B\in\mathscr{B}$, $\exists A_1, A_2\in\mathscr{A}$, such that $B\subset A_1 \cup A_2$.
Let $\delta = 1/m \le \delta_0$. For all $T, T'$ obtained by $k-1$ splits and satisfying $D(s(T), s(T')) <\delta$, denoting all leaf nodes of $T$ as $A_1, A_2, \ldots, A_k$, there exists an order of all the leaf nodes of $T'$: $A_1', A_2', \ldots A_k'$, such that $\forall 1\le l\le k$, the symmetrical set difference between $A_1, A_1'$ is contained in the union of $d$ ``slabs'' in the set $\mathscr{B}$. Formally, 
$$A_l\btr A_l' \subset \bigcup_{j=1}^d B_j,\;\; \forall 1\le l\le k$$
where $B_j\in\mathscr{B}$. Since each $B_j\in\mathscr{B}$ is included in the union of two elements of $\mathscr{A}$, we have on event $\mathcal{E}$,

\begin{equation}\label{Pn A}
\mathbb{P}_n(A_l\btr A_l') \le 2d \sup_{A\in\mathscr{A}}\mathbb{P}_n(A) \le 4d\epsilon_0,\;\;\forall 1\le l\le k.
\end{equation}
Combining equation (\ref{Pn A}) with Technical Lemma \ref{nodes 2 impu}, we have on event $\mathcal{E}$,
$$|I(T,\mathbb{P}_n)-I(T',\mathbb{P}_n)| < 20 k d\epsilon_0.$$
Since $\epsilon_0$ is arbitrary, let $\epsilon_0 = \epsilon/(20 kd)$, we have with probability greater than $1-\sigma$
$$\sup_{\substack{T,T'\in\mathscr{F}_k \\ D(s(T),s(T'))<\delta}}  
|I(T, \mathbb{P}_n)-I(T',\mathbb{P}_n)|<\epsilon.$$

\paragraph{Step 2}
By step 1, for all $\epsilon>0,\sigma>0, k\in\mathbb{N}$, let $\mathscr{F}_k$ be the set of all trees obtained by $k-1$ splits, there exists $N_1>0, \delta>0$, $\forall n>N_1$, we have
\begin{equation}\label{theo contin abb}
\sup_{\substack{T,T'\in\mathscr{F}_k \\ D(s(T),s(T'))<\delta}}  
|I(T, \mathbb{P})-I(T',\mathbb{P})|<\epsilon/3
\end{equation}
and
\begin{equation}\label{empi contin abb}
\mathbb{P}\left(\sup_{\substack{T,T'\in\mathscr{F}_k \\ D(s(T),s(T'))<\delta}}  
|I(T, \mathbb{P}_n)-I(T',\mathbb{P}_n)|\ge\epsilon/3\right)<\sigma/2
\end{equation}
Let $\mathscr{G}_k\subset\mathscr{F}_k$ be an $\epsilon/3$-cover on $\mathscr{F}$ with respect to the generalized metric $D(s(\cdot), s(\cdot))$. Because the covering number is no greater than $(d\epsilon/3)^k$, without loss of generality, we let the $\mathrm{card}(\mathscr{G}) =(d\epsilon/3)^k$. Since each element of $\mathscr{G}$ is a fixed tree with at most $k$ leaf nodes and the cardinality of $\mathscr{G}$ is also finite, there exists $N_2\in\mathbb{N}$, $\forall n>N_2$, 
\begin{equation}\label{empi theo close}
\mathbb{P}\left(\sup_{T\in\mathscr{G}_k} |I(T,\mathbb{P})-I(T,\mathbb{P}_n)|>\epsilon/3\right)<\sigma/2
\end{equation}
Combining equations (\ref{theo contin abb}), (\ref{empi contin abb}) and (\ref{empi theo close}) and applying triangle inequality, we have $\forall n > \max(N_1, N_2)$
$$\mathbb{P}\left(\sup_{T\in\mathscr{F}_k} |I(T,\mathbb{P}_n)-I(T,\mathbb{P})|\ge\epsilon\right)<\sigma. $$

\paragraph{Step 3}
We now finish the proof of Technical Lemma 4 by induction on $k$. For $k=1$, there is only one tree: the root node. So the technical lemma holds naturally. Now supposing the technical lemma holds for $1, 2, \ldots, k-1$, 
We then consider it for $k$. 
By step 1, $\forall \epsilon/3$, $\forall \sigma>0$, for $k\in\mathbb{N}$, there exists $\delta_0\in(0,\delta)$, 
\begin{equation}\label{theo k conti}
\sup_{\substack{T,T'\in\mathscr{F}_k \\ D(s(T),s(T'))<\delta_0}}  
|I(T, \mathbb{P})-I(T',\mathbb{P})|<\epsilon/3.
\end{equation}
with probability $1-\sigma$. Since the technical lemma holds for $k-1$, for all $\sigma>0$, for the prementioned $\delta_0$, $\exists N_1>0$, $\forall n> N_1$, we have
\begin{equation}\label{induction assumption}
\mathbb{P}\left( D(s(\hat{T}_{n,k-1}), s(T^*_{k-1})) >\delta_0 \right)<\sigma.
\end{equation}
Denote $\hat{s}_{n,k-1}, s^*_{k-1}$ as the $(k-1)$th split of $\hat{T}_{n,k}, T^*_k$, respectively. Define two auxiliary trees $T_{k-1}^*(\hat{s}_{n,k-1})$, $\hat{T}_{n,k-1}(s^*_{k-1})$. $T_{k-1}^*(\hat{s}_{n,k-1})$ is obtained by applying split $\hat{s}_{n,k-1}$ on theoretical tree $T^*_{k-1}$, while $\hat{T}_{n,k-1}(s^*_{k-1})$ is obtained by applying split $s^*_{k-1}$ on SVR-Tree $\hat{T}_{n,k-1}$. We reassign class labels on all leaf nodes of both $T_{k-1}^*(\hat{s}_{n,k-1})$ and $\hat{T}_{n,k-1}(s^*_{k-1})$. For any leaf node $A$ of $T_{k-1}^*(\hat{s}_{n,k-1})$ or $\hat{T}_{n,k-1}(s^*_{k-1})$, its class label is $1$ if
$\mathbb{P}(Y|X\in A)\ge 1/(1+)$ and $0$ otherwise. By doing this, both $T_{k-1}^*(\hat{s}_{n,k-1})$ and $\hat{T}_{n,k-1}(s^*_{k-1})$ estimates the class label with the true probability distribution, while the split structure is prefixed.
Combining equation (\ref{theo k conti}) and (\ref{induction assumption}), we have with probability greater than $1-\sigma$,
\begin{equation}\label{cotinuity of I at k 1}
|I(\hat{T}_{n,k},\mathbb{P}) - I(\hat{T}^*_{k-1}(\hat{s}_{n,k-1}),\mathbb{P})| < \epsilon/3
\end{equation}
\begin{equation}\label{cotinuity of I at k 2}
|I(\hat{T}_{n,k-1}(s_{k-1}^*),\mathbb{P}) - I(T^*_k,\mathbb{P})| < \epsilon/3
\end{equation}
By the optimality of $\hat{T}_{n,k}$, we have
$$I(\hat{T}_{n,k},\mathbb{P}_n) + \lambda_n r(\hat{T}_{n,k}) \le I(\hat{T}_{n,k-1}(s^*_{k-1}),\mathbb{P}_n) + \lambda_n r(\hat{T}_{n,k-1}(s^*_{k-1})).$$
By Technical Lemma \ref{theoretical consistent}, the sequence of  theoretical trees is consistent. Therefore, there exists $V_0>0, k_0\in\mathbb{N}$, such that $V(T_k) \ge V_0, \forall k>k_0$. Without loss of generality, we assume here $k-1>k_0$. Thus with probability greater than $1-\sigma$, the volume of the theoretical tree satisfies
$$V(T^*_{k-1}) \ge V_0.$$
Therefore, the surface-to-volume ratio of $\hat{T}_{n,k-1}(s^*_{k-1})$ is bounded by:
$$r\left(\hat{T}_{n,k-1}(s^*_{k-1})\right)\le \frac{2dk}{V_0-(k-1)\delta_0}.$$
Therefore, we have
\begin{equation}\label{optimality Tn}
I(\hat{T}_{n,k},\mathbb{P}_n) \le I(\hat{T}_{n,k-1}(s^*_{k-1}),\mathbb{P}_n) + \lambda_n \frac{2dk}{V_0-(k-1)\delta_0}.
\end{equation}
Combining equation (\ref{optimality Tn}) and step 2, there exists $N_2\in\mathbb{N}$, $\forall n>N_2$, such that with probability greater than $1-\sigma$,
\begin{equation}\label{optimality Tn P}
I(\hat{T}_{n,k},\mathbb{P}) \le I(\hat{T}_{n,k-1}(s^*_{k-1}),\mathbb{P}) + \lambda_n \frac{2dk}{V_0-(k-1)\delta_0} + \frac{2}{3}\epsilon.
\end{equation}
Combining equations (\ref{cotinuity of I at k 1}), (\ref{cotinuity of I at k 2}) and (\ref{optimality Tn P}), $\forall n>\max\{N_1, N_2\}$, we have with probability greater than $1-2\sigma$,  
\begin{equation}\label{close to Tk*}
I(T^*_{k-1}(\hat{s}_{n,k}),\mathbb{P}) \le I(T^*_k,\mathbb{P}) + \lambda_n \frac{2dk}{V_0-(k-1)\delta_0} + \frac{4}{3}\epsilon.
\end{equation}
Denote a collection of trees as
$$\mathscr{H}_\varepsilon = \{T\in\mathscr{F}_k: s(T) = (s^*_1, s^*_2, \ldots, s^*_{k-2}, s'), \;s'\;\text{arbitrary},\;I(T, \mathbb{P})-I(T_k^*,\mathbb{P})<\epsilon\}.$$
Therefore, $\mathscr{H}_\varepsilon$ is the collection of trees whose first $k-2$ splits are the same as $T^*_k$ and the tree impurity is within $\varepsilon$ distance of $T^*_k$. Letting $\epsilon_n = \lambda_n \frac{2dk}{V_0-(k-1)\delta_0} + 4\epsilon/3$, equation (\ref{close to Tk*}) implies
\begin{equation}\label{close to Tk* 2}
D(s(T^*_{k-1}(\hat{s}_{n,k})), s(T^*_k)) \le \sup_{T\in\mathscr{H}_{\epsilon_n}} D(s(T),s(T_k^*)).
\end{equation}
Because  $T_k^*$ is the unique theoretical tree with $k$ leaf nodes and $\cap_{\epsilon>0}\mathscr{H}_\epsilon$ is the collection of theoretical trees with $k$ leaf nodes,\footnote{As we have mentioned before, if the theoretical tree is not unique, we can replace $D(s(T),s(T_k^*))$ with the infimum distance to all theoretical trees. Since the space of splits of all theoretical trees forms a closed subspace of $\mathscr{H}_1$, the same argument still holds.} we have
$$
\lim_{\varepsilon\to 0}\sup_{T\in\mathscr{H}_\varepsilon} D(s(T),s(T_k^*)) = 0.$$
This implies for $\delta>0$, there exists $\epsilon_0>0$, such that $\forall \varepsilon<\epsilon_0$, we have $\sup_{T\in\mathscr{H}_\varepsilon} D(s(T),s(T_k^*))<\delta$. Recall equation (\ref{close to Tk* 2}), noting $\lim_{n\to\infty}\lambda_n = 0$, there exists $N_3>0$, such that $\forall n>N_3$, 
$\lambda_n \frac{2dk}{V_0-(k-1)\delta_0} < \epsilon_0/2$. Since $\epsilon$ is arbitrary, we can let $4\epsilon/3<\epsilon_0/2$. Therefore, we have $\forall n>\max\{N_1, N_2, N_3\}$, with probability greater than $1-2\sigma$,
\begin{equation}\label{close-s}
D(s(T^*_{k-1}(\hat{s}_{n,k})), s(T^*_k)) < \delta.
\end{equation}
Noting on the same event where equation (\ref{close-s}) holds, the induction hypothesis (\ref{induction assumption}) also holds. We finally have $\forall n > \max\{N_1,N_2, N_3\}$
$$\mathbb{P}\left(D(s(\hat{T}_{n,k}), s(T^*_k))\ge\delta \right)<2\sigma.$$
Since $\sigma$ is arbitrary, we finish the proof.

\paragraph{If Optional Steps Are Enabled}
Now suppose optional steps in Algorithm 1 are enabled. We will reject a new feature if the impurity decrease at a new feature is no greater than $c_0\lambda_n$ plus the maximal impurity decrease at features that are already split. Because $\lambda_n$ goes to zero as $n$ goes to infinity, for any fixed $k\in\mathbb{N}$, the probability for a split to be rejected by optional steps\footnote{This means, a split is rejected by these optional step but will be accepted as risk minimizer if these steps are not enabled.} in the first $k$ splits goes to zero. Therefore for all $k\in \mathbb{N}$, as $n$ goes to zero, the probability for $\hat{T}_{n,k}$ to remain invariant with respect to optional steps goes to one. This finishes the proof.
\end{proof}

%We are now prepared to prove Lemma 4.
\begin{proof}[Proof of Lemma 4]
For all $\epsilon>0$, by Technical Lemma \ref{theoretical consistent}, there exists $K\in\mathbb{N}$, such that $\forall k > K$, the theoretical tree $T_k^*$ satisfies
\begin{equation}\label{eq theoretical 2 oracle}
|\tilde{I}(T_k^*,\mathbb{P}) - I^*|<\epsilon.
\end{equation}
Because the probability measure $\mathbb{P}$ is absolutely continuous with respect to Lebesgue measure $\mu$, for $\epsilon/5 k>0$, there exists $\delta>0$, such that $\forall A\subset \Omega$, $\mu(A)<\delta$ implies $\mathbb{P}(A)<\epsilon/(5 k)$.
Fix $k> K$. Since $\bar{a}_n\to\infty$, as $n$ goes to infinity, $\hat{T}_n$ has no less than $k$ splits, and $\hat{T}_{n,k}$ is well-defined. By Technical Lemma \ref{hat Tn 2 theoretical}, for $\delta/d>0$, $\forall \sigma>0$, $\exists N\in\mathbb{N}$, $\forall n>N$, we have
$$\mathbb{P}\left(D(s(\hat{T}_{n,k}), s(T_k^*))>\delta/d\right) <\sigma.$$
Thus by Technical Lemma \ref{nodes 2 impu},  we have
$$|I(\hat{T}_{n,k},\mathbb{P})- I(T_k^*,\mathbb{P})|\le 5 k \cdot \epsilon/(5 k) = \epsilon.$$
Let $\tilde{T}_n$ be the tree that has the same splits as $\hat{T}_{n,k}$. On each leaf node $A$ of $\tilde{T}_n$, let the class label be $1$ if and only if $\mathbb{E}(Y|X\in A)\ge /(1+)$. Since $\hat{T}_{n,k}$ is formed by the first $k$ splits of $\hat{T}_n$, $\tilde{T}\in\mathscr{T}_n$. By definition of $\tilde{T}_n$ and $T^*_k$, we have $\tilde{I}(\tilde{T}_n,\mathbb{P}) =I(\tilde{T}_n,\mathbb{P})$ and $\tilde{I}(T^*_k,\mathbb{P}) =I(T^*_k,\mathbb{P})$. Therefore we have
\begin{equation}\label{eq theoretical Tn}
|\tilde{I}(\hat{T}_{n,k},\mathbb{P})- \tilde{I}(T_k^*,\mathbb{P})|\le \epsilon.
\end{equation}
By Technical Lemma \ref{theoretical consistent}, there exists $k_0\in\mathbb{N}$, such that for all $k>k_0$, $V_k^*\ge V_0$. Without loss of generality, we assume $k>k_0$. Combining this with Technical Lemma \ref{hat Tn 2 theoretical}, the surface to volume ratio of $\tilde{T}_n$ can be bounded by
\begin{equation}\label{eq SVR Tn}
r(\tilde{T}_n) = r(\hat{T}_{n,k}) \le \frac{2dk}{V_0 - k\delta} = \frac{2dk}{V_0 - \epsilon/(5d)}.
\end{equation}
Combining equation (\ref{eq theoretical 2 oracle}), (\ref{eq theoretical Tn}) and (\ref{eq SVR Tn}), we have with probability greater than $1-\sigma$,
$$|\tilde{I}(\tilde{T}_n,\mathbb{P}) + \lambda_n r(\tilde{T}_n) - I^*| \le 2\epsilon + \lambda_n \frac{2dk}{V_0 - \epsilon/(5d)}.$$
Since $\epsilon, \sigma$ are arbitrary and $\lim_{n\to\infty}\lambda_n=0$, we finish the proof.
\end{proof}

\section{Proof of Lemma 5-8 and Proposition 1-2}
\label{sec:lemma5-8}

\subsection{Proof of Lemma 5}
\begin{proof}\hfill
\paragraph{Step 1} 
We first prove the Lemma when $\Omega_1(f)$ consists of finite unions of hyperrectangles. That is, $f$ is a tree classifier. Let $a$ be a positive constant no greater than $2/(3\gamma)$ and consider a hypercube $A$ with side length $a$. Define a series of hypercubes $A(t)$ such that $A(t)$ is a hypercube centered at the center of $A$ with side length $2t$. $A(0)$ is a single point at the center of $A$, and $A(a/2)$ is just $A$. We will show if $\partial \Omega_1(f)$ intersects $A$,  $S(\partial\Omega_1(f)\cap A(3a/2)$ has a lower bound.  The idea is to examine the definition of $\mathscr{F}_\gamma$ for all hypercubes $A(t)$. Specifically, for hypercube $A(t)$, we define $\zeta_0(t) = S(\partial A(t) \cap \Omega_0(f))$, $\zeta_1(t) = S(\partial A(t) \cap \Omega_1(f))$, and the border ``length'' between $\partial A(t) \cap \Omega_0$ and $\partial A(t) \cap \Omega_1$ as $l(t) = \mu^{(d-2)}(\partial A(t) \cap \partial \Omega_1(f))$.
Let $\Delta \zeta(t) = |\zeta_0(t) -\lim_{u\to t-} \zeta_0(t)| + |\zeta_0(t) - \lim_{u\to t+}\zeta_0(t)|$. That is, $\Delta\zeta(t)$ is the sum of differences between $\zeta_0(t)$ and its left and right limits. Because $\Omega_1(f)$ is a finite union of hyperrectangles, there are only finitely many values of $t$ such that $\Delta \zeta(t) \ne 0$. Now for the hyperrectangle $A(t)$, the volumes of $\Omega_0(f)$ and $\Omega_1(f)$ intersecting $A(t)$ are 
$$V_0(t) = \int_0^t \zeta_0(u)du, \quad \text{and} \quad V_1(t) = \int_0^t \zeta_1(u)du,$$ 
respectively. The intersecting surface area between $\Omega_0(f)$ and $\Omega_1(f)$ at the interior and the border of $A(t)$ is 
$$S_{\text{int}}(t) = \int_0^t l(u)du + \sum_{u\le t} \Delta \zeta(u).$$
Consider the alternative of letting the whole set $A(t)$ belong to $\Omega_0(t)$. Then the change of surface is $\zeta_1(t) - S_{\text{int}}(t)$ and the change of volume is $V_1(t)$. By definition of $\mathscr{F}_\gamma$, we have $\frac{\zeta_1(t)-S_{\text{int}}(t)}{V_1(t)}\ge -\gamma$, which implies that
\begin{equation}\label{svr cons 1}
\int_0^t l(u)du + \sum_{u\le t} \Delta \zeta(u) \le \zeta_1(t) + \gamma \int_0^t \zeta_1(u)du.
\end{equation}
Similarly, consider the alternative of letting the whole set $A(t)$ belong to $\Omega_1(t)$. The definition of $\mathscr{F}_\gamma$ implies that
\begin{equation}\label{svr cons 2}
\int_0^t l(u)du + \sum_{u\le t} \Delta \zeta(u) \le \zeta_0(t) + \gamma \int_0^t \zeta_0(u)du.
\end{equation}

We show the following claim: If $\zeta_0(t) = \min\{\zeta_0(t), \zeta_1(t)\}$, then $\gamma\int_0^t \zeta_0(u)du - \sum_{u\le t}\Delta \zeta(u) \le t\gamma\zeta_0(t)$. Because $\Omega_1(f)$ is a finite union of hyperrectangles, there exist a finite number of values where $\Delta \zeta(u)$ is nonzero. Denote these values as $0<t_1<t_2< \cdots <t_k<t$. For convenience, we set $t_{k+1}=t$ and $t_0 =0$. Since $\gamma t\le \gamma \cdot \frac{3}{2}a \le 1$, we have
\begin{equation}\label{dist tk}
\gamma\int_0^t \zeta_0(u)du - \sum_{u\le t}\Delta \zeta(u)
\le \gamma\int_0^t \Big[\zeta_0(u)-\sum_{s \ge t}\Delta \zeta(s) \Big] du.
\end{equation}
For all $u\in(0,t)$, there exists $0\le j\le k$, such that $t_j\le u< t_{j+1}$. $\Delta\zeta(u)$ is zero in $(t_j, t_{j+1})$, meaning there is no borders of some hyperrectangles of $\Omega_1(f)$ that overlap with borders $A(t), \;\forall t\in (t_j,t_{j+1})$. Therefore we have $\zeta_0(u)\le \lim_{s\to t_{j+1}-}\zeta_0(s)$. By definition of $\Delta\zeta(t)$, we have 
$$\lim_{s\to t_{j+1}-}\zeta_0(s)\le \min \left\{\lim_{s\to t_{j+1}+}\zeta_0(s), \zeta_0(t_{j+1}) \right\} + \Delta\zeta(t_{j+1}).$$ 
Thus we have 
$$\zeta_0(u)-\Delta\zeta(t_{j+1})\le \min \left\{\lim_{s\to t_{j+1}+}\zeta_0(s),\; \zeta_0(t_{j+1}) \right\}.$$
Repeating the same derivation for the right limit of $t_{j+1}$, we have
$$\zeta_0(u)-\Delta\zeta(t_{j+1})-\Delta\zeta(t_{j+2})\le \min \left\{\lim_{s\to t_{j+2}+}\zeta_0(s), \;\zeta_0(t_{j+2}) \right\}.$$
Since there are a finite number of nonzero points of $\Delta\zeta(t)$,  repeat the above process and we will eventually obtain that
\begin{equation}\label{dist tk 2}
\zeta_0(u)-\sum_{s \ge t}\Delta\zeta(s) \le \zeta_0(t).
\end{equation}
We combine equations (\ref{dist tk}) and (\ref{dist tk 2}), and the claim is proved. With this claim, equation (\ref{svr cons 1}), (\ref{svr cons 2}) and the fact that $t\gamma \le \gamma\cdot \frac{3}{2}a \le 1$, we have
\begin{equation}\label{svr cons 3}
\int_0^t l(u)du \le 2\min\{\zeta_0(t),\zeta_1(t)\}.
\end{equation}

We now need a bridge between $l(t)$ and $\zeta_0(t), \zeta_1(t)$, which is known as the isoperimetric inequality as stated in the following technical lemma.
\begin{tech lemma}\label{isoperimetric}
There exists a constant $c_5$ only dependent on $d$, such that for all $t>0$, 
$$l(t) \ge c_5 [\min\{\zeta_0(t), \zeta_1(t)\}]^{\frac{d-2}{d-1}}.$$
\end{tech lemma}
\begin{proof}[Proof of Technical Lemma \ref{isoperimetric}]
The exact value of $c_5$ is still unknown and the following proof gives a loose lower bound. Consider a $(d-1)$ dimensional sphere $S$ with radius $t$ whose center is the same as the center of the hypercube $A(t)$. We would like to clarify the metric and measure on $S$ and $A(t)$. The space $\partial A(t)$ is equipped with $(d-1)$ dimensional Lebesgue measure and the distance between two points on $\partial A(t)$ is defined as the smallest Euclidean distance of all paths between these two points. The space $S$ is equipped with $(d-1)$ dimensional spherical measure inherited from $d$ dimensional Lebesgue measure of $\mathbb{R}^d$ and the distance between two points on $\partial A(t)$ is also defined as the smallest Euclidean distance of all paths between these two points. In short, measures are the Lebesgue measure and metrics are the Euclidean distance. 
For both $A(t)$ and $S$, the volume of a set $B$ is defined as the measure of $B$ and the surface of a set $B$ is defined in the Minkowski content:
$$S(B) = \lim_{\epsilon\to 0} \frac{V(B_\epsilon) - V(B)}{\epsilon},$$
where $V(B_\epsilon) = \{x: d(B,x)<\epsilon\}$, $V$ and $d$ is the volume and metric of the space $B$ belongs to. Here and in the following proofs, $V(\cdot)$ and $S(\cdot)$ always refer to the surface and volume in the corresponding space. We keep the same notation for both spaces, since they are both derived from Euclidean metric and Lebesgue measure. The definition of surface and volume is consistent with the definition of surface and volume at Section 2.2 of the main paper when the space is $\partial A(t)$ and $B$ is a union of finite hyperrectangles, which are the objects of interest in the proof of this technical lemma.

We then define a bijective mapping $\varphi$ between $\partial A(t)$ and $S$. Denote the center of $S$ as $O$. Then for all $x\in \partial A(t)$, let $\varphi(x)$ be the unique intersection between the line segment $Ox$ and $S$. By definition, $\varphi(x)$ is bijective and continuous, but not differentiable at the ``edges'' of $\partial A(t)$ since $\partial A(t)$ is not a smooth manifold.

For all $\tilde B\subset S$ with $V(\tilde B) \le V(S)/2$, the isoperimetric inequality on spheres \citep{schmidt1943beweis,bogelein2017quantitative} can be written as:
\begin{equation}\label{isoperimetric sphere}
S(\tilde B) \ge c_S V(\tilde B)^{\frac{d-2}{d-1}},    
\end{equation}
where $c_S$ is a constant independent of the radius of sphere $S$. Our objective is to bridge a set on $\partial A(t)$ with a set on $S$ using the bijective mapping $\varphi$ and derive an isoperimetric inequality on $\partial A(t)$ with that on $S$. We first show for all $B\subset \partial A(t)$ consisting of unions of finite hyperrectangles, 
\begin{equation}\label{V ratio}
V(B)/V(\varphi(B))\in [1, (d-1)^{(d-1)/2}].
\end{equation}
Consider a point $x\in \partial A(t)$ and a small hypercube $dx$ satisfying $x\in dx$ and $dx\subset \partial A(t)$. By the definition of $\varphi$, we already have $V(dx) \ge V(\varphi(dx))$. To prove the other direction, denote the normal outward vector of $dx$ and $\varphi(dx)$ as $\bm{n}_1$, $\bm{n}_2$, respectively. As the side $dx$ goes to zero, we have 
$$\lim_{dx\to 0} \frac{\frac{V(\varphi(dx))}{t^{d-1}}}{\frac{V(dx)}{d(O,x)^{d-1}} \cos(\bm{n}_1, \bm{n}_2) } = 1.$$
Noting that $d(O,x) \le \sqrt{d-1}\cdot t$ and $\cos(\bm{n}_1, \bm{n}_2) \ge 1/\sqrt{d-1}$, we have $\lim_{dx\to 0} V(dx)/ V(\varphi(dx)) \le (d-1)^{(d-1)/2}$. Since $B$ consists of unions of finite hyperrectangles, we can divide $B$ into unions of arbitrarily small hyppercubes. Thus we have $V(B)/V(\varphi(B))\in [1, (d-1)^{(d-1)/2}]$. 

We then derive the relation between $S(B)$ and $S(\varphi(B))$. By definition,
\begin{align*}
S(B) = & \lim_{\epsilon\to0} \frac{V(B_\epsilon)-V(B)}{\epsilon} \\
    = & \lim_{\epsilon\to0} \frac{V(B_\epsilon \backslash B)}{\epsilon} \\
    \ge & \lim_{\epsilon\to0} \frac{V(\varphi(B_\epsilon \backslash B))}{\epsilon} 
\end{align*}
By definition of the metric on $\partial A(t)$ and $S$, for all $x_1, x_2 \in \partial A(t)$, $d(x_1, x_2) \ge d(\varphi(x_1), \varphi(x_2))$, we have $\varphi(B)_{\epsilon} \backslash \varphi(B) \subseteq \varphi(B_\epsilon \backslash B) $. Thus we have
\begin{align}\label{S ratio}
S(B) \ge & \lim_{\epsilon\to0} \frac{V(\varphi(B)_{\epsilon} \backslash \varphi(B))}{\epsilon} \nonumber \\
    =& \lim_{\epsilon\to0} \frac{V(\varphi(B)_{\epsilon}) - V(\varphi(B))}{\epsilon} \nonumber \\
    =& S(\varphi(B)).
\end{align}
If $[\zeta_0(t)- \zeta_1(t)][\varphi(\zeta_0(t))-\varphi(\zeta_1(t))]\ge 0$, that is, the smaller volume set in $\partial A(3a/2)$ has a smaller volume when mapped by $\varphi$, we let $B$ be that smaller volume set. In this case, $V(B) = \min\{\zeta_0(t), \zeta_1(t)\}$ and $S(B) = l(t)$. Combining equation (\ref{isoperimetric sphere}), (\ref{V ratio}) and (\ref{S ratio}), we have
$$l(t) \ge \frac{c_S}{(d-1)^{\frac{d-1}{2}}} [\min\{\zeta_0(t), \zeta_1(t)\}]^{\frac{d-2}{d-1}}.$$
Otherwise, if the smaller volume set in $\partial A(3a/2)$ has a greater volume when mapped by $\varphi$, then equation (\ref{V ratio}) implies
$$\max\left\{\frac{\zeta_0(t)}{\zeta_1(t)}, \frac{\zeta_1(t)}{\zeta_0(t)}\right\} \le (d-1)^{d-1}.$$
Therefore let $B$ be the higher volume set of $\partial A(3a/2)$, that is $V(B) = \max\{\zeta_0(t), \zeta_1(t)\}$ and $S(B)=l(t)$, we have 
$$l(t) \ge \frac{c_S}{(d-1)^{\frac{d-1}{2}}} [\max\{\zeta_0(t), \zeta_1(t)\}]^{\frac{d-2}{d-1}} \ge \frac{c_S}{(d-1)^{\frac{d-1}{2}}} [\min\{\zeta_0(t), \zeta_1(t)\}]^{\frac{d-2}{d-1}}.$$
In both cases, the technical lemma is proved with $c_5 = \frac{c_S}{(d-1)^{\frac{d-1}{2}}}$.
\end{proof}

We continue the proof of Lemma 5.
Combining equation \eqref{svr cons 3} and Technical Lemma \ref{isoperimetric}, if $d\ge 3$, we have that
$$
\int_0^t l(u)du \le  2\min \{\zeta_1(t), \zeta_0(t)\} 
        \le  2\left( \frac{l(t)}{c_5} \right)^{\frac{d-1}{d-2}}.$$
To simplify notations, denote $\tilde{S}_{\text{int}}(t) = \int_0^t l(u)du$. Then the inequality above implies that
$$
[\tilde{S}_{\text{int}}(t)]^{\frac{d-2}{d-1}} \le \frac{2^{\frac{d-2}{d-1}}}{c_5} \frac{d}{dt}\tilde{S}_{\text{int}}(t).
$$
Because $\partial \Omega_1(f)$ intersects with the central hypercube $A(a/2)$ of side length $a$, we have $\tilde{S}_{\text{int}}(a/2)>0$. Thus for all $t\ge a/2$, we have
$$
[\tilde{S}_{\text{int}}(t)]^{-1+\frac{1}{d-1}} \frac{d}{dt} \tilde{S}_{\text{int}}(t) \ge \frac{c_5}{2^{\frac{d-2}{d-1}}},
$$
or equivalently,
%Compute the integration, we have
$$\frac{d}{dt} \left([\tilde{S}_{\text{int}}(t)]^{\frac{1}{d-1}}\right) \ge \frac{c_5}{2^{\frac{d-2}{d-1}}(d-1)} .$$
Integrating from $t=a/2$ to $t=3a/2$ and using $\tilde{S}(a/2)>0$, we have in the cases of $d\ge 3$,
$$
S(\partial \Omega_1(f)\cap A(3a/2)) \ge\tilde{S}_{\text{int}}(3a/2) \ge \left[\frac{c_5 a}{2^{\frac{d-2}{d-1}}(d-1)}\right]^{d-1}.
$$
If $d=2$, then by Technical Lemma \ref{isoperimetric}, we have $l(t)\ge c_5$, $\forall t$. Thus
$$
S(\partial \Omega_1(f)\cap A(3a/2)) \ge\tilde{S}_{\text{int}}(3a/2) = \int_{a/2}^{3a/2} a du = c_5 a =\left[\frac{c_5 a}{2^{\frac{d-2}{d-1}}(d-1)}\right]^{d-1}.
$$
So in all cases, we have 
\begin{equation}\label{hypercube S}
S(\partial \Omega_1(f)\cap A(3a/2)) \ge \left[\frac{c_5 a}{2^{\frac{d-2}{d-1}}(d-1)}\right]^{d-1}.
\end{equation}

Let $a = 1/M$ and divide $\Omega=[0,1]^d$ into $M^d$ hypercubes as in $\mathscr{H}$. For each small hypercube, the surface of intersection between $\partial\Omega_1(f)$ and $3^d$ hypercubes centering at the said hypercube satisfies equation (\ref{hypercube S}). In case that the hypercubes lie on the border of $\Omega$, there are fewer hypercubes centered at the said hypercube, but the direction of inequality below does not change. As long as $1/M < 2/(3\gamma)$, we obtain from \eqref{hypercube S} that
$$
\big|\{A\in\mathscr{H}:A\cap\partial\Omega_1(f)\ne\emptyset\}\big| \left[\frac{c_5 a}{2^{\frac{d-2}{d-1}}(d-1)}\right]^{d-1} \le 3^d S(\partial \Omega_1(f)) \le 3^d \gamma .$$
Thus we have
$$|\{A\in\mathscr{H}:A\cap\partial\Omega_1(f)\ne\emptyset\}|\le \left[\frac{2^{\frac{d-2}{d-1}}(d-1)}{c_5}\right]^{d-1} 3^d \gamma M^{-(d-1)}.$$
Let $c' = \left[\frac{2^{\frac{d-2}{d-1}}(d-1)}{c_5}\right]^{d-1} 3^d$. We have proved Lemma 5 when $\Omega_1(f)$ is a finite union of hyperrectangles. 

\paragraph{Step 2} 
If $\Omega_1(f)$ itself is not a finite union of hyperrectangles, then by definition of $\overline{\mathscr{F}_\gamma}$, there exists a sequence of classifiers $\{f_i: i\ge 1\}$, such that $\lim_{i\to\infty}d_H(\Omega_1(f_i),\Omega_1(f))=0$. Thus for all $\varepsilon>0$, there exists $i_0\in \mathbb{N}$, such that $f_{i_0}\in\mathscr{F}_\gamma$ and  $d_H(\Omega_1(f_{i_0}),\Omega_1(f))\le \varepsilon$. Still let $a = 1/M$ and divide $\Omega$ into $M^d$ hypercubes as in $\mathscr{H}$. The key idea is to prove as long as $\partial \Omega_1(f)$ intersects with a small hypercube $A\in\mathscr{H}$, the surface of the intersection between $\partial \Omega_1(f_{i_0})$ and a larger hypercube centering at the center of $A$ is bounded from below. Since the total surface of $\partial \Omega_1(f_{i_0})$ is bounded by $\gamma$, the number of hypercubes of $\mathscr{H}$ intersecting $\partial \Omega_1(f)$ is bounded below. The idea is the same as Step 1 where we only need to bridge $\partial \Omega_1(f)$ and $\partial \Omega_1(f_{i_0})$, which follows directly from the definition of Hausdorff distance.

For an arbitrary hypercube $A\in\mathscr{H}$, we still define $A(t)$ as the hypercube centering at the center of $A$ with side length $2t$. If $\partial\Omega_1(f)\cap A\ne\emptyset$, then there exists some $t'\in[a/2-\varepsilon, a/2+\varepsilon]$, such that $S(\partial\Omega_1(f_{i_0})\cap A(t'))>0$. Define $l(t),\zeta_0(t), \zeta_1(t), \Delta\zeta(t), S_{\text{int}}(t)$ and $\tilde{S}_{\text{int}}(t)$ similarly as in Step 1, but with all $\Omega_1(f)$ replaced with $\Omega_1(f_{i_0})$. Then by the same procedure as in Step 1, we have
$\frac{d}{dt}([\tilde{S}_{\text{int}}(t)]^{\frac{1}{d-1}}) \ge \frac{c_5}{2^{d-2}{d-1}(d-1)}$ for all $t\ge t'$. Therefore $\partial\Omega_1(f)\cap A\ne \emptyset$ implies
$$S(\partial \Omega_1(f_{i_0})\cap A(3a/2)) \ge \tilde S_{\text{int}}(3a/2) \ge \left[ \frac{c_5 (3a/2-t')}{2^{\frac{d-2}{d-1}} (d-1)}\right]^{d-1} \ge \left[ \frac{c_5 (a-\varepsilon)}{2^{\frac{d-2}{d-1}} (d-1)}\right]^{d-1}.$$
Thus as long as $1/M < 2/(3\gamma)$, we have
$$|\{A\in\mathscr{H}:A\cap\partial\Omega_1(f)\ne\emptyset\}|\le \left[\frac{2^{\frac{d-2}{d-1}}(d-1)}{c_5}\right]^{d-1} 3^d \gamma (1/M-\varepsilon)^{d-1}$$
for all $\varepsilon>0$. Since $\varepsilon$ is arbitrary, we conclude that
$$|\{A\in\mathscr{H}:A\cap\partial\Omega_1(f)\ne\emptyset\}|\le \left[\frac{2^{\frac{d-2}{d-1}}(d-1)}{c_5}\right]^{d-1} 3^d \gamma M^{-(d-1)}.$$
We still let $c' = \left[\frac{2^{\frac{d-2}{d-1}}(d-1)}{c_5}\right]^{d-1} 3^d$ and the proof is complete.
\end{proof}

\subsection{Proof of Lemma 6}
\begin{proof}
For all $f\in\mathscr{N}_\epsilon$, $f$ is uniquely determined by choosing its border hypercubes. Since there are at most $c'\gamma M^{d-1}$ number of border hypercubes, we have $|\mathscr{N}_\epsilon|\le \sum_{k=1}^{c'\gamma M^{d-1}}{M^d\choose k}$. Noting for $k\le c'\gamma M^{d-1}$,
$$\frac{ {M^d \choose k-1}}{ {M^d \choose k}} = \frac{k}{M^d-k}\le \frac{c'\gamma M^{d-1}}{M^d-c'\gamma M^{d-1}} = \frac{\epsilon}{1-\epsilon}.$$
Therefore
\begin{equation}\label{comb}
\sum_{k=1}^{c'\gamma M^{d-1}}{M^d\choose k}
\le  \frac{1}{1-\epsilon/(1-\epsilon)} {M^d\choose c'\gamma M^{d-1}} 
\le \frac{3}{2} \frac{(M^d)^{c'\gamma M^{d-1}}}{(c'\gamma M^{d-1})!}, 
\end{equation}
where the last equality uses $\epsilon<1/4$. By the lower bound of Stirling's formula in \cite{robbins1955remark}, we have
\begin{equation}\label{stirling}
\log [(c'\gamma M^{d-1})!] > \log(\sqrt{2\pi}) + (c'\gamma M^{d-1}+1/2)\log(c'\gamma M^{d-1}) - c'\gamma M^{d-1} + \frac{1}{12c'\gamma M^{d-1}+1}.
\end{equation}
Taking logarithm on both sides of the inequality (\ref{comb}) and applying the lower bound in (\ref{stirling}) for the factorial, we have
\begin{align*}
& \log \left[\sum_{k=1}^{c'\gamma M^{d-1}}{M^d\choose k}  \right] \\
% & \le \log(3/2) + dc'\gamma M^{d-1} \log M - [\log(\sqrt{2\pi}) + (c'\gamma M^{d-1}+1/2)\log(c'\gamma M^{d-1}) - c'\gamma M^{d-1} + \frac{1}{12c'\gamma M^{d-1}+1}] \\
\le & c'\gamma M^{d-1}[\log M + 1 - \log(c'\gamma)] - \frac{d-1}{2}\log M - \frac{1}{12c'\gamma M^{d-1}+1} - \log(\sqrt{2\pi}) + \log(3/2) - \frac{1}{2}\log(c'\gamma) \\
\le & c'\gamma M^{d-1}[\log M + 1 - \log(c'\gamma)] \\
\le & c'\gamma \left(\frac{c'\gamma}{\epsilon}\right)^{d-1} \left[\log\left(\frac{c'\gamma}{\epsilon}\right) + 1 - \log(c'\gamma)\right].
\end{align*}
\end{proof}

\subsection{Proof of Lemma 7}
\begin{proof}
Denote $\mathscr{N}$ as
$$\mathscr{N} = \{ f: \; \Omega_1(f) \text{ is a finite union of elements of }\mathscr{H}\}.$$
For all $f\in\mathscr{F}_\gamma$, define $f_0^*$ and $B(f)$ as
$$f_0^* = \argmin_{f'\in\mathscr{N}, \Omega_1(f)\subset\Omega_1(f')} V(\Omega_1(f')),$$
$$B(f) = \bigcup_{A\in\mathscr{H},A\cap\partial \Omega_1(f)\ne \emptyset} A.$$
That is, $f_0^*$ is the smallest volume classifiers of $\mathscr{N}$ to contain the decision set of $f$ and $B(f)$ is the union of hypercubes of $\mathscr{H}$ intersecting the borders of $\Omega_1(f)$. 

We claim $\Omega_1(f_0^*)\btr \Omega_1(f)\subset B(f)$. We prove the claim by contradiction. 
By the definition of $\mathscr{H}$, $\Omega\subset\cup_{A\in\mathscr{H}}A$. Suppose the claim does not hold, then there exists $x\in A$ for some $A\in\mathscr{H}$ with $A\cap \partial \Omega_1(f)= \emptyset$. Since $\Omega_1(f)\subset\Omega_1(f_0^*)$ by definition, we have $A\subset (\Omega_1(f))^c$. Then define $f'_0$ such that $\Omega_1(f'_0) = \Omega_1(f_0^*)\backslash A$. We have $\Omega_1(f)\subset \Omega_1(f'_0)$ and $V(\Omega_1(f'_0))<V(\Omega_1(f_0^*))$, which contradicts the definition of $f_0^*$. Thus the claim holds.

We also claim $f_0^*\in\mathscr{N}_\epsilon$. By Lemma 5, it suffices to prove for all  border hypercube $A$ of $f_0^*$, $A\cap \partial \Omega_1(f)\ne \emptyset$. We again prove by contradiction. If there exists a border hypercube $A$ satisfying $A\cap \partial \Omega_1(f)=\emptyset$, noting $\Omega_1(f)\subset\Omega_1(f_0^*)$, we have $A\subset \Omega_1(f_0^*)\btr \Omega_1(f) \subset B(f)$, which means $A\cap\partial \Omega_1(f)\ne \emptyset$. Thus $f_0^*\in\mathscr{N}_\epsilon$.

By the definition of $\epsilon$ and Lemma 5, we have for all $f\in\mathscr{F}_\gamma$, 
\begin{align*}
&\inf_{f_0}\mathbb{P}(\Omega_1(f)\btr\Omega_1(f_0)) \le \mathbb{P}(\Omega_1(f)\btr\Omega_1(f_0^*)) \le \mathbb{P}(B(f)) \\
&\quad \le\frac{1}{M^d} |\{A\in\mathscr{H}: A\cap \partial \Omega_1(f)\ne \emptyset\}| \le \frac{1}{M^d} c'\gamma M^{d-1} = \epsilon.
\end{align*}
This proves the first inequality of Lemma 7. 
For the second inequality, for the same $f_0^*$ defined above, we note that
$$\mathbb{P}_n(\Omega_1(f)\btr\Omega_1(f_0^*)) \le \mathbb{P}_n(B(f)) \le \frac{1}{n}  \sum_{i=1}^n \mathbbm{1}_{\{X_i\in B(f)\}}$$
is the average of $n$ i.i.d. random variables, each following the Bernoulli distribution with mean $\mathbb{P}(B(f))\le \epsilon$. Thus we apply Chernoff bound on i.i.d. Bernoulli random variables to obtain that for all $f\in\mathscr{F}_\gamma$,
$$
% \mathbb{P}\left(\inf_{f_0\in\mathscr{N}_\epsilon} \mathbb{P}_n(\Omega_1(f)\Delta\Omega_1(f_0))\ge 2\epsilon \right) 
\mathbb{P}\left(\mathbb{P}_n(\Omega_1(f)\btr\Omega_1(f_0^*))\ge 2\epsilon \right) 
\le \mathbb{P}\left(\mathbb{P}_n(B(f) \ge 2\epsilon \right)
    \le \exp(-n \mr{KL}(2\epsilon \| \epsilon) \le \exp\{-2(\log 2) n \epsilon\},
$$
where $\mr{KL}(2\epsilon \| \epsilon)$ represents the KL divergence between the Bernoulli distribution with mean $2\epsilon$ and the Bernoulli distribution with mean $\epsilon$. Noting we already proved $\mathbb{P}(\Omega_1(f)\Delta\Omega_1(f_0^*))\le \epsilon$, we have
$$\mathbb{P}\left(\inf_{f_0\in\mathscr{N}_\epsilon} \max\{2\mathbb{P}(\Omega_1(f)\Delta\Omega_1(f_0)), \mathbb{P}_n(\Omega_1(f)\Delta\Omega_1(f_0))\}\ge 2\epsilon \right) \le \exp(-n \mr{KL}(2\epsilon \| \epsilon)) \le \exp\{-2(\log 2) n \epsilon\}.$$

To derive bounds for the supreme of all $f\in\mathscr{F}_\gamma$, by Lemma 5, $B(f)$ consists of at most $c'\gamma M^{d-1}$ small hypercubes of $\mathscr{H}$. Despite there are uncountably many elements in $\mathscr{F}_\gamma$, there are at most $\sum_{k=1}^{c'\gamma M^{d-1}}{M^d\choose k}$ different choices for $B(f)$. Thus we apply a union bound to obtain that
$$\mathbb{P}\left(\sup_{f\in\mathscr{F}_\gamma}\inf_{f_0\in\mathscr{N}_\epsilon} \max\{2\mathbb{P}(\Omega_1(f)\Delta\Omega_1(f_0)), \mathbb{P}_n(\Omega_1(f)\Delta\Omega_1(f_0))\}\ge 2\epsilon \right) 
\le \sum_{k=1}^{c'\gamma M^{d-1}}{M^d\choose k} \;\exp\{-2(\log 2) n \epsilon\}.$$

\end{proof}

\subsection{Proof of Lemma 8}
\begin{proof}
We first write $R_n(T)-R_n(f_1)$ as an average of i.i.d. random variables:
\begin{align*}
R_n(T)-R_n(f_1) & =  n^{-1} \sum_{i=1}^n \Big[  \mathbbm{1}_{\{X_i\in \Omega_1(T)\backslash\Omega_1(f_1), Y_i=0\}} -   \mathbbm{1}_{\{X_i\in \Omega_1(T)\backslash\Omega_1(f_1), Y_i=1\}} \\
    & \quad - \mathbbm{1}_{\{X_i\in \Omega_1(f_1)\backslash\Omega_1(T), Y_i=0\}} +   \mathbbm{1}_{\{X_i\in \Omega_1(f_1)\backslash\Omega_1(T), Y_i=1\}} \Big] \\
    & \triangleq   n^{-1} \sum_{i=1}^n \xi_i,
\end{align*}
% where $\alpha\ge 1$ is the weight assigned to minority class samples and $1/\alpha\le w_0\le 1$ is the normalizing constant, as defined in Section 2.1 of the main paper.
By definition, clearly $|\xi_i|\leq 1$. The expectation and variance of $\xi_i$ can be computed as:
% \begin{equation}
% \end{equation}
\begin{align}
\mathbb{E}\xi_i & = \mathbb{E}[R_n(T) - R_n(f_1)], \label{E xi} \\
\var(\xi_i) & =  \mathbb{E}\xi_i^2 - (\mathbb{E}\xi_i)^2 \nonumber\\
 & \le \mathbb{P}(\Omega_1(T)\btr\Omega_1(f_1))  \nonumber \\
&\qquad + [1-\mathbb{P}(\Omega_1(T)\btr\Omega_1(f_1))] (\mathbb{E}[R_n(T) - R_n(f_1)])^2 - (\mathbb{E}\xi_i)^2\nonumber\\
 & \le\mathbb{P}(\Omega_1(T)\btr\Omega_1(f_1)). \label{var xi}
\end{align} 

By definition, for $R(T)-R^*>7\epsilon$, $R(T)-R(f_1)\ge R(T)-R^*-|R(f^*)-R(f_1)|>6\epsilon.$ Thus we have $\frac{2}{3}[R(T)-R(f_1)]>4\epsilon$. We combine this with equation (\ref{E xi}), (\ref{var xi}) and apply Bernstein inequality to obtain that
\begin{align}\label{Rn bernstein}
& \mathbb{P}\left( \sup_{T\in |\mathscr{N}_\epsilon|, R(T)-R(f^*)\ge 7\epsilon} R_n(T) - R_n(f_1) > 4\epsilon\right) \nonumber\\
<  & \mathbb{P}\left( \sup_{T\in |\mathscr{N}_\epsilon|, R(T)-R(f^*)\ge 7\epsilon} R_n(T) - R_n(f_1)-[R(T)-R(f_1)] > \frac{1}{3}[R(T)-R(f_1)]\right) \nonumber\\
\le & |\mathscr{N}_\epsilon| \exp\left( - \frac{\Big(\frac{n}{3}[R(T)-R(f_1)]\big)^2}{2n\var(\xi_i) + \frac{2}{3}n\sup|\xi_i|[R(T)-R(f_1)]} \right) \nonumber\\
\le & |\mathscr{N}_\epsilon| \exp\left( - \frac{\frac{n}{9}[R(T)-R(f_1)]^2}{2\Big[\mathbb{P}(\Omega_1(T)\btr\Omega_1(f_1))+\frac{1}{3}[R(T)-R(f_1)]\Big]} \right) \nonumber\\
\le & |\mathscr{N}_\epsilon| \exp\left( - \frac{\frac{n}{9}[R(T)-R(f_1)]^2}{\frac{8}{3} \mathbb{P}(\Omega_1(T)\btr\Omega_1(f_1))} \right)
\end{align}
Noting that $\mathbb{P}(\Omega_1(f_1)\btr\Omega_1(f^*))\le \epsilon$ and $\mathbb{P}(\Omega_1(T)\btr\Omega_1(f^*))\ge R(T)-R^*\ge 7\epsilon$ while utilizing Condition 1, we have
\begin{equation}\label{p rela}
\mathbb{P}(\Omega_1(T)\btr\Omega_1(f_1))\le \frac{8}{7}\mathbb{P}(\Omega_1(T)\btr\Omega_1(f^*))\le \frac{8}{7c_1^{1/\kappa}}[R(T)-R^*)]^{1/\kappa}.
\end{equation}
Noting that $R(T)-R^*\ge 7\epsilon$ and $R(f_1)-R^*\le \mathbb{P}(\Omega_1(f_1)\btr\Omega_1(f^*))\le \epsilon$, we have
\begin{equation}\label{R rela}
R(T)-R(f_1)\ge \frac{6}{7} [R(T)-R^*].
\end{equation}
Combining equation (\ref{Rn bernstein}), (\ref{p rela}) and (\ref{R rela}), we have
\begin{align*}
 & \mathbb{P}\left( \sup_{T\in \mathscr{F}_{\gamma,n}, R(T)-R(f^*)\ge 7\epsilon} R_n(T) - R_n(f_1) > 4\epsilon\right) \\
\le & |\mathscr{N}_\epsilon|\exp\left(-\frac{3}{112}nc_1^{1/\kappa} [R(T)-R^*]^{2-1/\kappa}\right) \\
\le & |\mathscr{N}_\epsilon|\exp\left(-\frac{21}{16} nc_1^{1/\kappa} \epsilon^{2-1/\kappa}\right) .
\end{align*}
\end{proof}

\subsection{Proof of Proposition 1}
\begin{proof}   %\hfill
 {\bf Step 1}\quad We first prove for all $f\in\mathscr{F}_0\backslash\mathscr{F}_\gamma$, there exists $f'\in\mathscr{F}_\gamma$, such that
\begin{equation}\label{SVR compare}
S(f')-S(f) < -\gamma V(\Omega_1(f')\btr\Omega_1(f)).
\end{equation}
By definition of $\mathscr{F}_\gamma$,  there exists a $f'\in\mathscr{F}_0$ that satisfies equation (\ref{SVR compare}). The key is to prove such $f'$ lies in $\mathscr{F}_\gamma$. Since both $\Omega_1(f')$ and $\Omega_1(f)$ consist of finite unions of hyperrectangles, we assume both $f$ and $f'$ are trees. Denote $\mathscr{S}_j,\;1\le j\le d$ as the collection of split locations of tree $f$ at feature $j$. Define $a$ as the minimal Euclidean distance between any two splits on the same feature of $f$
$$a = \min_{1\le j\le d} \min_{x,y\in\mathscr{S}_j} |x-y|.$$
We claim there exists $f'\in\mathscr{F}_0$ satisfying equation (\ref{SVR compare}) and $S(f') < S(f) - \gamma a^d$. 
% Equation (\ref{SVR compare}) follows directly from definition of $\mathscr{F}_\gamma$. 
For a general tree $T$ formed by $k$ splits, similar as the proof of Technical Lemma \ref{nodes 2 impu}, denote the split locations of $f'$ as $s(T) = ((x_1,j_1), (x_2,j_2),\cdots, (x_k,j_k))$. Define the ratio between surface decrease and volume change as
$$\Delta r(s(T)) \triangleq \frac{S(f)-S(T)}{V(\Omega_1(T)\btr\Omega_1(f))}.$$
% Obviously, the ratio $\delta r(\cdot)$ is a function of splits locations. 
Noting both $S(f)-S(T)$ and $V(\Omega_1(T)\btr\Omega_1(f))$ are linear functions of the splits values $x_1, x_2, \cdots x_k$ when both the splits features $j_1, j_2, \cdots j_k$ and the ordering of all splits locations of $T$ and $f$ are unchanged (intuitively, the ``shape'' of trees are unchanged), the extreme value of $\Delta r(s(T))$ can only be achieved when the split locations of $T$ overlap with split locations of $f$. Thus when $\Delta r(\cdot)$ reaches a (local) minimum, either $V(\Omega_1(T)\btr\Omega_1(f))\ge a^d$ or $V(\Omega_1(T)\btr\Omega_1(f))=0$. By definition of $\mathscr{F}_\gamma$, there exists $f'_0$, such that $S(f'_0)-S(f) < -\gamma V(\Omega_1(f'_0)\btr\Omega_1(f))$. Noting $f'_0$ is a tree classifier, this implies $V(\Omega_1(f'_0)\btr\Omega_1(f))<0$. Let $f'$ be the local minimum of $\Delta r(\cdot)$ that is closest to $f'_0$ (in terms of distance between splits defined in equation (\ref{splits distance})), then $f'$ satisfies equation (\ref{SVR compare}) and $V(\Omega_1(f')\btr\Omega_1(f))\ge a^d$. This proves the claim.

If $f'\in\mathscr{F}_\gamma$, then we have achieved the objective of this step. Otherwise, since the split locations of $f'$ overlap with the split locations of $f$, we can repeat the above arguments and obtain some tree classifier $f''$ satisfying equation (\ref{SVR compare}) and $S(f'')<S(f) - 2\gamma a^d$. Since $S(f)$ is finite, the above process can only be repeated a finite number of times. Eventually we will obtain some classifier belonging to $\mathscr{F}_\gamma$ while satisfying equation (\ref{SVR compare}). \\

{\bf Step 2}\quad Let $f$ be an arbitrary classifier in $\mathscr{F}_0\backslash\mathscr{F}_\gamma$.
Let $f_{all}$ be a classifier satisfying $\Omega_1(f_{all})=\Omega$. Clearly $f_{all}\in\mathscr{F}_\gamma$ since $\gamma \ge 2d$. Then we have
$$\tilde{I}(f_{all},\mathbb{P})+\lambda r(f_{all})\le 1+2d\lambda, \;\;  R(f_{all})+\lambda r(f_{all})\le 1+2d\lambda.$$
Therefore if $r(f)>1+2d\lambda$, we set $f'=f_{all}$ and the proof is finished. In the rest of the proof, we only consider the situation when $r(f)\le 1+2d\lambda$. By step 1, there exists $f'\in\mathscr{F}_\gamma$ satisfying equation (\ref{SVR compare}). To limit notation, denote $V(\Omega_1(f)\btr\Omega_1(f'))=\delta$. We have
$$0<S(f') < S(f) - \gamma \delta,$$
$$V(f') \ge V(f) - \delta \ge \frac{S(f)}{1+2d\lambda} - \delta 
> \frac{S(f)-\gamma\delta}{1+2d\lambda} > 0.$$
Thus we have $\frac{S(f')}{V(f')} < \frac{S(f) -\gamma \delta}{V(f) - \delta}$. Therefore
\begin{align*}
\frac{S(f')}{V(f')} - \frac{S(f)}{V(f)}
< & \frac{S(f)-\gamma \delta}{V(f)-\delta} - \frac{S(f)}{V(f)} \\
= & \frac{\delta(-\gamma V(f) + S(f))}{(V(f)-\delta)V(f)} \\
= & \frac{(-\gamma + S/V) \delta}{V-\delta} \\
\le & \frac{(-\gamma + 2d\lambda + 1) \delta}{V-\delta} \\
\le & -\frac{\delta \rho_{\max}/\lambda}{V-\delta} \\
\le &  -\frac{\delta \rho_{\max}}{\lambda},
\end{align*}
which implies $\lambda\frac{S(f')}{V(f')} - \lambda\frac{S(f)}{V(f)}< -\delta \rho_{\max}$. By definition, we have
$$|\tilde{I}(f',\mathbb{P}) - \tilde{I}(f,\mathbb{P})|\le \mathbb{P}(\Omega_1(f)\btr\Omega_1(f')) \le \delta \rho_{\max},$$
$$|R(f') - R(f)|\le \mathbb{P}(\Omega_1(f)\btr\Omega_1(f')) \le \delta \rho_{\max}.$$
Therefore,
$$\tilde{I}(f'\mathbb{P}) + \lambda r(f')< \tilde{I}(f\mathbb{P}) + \lambda r(f),\;\;
R(f') + \lambda r(f')< R(f) + \lambda r(f).$$
Since the above proof holds for arbitrary $f\in\mathscr{F}_0\backslash\mathscr{F}_\gamma$, we finish the proof.

\end{proof}

\subsection{Proof of Proposition 2}
\begin{proof} We claim that convergence in Hausdorff distance of decision sets implies convergence in symmetric set difference of decision sets for $f_i\in\mathscr{F}_\gamma$. That is, there exists a constant $C$ independent of $\epsilon$, such that for all $\epsilon<1/2$, $\forall f_1, f_2 \in\mathscr{F}_\gamma$, $d_H(f_1, f_2)<\epsilon$ implies $V(\Omega_1(f_1)\btr\Omega_1(f_2))<C \epsilon$. 

To prove the claim, we divide $\Omega=[0,1]^d$ into $M^d$ hypercubes as in Section 3.2 of the main paper, with $M = \lfloor 1/\epsilon \rfloor$. Denote the collection of these $M^d$ hypercubes as $\mathscr{H}$. By Lemma 5, there are at most $c'\gamma M^{d-1}$ hypercubes that intersect with $\partial \Omega_1(f_1)$. Define the set $B_\epsilon$ as
$$B_\epsilon \triangleq \{x\in\Omega: d(x,\partial\Omega_1(f_1))\le \epsilon\},$$
where $d$ represents the Euclidean distance between a point and a set in $\mathbb{R}^d$. Since $\epsilon<1/M$, for any $x\in B_\epsilon$, the hypercube in $\mathscr{H}$ where $x$ lies in must either intersect with $\partial \Omega_1(f_1)$, or be adjacent to a hypercube that intersects with $\partial \Omega_1(f_1)$. Therefore we have
$$V(B_\epsilon) \le 3^d c'\gamma M^{d-1} \cdot M^{-d} \le 3^d c'\gamma M^{-1} \le 2\cdot 3^d c'\gamma \epsilon,$$
where the last inequality uses $\lfloor 1/\epsilon \rfloor < 2/\epsilon$ for $\epsilon < 1/2$. Since $d_H(\Omega_1(f_1), \Omega_1(f_2))<\epsilon$, we have $\Omega_1(f_1)\btr \Omega_1(f_2) \subset B_\epsilon$. Thus 
$$V(\Omega_1(f_1)\btr\Omega_1(f_2)) \le V(B_\epsilon) \le 2\cdot 3^d c'\gamma \epsilon.$$
Letting $C = 2\cdot 3^d c'\gamma$, the claim is proved. 

% Claim 2: For all $f_1, f_2\in\mathscr{F}_\gamma$, $|S(f_1)-S(f_2)|\le \gamma V(\Omega_1(f_1)\Delta\Omega_1(f_2))$.
% To prove claim 2, consider three sets: $\Omega_1(f_1)$, $\Omega_1(f_1)\cup\Omega_1(f_2)$ and $\Omega_1(f_2)$. Define $f_3:\Omega\to[0,1]$ as $\mathbbm{1}_{\{\Omega_1(f_1)\cup\Omega_1(f_2)\}}(X)$. Since $\Omega_1(f_1)\subset \Omega_1(f_1)\cup\Omega_1(f_2)$ and both sets consists of finite number of hyperrectangles, there exists a finite number of hyperrectangles $A_1, A_2, \cdots, A_k$, such that $(\cup_{j=1}^k A_k)\cup \Omega_1(f_1) = \Omega_1(f_1)\cup\Omega_1(f_2)$. Thus applying definition of 

We then prove the proposition. Assume there is a sequence of $f_i\in\mathscr{F}_\gamma$ whose decision sets $\Omega_1(f_i)$ converge in Hausdorff distance. Then the claim shows that $\{\Omega_1(f_i):i\geq 1\}$ also converges in symmetric set difference. Thus there exists $\Omega^*\subset\Omega$, $\forall \epsilon>0$, $\exists i_0\in\mathbb{N}$, such that $\forall i>i_0$, $V(\Omega_1(f_i)\btr \Omega^*)<\epsilon$. Thus for all $i,j>i_0$, $V(\Omega_1(f_i)\btr\Omega_1(f_j))\le 2\epsilon$. By definition of $\mathscr{F}_\gamma$, we have 
$$S(f_i)-S(f_j)\ge -\gamma V(\Omega_1(f_i)\btr\Omega_1(f_j)), \;\; S(f_j)-S(f_i)\ge -\gamma V(\Omega_1(f_i)\btr\Omega_1(f_j)).$$
Combining these two equations, we have $|S(f_i)-S(f_j)|\le |\gamma V(\Omega_1(f_i)\btr\Omega_1(f_j))| \le \gamma \epsilon$. Thus $\{S(f_i):i\geq 1\}$ is a Cauchy sequence in $\mathbb{R}$ and is therefore a convergent sequence in $\mathbb{R}$.
\end{proof}

% \subsection{Proof of Lemma 9}

% \subsection{Proof of Lemma 10}

\section{Feature Selection Consistency}
\label{sec:selection}
\subsection{Theorem Statement}
Let $X[-j]$ denote the set of all features except $X[j]$. We say $X[j]$ is redundant if conditionally on $X[-j]$, $Y$ is independent of $X[j]$.
%$$(Y | X[-j]) \perp X[j].$$
We denote $X'$ as the collection of all non-redundant features, and $X''$ as the collection of all redundant features. Under two conditions on the distribution of $Y, X'$ and $X''$, we can show if $\lambda_n$ goes to zero slower than $n^{-1/2}$, the probability of $\hat{T}_n$ 
excluding all redundant features goes to one. We assume there are $q \;(1\le q < d)$ non-redundant features denoted as $X[j_1], X[j_2], \cdots X[j_q]$. The redundant features are denoted as $X[j_{q+1]}, X[j_{q+2}], \cdots X[j_d]$.

Before stating these two conditions, we need to discuss how a split can decrease the tree impurity.  Suppose we are splitting on node $A$ at feature $X[j]$, resulting in two new leaf nodes: $A_1 = \{X\in A: X[j] \le x_j\}$ and $A_2 = \{X\in A: X[j] > x_j\}$. 
% Then the weighted conditional expectation of $Y$ in node $A_1, A_2$ is
% \begin{equation}\label{p1alpha fs}
% p_{h, \alpha} = \frac{\alpha \mathbb{E}(Y|X\in A_h)}{1-\mathbb{E}(Y|X\in A_h) + \alpha\mathbb{E}(Y|X\in A_h)},\quad h=1,2. 
% \end{equation}
Recall for arbitrary set $B\subset\Omega$, we denote $\eta(B)$ as $\eta(B)=\mathbb{E}(Y|X\in B)$. The impurity of $A_1, A_2$ is
$I(A_1, \mathbb{P}) = 1 - \eta(A_1)^2 - (1-\eta(A_1)^2 = 2\eta(A_1)(1-\eta(A_1)),$ and
$I(A_2, \mathbb{P}) = 2\eta(A_2)(1-\eta(A_2)).$
Similarly, the impurity of $A$ is $I(A,\mathbb{P}) = 2\eta(A)(1-\eta(A))$. The impurity decrease on node $A$ is
\begin{equation}\label{delta Ialpha}
\Delta I(A, \mathbb{P}) = 2\eta(A)(1-\eta(A)) - 2\eta(A_1)(1-\eta(A_1))\mathbb{P}(A_1|A) - 2\eta(A_2)(1-\eta(A_2))\mathbb{P}(A_2|A),.
\end{equation}
Noting $A_1, A_2$ is determined by $x_j$, $\Delta I(A, \mathbb{P})$ can be considered as a function of $x_j$. Denoting $\Delta I(A,\mathbb{P}) = h_{A,j}(x_j)$, then the maximal impurity decrease at feature $X[j]$ is
\begin{equation}\label{Mj}
M_{A,j} = \max_{x_j\in(x_{j,1},x_{j,2})} h_{A,j}(x_j).
\end{equation}
The quantity of $M_{A,j}$ in reducing the impurity of node $A$ is measured relative to the oracle impurity decrease at node $A$. Suppose we split node $A$ into  measurable sets $A_1$ and $A_2$ satisfying
$\mathbb{P}(A_1| A) = V_1,$ $A_1\cup A_2=\Omega,$ $A_1\cap A_2 = \emptyset,$
for all $X_1\in A_1,$ $X_2\in A_2,$ and $\eta(X_1)\le\eta(X_2),$
where $V_1\in[0,1]$ and $A_1, A_2$ are not required to be hyper rectangles. By definition, the set $A_1$ corresponds to the $V_1$ proportion of $A$  having the smallest $\eta(X)$ values, while the set $A_2$ corresponds to the $1-V_1$ proportion of $A$ having the largest $\eta(X)$ values. Similar to equation (\ref{delta Ialpha}), we can define $\eta(A_1), \eta(A_2)$ and the impurity decrease $\Delta I(A,\mathbb{P})$. Given $V_1\in [0,1]$, the impurity decrease $\Delta I(A,\mathbb{P})$ is unique, which is also the maximal impurity decrease for all measurable $A_1, A_2$ satisfying $\mathbb{P}(A_1| A) = V_1,\; \mathbb{P}(A_2|A) = 1-V_1$. Thus we can denote the impurity decrease as $\Delta I(A,\mathbb{P}) = h^*_A(V_1)$. The oracle impurity decrease is the maximal value of $h^*_A(V_1)$:
$M_{A}^* = \max_{V_1\in[0,1]} h^*_A(V_1).$
In general, the larger impurity decrease will tend to correspond to non-redundant features; this will particularly be the case under Conditions 2-3.  

\begin{condition}[]\label{condition 1}
There exists $c_3\in [0,1]$, such that for all $A\subset\Omega$, 
$\sup_{1\le j\le q} M_{A,j} \ge c_3 M_A^*.$
\end{condition}
Condition \ref{condition 1} relates the impurity decrease in non-redundant features to the oracle impurity decrease. The strength of the condition is dependent on the value of $c_3$. If $c_3=0$, the condition does not impose any restrictions; if $c_3 =1$, splits in non-redundant features can fully explain the oracle impurity decrease. 
Here we give some examples of models with different $c_3$ values.\par{}

\begin{example}[Generalized Linear Models]
Let $\Omega = [0,1]^d$ and the marginal distribution of $\mathbb{P}$ be the uniform distribution on $\Omega$. Suppose $\mathbb{E}(Y|X)= \phi(a^T X + b),$
where $\phi(\cdot)$ is a monotonically increasing function, $a\in\mathbb{R}^d$ and $b\in\mathbb{R}$. Let $A_{y1}, A_{y2}$ be measurable sets having 
$A_{y1}\cap A_{y2}=\emptyset$ and $A_{y1}\cup A_{y2} = \Omega$, which achieve the oracle impurity decrease $M_A^*$, and let $\min\{\mathbb{P}(A_{y1}), \mathbb{P}(A_{y2})\}=h$. Then the constant $c_3$ in Condition 1 satisfies
$c_3 \ge (\sqrt{2}-1)h + o(h)$ if $h \le 1/d!$ and $c_3 \ge F_{d-1}(F_d^{-1}(h)-h) - F_{d-1}(F_d^{-1}(h)-2h)$ if $h>1/d!$, 
%$$c_3 \ge \Bigg\{ \begin{aligned} & (\sqrt{2}-1)h + o(h), \;& h\le 1/d! \\
										%							&F_{d-1}(F_d^{-1}(h) -h) - F_{d-1}(F_d^{-1}(h) -2h), \;& h> 1/d! \end{aligned} $$
where  
$F_d(z) = \frac{1}{d!} \sum_{k=0}^{\lfloor z\rfloor}(-1)^k {d\choose k} (z-k)^{d-1}$ 
is the cumulative distribution function for the Irwin-Hall distribution with parameter $d$.
\end{example}
%For any fixed form of $\phi$, the upper bound $h$ can be controlled by controlling the $L^p, p\ge 1$ norm of regression coefficients $a$. Especially, when $\phi(\cdot)$ is a linear function, $h=0.5$.

\begin{example}
Let $\Omega = [0,1]^2$ and the marginal distribution of $\mathbb{P}$ be the uniform distribution on $\Omega$. Let 
$A_1 = [0,1/2]\times[0,1/2],$ $A_2 =  [1/2,1]\times[0,1/2],$
$A_3 = [0,1/2]\times[1/2,1],$ $A_4 =  [1/2,1]\times[1/2,1],$ 
$\mathbb{E}(Y|X) = 1$ if $X \in A_1\cup A_4$ and 
$0$ if $X\in A_2\cup A_3.$ Then $c_3=0$.
\end{example}

%The proofs of the bounds on $c_3$ in Examples 1-2 are in the Appendix.

\begin{condition}\label{condition 2}
The weighted probability measure $\mathbb{P}$ has density $\rho(X', X'')$ in $\Omega$. Moreover, 
$\rho(X', X'') = \rho_{1}(X') \rho_{2}(X'') + \rho_{3}(X', X''),$
where for all $(X', X'')\in \Omega$, 
$\rho_{3}(X', X'') \le c_4 \rho(X', X''),$
with $c_4\in [0,1]$ a constant.
\end{condition}

Condition \ref{condition 2} asserts that the joint density of $X', X''$ can be decomposed into an independent component plus a dependent component, where the dependent component is dominated by  the overall density multiplied by a constant $c_4$. This condition controls the dependence between $X'$ and $X''$. Since given $X'$, $Y$ is independent of $X''$, this condition will also control the dependence between $Y$ and $X''$. The strength of the condition depends on the constant $c_4$. When $c_4=1$, the condition imposes no restrictions; when $c_4=0$, the condition asserts $X'$ and $X''$ are completely independent.\par{}

Using Conditions \ref{condition 1}-\ref{condition 2}, we establish feature selection consistency in Theorem \ref{feature selection}.
\begin{theorem}[Feature selection consistency]\label{feature selection}
Assume conditions of Theorem 1 are satisfied, Condition \ref{condition 1}, \ref{condition 2} hold with $c_3>(1-c_4)/[c_4(2-c_4)]$ and $\lambda_n\ge c_5 n^{-(1/2-\beta)}$ for some constant $c_5>0$, $\beta\in (0, 1/2)$.
If the optional steps in Algorithm 1 are enabled, we have
$$\lim_{n\to\infty}\mathbb{P}(\hat{T}_n \;\text{does not have splits in}\; X[j_{q+1}], X[j_{q+2}], \cdots X[j_d]) = 1.$$
\end{theorem}
%The proof of Theorem \ref{feature selection} is in the Appendix. 
We would like to discuss two issues about the conditions of Theorem \ref{feature selection}. First, a weak condition $\lambda_n \gg n^{-1/2}$ is required by this theorem. Combined with the requirement that $\lambda_n\to 0$ as $n$ goes to infinity for Theorem 1, $\lambda_n$ can take values between $1$ and $n^{-1/2}$. In our experiments, $\lambda_n$ is taken to be $O(n^{-1/3})$ and $\beta=1/6$. Second, conditions 1 and 2 are complementary in Theorem \ref{feature selection}. If $c_4$ is smaller (i.e., condition 2 is stronger), $c_3$ can be smaller (i.e., condition 1 is weaker). The opposite also holds. The following two corollaries cover two special cases.
%in one case we assume the impurity decrease in non-redundant features equals to the oracle impurity decrease while imposes no assumption on density of features; in the other case we assume no conditions on impurity decrease at non-redundant features while $X'$ must be independent of $X''$. They are stated in the following two corollaries.

\begin{corollary}\label{feature selection monotonic}
Assume conditions of Theorem 1 are satisfied, condition \ref{condition 1} is satisfied with $c_3=1$ (that is, in each hyper-rectangle, the maximal impurity decrease at non-redundant features is equal to the oracle impurity decrease) and $\lambda_n\ge c_5 n^{-(1/2-\beta)}$ for some constant $c_5>0$, $\beta\in (0, 1/2)$.
If the optional steps in Algorithm 1 are enabled, we have
$$\lim_{n\to\infty}\mathbb{P}(\hat{T}_n \;\text{does not have splits in}\; X[j_{q+1]}, X[j_{q+2}], \cdots X[j_d]) = 1.$$
\end{corollary}
%Corollary \ref{feature selection monotonic} can be proven by slightly modifying the proof of theorem \ref{feature selection}; refer to the Appendix.

\begin{corollary}\label{feature selection independent}
Assume conditions of Theorem 1 are satisfied, $X'$ and $X''$ are independent and $\lambda_n\ge c_5 n^{-(1/2-\beta)}$ for some constant $c_5>0$, $\beta\in (0, 1/2)$.
If the optional steps in Algorithm 1 are enabled, we have
$$\lim_{n\to\infty}\mathbb{P}(\hat{T}_n \;\text{does not have splits in}\; X[j_{q+1]}, X[j_{q+2}], \cdots X[j_d]) = 1.$$
\end{corollary}
Corollary \ref{feature selection independent} is a direct result of theorem \ref{feature selection} with $c_4=0$ and an arbitrary value of $c_3$.

\subsection{Proof of Theorem \ref{feature selection}}
The proof of Theorem \ref{feature selection} mainly consists of two parts. The first part works on the true distribution $\mathbb{P}$, proving that under $\mathbb{P}$, the split with highest impurity decrease is always in non-redundant features. The second part works on the randomness brought by $\mathbb{P}_n$, proving that with high probability, the randomness of the tree impurity decrease can be controlled by the threshold $c_0\lambda_n$. Combining these two parts with Theorem 1, we can show the probability of rejecting all redundant features in the tree building procedure goes to zero. We now focus on the first part.

\begin{lemma}\label{theoretical exclude}
Under the conditions of Theorem \ref{feature selection}, for all hyperrectangles $A\subset \Omega$, we have 
    $\sup_{1\le j\le q} M_{A,j} \ge \sup_{q+1\le l\le d} M_{A,l},$
where $M_{A,j}$ is the maximal impurity decrease of feature $j$ at node $A$ defined in equation (\ref{Mj}).
\end{lemma}
\begin{proof}[Proof of Lemma \ref{theoretical exclude}]
We utilize $M_A^*$ as a bridge to compare the values between $\sup_{1\le j\le q} M_{A,j}$ and $\sup_{q+1\le l\le d} M_{A,l}$. By condition 1, we have $\sup_{1\le j\le q} M_{A,j} \ge c_3 M_A^*$. It remains to compare the values between $\sup_{q+1\le l\le d} M_{A,l}$ and $M_A^*$. 
Suppose hyperrectangle $A$ is split into two measurable (but not necessarily hyperrectangular) sets $A_1$ and $A_2$, such that $A_1\cap A_2 =\emptyset$ and $A_1\cup A_2 = A$. 
% For simplicity of notation, $\forall X\in A$, we denote $p(X) = \mathbb{E}(Y|X)$. Recall for all $A\subset \Omega$, we let $p(A)=\mathbb{E}(Y|X\in A)$. Further denote $p_{1} = \mathbb{E}(Y|X\in A_1)$, $p_{2} = \mathbb{E}(Y|X\in A_2)$, 
Denote $V_1 = \mathbb{P}(A_1|A)$, $V_2 = \mathbb{P}(A_2|A)$.
The impurity decrease on node $A$ can be computed as
\begin{equation}\label{p1 p2 1}
\Delta I(A,\mathbb{P}) = 2\eta(A)(1-\eta(A)) - 2V_1 \eta(A_1)(1-\eta(A_1)) - 2V_2 \eta(A_1)(1-\eta(A_2)).
\end{equation}
By definition of conditional expectation, we also have
\begin{equation}\label{p1 p2 2}
V_1 \eta(A_1) + V_2\eta(A_2)  = p(A).
\end{equation}
Combining equation (\ref{p1 p2 1}), (\ref{p1 p2 2}) with the fact that $V_1 + V_2 = 1$, we have
\begin{equation}\label{p1 p2 3}
\Delta I(A,\mathbb{P}) = V_1 V_2 (\eta(A_1) - \eta(A_2))^2.
\end{equation}
By equation (\ref{p1 p2 3}), when $V_1, V_2$ are fixed, the impurity decrease on $A$ is proportional to the squared difference between $\eta(A_1)$ and $\eta(A_2)$. Let $A_{y1}, A_{y2}$ be a pair of split sets that achieves impurity decrease $h_A^*(V_1)$; i.e., 
$$\mathbb{P}(A_{y_1}) = V_1\mathbb{P}(A),\; A_{y_1}\cap A_{y_2} = \emptyset, \;A_{y_1}\cup A_{y_2} =A,$$
$$\forall X_1\in A_{y_1}, X_2\in A_{y_2}, \; \eta(X_1) \le \eta(X_2).$$
Let $y_{1}, y_{2}$ be the conditional expectation of $Y$ at $A_{y_1}, A_{y_2}$ under $\mathbb{P}$:
$$y_{1} = \mathbb{E}(Y|X\in A_{y1}), \quad y_{2} = \mathbb{E}(Y|X\in A_{y2}).$$
The critical value of conditional expectation (under $\mathbb{P}$) between $A_{y_1}, A_{y_2}$ is denoted as $y_t$:
$$y_t\in[0,1]: \;\; \sup_{X\in A_{y_1}} \eta(X) \le y_t \le \inf_{X\in A_{y_2}} \eta(X).$$
In the rest of the proof, supppose we split at a redundant feature to obtain $A_1$, $A_2$. For all fixed $V_1\in[0,1]$, we compare $|\eta(A_1)-\eta(A_2)|$ and $|y_{1}-y_{2}|$, thus comparing the values between $M_{A,l}$ and $M_A^*$, $\forall l\ge q+1$. For hyperrectangle $A$, we use $\mr{Proj}(A,X')$, $\mr{Proj}(A,X'')$ to denote the projection of $A$ on $X', X''$, respectively. By definition, $A = \mr{Proj}(A,X')\times \mr{Proj}(A,X'')$. We first consider the properties of $\eta(A_1)$ and $\eta(A_2)$.
\begin{align*}
\eta(A_1)V_1\mathbb{P}(A) = & 
					\int_{A_1}\rho_{1}(X')\rho_{2}(X'')\eta(X)dX'dX'' + \int_{A_1}\rho_{3}(X',X'')\eta(X)dX'dX'' \\
			= & \frac{\int_{\mr{Proj}(A_1,X')}\eta(X)\rho_{1}(X')dX'}{\int_{\mr{Proj}(A_1,X')} \rho_{1}(X')dX'} 
					\int_{\mr{Proj}(A_1,X')} \rho_{1}(X')dX' \int_{\mr{Proj}(A_1,X'')} \rho_{2}(X'')dX'' + \\ 
				& \frac{\int_{A_1} \rho_{3}(X',X'')\eta(X)dX'dX''}{\int_{A_1} \rho_{3}(X',X'')dX'dX''} \int_{A_1} \rho_{3}(X',X'')dX'dX'' 
\end{align*}
Denote
$$y_A \triangleq \frac{\int_{\mr{Proj}(A,X')}\eta(X)\rho_{1}(X')dX'}{\int_{\mr{Proj}(A,X')} \rho_{1}(X')dX'},$$
$$\gamma_1 = \int_{A_1} \rho_{3}(X',X'')dX'dX'' / \mathbb{P}(A_1).$$
Noting $\mr{Proj}(A_1,X') = \mr{Proj}(A,X')$, $\eta(X)$ is independent of $X''$ and $y_A$ being well-defined, we have
\begin{align*}
\eta(A_1)V_1\mathbb{P}(A) 
			= & \left[y_A (1-\gamma_1) + \frac{\int_{A_1} \rho_{3}(X',X'')\eta(X)dX'dX''}{\int_{A_1} \rho_{3}(X',X'')dX'dX''} \gamma_1\right] V_1\mathbb{P}(A).
\end{align*}
Therefore,
\begin{align*}
[\eta(A_1) - (1-\gamma_1)y_A - \gamma_1 y_t] V_1\mathbb{P}(A) 
		= & \int_{A_1} \rho_{3}(X',X'')[\eta(X) - y_t]dX' dX''.
\end{align*}
Similarly, define
$$\gamma_2 = \int_{A_2} \rho_{3}(X',X'')dX'dX'' / \mathbb{P}(A_2),$$
we have
$$[\eta(A_2) - (1-\gamma_2)y_A - \gamma_2 y_t] V_1\mathbb{P}(A) 
		= \int_{A_2} \rho_{3}(X',X'')[\eta(X) - y_t]dX' dX''.$$
Without loss of generality, we assume $\eta(A_1) - (1-\gamma_1)y_A - \gamma_1 y_t \le \eta(A_2) - (1-\gamma_2)y_A - \gamma_2 y_t$. Then
\begin{align*}
[\eta(A_1) - (1-\gamma_1)y_A - \gamma_1 y_t]V_1\mathbb{P}(A)
		= & \int_{A_1} \rho_{3}(X',X'')[\eta(X) - y_t]dX' dX'' \\
		\ge & \int_{A_1} \rho_{3}(X',X'')\min\{\eta(X) - y_t, 0\}dX' dX'' \\
		\ge & \int_A \rho_{3}(X',X'')\min\{\eta(X) - y_t, 0\}dX' dX'' \\
		= & \int_{\{X\in A: \eta(X)< y_t\}} \rho_{3}(X',X'')[\eta(X) - y_t]dX' dX'' \\
		\ge & c_4\int_{\{X\in A: \eta(X)< y_t\}} \rho_(X',X'')[\eta(X) - y_t]dX' dX''
\end{align*}
By the definition of $y_t$, 
$$\{X\in A:\eta(X)< y_t\} \subset A_{y_1} \subset \{X\in A: \eta(X)\le y_t\}.$$
Therefore 
\begin{align*}
[\eta_{1} - (1-\gamma_1)y_A - \gamma_1 y_t]V_1\mathbb{P}(A)
		\ge & c_4\int_{A_{y_1}} \rho_{3}(X',X'')[\eta(X) - y_t]dX' dX'' \\
		= & c_4\mathbb{P}(A) V_1 (y_{1} - y_t),
\end{align*}
Thus
\begin{equation}\label{step 3 1}
\eta(A_1) - (1-\gamma_1)y_A - \gamma_1 y_t \ge c_4(y_{1} - y_t).
\end{equation}
Similarly,  we have
\begin{equation}\label{step 3 2}
\eta(A_2) - (1-\gamma_2)y_A - \gamma_2 y_t \le c_4(y_{2} - y_t).
\end{equation}
We then consider the difference between $y_A$ and $y_t$. 
\begin{align*}
 & |y_A-y_t|\int_A \rho_{1}(X')\rho_{2}(X'') dX' dX'' \\
		= &\left| \int_A [\eta(X)-y_t]\rho_{1}(X')\rho_{2}(X'')  dX' dX''\right| \\
		\le & \int_A |\eta(X)-y_t| \rho_{1}(X')\rho_{2}(X'')  dX' dX'' \\
		\le & \int_A |\eta(X)-y_t| \rho(X',X'')  dX' dX'' \\
		= & \int_{A_{y_1}} |\eta(X)-y_t| \rho(X',X'')  dX' dX'' +  \int_{A_{y_2}} |\eta(X)-y_t| \rho(X',X'')  dX' dX''\\ 
		= & \left|\int_{A_{y_1}} [\eta(X)-y_t] \rho(X',X'')  dX' dX''\right| +  \left|\int_{A_{y_2}} [\eta(X)-y_t] \rho(X',X'')  dX' dX''\right| \\ 
		= & \left(|y_{1}-y_t|V_1 + |y_{2}-y_t| V_2\right)\mathbb{P}(A).
\end{align*}
Because 
$$\int_A \rho_{1}(X')\rho_{2}(X'') dX' dX'' \ge \int_A (1-c_4) dX' dX'' \ge (1-c_4)\mathbb{P}(A),$$
$$V_1<1, \;\; V_2<1,$$
we have
\begin{equation}\label{step 3 3}
|y_A-y_t|\le \left(|y_{1}-y_t| + |y_{2}-y_t| \right)\frac{1}{1-c_4} = \frac{|y_{1}-y_{2}|}{1-c_4}.
\end{equation}
Combining equations (\ref{step 3 1}), (\ref{step 3 2}) and (\ref{step 3 3}), we have
\begin{align}\label{step 3 4}
|\eta(A_2) - \eta(A_1)| 
	= & \left|[\eta(A_2) - (1-\gamma_2)y_A - \gamma_2 y_t] - [\eta(A_1) - (1-\gamma_1)y_A - \gamma_1 y_t] + (\gamma_1 - \gamma_2)(y_A - y_t)\right|\nonumber \\
	\le  & |[\eta(A_2) - (1-\gamma_2)y_A - \gamma_2 y_t] - [\eta(A_1) - (1-\gamma_1)y_A - \gamma_1 y_t]| + |(\gamma_1 - \gamma_2)(y_A - y_t)| \nonumber \\
	\le & c_4(y_{2}-y_{1}) + c_4 \frac{|y_{1}-y_{2}|}{1-c_4} \nonumber \\
	\le & \frac{c_4(2-c_4)}{1-c_4}|y_{2} - y_{1}|.
\end{align}
Because equation (\ref{step 3 4}) holds for all $V\in [0,1]$, recalling equation (\ref{p1 p2 3}), we have 
$$M_{A,l} \le \frac{c_4(2-c_4)}{1-c_4} M_A^*, \;\; \forall l\ge q+1.$$
Recalling that $c_3 > \frac{c_4(2-c_4)}{1-c_4}$, we finish the proof.
\end{proof}
We now work on the second part, proving the randomness in tree impurity decrease can be bounded by $c_0\lambda_n$ with high probability. Using the same notations as in Technical Lemma \ref{theoretical consistent}, we denote $\Delta I(A, (z,j),\mathbb{P})$ as the impurity decrease after splitting node $A$ at $X[j]=z$.

\begin{lemma}\label{noise thre}
Suppose $\lambda_n \ge n^{-1/2+\beta}$ and the density of $X$ exists, Then for all for all hyperrectangles $A\subset \Omega$, we have the following inequality regarding impurity decrease under true and empirical measure:
% Suppose $\lambda_n \ge n^{-1/2+\beta}$ and the density of $X$ exists, then for all hyperrectangles $A\subset \Omega$, let $A_1, A_2$ be the two child nodes of $A$ obtained by an arbitrary split, and let $n'$ be the number of samples in $A$. To highlight the dependence of impurity decrease and the child node $A_1, A_2$, denote the impurity decrease on node $A$ after spliting $A$ into $A_1, A_2$, under measure $\mathbb{P}$ and $\mathbb{P}_n$, as $\Delta I(A,A_1,\mathbb{P})$ and $\Delta I(A,A_1,\mathbb{P}_n)$, respectively. Then we have
$$\lim_{n\to\infty}\mathbb{P}\left(\sup_{(z,j)} |\Delta I(A,(z,j),\mathbb{P}) - \Delta I(A, (z,j),\mathbb{P}_n)|>\frac{c_0\lambda_n n}{2n'}\right) \le \left[\frac{8}{c} n^{1/2-\beta} + 2\right] \exp(-c^2n^\beta),$$
where the $c=\frac{c_0}{44}$.
\end{lemma}
\begin{proof}[Proof of Lemma \ref{noise thre}]
Since $0\le \Delta I(A, (z,j),\mathbb{P}) , \Delta I(A, (z,j),\mathbb{P}_n) \le 1$, the lemma automatically holds when $c_0\lambda_n n/n'>1$. We only need to consider the case when $c_0\lambda_n n/n'\le1$. The proof can be divided into three steps. In the first step, we define an $\epsilon$-net on the space of all possible splits $(z,j)$, with the distance being the symmetric set difference. We prove this $\epsilon$-net also corresponds to a $5\epsilon$-net of $\Delta I(A,(z_j),\mathbb{P})$, while with high probability corresponds to a $10\epsilon$-net of $\Delta I(A,(z,j),\mathbb{P}_n)$. In the second step, we show with high probability, the difference between $\Delta I(A,(z,j),\mathbb{P})$ and $\Delta I(A,(z,j),\mathbb{P}_n)$ can be bounded uniformly for all $(z,j)$ in the $\epsilon$-net. The last step combines the results of the previous two steps, calculating tree impurity decrease from impurity decrease at node $A$, giving a uniform bound of $ |\Delta I(T,\mathbb{P}) - \Delta I(T,\mathbb{P}_n)|$. 

\paragraph{Step 1}
Denote the hyperrectangle $A$ as $A = [x_{1,l}, x_{1,r}]\times[x_{2,l}, x_{2,r}]\times \cdots \times[x_{d,l},x_{d,r}]$. 
% Without loss of generality, we assume $(x_{1,l}, x_{2,l},\ldots, x_{d,l})\in A_1$, i.e., after a split at $A$, $A_1$ always contains the ``left corner'' of $A$. Let $n'$ be the number of training samples inside $A$. Let $\epsilon = c\lambda_n n/n'$, and $m = \lceil 1/\epsilon \rceil$. 
For each feature $X[j]$, define a series of points $x_{j,0}< x_{j,1}< \ldots < x_{j,m}$ such that $x_{j,0} = x_{j,l}$, $x_{j,m} = x_{j,r}$,
$$\mathbb{P}(X[j]\in [x_{j,k},x_{j,k+1}] |X\in A) = 1/m, \;\;\forall 0\le k \le m-1.$$
That is, $x_{j_1}, x_{j_2}, \ldots , x_{j_{m-1}}$ are all the $m$th quantiles of the marginal probability measure of $X[j]$. 
Define a collection of splits on $A$ as
$$\mathscr{S} = \big\{ (z,j): 1\le j\le d, z \in\{x_{j_1}, x_{j_2}, \ldots , x_{j_{m-1}}\} \big\}.$$
% Define a collection of subsets of $A$ as:
% $$\mathscr{A} = \big\{A = [x_{1,l}, x_{1,r}]\times[x_{2,l}, x_{2,r}]\times \cdots \times[x_{j-1,l}, x_{j-1,r}]
% \times[x_{j,0},x_{j,k}]\times[x_{j+1,l}, x_{j+1,r}]\times \cdots \times[x_{d,0},x_{d,1}]: \; 1\le j\le d, 1\le k \le m-1\big\}.$$
% That is, $\mathscr{A}$ contains all the possible $A_1$ that are obtained by spliting at $X[j] = x_{j,k}, \; 1\le j\le d, 1\le k \le m-1$. By the definition of $x_{j,k}$, for all $A_1$ that are obtained by a single split at $A$, there exists $A_1'\in\mathscr{A}$, such that $\mathbb{P}(A_1\Delta A_1'|A) \le 1/m \le \epsilon$. 
Suppose $A_1$ is the left leaf node after splitting $A$ at $X[j]=z$ with $(z,j)\in\mathscr{S}$, then by definition of $\mathscr{S}$, there exists $(z',j)\in\mathscr{S}$ splitting $A$ with left leaf node $A'_1$, such that $\mathbb{P}(A'\Delta A'_1)\le 1/m\le \epsilon$. Thus by technical lemma \ref{nodes 2 impu}, the set 
$$\{\Delta I(A,(z,j),\mathbb{P}): (z,j)\in\mathscr{S}\}$$
forms a $5 \epsilon$-net in the space of $\Delta I(A,(z,j),\mathbb{P})$.
Define another collection of subsets of $A$ as:
\begin{align*}
\mathscr{B} = \big\{A = & [x_{1,l}, x_{1,r}]\times[x_{2,l}, x_{2,r}]\times \cdots   \times[x_{j-1,l}, x_{j-1,r}]
\times[x_{j,k},x_{j,k+1}]\times[x_{j+1,l}, x_{j+1,r}]\times \cdots \times[x_{d,0},x_{d,1}]: \\
 & \; 1\le j\le d, 0\le k \le m-1\big\}.    
\end{align*}
We have for all $B\in\mathscr{B}$, $\mathbb{P}(B|A) = 1/m<\epsilon$. By Hoeffding's inequality, $\forall B\in \mathscr{B}$,
$$\mathbb{P}\left(\Big|\frac{n\mathbb{P}_n(B)}{n'}-\frac{1}{m}\Big| \ge \epsilon\right) \le 2\exp(-2n'\epsilon^2) = 2\exp(-2c^2\lambda_n^2 n^2/n') \le 2\exp(-2c^2\lambda_n^2 n)\le 2\exp(-2c^2n^{2\beta}).$$
Because $1/\epsilon = (c\lambda_n n/n' )^{-1}\le n^{1/2-\beta}/c$, applying a union bound for all the $B\in\mathscr{B}$, we have with probability greater than $1-2n^{1/2-\beta} \exp(-2c^2n^{2\beta})/c$ that
$$\sup_{B\in\mathscr{B}}\mathbb{P}_n(B|A) \le 2\epsilon.$$
Therefore, by Technical Lemma \ref{nodes 2 impu}, the set 
$$\{\Delta I(A,(z,j),\mathbb{P}_n): (z,j)\in\mathscr{S}\}$$
forms a $10\epsilon$-net in the space of $\Delta I(A,(z,j),\mathbb{P}_n)$.

\paragraph{Step 2}
% Denote $p_{1} = \mathbb{E}(Y|X\in A_1)$, $\hat p_{1}=\mathbb{E}_{n}(Y|X\in A_1)$, $p_{2} = \mathbb{E}(Y|X\in A_2)$, $\hat p_{2}=\mathbb{E}_{n}(Y|X\in A_2)$, $p = \mathbb{E}(Y|X\in A)$, $\hat p=\mathbb{E}_{n}(Y|X\in A)$, 
Recall for set $A\subset\Omega$, we denote $\eta(A) = \mathbb{E}(Y|A)$.
Denote $A_1=\{X\in A: X[j]<z\}$ and $A_2=\{X\in A: X[j]\ge z\}$,
% For split $(z,j)$, for short of notation, denote $\eta_1=\mathbb{E}(Y|X\in A, X[j]<z)$ and $\eta_2=\mathbb{E}(Y|X\in A, X[j]\ge z)$. Denote their empirical counterpart as $\eta_{n,1}$ and $\eta_{n,2}$.
then we have
\begin{align}\label{Delta I decompose}
 & |\Delta I(A, (z,j),\mathbb{P}) - \Delta I(A, (z,j),\mathbb{P}_n)| \nonumber\\
	= & \big|2\eta(A)(1-\eta(A)) - 2\eta(A_1)(1-\eta(A_1)) \mathbb{P}(A_1|A) - 2\eta(A_2)(1-\eta(A_2))\mathbb{P}(A_2|A) \nonumber  \\
		& - 2\eta_n(A)(1-\eta_n(A)) + 2\eta_n(A_1)(1-\eta_n(A_1)) \mathbb{P}_{n}(A_1|A) + 2\eta_n(A_2)(1-\eta_n(A_2))\mathbb{P}_{n}(A_2|A) \big|\nonumber \\
	\le & 2|(\eta(A)-\eta_n(A))(1-\eta(A)-\eta_n(A))| + \nonumber\\
		& 2|\eta(A_1)(1-\eta(A_1)) \mathbb{P}(A_1|A) - \eta_n(A_1)(1-\eta_n(A_1)) \mathbb{P}_{n}(A_1|A)| + \nonumber\\
		& 2|\eta(A_2)(1-\eta(A_2)) \mathbb{P}(A_2|A) - \eta_n(A_2)(1-\eta_n(A_2)) \mathbb{P}_{n}(A_2|A)| \nonumber\\
	\le & 2|\eta(A) - \eta_n(A)| + \frac{1}{2}|\mathbb{P}(A_1|A) - \mathbb{P}_{n}(A_1|A)| + 
				2\mathbb{P}(A_1|A)|\eta(A_1) - \eta_n(A_1)| + \nonumber\\
		& \frac{1}{2}|\mathbb{P}(A_2|A) - \mathbb{P}_{n}(A_2|A)| + 
				2\mathbb{P}(A_2|A)|\eta(A_2) - \eta_n(A_2))|
\end{align}
For all $A_1\in\mathscr{A}$, by Hoeffding's inequality, 
\begin{equation}\label{PnA}
\mathbb{P}\left(\Big|\frac{n\mathbb{P}(A_1)}{n'}-\frac{\mathbb{P}(A_1)}{\mathbb{P}(A)}\Big| \ge \epsilon\right) \le 2\exp(-2n'\epsilon^2) \le 2\exp(-2c^2n^{2\beta}).
\end{equation}
Noting $\mathbb{P}(A_1|A) - \mathbb{P}_{n}(A_1|A) + \mathbb{P}(A_2|A) - \mathbb{P}_{n}(A_2|A) = 1-1=0$, we have $|\mathbb{P}(A_2|A) - \mathbb{P}_{n}(A_2|A)|=|\mathbb{P}(A_1|A) - \mathbb{P}_{n}(A_1|A)|$. Thus the same bound as equation (\ref{PnA}) applies to $|\mathbb{P}(A_2|A) - \mathbb{P}_{n}(A_2|A)|$. 
It remains to bound the term $\mathbb{P}(A_1|A)|\eta(A_1) - \eta_n(A_1)|$ and $\mathbb{P}(A_2|A)|\eta(A_2) - \eta_n(A_2))|$. 
For all $A_1\in\mathscr{A}$, let $\epsilon_{A_1} = \epsilon/\sqrt{\mathbb{P}(A_1|A)}$, then by Hoeffding's inequality,
$$
\mathbb{P}\left(\big|\mathbb{E}_n(Y|A_1) -\mathbb{E}(Y|A_1)\big| \ge \epsilon_{A_1}\right) \le 2\exp(-2n'\mathbb{P}(A_1|A)\epsilon^2_{A_1}) = 2\exp(-2n'\epsilon^2) \le 2\exp(-2c^2n^{2\beta}).$$
Therefore 
\begin{align}\label{p1alpha}
\mathbb{P}\left(\mathbb{P}(A_1|A)|\eta(A_1) - \eta_n(A_1)| \ge \epsilon\right)   
  %  & \le \mathbb{P}\left(\mathbb{P}(A_1|A)|\eta(A_1) - \eta_n(A_1)| \ge \epsilon\right) \nonumber \\
    & \le \mathbb{P}\left(\sqrt{\mathbb{P}(A_1|A)}|\eta(A_1) - \eta_n(A_1)| \ge \epsilon\right) \nonumber \\
    & = \mathbb{P}\left(|\eta(A_1) - \eta_n(A_1)| \ge \epsilon_{A_1}\right) \nonumber \\
    & \le 2\exp(-2c^2n^{2\beta}).
\end{align}
Similarly, we have
\begin{equation}\label{palpha}
\mathbb{P}\left(\mathbb{P}(A_2|A)|\eta(A_2) - \eta_n(A_2))| \ge \epsilon\right) \le 2\exp(-2c^2n^{2\beta}),\;\;\;
\mathbb{P}\left(\big|\eta_n(A) - \eta(A)\big| \ge \epsilon\right) \le 2\exp(-2c^2n^{2\beta}).
\end{equation}
Combining equation (\ref{Delta I decompose}), (\ref{PnA}), (\ref{p1alpha}) and (\ref{palpha}), and applying a union bound, we have with probability greater than 
$$1-\Big[\frac{6}{c}n^{1/2-\beta}+2\Big] \exp(-2c^2n^{2\beta}),$$
for all $(z,j)\in\mathscr{S}$, the difference between $\Delta I(A,(z,j),\mathbb{P})$ and $\Delta I(A,(z,j),\mathbb{P}_n)$ is bounded by
$$\sup_{(z,j)\in\mathscr{S}}|\Delta I(A,(z,j),\mathbb{P})-\Delta I(A,(z,j),\mathbb{P}_n)|\le  (2+1/2+2+1/2+2)\epsilon = 7\epsilon.$$

\paragraph{Step 3}
Let $\sigma(n)$ be
\begin{equation}\label{sigma n}
\sigma(n) = \Big[\frac{8}{c}n^{1/2-\beta}+2\Big] \exp(-c^2n^\beta).
\end{equation}
Combining the results of step 1 and step 2, with probability greater than $1-\sigma(n)$, $\forall (z,j)$, there exists $(z',j)\in\mathscr{S}$, such that the following events hold simultaneously
$$|\Delta I(A,(z,j),\mathbb{P}) - \Delta I(A,(z',j),\mathbb{P})| < 5\epsilon, $$
$$|\Delta I(A,(z,j),\mathbb{P}_n) - \Delta I(A,(z',j),\mathbb{P}_n)| < 10\epsilon, $$
$$|\Delta I(A,(z',j),\mathbb{P}) - \Delta I(A,(z',j),\mathbb{P}_n)| < 7\epsilon, $$
Therefore
$$
|\Delta I(A,A_1,\mathbb{P}) - \Delta I(A,A_1,\mathbb{P}_n)| < 22\epsilon.$$
Letting $c=\frac{c_0}{44}$ and noticing $\lim_{n\to\infty}\sigma(n) = 0$, we finish the proof.
\end{proof}

We are now prepared to prove Theorem \ref{feature selection}.
\begin{proof}[Proof of Theorem \ref{feature selection}]
Let $T_k^*$ be the theoretical tree with $k$ leaf nodes defined in Algorithm 2 of the supplementary material, and let $\hat{T}_{n,k}$ be the tree consisting of the first $k$ splits of tree $\hat{T}_n$. The proof of theorem \ref{feature selection} consists of two steps. In the first step, we show there exists $k_0\in\mathbb{N}$, such that as $n$ goes to infinity, the probability of $\hat{T}_{n,k_0}$ including all non-redundant features and excluding all redundant features goes to one. In the second step, we show if all the non-redundant features are already included in the tree, the probability of including any new redundant features in all the following split procedures goes to zero. 

We begin with the first step. By Lemma \ref{theoretical exclude}, the theoretical tree $T_k^*$ will not include any redundant features. We now show all non-redundant features are included in $T_k^*$ provided $k$ is large enough. Let $f_k^*$ be the classifier associated with tree $T_k^*$. Let $\psi_k(x)$ be the leaf node of $T_k^*$ that contains $x$, by Technical Lemma \ref{theoretical consistent}, we have $\lim_{k\to\infty}\eta(\psi_k(x)) =\eta(x), \forall x\in \Omega$. Suppose there exists a non-redundant feature $X[j]$, such that $X[j]$ is excluded in splits of $T_k^*,\; \forall k\in\mathbb{N}$. Then let $\psi_\infty(x) = \lim_{k\to\infty}\psi_k(x)$, write $\psi_\infty(x)=[a_{\infty,1}(x),b_{\infty,1}(x)]\times[a_{\infty,2}(x),b_{\infty,2}(x)]\times\cdots\times[a_{\infty,d}(x),b_{\infty,d}(x)]$ and denote the projection of $\psi_\infty(x)$ into $X[-j]$ as $\psi_\infty^{(-j)}(x)$, we have $\psi_\infty(x)=\psi_\infty^{(-j)}(x)\times[0,1]$.
By Technical Lemma \ref{theoretical consistent}, $\forall x\in\Omega$, $\eta(x)$ is constant in $\psi_\infty(x)$. Therefore when $X[-j]\in\psi_\infty^{(-j)}(x)$, $\mathbb{E}(Y|X[-j]) = \mathbb{E}(Y|X)$. 
Noting $\cup_{x\in\Omega}\psi_\infty(x)=\Omega$, we have $\cup_{x\in\Omega}\psi_\infty^{(-j)}(x)=[0,1]^{d-1}$. 
% Noting the above argument hold for all $x\in\Omega$ 
Thus for all $X[-j]\in[0,1]^{d-1}$, $\mathbb{E}(Y|X[-j]) = \mathbb{E}(Y|X)$. This contradicts with the assumption that $j$th feature is a non-redundant feature.
% Then by definition of theoretical tree, given $X[-j]$, $e_k(X)$ is constant with respect to $X[j]$. Thus we have $\mathbb{E}(e_k(X)|X[-j]) = e_k(X),\;\forall k\in \mathbb{N}$. Because $|e_k(X)|$ is upbounded by $1$, by dominated convergence theorem, we have $$\mathbb{E}(Y|X) = \lim_{k\to\infty}e_k(X) = \lim_{k\to\infty}\mathbb{E}(e_k(X)|X[-j]) = \mathbb{E}(\mathbb{E}(Y|X)|X[-j]) = \mathbb{E}(Y|X[-j]),\; a.s.$$
% where the last inequality uses the fact that the sigma algebra of $X[-j]$ is contained in the sigma algebra of $X$. Hence we have $\mathbb{E}(Y|X[-j]) = \mathbb{E}(Y|X)$. This contradicts the fact that $X[j]$ is a non-redundant feature. 
Therefore, for all non-redundant features $X[j]$, there exists some $k_j\in\mathbb{N}$, such that feature $X[j]$ is included in splits of $T_k^*$ for $k\ge k_j$.  Since the number of features is finite, there exists $k_0\in\mathbb{N}$, such that $T_{k_0}^*$ includes all non-redundant features. 
By Technical Lemma 4 in the supplementary material, as $n$ goes to infinity, with probability tending to one, $\hat T_{n,k_0}$ includes all non-redundant features while excluding all redundant features.

\par{}
It remains to show no redundant features will be included after the $k_0$th split. For the $k'$th ($k_0<k'\le \bar{a}_n$) split, which splits node $A$ into $A_1, A_2$, denote $\Delta I_{n}(T,\mathbb{P}_n)$ and $\Delta I_{r}(T,\mathbb{P}_n)$ as the maximal tree impurity decrease measured by $\mathbb{P}_n$ when the split is in a non-redundant feature and redundant feature, respectively. Then by Algorithm 1, the split at a redundant feature will be rejected if
$\Delta I_{r}(T,\mathbb{P}_n) \le \Delta I_{n}(T,\mathbb{P}_n) + c_0\lambda_n.$
Similarly, denote $\Delta I_{n}(A,\mathbb{P}_n)$ and $\Delta I_{r}(A,\mathbb{P}_n)$ as the maximal impurity decrease of node $A$ measured by $\mathbb{P}_n$ when the split is in a non-redundant feature and redundant feature, respectively. Further denote $\Delta I_{n}(A,\mathbb{P})$ and $\Delta I_{r}(A,\mathbb{P})$ as the maximal impurity decrease of node $A$ when the probability measure is the true probability $\mathbb{P}$.
By lemma \ref{theoretical exclude}, for the impurity decrease on node $A$ measured under true probability $\mathbb{P}$, we have
$\Delta I_{r}(A,\mathbb{P}) \le \Delta I_{n}(A,\mathbb{P}).$
By lemma \ref{noise thre}, we have with probability greater than $1-\sigma(n)$,
$$|\Delta I_{n}(A,\mathbb{P}_n) - \Delta I_{n}(A,\mathbb{P})|\le \frac{c_0\lambda_n n}{2n'},$$
$$|\Delta I_{r}(A,\mathbb{P}_n) - \Delta I_{r}(A,\mathbb{P})|\le \frac{c_0\lambda_n n}{2n'}.$$
Therefore, we have with probability greater than $1-\sigma(n)$,
\begin{align*}
 & \Delta I_{r}(T,\mathbb{P}_n)-\Delta I_{n}(T,\mathbb{P}_n) \\
	 = & \mathbb{P}_{n}(A) [\Delta I_{r}(A,\mathbb{P}_n) -\Delta I_{n}(A,\mathbb{P}_n)] \\
	 \le & \mathbb{P}_{n}(A) [|\Delta I_{r}(A,\mathbb{P}_n) -\Delta I_{r}(A,\mathbb{P})| + 
	 \Delta I_{r}(A,\mathbb{P}) -\Delta I_{n}(A,\mathbb{P}) +
	 |\Delta I_{n}(A,\mathbb{P}) -\Delta I_{n}(A,\mathbb{P}_n)|] \\
		\le &  \frac{n'}{n} \Big(\frac{c_0\lambda_n n}{2n'} + 0 + \frac{c_0\lambda_n n}{2n'}\Big) \\
		\le & c_0\lambda_n.
\end{align*}
Therefore, for all $k'$ ($k_0<k'\le \bar{a}_n$), the probability of splitting at a redundant feature is smaller than $\sigma(n)$. From equation (\ref{sigma n}), we have $\lim_{n\to \infty} n\sigma(n) = 0$.
Since $\bar{a}_n\le n$, with probability tending to one, $\hat{T}_n$ will not split at redundant features.
\end{proof}

\subsection{Proof of Corollary 1}
We first prove the results of Lemma 4 still hold under the condition of corollary 1.
\begin{tech lemma}\label{theoretical exclude 2}
If Condition 1 holds with $c_3=0$, for all hyperrectangles $A\subset \Omega$, we have
    $$\sup_{1\le j\le q} M_{A,j} \ge \sup_{q+1\le l\le d} M_{A,l},$$
where $M_{A,j}$ is the maximal impurity decrease of feature $j$ at node $A$.
\end{tech lemma}
\begin{proof}[Proof of Technical Lemma \ref{theoretical exclude 2}]
By definition of $M_A^*$, we have $M_A^* \ge M_{A,l}$, $\forall q+1 \le l \le d$. Since $c_3 = 1$, we have $\sup_{1\le j\le q} M_{A,j} = M_A^*$, therefore $\sup_{1\le j\le q} M_{A,j} \ge M_{A,l},\; \forall q+1\le l \le d$.
\end{proof}
The proof of corollary 1 using Technical Lemma \ref{theoretical exclude 2} and Lemma 6 is the same as the proof of Theorem 2 using Lemma 5 and Lemma 6.

\subsection{Proof of $c_3$ Values in Examples 1-2}
\subsubsection{Example 1}
\begin{proof}

Using similar notation as in the proof of Theorem 2, let $\mathbb{P}(A_{y1})=V_1$, $\mathbb{P}(A_{y2})=V_2$, and $p = \mathbb{E}(Y|X\in A)$, $y_{1} = \mathbb{E}(Y|X\in A_{y1})$, $y_{2}=\mathbb{E}(Y|X\in A_{y2})$.
Without loss of generality, we assume $\mathbb{P}(A_{y2}) = h \le \mathbb{P}(A_{y1})$, and $y_{1} \le y_{2}$. Since $\mathbb{E}(Y|X)$ is a monotonically increasing function of $a^T+b$, there exists $t\in\mathbb{R}$, such that $\forall X_1\in A_{y1}, X_2\in A_{y2}$,  we have $a^T X_1 + b \le t \le a^T X_2 + b$.

Let $A_1, A_2$ be two hyperrectangles obtained by splitting at one of the features, which satisfies $\mathbb{P}(A_1) = V_1$ and $\mathbb{P}(A_2) = V_2$. The feature to split is selected by minimizing $\mathbb{P}(A_1\backslash A_{y1})$. Denote $\mathbb{P}(A_1\backslash A_{y1})$ as $V'$. 
% The expectation of $Y$ (under $\mathbb{P}$) in $A_1, A_2$ are denoted as $p_{1} = \mathbb{E}(Y|A_1)$, $p_{2} =  \mathbb{E}(Y|A_2)$. 
Denote the set difference as
$A_{y1}\backslash A_1 \triangleq B,\;\; A_1\backslash A_{y1} \triangleq C.$ 
% Further denote the expectation of $Y$ (under $\mathbb{P}$) in $B, C$ as
% $$\eta(B) = \mathbb{E}(Y|X\in B), \quad 
% \eta(C) = \mathbb{E}(Y|X\in C).$$
Because $\mathbb{P}(A_1) =V_1= \mathbb{P}(A_{y1})$ and $\mathbb{P}(A_2)=V_2=\mathbb{P}(A_{y2})$, we have
$$\eta(A_1) = y_{1} + (\eta(C)-\eta(B))\frac{V'}{V_1},\;\; \eta(A_2) = y_{2} + (\eta(B)-\eta(C))\frac{V'}{V_2}.$$
Therefore
\begin{equation}\label{p1-p2 1}
|\eta(A_2) - \eta(A_1)| \ge |y_{2}-y_{1}| - |\eta(B)-\eta(C)|\Big(\frac{V'}{V_1}+\frac{V'}{V_2}\Big).
\end{equation}
We first show $|\eta(B)-\eta(C)|\le |y_{2}-y_{1}|$. Because $\mathbb{E}(Y|X)$ is a monotonically increasing function of $a^T X + b$, we have $\eta(B)\le y_{2}$. We then compare $\eta(B)$ and $y_{1}$. Let $t_B = \inf_{X\in B}(a^T X + b)$ and $A_{y1}'$ be
$$A_{y1}' = \left\{ X\in A_{y1}: a^T X + b \ge t_B\right\}.$$
That is, $A_{y1}'$ is the subset of $A_{y1}$ whose $a^T X + b$ values are no smaller than the infimum of $a^T X+b$ values in $B$. Let $y_{1}' = \mathbb{E}(Y|X\in A_{y1}')$,
Since $\mathbb{E}(Y|X)$ is a monotonically increasing function of $a^T X + b$, we have $y_{1} \le y_{1}'$. Because the marginal of $\mathbb{P}$ is uniform on $\Omega$, we have
$$\eta(B) = \frac{1}{\mathbb{P}(B)}\int_{t_B}^t \phi(z) \rho_B(z) dz,$$
where
$$\rho_B(z) = \lim_{\Delta z\to 0}\frac{\mathbb{P}(X\in B: a^T X+b\in (z-\Delta z, z))}{\Delta z}.$$
Similarly, we have
$$y_{1}' = \frac{1}{\mathbb{P}(A_{y1}')}\int_{t_B}^t \phi(z) \rho_{1}'(z) dz,$$
where
$$\rho_{1}'(z) = \lim_{\Delta z\to 0}\frac{\mathbb{P}(X\in A_{y1}': a^T X+b\in (z-\Delta z, z))}{\Delta z}.$$
Both $\eta(B)$ and $y_{1}'$ are averages of $\phi(z)$ from $t_B$ to $t$, 
with the weights being $\rho_B(z)/\mathbb{P}(B)$ and $\rho_1'(z)/\mathbb{P}(A_{y1}')$, respectively. 
By the construction of sets $B$ and $A_{y1}'$,  we can see $[\rho_1'(z)/\mathbb{P}(A_{y1}')] / [\rho_B(z)/\mathbb{P}(B)]$ is decreasing at $(t_B, t)$. Thus we have $y_{1}'\le \eta(B)$. Recalling that $y_{1} \le y_{1}'$ and $\eta(B)\le y_{2}$, we have
$$y_{1}\le \eta(B) \le y_{2}.$$
Similarly,  we have $y_{1}\le \eta(C) \le y_{2}$. Thus we have proved $|\eta(B)-\eta(C)|\le |y_{2}-y_{1}|$. Combining this with equation (\ref{p1 p2 3}) and (\ref{p1-p2 1}), $c_3$ can be lower bounded by
\begin{equation}\label{c1 bound 1}
c_3 \ge 1 - \frac{|\eta(B)-\eta(C)|(V'/V_1 + V'/V_2)}{|y_{2} - y_{1}|} \ge 1-\frac{V'}{V_1} - \frac{V'}{V_2}.
\end{equation}
We then consider $V'/V_1+V'/V_2$. For simplicity of notation, we can assume $A$ is the hypercube $[0,1]^d$. If $A$ is not this hypercube, we can make a linear transformation on $X$ so that $A$ becomes hypercube $[0,1]^d$ under the transformed $X$. All the conditions in this example are invariant to linear transformation. It suffices to consider the worse case scenarios when $V'/V_1+V'/V_2$ are largest, which is when $a$ is parallel to $(1, 1, \ldots 1)^T$. In that case, the angle between $a$ and each feature is equally large and the least angle between $a$ and all features reaches the maximum. We discuss $V'/V_1+ V'/V_2$ of the worst case scenario in two situations: $h\le 1/d!$ and $h>1/d!$.

\begin{figure}[h]
     \centering
     \begin{subfigure}[b]{0.48\textwidth}
         \centering
         \includegraphics[width=\textwidth]{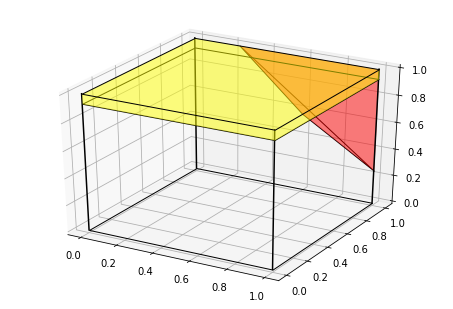}
         \caption{$h\le 1/d!$}
     \end{subfigure}
     \begin{subfigure}[b]{0.48\textwidth}
         \centering
         \includegraphics[width=\textwidth]{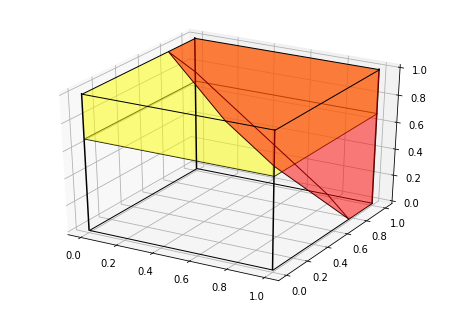}
         \caption{$h> 1/d!$}
	\end{subfigure}
    \caption{Diagrams for two situations when $d=3$. The set $A_{y2}$ is colored in red while the set $A_2$ is colored in yellow. The set $A_{y2}\cap A_2$ is colored in orange.}\label{example fig}
\end{figure}

\paragraph{Situation One: $\bm{h\le1/d!}$} In the worst case when $a$ is parallel to $(1, 1, \ldots 1)^T$, the set $A_{y2}$ is a hyperpyramid as shown in Figure \ref{example fig}(a). Denote the length of edge of $A_{y2}$ which is on the edge of hypercube $[0,1]^d$ as $e$. Because $\mathbb{P}(A_2) = \mathbb{P}(A_{y2})$, we have
$$h = \frac{e^d}{d!}.$$
Therefore
$$
\frac{V'}{V_1} + \frac{V'}{V_2} = \frac{V'}{(1-h)h} = \frac{\big[1-h^{\frac{d-1}{d}}(d!)^{-\frac{1}{d}}\big]^d}{1-h} = 1 - d(d!)^{-\frac{1}{d}} h^{\frac{d-1}{d}} + h + o(h).$$
Because $h\in(0,1/2]$, we have $h^{d-1}{d}\ge g$. Let $g(d) = d(d!)^{-\frac{1}{d}}$. Then 
$$\log g(d) = \log d - \frac{1}{d}\sum_{i=1}^d \log i = \frac{1}{d} \sum_{i=1}^d (\log d - \log i).$$
Therefore $g(d)$ is an increasing function. The minimal number of dimensions to consider feature selection is $d=2$, so we have $g(d)\ge g(2) = \sqrt{2}$. Therefore we have
$$\frac{V'}{V_1} + \frac{V'}{V_2} \le 1 - (\sqrt{2}-1)h + o(h).$$
Thus when $h\le1/d!$ we have
$$c_3 \ge (\sqrt{2}-1)h + o(h).$$

\paragraph{Situation Two: $\bm{h>1/d!}$}
Because $h\in (0,1/2]$, situation two can only happen if $d\ge 3$. We still consider the worst case scenario when $a$ is parallel to $(1,1,\ldots 1)^T$ as shown in Figure \ref{example fig}(b). Suppose $A$ is split at feature $j$ to obtain $A_1, A_2$, then we say the feature $j$ is the direction of \textbf{height}, while all the other features form the space of \textbf{base}. Denote the volume of cross section between hyperplane $X[j]=t$ and $A_{y2}$ as $S(t)$. Then we have
$$h = \int_0^1 S(t) dt, \;\; V' = \int_0^{1-h} S(t)dt.$$
Denote $S_{up}, S_{low}$ as
$$S_{up} = \frac{1}{h}\int_{1-h}^1 S(t) dt, \;\; S_{low} = \frac{1}{1-h}\int_0^{1-h} S(t) dt.$$
That is, $S_{up}$ is the average base area of the set $A_{y2}\cap A_2$, and $S_{low}$ is the average base area of the set $C$. We have
$$c_3 \ge 1 - \frac{V'}{V_1} - \frac{V'}{V_2} = 1 - \frac{S_{low}(1-h)}{h(1-h)} = 1 - \frac{S_{low}}{h}.$$
Noting $S_{low}(1-h) + S_{up}h = h$, we have
\begin{equation}\label{c1 Sup-Slow}
c_3 \ge S_{up} - S_{low}.
\end{equation}
Let $F_d(x)$ be the cumulative distribution function of the Irwin-Hall distribution with parameter $d$. By definition $F_d(x)$ is the probability that the sum of $d$ independent uniform random variables is no greater than $x$. Let $z = F_d^{-1}(h)$, then we have $S(1) = F_{d-1}(z)$. Thus $S(t) = F_{d-1}(z+t-1), \;\forall t\in[0,1]$. Because $S(t)$ is increasing with $t$, we have $S(1) = F_{d-1}(z) \ge h \ge S(0) = F_{d-1}(z-1)$. Thus $z\in(F^{-1}_{d-1}(h)-1, F^{-1}_{d-1}(h)+1)$. To compare $S_{up}$ and $S_{low}$, define a new quantity as 
$$S_m \triangleq \frac{1}{h} \int_{1-2h}^{1-h} F_{d-1}(z-1+t) dt.$$
Because $F_{d-1}(z)$ is increasing with $z$, we have $S_m \ge S_{low}$. The difference of $S_{up}$ and $S_m$ can be written as
\begin{align*}
S_{up}-S_{m} = & \frac{1}{h} \int_0^h [F_{d-1}(z-h+t) - F_{d-1}(z-2h+t)]dt \\
							\ge & \inf_{t\in[0,h]} [F_{d-1}(z-h+t) - F_{d-1}(z-2h+t)]  \\
							= & \inf_{t\in[0,h]} \int_0^h F'_{d-1}(z-2h+s+t)ds,
\end{align*}
where $F'_{d-1}(z)$ is the density function of the Irwin-Hall distribution with parameter $d-1$. Because $F'_{d-1}(z)$ increases at $[0, (d-1)/2]$ and decreases at $[(d-1)/2, d-1]$, the minimum of $\int_0^h F'_{d-1}(z-2h+s+t)ds$ is achieved at either $t=0$ or $t=h$. For $t=0$, 
$$\int_0^h F'_{d-1}(z-2h+s+t)ds =\int_{z-2h}^{z-h} F'_{d-1}(z)ds.$$
For $t=h$, 
$$\int_0^h F'_{d-1}(z-2h+s+t)ds =\int_{z-h}^{z} F'_{d-1}(z)ds.$$
Because $F'_{d-1}(z)$ is symmetric with respect to $z=(d-1)/2$, $\int_{z-2h}^{z-h} F'_{d-1}(z)ds$ will be no greater than $\int_{z-h}^{z} F'_{d-1}(z)ds$ if $z-h \le (d-1)/2$. Noting
$z - h = z - F_d(z)$, 
$\frac{\partial}{\partial z}(z-h) = 1 - F'_d(z) \ge 0$
and $[z - F_d(z)]|_{z=d/2} = (d-1)/2$, we have proved 
$$S_{up}-S_{m} \ge \int_{z-2h}^{z-h} F'_{d-1}(z)ds = F_{d-1}(F_d^{-1}(h) -h) - F_{d-1}(F_d^{-1}(h) -2h).$$
Thus 
\begin{equation}\label{Sup-Slow}
S_{up}-S_{low} \ge F_{d-1}(F_d^{-1}(h) -h) - F_{d-1}(F_d^{-1}(h) -2h).
\end{equation}
Combining equation (\ref{c1 Sup-Slow}) and (\ref{Sup-Slow}), we have
$$c_3 \ge F_{d-1}(F_d^{-1}(h) -h) - F_{d-1}(F_d^{-1}(h) -2h).$$
We finish the proof. 
\end{proof}

\subsubsection{Example 2}
\begin{proof}
Let $A = \Omega$, and rectangles $A_1, A_2$ be obtained by an arbitrary split at feature $X[1]$ or $X[2]$. Then it's easy to see $\mathbb{P}(Y|A_1) = \mathbb{P}(Y|A_2) = 0.5.$. Noting $\mathbb{P}(Y|A) = \mathbb{P}(Y) = 0.5$, the impurity does not change after splitting, thus $c_3 = 0$.
\end{proof}

\section{Supplementary Materials for Algorithm 1}
\label{sec:algorithm1}
\subsection{Details of Algorithm 1}
Algorithm 1 in the main paper is represented in details here in Algorithm \ref{algo:3}.

\begin{algorithm}[h!] 
\SetAlgoLined
\KwResult{Output the fitted tree}

Input training data $\{(X_i, Y_i)\}_{i=1}^n$, impurity function $f(\cdot)$,  weight for minority class $\alpha$, SVR penalty parameter $\lambda_n$, and maximal number of leaf nodes $\bar{a}_n\in\mathbb{N}$. Let the root node be $\Omega$, and $\mr{node.X} = \{X_i\}_{i=1}^n$, $\mr{node.Y} = \{Y_i\}_{i=1}^n$. Let $\mr{node\_queue} = [\mr{root}]$\;

Denote the $j$th coordinate of the $i$th sample as $X[j]_i$. For $j=1,\ldots,d$, sort $\{X_i\},1\le i\le n$ by their $j$th feature $\{X[j]_i\}, 1\le i\le n$. Denote the sorted increasing subscripts of $\{X_i\}_{i=1}^n$ as $(j_1, j_2, \cdots j_n)$, i.e., $X[j]_{j_1} \le X[j]_{j_2} \cdots \le X[j]_{j_n}$\;

Set $R'=+\infty$. For each split, we compare the risk of the split tree to $R'$, accepting the new split if the risk after the new split is smaller than $R'$\;

\While{$\mr{node\_queue}$ is not empty and number of leaf nodes $\le \bar{a}_n$}{
		Dequeue the first entity in $\mr{node\_queue}$, denoting it as $\mr{node}$. Denote the sample size in $\mr{node}$ as $n'$\;
        Denote the number of features that have already been split as $d'$. Rearrange $1, 2, \cdots d$ into a list $J_0$, such that the first $d'$ elements in $J_0$ corresponds to indices of features that have already been split. Let $\Delta I_0 = 0$\;
		\For{$j$ in $J_0$}{
				For $j$th feature, denote the pre-sorted subscripts of $\mr{node.X}$ as $(j'_1, j'_2, \cdots j'_{n'})$\; 
				
				\For{$i$ in $1:n'-1$}{ 
						split the current tree at $X_j=(X[j]_{j'_i}+X[j]_{j'_{i+1}})/2$\; 
						(Optional) compute the (unsigned) tree impurity decrease $\Delta I(T, \mathbb{P}_n)$ for the current split. If $j$th feature has already been split, let $\Delta I_0 = \min\{\Delta I_0, \Delta I(T, \mathbb{P}_n)\}$; Otherwise, reject the current split if $\Delta I(T,\mathbb{P}_n) < \Delta I_0 + \lambda_n$\;
						Compute the risk $\tilde{I}(T,\mathbb{P}_n) + \lambda_n r(T)$ for all $4$ ways of class label assignment after this split. Denote the smallest risk of the four trees as $R_{i,j}$ and the corresponding class labels for left and right child as $\mr{lab}^l_{i,j},\; \mr{lab}^r_{i,j}$\;
						}

				Let $i_0 =\arg\min_{i} R_{i,j}$, $\hat{x}_j = (X[j]_{j'_{i_0}}+X[j]_{j'_{i_0+1}})/2$, and $R_j = R_{i_0,j}$, $\mr{lab}^l_j=\mr{lab}^l_{i_0,j}$, $\mr{lab}^r_j=\mr{lab}^r_{i_0,j}$\;  
				}
			
 		Let $j_0 =\arg\min_{j} R_{j}$, $\hat{x} = \hat{x}_{j_0}$, and $R = R_{j_0}$, $\mr{lab}^l=\mr{lab}^l_{j_0}$, $\mr{lab}^r=\mr{lab}^r_{j_0}$ \;  
		\eIf{$R<R'$}{
				Accept the split. Let $\mr{node.left.X} =\{X\in \mr{node}: X[j]\le \hat{x}\}$, and $\mr{node.right.X} = \{X\in \mr{node}: X[j] > \hat{x}\}$. Assign $\mr{node.Y}$ to $\mr{node.left.Y}$ and $\mr{node.right.Y}$ according to the assignment of $\mr{node.X}$. Assign class labels to two child nodes as: $\mr{node.left.lab} = \mr{lab}^l, \; \mr{node.right.lab}=\mr{lab}^r$\;
				Update $R'$: $R':=R$\;
				Enqueue $\mr{node.left}$, $\mr{node.right}$ to the end of $\mr{node\_queue}$\;
				}{
				Reject the split\;
				}
}

\caption{Detailed Steps of SVR-Tree}
\label{algo:3}
\end{algorithm}

\subsection{Computational Complexity}
Recall the number of training samples is $n$ and the number of features is $d$. Denote the depth of the estimated tree as $h$ and let the maximal number of leaf nodes be $\bar{a}_n = O(\sqrt{n})$.\footnote{In our experiments, we set $\bar{a}_n = 2\sqrt{n}$.} The storage complexity is trivial and is the same as usual decision trees, i.e., $O(dn)$. We analyze the time complexity of Algorithm 1 in this section. We first introduce the approach to compute surface-to-volume ratio, then discuss the time complexity of building the tree.

\paragraph{Computing SVR}
We first consider computation of surface and volume for a single hyperrectangle in $\mathbb{R}^d$. The computation of volume involves multiplication of $d$ side lengths and takes $O(d)$ time. The computation of surface  consists of two steps. The first step is the computation of $d-1$ dimensional volumes for all $d$ unique ``faces'', each can be done by dividing the volume by a side length. Computing for all ``faces'' takes $O(d)$ time. The second step is adding up $d-1$ dimensional volumes of all $d$ unique ``faces'' and multiplying by $2$, which takes $O(d)$ time. Thus the computation of surface also takes $O(d)$ time in total. 

Now consider computing SVR for a classification tree. Suppose in some intermediate state of building the SVR tree, the current tree has $m$ leaf nodes $R_1, R_2, \cdots R_m$. We already know the surface and volume of the current tree. Now we need to perform a split at node $R_1$, which has $n'$ samples. Suppose we split at $X[j] = x_j$ and obtain two child nodes. The volume of the tree after this split can be computed by adding the volume of the minority class child node(s), which takes $O(d)$ time. The major concern lies in the computation of the surface area. If both child nodes are in the majority class or the minority class, the surface either does not change or changes by the overlapping surface between $R_1$ and some $R_j$s. It takes $O(d)$ time to compute the surface of $R_1$, and $O(md)$ time to compute all the overlapping surfaces between $R_1$ and $R_k, 2\le k\le m$. Therefore, the time complexity to compute the surface of the tree after splitting is $O(md)$.

If one child node belongs to the minority class and the other belongs to the majority class, the problem becomes more complicated. Let $S_{01,j}(x_j)$ be the surface of the split tree if $R_1$ is split at $X[j]=x_j$ with the left child labeled as $0$ and the right child labeled as $1$. $S_{10,j}(x_j)$ is similarly defined. Both $S_{01,j}(x_j)$ and $S_{10,j}(x_j)$ are piecewise linear functions whose change points can only exist at borders of $R_k, 2\le k\le m$. Therefore, we first compute the analytical forms of $S_{01,j}(x_j)$ and $S_{10,j}(x_j)$ for all $1\le j\le d$. This requires us to find all the borders of $R_k, 2\le k\le m$, to compute all the overlapping surface between $R_1$ and $R_k, 2\le k\le m$ and to compute the surface of $R_1$ itself. Thus the process of computing the analytical forms of $S_{01}(x_j)$ and $S_{10}(x_j)$ for all $1\le j\le m$ takes $O(md)$ time. The cost of evaluating $S_{01,j}(x_j)$ and $S_{10,j}(x_j)$ at a specific value $x_j$ can take as much as $O(m)$ time; but if  the samples are pre-sorted for all features, it only takes $O(dm+dn')$ time to evaluate $S_{01}(x_j)$ and $S_{10}(x_j)$, $\forall 1\le j\le d$, at all the possible split locations of $R_1$. Therefore, it takes $O(md+n'd)$ time to compute surface area for all the possible split locations and class label assignments. Similarly, the costs of computing volume for all possible split locations is $O(n'd)$. Thus for all the possible split locations and class label assignments at $R_1$, the total cost of computing SVR is $O(md+n'd)$.

\paragraph{Time Complexity of Algorithm 1}
Before working on any splits, we first need to sort the whole dataset for all features,  taking $O(dn\log n)$ time. For a node with $n'$ samples and $m$ leaf nodes in the current tree, there are $dn'$ possible split locations. It takes $O(dn')$ time to compute signed impurity for all possible class label assignments and all split locations; $O(md+n'd)$ time to compute surface-to-volume ratio for all possible class label assignments and all split locations; $O(dn')$ time to find the best split when all the signed impurities and SVR are already computed. The overall time complexity of finding the best split at this node is $O(md+n'd)$. Let $a_n$ be the number of leaf nodes of the tree output by Algorithm 1, with $a_n\le \bar{a}_n = O(\sqrt{n})$. The time complexity is 
$$O(dn\log n) + \sum_{m=1}^{ a_n} O(md+n'd) \le O(dn\log n + nhd),$$
where we use the fact that $\sum_{m=1}^{ a_n} n' \le nh$. Since the algorithm generates a binary tree, the average depth of $h$ is $O(\log n)$ and the worst depth of $h$ is $O(n)$. The overall average time complexity is $O(dn\log n)$, which is the typical average time complexity for many classification tree algorithms including CART and C4.5 \citep{sani2018computational}.

\section{Supplementary Material for Numerical Studies} \label{sec:datasets}

\subsection{Details of Datasets}

We provide the details of 12 datasets used in the numerical studies in Section 4 of the main text. 

\paragraph{Pima dataset}
\url{https://sci2s.ugr.es/keel/dataset.php?cod=21} consists of 768 samples and
8 features. The objective is to determine whether each person has diabetes. The positive class has 268 samples and is selected as the minority class.

\paragraph{Titanic dataset}
\url{https://sci2s.ugr.es/keel/dataset.php?cod=189} consists of 2201 samples and 3 features. The objective is to classify whether a person on board the Titanic can survive. The features are social class (first class, second class, third class, crewmember), age (adult or child) and sex. There are often multiple samples with the same features. The survive class has 711 samples and is chosen as the minority class.

\paragraph{Phoneme dataset}
\url{https://sci2s.ugr.es/keel/dataset.php?cod=105} consists of 5404 samples and 5 features. The objective is to classify to distinguish between nasal (class 0) and oral sounds (class 1). Class 1 has 1586 samples and is considered the minority class.

\paragraph{Vehicle dataset}\citep{siebert1987vehicle} 
\url{https://archive.ics.uci.edu/ml/datasets/Statlog+(Vehicle+Silhouettes)}
consists of 846 samples and 18 features. The aim is to classify a silhouette into one of four types of vehicles. As in \cite{he2008adasyn}, we choose class ``Van'' as the minority class and the other three types of vehicles as the majority class, resulting in 199 minority class samples. 

\paragraph{Ecoli}
\url{https://sci2s.ugr.es/keel/dataset.php?cod=61} consists of 336 samples and 7 features. The third feature is removed since all but one samples are of the same value, providing little help in cross-validation prediction, resulting in 6 features. The objective of the dataset is to predict the location site of the protein. We let ``pp'' be the minority class and combine all other classes to form majority class.

\paragraph{Segment dataset}
\url{https://sci2s.ugr.es/keel/dataset.php?cod=148} consists of 2308 samples and 18 features. It is an imbalanced version of the original segment dataset (\url{http://archive.ics.uci.edu/ml/datasets/Image+Segmentation}) preprocessed by KEEL repository. There are only two classes and the positive class, having 329 samples, is selected as minority class.

\paragraph{Wine dataset}\citep{cortez2009modeling} 
\url{https://archive.ics.uci.edu/ml/datasets/Wine+Quality}
collects information on wine quality. We focus on the red wine subset, which has 1599 samples and 11 features. We let the minority class be samples 
having quality $\ge 7$, while the majority class has 
quality $\le 6$.  This generates 217 minority class samples. 

\paragraph{Page dataset}
\url{https://sci2s.ugr.es/keel/dataset.php?cod=147} consists of 5472 samples and 10 features. The objective is to classify the tpye of a page block. The class ``text'' with 559 samples is selected as the minority class and all other classes are combined to form the majority class.

\paragraph{Satimage dataset}
\url{https://archive.ics.uci.edu/ml/datasets/Statlog+(Landsat+Satellite)}
consists of a training set and a test set. We have 6435 samples and 36 features. As in \cite{chawla2002smote}, we choose class `4'' as the minority class and collapsed all other classes into a single majority class, resulting in 626 minority class samples. 

\paragraph{Glass dataset} \url{https://sci2s.ugr.es/keel/dataset.php?cod=121} consists of 214 samples and 9 features. The objective is to classify samples into one of seven types of glass. We choose class ``2" as the minority class and all other classes as the majority class, yielding 17 minority class samples.   

\paragraph{Abalone dataset}
\url{https://archive.ics.uci.edu/ml/datasets/Abalone}
aims to predict the age of abalone by physical measurements. As in \cite{he2008adasyn}, we let class ``18'' be the minority class and class ``9'' be the majority class. This yields 731 samples in total, among which 42 samples belong to the minority class. We also remove the discrete feature ``sex'', which gives us 9 features. 

\paragraph{Yeast dataset}
\url{https://sci2s.ugr.es/keel/dataset.php?cod=133} consists of 1484 samples and 8 features. The objective is to determine the location site of each cell. The class ``ME2'' is selected as minority class and all other classes are combined to form the majority class, resulting 51 minority class samples.

\subsection{Details of Experimental Results}
The experimental results on all 12 datasets are displayed below in Table \ref{results}.

{
\centering
\scriptsize
\renewcommand{\arraystretch}{1.5}
% \begin{table}
\begin{longtable}[t]{c|c|ccccc}
\caption{Results of numerical studies on real world data sets.\label{results}} \\
\hline
\endfirsthead
\hline
\endhead
\hline
\endfoot
\hline
\endlastfoot
Data set & Method & Accuracy & Precision & TPR & F-measure & G-mean  \\ \hline
\multirow{7}{*}{Pima} 
				& SVR & 0.6922(0.0211) & 0.5451(0.0247) & 0.7332(0.0223) & 0.6247(0.0153) & 0.7004(0.0163) \\*
				& SVR-Select & 0.6809(0.0158) & 0.5317(0.0176) & \textbf{0.7349}(0.0233) & 0.6166(0.0120) & 0.6918(0.0124) \\*
				& Duplicate & 0.7348(0.0168) & 0.6257(0.0318) & 0.6108(0.0615) & 0.6152(0.0299) & 0.6977(0.0260) \\* 
				& SMOTE & 0.7335(0.0177) & 0.6348(0.0376) & 0.5724(0.0672) & 0.5981(0.0342) & 0.6826(0.0295) \\*
				& BSMOTE & 0.7328(0.0177) & 0.6266(0.0346) & 0.5948(0.0619) & 0.6072(0.0289) & 0.6907(0.0250) \\*
				& ADASYN & \textbf{0.7352}(0.0170) & 0.6283(0.0276) & 0.5972(0.0687) & 0.6096(0.0387) & 0.6931(0.0326)  \\* 
                & HDDT & 0.7534(0.0049) & \textbf{0.6631}(0.0088) & 0.5963(0.0073) & \textbf{0.6297}(0.0069) & \textbf{0.7067}(0.0053) \\ \hline
\multirow{7}{*}{Titanic} 
				& SVR & 0.7671(0.0047) & 0.6787(0.0172) & 0.5321(0.0095) & \textbf{0.5963}(0.0027) & 0.6840(0.0026) \\*
				& SVR-Select & 0.7660(0.0060) & 0.6753(0.0206) & \textbf{0.5340}(0.0126) & 0.5960(0.0032) & \textbf{0.6841}(0.0033)  \\*
				& Duplicate & \textbf{0.7855}(0.0036) & \textbf{0.8794}(0.0485) & 0.3930(0.0205) & 0.5419(0.0102) & 0.6181(0.0116)  \\* 
				& SMOTE  & \textbf{0.7855}(0.0036) & \textbf{0.8794}(0.0485) & 0.3930(0.0205) & 0.5419(0.0102) & 0.6181(0.0116) \footnote{The fact that SMOTE yields exactly the same result as duplicated sampling is not a coincidence, since there are lots of samples with similar or same feature values and the collection of synthetic samples of SMOTE is the same as that of duplicated sampling. Please refer to the data description in the supplementary material for details.} \\* 
				& BSMOTE & 0.7791(0.0047) & 0.8192(0.0425) & 0.4095(0.0239) & 0.5447(0.0152) & 0.6251(0.0148)  \\*
				& ADASYN & 0.7755(0.0023) &  0.7657(0.0335) & 0.4444(0.0373) & 0.5603(0.0221) & 0.6433(0.0217) \\* 
                & HDDT & 0.6768(0.0000) & 0.0000(0.0000) & 0.000(0.0000) & 0.0000(0.0000) & 0.0000(0.0000)   \\ \hline
\multirow{7}{*}{Phoneme} 
				& SVR & 0.8350(0.0060) & 0.6749(0.0113) & \textbf{0.8458}(0.0154) & 0.7506(0.0076) & 0.8380(0.0062) \\*
				& SVR-Select & 0.8377(0.0040) & 0.6807(0.0064) & 0.8420(0.0098) & 0.7528(0.0060) & \textbf{0.8389}(0.0049)  \\*
				& Duplicate & 0.8554(0.0038) & 0.7567(0.0110) & 0.7482(0.0154) & 0.7522(0.0068) & 0.8205(0.0065) \\* 
				& SMOTE & 0.8598(0.0054) & \textbf{0.7637}(0.0119) & 0.7567(0.0121) & 0.7601(0.0089) & 0.8264(0.0068) \\*
				& BSMOTE & 0.8568(0.0060) & 0.7580(0.0143) & 0.7527(0.0127) & 0.7552(0.0093) & 0.8230(0.0070) \\*
				& ADASYN & \textbf{0.8607}(0.0050) & 0.7661(0.0116) & 0.7565(0.0115) & \textbf{0.7612}(0.0082) & 0.8269(0.0063)  \\* 
                & HDDT & 0.7065(0.0000) & 0.0000(0.0000) & 0.000(0.0000) & 0.0000(0.0000) & 0.0000(0.0000) \\ \hline               
\multirow{7}{*}{Vehicle} & SVR & 0.9305(0.0068) & 0.8459(0.0167) & 0.8621(0.0297) & 0.8535(0.0155) & \textbf{0.9055}(0.0145) \\*
				& SVR-Select & 0.9248(0.0081) &  0.8253(0.0275) & 0.8648(0.0199) & 0.8442(0.0150) & 0.9031(0.0099) \\*
				& Duplicate & 0.9287(0.0096) & 0.8465(0.0262) & 0.8520(0.0231) & 0.8490(0.0196) & 0.9006(0.0136) \\*
				& SMOTE & \textbf{0.9314}(0.0082) & 0.8543(0.0223) & 0.8548(0.0289) & \textbf{0.8541}(0.0181) & 0.9033(0.0150) \\*
				& BSMOTE & 0.9285(0.0085) & 0.8471(0.0243) & 0.8505(0.0241) & 0.8485(0.0179) & 0.9000(0.0132)  \\*
				& ADASYN & 0.9310(0.0086) & \textbf{0.8580}(0.0222) & 0.8477(0.0292) & 0.8524(0.0190) & 0.9004(0.0156) \\*
				& HDDT & 0.6583(0.0030) & 0.3968(0.0029) & \textbf{0.8698}(0.0091) & 0.5450(0.0043) & 0.7184(0.0040) \\ \hline
\multirow{7}{*}{Ecoli} 
				& SVR & 0.9193(0.0123) & 0.7274(0.0530) & 0.7779(0.0436) & \textbf{0.7499}(0.0319) & 0.8570(0.0223) \\*
				& SVR-Select & 0.9193(0.0120) & 0.7297(0.0468) & 0.7683(0.0481) & 0.7471(0.0359) & 0.8525(0.0268) \\*
				& Duplicate & 0.9224(0.0111) & 0.7857(0.0555) & 0.6952(0.0612) & 0.7349(0.0389) & 0.8178(0.0343) \\*
				& SMOTE & \textbf{0.9233}(0.0122) & \textbf{0.7974}(0.0624) & 0.6856(0.0568) & 0.7348(0.04161) & 0.8134(0.0325)  \\*
				& BSMOTE & 0.9207(0.0137) & 0.7916(0.0731) & 0.6769(0.0507) & 0.7266(0.0393) & 0.8078(0.0286) \\*
				& ADASYN & 0.9128(0.0180) & 0.7281(0.0739) & 0.7183(0.0499) & 0.7202(0.0448) & 0.8248(0.0285)  \\* 
                & HDDT & 0.8907(0.0079) & 0.5962(0.0198) & \textbf{0.9231}(0.0086) & 0.7243(0.0151) & \textbf{0.9037}(0.0063) \\ \hline
\multirow{7}{*}{Segment} & SVR & 0.9922(0.0017) & 0.9696(0.0089) & 0.9763(0.0061) & 0.9729(0.0060) & 0.9855(0.0033) \\*
				& SVR-Select & \textbf{0.9932}(0.0013) & \textbf{0.9753}(0.0080) & 0.9769(0.0053) & \textbf{0.9761}(0.0047) & \textbf{0.9863}(0.0027)  \\*
				& Duplicate & 0.9913(0.0012) & 0.9707(0.0069) & 0.9685(0.0073) & 0.9696(0.0044) & 0.9817(0.0036) \\* 
				& SMOTE & 0.9916(0.0010) & 0.9722(0.0061) & 0.9688(0.0081) & 0.9705(0.0035) & 0.9820(0.0038) \\*
				& BSMOTE & 0.9915(0.0011) & 0.9726(0.0082) & 0.9676(0.0059) & 0.9701(0.0037) & 0.9814(0.0027)  \\*
				& ADASYN & 0.9919(0.0011) & 0.9713(0.0063) & 0.9716(0.0064) & 0.9714(0.0040) & 0.9833(0.0032) \\*
				& HDDT & 0.8330(0.0021) & 0.4599(0.0031) & \textbf{0.9825}(0.0023) & 0.6266(0.0031) & 0.8910(0.0017) \\ \hline
\multirow{7}{*}{Wine} & SVR & 0.8391(0.0143) & 0.4398(0.0306) & 0.6574(0.0340) & 0.5263(0.0278) & 0.7549(0.0205) \\*
				& SVR-Select & 0.8395(0.0098) & 0.4405(0.0207) & \textbf{0.6641}(0.0337) & \textbf{0.5290}(0.0184) & \textbf{0.7584}(0.0173) \\*
				& Duplicate & \textbf{0.8741}(0.0097) & \textbf{0.5469}(0.0429) & 0.4703(0.0508) & 0.5027(0.0311) & 0.6628(0.0335)\\*
				& SMOTE & 0.8719(0.0086) & 0.5365(0.0425) & 0.4650(0.0843) & 0.4923(0.0527) & 0.6561(0.0588) \\*
				& BSMOTE & 0.8708(0.0075) & 0.5317(0.0333) & 0.4482(0.0623) & 0.4829(0.0383) & 0.6461(0.0430) \\
				& ADASYN & 0.8717(0.0060) & 0.5324(0.0263) & 0.4836(0.0727) & 0.5030(0.0411) & 0.6692(0.0482) \\*
				& HDDT & 0.8378(0.0038) & 0.4349(0.0087) & 0.6516(0.0119) & 0.5216(0.0092) & 0.7516(0.0073) \\
				\hline
\multirow{7}{*}{Page} 
				& SVR & 0.9656(0.0015) & 0.8248(0.0098) & 0.8429(0.0117) & 0.8337(0.0076) & 0.9087(0.0062) \\*
				& SVR-Select & 0.9647(0.0016) & 0.8252(0.0124) & 0.8311(0.0140) & 0.8280(0.0077) & 0.9024(0.0073)  \\*
				& Duplicate & \textbf{0.9686}(0.0013) & 0.8465(0.0081) & \textbf{0.8463}(0.0135) & \textbf{0.8463}(0.0072) & \textbf{0.9119}(0.0071) \\* 
				& SMOTE & 0.9683(0.0021) & \textbf{0.8468}(0.0133) & 0.8426(0.0144) & 0.8446(0.0101) & 0.9099(0.0078) \\*
				& BSMOTE & 0.9682(0.0015) & 0.8444(0.0103) & 0.8448(0.0127) & 0.8445(0.0077) & 0.9109 (0.0067) \\*
				& ADASYN & 0.9677(0.0021) & 0.8436(0.0131) & 0.8403(0.0148) & 0.8418(0.0103) & 0.9084(0.0080)  \\* 
                & HDDT & 0.9024(0.0030) & 0.5479(0.0338) & 0.2720(0.0260) & 0.3622(0.0215) & 0.5141(0.0233)  \\ \hline
\multirow{7}{*}{Satimage} 
				& SVR & 0.9009(0.0037) & 0.4935(0.0130) & 0.7050(0.0188) & \textbf{0.5805}(0.0138) & 0.8061(0.0110) \\*
				& SVR-Select & 0.8982(0.0032) & 0.4847(0.0111) & \textbf{0.7193}(0.0231) & 0.5789(0.0109) & \textbf{0.8123}(0.0120) \\*
				& Duplicate & 0.9191(0.0040) & 0.5982(0.0295) & 0.5236(0.0292) & 0.5572(0.0135) & 0.7093(0.0177) \\* 
				& SMOTE & 0.9204(0.0032) & 0.6127(0.0298) & 0.5057(0.0320) & 0.5525(0.0129) & 0.6982(0.0203) \\*
				& BSMOTE & \textbf{0.9218}(0.0037) & \textbf{0.6340}(0.0403) & 0.4780(0.0321) & 0.5430(0.0141) & 0.6803(0.0210) \\*
				& ADASYN & 0.9218(0.0034) & 0.6273(0.0367) & 0.4958(0.0339) & 0.5519(0.0135) & 0.6922(0.0218) \\* 	
                & HDDT & 9027(0.0000) & 0.0000(0.0000) & 0.0000(0.0000) & 0.0000(0.0000) & 0.0000(0.0000) \\ \hline
\multirow{7}{*}{Glass} & SVR & 0.8859(0.0198) & \textbf{0.2932}(0.1064) & 0.2853(0.1006) & \textbf{0.2851}(0.0965) & 0.5092(0.0926)  \\*
				& SVR-Select & 0.8592(0.0437) & 0.2232(0.0905) &  0.2559(0.0989) & 0.2271(0.0764) & 0.4716(0.0910)  \\*
				& Duplicate & 0.9052(0.0142) & 0.2068(0.1855) & 0.0853(0.0755) & 0.1118(0.0893) & 0.2329(0.1672) \\* 
				& SMOTE & 0.9049(0.0135) & 0.2157(0.1474) & 0.1000(0.0813) & 0.1290(0.0924) & 0.2676(0.1573) \\*
				& BSMOTE & 0.8934(0.0182) & 0.1762(0.1282) & 0.1176(0.0931) & 0.1341(0.0972) & 0.2804(0.1810) \\*
				& ADASYN & \textbf{0.9068}(0.0142) & 0.2126(0.1895) & 0.1059(0.0959) & 0.1348(0.1155) & 0.2521(0.1964) \\* 
				& HDDT & 0.4803(0.0128) & 0.1124(0.0123) & \textbf{0.8000}(0.0570) & 0.1971(0.0115) & \textbf{0.6011}(0.0195) \\ \hline
\multirow{7}{*}{Abalone} & SVR & 0.9062(0.0116) & 0.2770(0.0471) & 0.3798(0.0696) & \textbf{0.3168}(0.0431) & 0.5939(0.0543)  \\*
				& SVR-Select & 0.9060(0.0243) & 0.2697(0.0444) & 0.3381(0.0777) & 0.2941(0.0369) & 0.5599(0.0522) \\*
				& Duplicate & 0.9308(0.0081) & 0.3251(0.1231) & 0.1905( 0.0768) & 0.2330(0.0830) & 0.4138(0.1192) \\* 
				& SMOTE & 0.9313(0.0071) & 0.3286(0.1037) & 0.2190(0.0992) & 0.2556(0.0972) &  0.4407(0.1359) \\*
				& BSMOTE & \textbf{0.9359}(0.0073) & \textbf{0.4046}(0.1166) & 0.1964(0.0752) & 0.2546(0.0783) & 0.4297(0.0878) \\*
				& ADASYN & 0.9329(0.0085) & 0.3920( 0.1074) & 0.2369(0.0704) & 0.2841(0.0621) &  0.4742(0.0756)  \\* 
				& HDDT & 0.8413(0.0086) & 0.1795(0.0123) & \textbf{0.4917}(0.0274) & 0.2629(0.0160) & \textbf{0.6510}(0.0187) \\ \hline
\multirow{7}{*}{Yeast} 
				& SVR & 0.9368(0.0100) & 0.2641(0.0410) & 0.4510(0.0888) & \textbf{0.3290}(0.0416) & 0.6524(0.0619)  \\*
				& SVR-Select & 0.9287(0.0122) & 0.2360(0.0397) & 0.4559(0.0715) & 0.3077(0.0391) & 0.6544(0.0496) \\*
				& Duplicate & \textbf{0.9641}(0.0037) & \textbf{0.3907}(0.2084) & 0.1196(0.0784) & 0.1727(0.0981) & 0.3077(0.1544) \\* 
				& SMOTE & 0.9594(0.0045) & 0.3620(0.0859) & 0.2098(0.0830) & 0.2554(0.0752) & 0.4458(0.0887) \\*
				& BSMOTE & 0.9602(0.0043) & 0.3664(0.1012) & 0.2020(0.0884) & 0.2499(0.0937) & 0.4333(0.1065)  \\*
				& ADASYN & 0.9605(0.0045) & 0.3709(0.1246) & 0.2078(0.1099) & 0.2491(0.0986) & 0.4288(0.1436) \\* 	
                & HDDT & 0.7604(0.0200) & 0.1049(0.0101) & \textbf{0.7853}(0.0287) & 0.1849(0.0162) & \textbf{0.7722}(0.0211) \\ \hline
\multirow{7}{*}{\makecell{Average Ranking}}
				& SVR & 4.92 & 5.25 & \textbf{2.25} & \textbf{2.08} & \textbf{2.17} \\*
				& SVR-Select & 5.58 & 5.00 & \textbf{2.25} & 2.75 & 2.50 \\*
				& Duplicate & 2.79 & 3.29 & 5.13 & 4.88 & 4.96  \\* 
				& SMOTE & \textbf{2.54} & \textbf{2.29} & 5.04 & 4.21 & 4.71 \\*
				& BSMOTE & 3.33 & 2.92 & 5.58 & 5.00 & 5.42 \\*
				& ADASYN & 2.75 & 2.92 & 4.58 & 4.00 & 4.42 \\* 
				& HDDT & 6.08 & 6.33 & 3.17 & 5.08 & 3.83 \\ \hline	
\end{longtable}
% \end{table}
}

\bibliographystyle{chicago}
\bibliography{IBbib}